\newtheorem{definition}{Definition}
\newtheorem{obs}{Observation}
\newtheorem{theorem}{Theorem}
\newtheorem{lem}{Lemma}
\newtheorem{rmk}{Remark}
\newtheorem{cor}{Corollary}
\newtheorem{example}{Example}
\newtheorem{assum}{Assumption}
\begin{document}

%% Paper Title
%% You can use linebreaks \\ within to get better formatting as
%% desired.
\title{The value of Side Information in the Secondary Spectrum Markets}

%% Author names and affiliations:
%%
%% Avoiding spaces at the end of the author lines is not a problem with
%% conference papers because we don't use \thanks or \IEEEmembership.
%%
%% For several authors with only one affiliation:
%%
% \author{
%   \IEEEauthorblockN{Hui-Ting Chang and Stefan M.~Moser}
%   \IEEEauthorblockA{Department of Electrical and Computer Engineering\\
%     National Chiao Tung University (NCTU)\\
%     Hsinchu, Taiwan\\
%     Email: \{email-of-hui-ting,email-of-stefan\}@ieee.org}
% }
%%
%% For up to three affiliations:
%%
%\author{\IEEEauthorblockN{Arnob Ghosh}
%\IEEEauthorblockA{Electrical and Systems Engineering\\
%University of Pennsylvania\\
%arnob@seas.upenn.edu}
%\and
%        \IEEEauthorblockN{ Saswati Sarkar}
%        \IEEEauthorblockA{Electrical and Systems Engineering\\
%University of Pennsylvania\\
%swati@seas.upenn.edu}
%\and
%\IEEEauthorblockN{Randall Berry}
%\IEEEauthorblockA{Electrical Engineering and Computer Science\\
%Northwestern University\\
%rberry@ece.northwestern.edu}}
       % <-this % stops a space
%\thanks{The authors are with the Department
%of Electrical and Systems Engineering, University Of Pennsylvania, Philadelphia,
%PA, USA. Their E-mail ids are arnob@seas.upenn.edu and swati@seas.upenn.edu.} 
%\thanks{Parts of this paper have been presented in CISS\rq{}14 \cite{ciss}.}}
%%
%% For over three affiliations, or if they all won't fit within the width
%% of the page, use this alternative format:
%%

\author{
Arnob~ Ghosh\IEEEauthorrefmark{1},
 Saswati~Sarkar\IEEEauthorrefmark{1}\thanks{\IEEEauthorrefmark{1}
 The authors are with Electrical and Systems Engineering Dept., University of Pennsylvania, USA; Their E-mail Ids are: arnob@seas.upenn.edu, swati@seas.upenn.edu},
 Randall~Berry\IEEEauthorrefmark{2}\thanks{\IEEEauthorrefmark{2} The author is with Electrical Engineering and Computer Science Department of Northwestern University, USA; The e-mail Id is:rberry@ece.northwestern.edu}
 }

%   \IEEEauthorblockA{
%     \IEEEauthorrefmark{3}Starfleet Academy, San Francisco, California 96678-2391\\
%     Telephone: (800) 555--1212, Fax: (888) 555--1212}
%   \IEEEauthorblockA{
%     \IEEEauthorrefmark{4}Tyrell Inc., 123 Replicant Street, Los Angeles, California 90210--4321}
%}

%% Use for special paper notices
%\IEEEspecialpapernotice{(Invited Paper)}

%% To balance the two columns, you should reduce the text-height of
%% the last page using the following command:
%%%%%%%%%%%%%%%%%%%%%%%%%%%%%%%%%%%%%%%%%%%%%%%%%%%%%%%%%%%%%%%%%%%%%
%\addtolength{\textheight}{-9.35cm}
%%%%%%%%%%%%%%%%%%%%%%%%%%%%%%%%%%%%%%%%%%%%%%%%%%%%%%%%%%%%%%%%%%%%%
%% with an appropriate value. This command must be place on the second
%% last page, i.e., for a one-page abstract here, for a two-page
%% abstract right after the \maketitle command.

%% Create the title
\maketitle
%% Abstract:
%% For the final version of the accepted paper, please make sure you
%% remove the comment "THIS PAPER IS ELIGIBLE FOR THE STUDENT PAPER
%% AWARD."
%%Abstract starts%%%%
\begin{abstract}
We consider a secondary spectrum market where primaries set prices for their unused channels. The pay-off of a primary then depends on the availability of channels for its competitors, which a primary might not have information about. We study a model where a primary can acquire this competitor's channel state information (C-CSI) at a cost.   We formulate a game between two primaries, where each primary decides whether to acquire the C-CSI or not and then selects its price based on that. We first characterize the Nash Equilibrium (NE) of this game  for a symmetric model where the C-CSI is perfect. We show that the payoff of a primary is independent of the C-CSI acquisition cost. We then generalize our analysis to allow for imperfect estimation and cases where the two primaries have different C-CSI costs or different channel availabilities. Our results show interestingly that   the payoff of a primary increases when there is estimation error. We also show that surprisingly, the expected payoff of a primary may decrease when the C-CSI acquisition cost  decreases or primaries have different availabilities.  
\end{abstract}
\begin{IEEEkeywords}
Nash Equilibrium, Secondary Spectrum Access, Channel State estimation, Price Competition.
\end{IEEEkeywords}
\section{Introduction}
\begin{table*}
\caption{\small Frequently used Notations}%Distribution functions $\psi_{Y,1}(\cdot)$, $\psi_{N}(\cdot)$, $\psi_{Y,0}(\cdot)$. The parameters are given in (\ref{eq:ln}) and (\ref{eq:qs=1}).}
\begin{tabular}{| p{3cm} | p{13cm} | }
\hline
Notation & Significance \\ \hline
$v$ & The highest price that a secondary is willing to pay for an available channel.\\ \hline
$q_i$  & Availability probability of primary $i$. \\ \hline
$q$ & In the basic model $q_1=q_2$. \\ \hline
$s_i$ & The C-CSI acquisition cost for primary $i$. \\ \hline
$s$ & In the basic model $s_1=s_2=s$.\\ \hline
$q_s$ & The C-CSI estimation accuracy. \\ \hline
$c$ & The transaction cost which the primary incurs only when it sells its channel. \\ \hline
\end{tabular}
\label{tab:notations}
\vspace{-0.6cm}
\end{table*}
Spectrum sharing where license holders (primaries) allow unlicensed users (secondaries) to use their channels can enhance the efficiency of the spectrum usage. However, secondary access will only proliferate when it is rendered profitable to the primaries. We investigate a secondary spectrum market where   there are {\em competing primaries} that want to lease their unused channels to  {\em secondaries} in lieu of financial remuneration. In our setting, primaries can be wireless service providers or any other  intermittent users of the spectrum (e.g. government agencies, TV broadcasters) or the infrastructure (e.g. access point owners) and secondaries can also be service providers or individual users. \begin{color}{red}We assume the market operates in fixed time intervals. At the start of each interval, the primaries announce prices for their channels if they are {\it available}\footnote{\begin{color}{red}The time-scale at which this market operates could range from seconds to hours depending on the underlying technology. The key assumption being that if a primary puts its channel up for sale, it commits to allowing the secondary to use that channel for the next interval.\end{color}}.\end{color} Each secondary seeks to buy an available channel with the lowest price.  

The availability of a channel varies randomly because of the usage statistic of a primary,  e.g if a primary need to use a channel to serve its own traffic, the channel will be {\it unavailable for sale}\footnote{Availability could also depend on channel fading or interference levels over the time-scale at which the market operates.}.  When its channel is available, the profit or payoff that a primary can obtain will depend in part on the availability of the channels of other primaries in the market.   

We consider a scenario where the primary can estimate the competitor's channel state information (C-CSI) by incurring a cost. This C-CSI provides an estimate of the competitor's channel state (CCS), which indicates if it is available for sale or not. The motivation behind considering such a scenario is the growing interest in incorporating spectrum measurements into various approaches for dynamic spectrum access. For example, a recent proposal is made by FCC to sense the occupancy of the 3.5 GHz band by environmental sensing capability (ESC) operators (e.g. Google, KeyBridge)\cite{escweb}. The cost to acquire the C-CSI can be incurred in several ways.  For example, i) the primary could need to devote resources throughout the location to sense the competitor's channel and estimate its traffic patterns; this could cost the primary in terms of power consumption or processing resources. ii) Alternatively, the primary could incentivize users to collaboratively crowdsource such measurements using their own wireless devices (see e.g. \cite{Fatemieh,Shi}). Here the cost to the primary is the payments to incentivize participation. iii) Finally, these estimates could be performed by a \lq{}third party'  that deploys a network of wireless spectrum sensors as in \cite{Petrioli}\footnote{Another example of such a 3rd party sensing network is the ESC operators (e.g. Google, KeyBridge) in the 3.5 GHz  as mentioned before.}
 and uses these to sell \lq\lq{}forecast" of the CCS to a primary.
%For example, i) a primary may devote its own resources throughout the location to sense the competitor's channel and estimate its traffic patterns. The above processes require power which is costly to procure.  ii) A primary may estimate the C-CSI using the sampled data from mobile wireless devices of the end-users present ubiquitously;  this mechanism, known as \lq{}crowd sourcing spectrum measurements', has  already been been proposed and proved to be efficient \cite{Fatemieh, Min}.  However, monetary incentives must be provided to the end-users to participate in the above mechanism; a primary incurs a cost to provide the monetary incentives. iii)  Channel state estimation  may also be performed by a commercial firm which has a widely deployed Cognitive radio network \cite{Shankar}. The primary may acquire the C-CSI by paying them.
Additionally, there may be errors in this estimation due to traffic variability,  noise and hidden terminal effects. Characterizing how the C-CSI costs and errors impact the competition between the primaries is the goal of this paper. 

 We now illustrate the challenges involved in analyzing this setting. A primary with an available channel needs to select whether to acquire the C-CSI  and the price for its channel.  However, while taking its own decision, a primary does not know whether its competitors decided to acquire the C-CSI or not.  The importance of C-CSI  is immense. For example,  if the competitors' channels are not available, a primary can sell its channel at the highest possible price due to a lack of competition. This suggests that a primary would want to acquire C-CSI. However, if the other primaries are available,  the primary may have to select lower price because of the competition.   This lower price may not be enough to cover the C-CSI acquisition cost.  Hence, it is not a priori clear whether primaries will acquire the C-CSI.

The inherent uncertainty in the competitors\rq{} decisions also complicates the pricing strategy of a primary.   We now illustrate this when there are two primaries. If one primary ($A$) knows that the channel of the other primary ($B$) is available, $A$'s pricing decision depends on if $B$ also knows $A$'s availability or not, as this will in turn impact the competition it faces. On the other hand, if primary $A$ does not know $B$'s channel state, then it must balance between selecting a lower price, which will increase its probability of selling should the competitor be present, and a higher price, which will give it more revenue should $B$ be absent. %The pricing decision is also not apriori clear.%Thus, it is also not apriori clear how a primary will select its price. 

To address these issues, we focus on a market with two primaries (though some extensions to more primaries are addressed in Section VII).  The restriction to two primaries is mainly done to simplify our analysis:- however,  there are practical situations where this is a reasonable assumption\footnote{For example, if the primaries are wireless service providers, then in many places this market is dominated by only two such providers. Hence, wireless market with two players have been considered in \cite{duantmc,kimjsac,kiminfo,ren}.} We study  a non-cooperative game with the primaries as players.  When its channel is available, each primary decides i) whether to acquire  the C-CSI  or not, and  ii) a price.  If the primary acquires the C-CSI, it may select different prices depending its estimate of the CCS. When the primary does not acquire the C-CSI, it has to select a price irrespective of the CCS. We characterize the Nash equilibrium (NE) strategies. 

{\em Basic Model}: We first consider a {\em basic model} (Section~\ref{sec:basic_model}) where the acquired C-CSI is equal to CCS. The channel availability probability and the C-CSI acquisition costs are  the same for both the primaries. We introduce a $[T,p]$ class of strategies (Definition~\ref{defn:classtp}) and show that the NE strategy profile is of this form. In a $[T,p]$  strategy, a primary acquires the C-CSI,  when the cost is below $T$ with probability (w.p.) $p$, and does not acquire the C-CSI, otherwise. We allow the primaries to potentially randomize their prices given the C-CSI acquired. Using this characterization, several interesting properties of the NE are then shown. First, $p$ is increasing as the C-CSI cost decreases, but a primary never acquires C-CSI w.p.1. Second, $T$ increases in the uncertainty of the CCS. Third, we have the following counter-intuitive result: {\it the ability to acquire  C-CSI does not impact the expected payoff of a primary}. Finally, we show that the equilibrium pricing strategy is indeed to randomize and we characterize the resulting price distributions. 
 
   {\em Impact of the Estimation Error}:  We, subsequently, investigate the impact of  C-CSI  errors on the decision, payoff and the pricing  strategy of a primary (Section~\ref{sec:errorestimation}). Conventional wisdom might suggest that errors in estimating the CCS should decrease the payoff. However, this is not definite because errors might also make primaries less aggressive in lowering their prices to compete against competitors when they acquire C-CSI, which could lead to higher profits. Interestingly, we show that this is indeed the case, i.e., the primary's pay-off is higher with C-CSI errors.
    
 We show that there again exists a $[T,p]$ type NE strategy, where the threshold $T$ decreases as the estimation error  increases. Intuitively, increases in the error in estimating the CCS makes the  acquisition of the C-CSI less attractive for larger costs.   The probability $p$ again increases as the C-CSI cost decreases, but now the expected pay-off of a primary also increases. 
 
The NE pricing strategies are again randomized. In the basic model, when the primary accurately estimates that the competitor's channel is unavailable, it selects the highest possible price $v$ w.p. $1$. However, when there are errors in C-CSI, the competitor's channel may be available even when it estimates that it is not. Thus, a primary also selects a lower price.
%We show that when the primary estimates that the competitor\rq{}s channel is available (unavailable, resp.)   then it randomizes among the prices in the interval $[\tilde{p}_1,L_N]$ ($[L_0,v]$, resp.) where $L_0>L_N$ and $v$ is the highest possible price. When the primary does not estimate the CSI, then, it neither selects a too high price nor a too low price; specifically, it selects its price from the interval $[L_N,L_0]$. Apparently, when a primary knows that the channel of its competitor is available (unavailable, resp.) with a higher probability, a primary selects a lower (higher, resp.) price.    The impact of this difference is not {\em a priori} clear. For example, having a lower cost may encourage a primary to acquire C-CSI more often, but the higher cost primary would also know this and thus may offer a lower price to compensate which in turn reduces the payoff of the primary with a lower acquisition cost.  

{\em Impact of Unequal C-CSI acquisition costs}: We also investigate the setting where C-CSI  acquisition costs may differ across the primaries (Section~\ref{sec:unequalcost}). 
We show that $[T,p_i]$ strategies are NE for primary $i$. Each primary has the same threshold $T$ but different probabilities $p_i$, where $p_i$ is larger for the primary with the lower cost. The expected payoff of the primary with a higher cost  is the same as it would have obtained if there was no provision of acquiring the C-CSI. In contrast to the basic model, the expected payoff of the primary with a lower cost  is higher compared to the other primary. %and the payoff decreases as the differences between the costs decrease. 

 The primaries again randomize their prices, where the primary with a lower cost chooses its price from a larger range when it acquires the C-CSI (and a smaller range when it does not). In contrast to the basic model, the primary with a higher acquisition cost also has a point mass at the highest price in the price distribution when it does not acquire the C-CSI, i.e., the primary with a higher acquisition cost selects higher prices with higher probabilities when it does not acquire the C-CSI. 

{\em Impact of heterogeneous availabilities}: We next consider the impact of different availability probabilities across the primaries  (Section~\ref{sec:unequalavail}).
 %A primary will less (more, resp.) likely to acquire the C-CSI if the uncertainty regarding the competitor's channel is low (high, resp.). Thus, it is not apriori clear, the frequency with each primary acquires the CSI. Conventional wisdom suggests that the payoff of the primary should not decrease with the decrease in the cost of acquiring the CSI of the competitor. However, it is not definitive because of the following. If the primary which has higher availability probability (say, $1$) acquires the CSI with a higher probability as the cost decreases, it selects a lower price when the channel of the other primary (say, $2$) is available. Since the channel of primary $1$ is available with a higher probability, the channel state of primary $2$ will be known to primary $1$ with a higher probability. Thus, primary $2$ must select a lower price in order to sell its available channel which may reduce the payoff of primary $2$. The pricing strategy of each primary also inherently depends on the frequencies with which each primary acquires the CSI of its competitor. We characterize the NE strategies and the payoffs of the primaries. 
  Again, we show that the NE strategy  is of the form $[T,p]$. In this case, the primary with a higher availability has a higher threshold and a higher probability of acquiring the C-CSI. The expected pay-off of the primary with a higher availability is greater, and interestingly, the expected pay-off of the other primary   {\em decreases} as the cost for acquiring the C-CSI decreases which negates the {\em conventional wisdom that the payoff of a primary should not decrease as the cost of acquiring the C-CSI decreases}.  We also  show that  the pricing strategy of each primary is randomized over a given interval. However, the primary which has a higher availability probability selects a price from a larger (smaller, resp.) interval when it acquires (does not acquire, resp.) the C-CSI.  
%We show that $T_1>T_2$ when primary $1$ has a higher availability probability. Thus, primary $1$ acquires the CSI of its competitor for larger values of the cost of acquiring the CSI.  We also show that $p_1>p_2$. Thus, primary $1$ also acquires the CSI of its competitor with a higher probability. The expected payoff of primary $1$ is also higher compared to primary $2$. Moreover, the expected payoff of primary $2$ decreases as the cost of acquiring the CSI of its competitor decreases 

%We show that $\tilde{p}_1>\tilde{p}_2$, thus, even when primary $1$ acquires the CSI of its competitor it selects a higher price compared to primary $2$ when the channel of the competitor is available.

%We show that primary $i$ selects its price from the interval $[L,\tilde{p}_i]$ ($[\tilde{p}_i,v]$ ,resp.) when it acquires (does not acquire, resp.) the CSI of the other primary and the channel of the other primary is available.  We also show that in contrast to the basic model, primary $1$ which has a higher availability probability selects its price from a function which has a discontinuity at the highest possible price when it does not acquire the CSI of the competitor. Apparently, since primary $1$ has a higher availability probability, it selects higher prices with higher probabilities when it does not acquire the CSI of its competitor. 

{\em Related Literature}: Price selection in oligopolies has been extensively investigated in economics 
(dating back to the classic work of Bertrand\cite{bertrand}) as well as in the wireless setting. For wireless applications,   we  divide the entire genre of works in two parts: i) Papers in which prices are set via an auction (e.g. \cite{sengupta,Xu}), and ii) Papers, such as ours, which model  price competition as a non co-operative game (\cite{Ileri, Mailespectrumsharing, Mailepricecompslotted, Xing, Niyatospeccrn, Niyatomultipleseller, kavurmacioglu,jia,yang,yitan,zhang,lin,duan,kim,Gaurav1,isit,Janssen,ciss, recentinforms, Osborne, Kreps}).  To the best of our knowledge, this paper is the first to include the option of acquiring the C-CSI in the strategy space of the players.   Additionally, compared to  the first category of papers,  our model is readily scalable and a central auctioneer is not required. Most of the papers in the second category considered that the primary is aware of the competitor's channel state, which is also the case in the classic Bertrand model \cite{bertrand}.  The exceptions are some recent papers \cite{Gaurav1,isit,arnob_ton,Janssen,recentinforms} which considered that the primaries can not acquire the C-CSI, so that each primary selects its  price not knowing the CCS\footnote{ \cite{recentinforms} considered that each player has private information such as capacity  which {\em is unknown to the} competitor. Thus, this setting is equivalent to the setting where a primary can not know  the channel availability of its competitor.}.  In contrast, we consider that  primaries have the option of acquiring the C-CSI.  A primary now needs to judiciously decide whether to acquire the C-CSI  and selects a price based on the result of this decision.   In our setting, a primary ($A$, say)  is also unaware whether the other primary has  acquired the CSI of $A$, while in \cite{Gaurav1,isit,arnob_ton,Janssen,recentinforms} the primary $A$ knows that its channel state is unknown to other primaries.   Naturally, these papers did not consider  the impact of the C-CSI acquisition costs, estimation error and different channel availability probabilities.

%We subsequently focus on the uniqueness of the NE strategy profile. We show that there is no NE possible apart from the NEs described above (Theorem~\ref{thm:unique}). Thus, there is no NE where a primary selects $Y$ w.p. $1$. It only selects $Y$ when the other primary randomizes between $Y$ and $N$.  
%Our result also reveals that the expected payoff attained in both the NE strategy profiles are identical (Theorems~\ref{thm:nandn} and \ref{thm:mixedstrategy}). Thus, even when a primary can acquire the channel state information of the other primary, it does not affect the expected payoff of the primary, irrespective of the cost. 
\section{System Model}
We consider a secondary spectrum market with two primaries (players) and one secondary\footnote{If there are more than one secondary, then the decision of the primary is trivial, it will always sell its channel, thus, it will select the highest possible price and will never acquire the CSI. Our model can also accommodate the setting where the number of secondaries is not known a priori.}. We first provide the basic system model in Section~\ref{subsec:basic_model} and  subsequently, we specify certain generalizations of the model in Section~\ref{sec:generalization}. Commonly used notations are given in Table~\ref{tab:notations}.

\subsection{Basic Model}\label{subsec:basic_model}

\begin{color}{red}We consider a model in which spectrum leases occur over a sequence of fixed time-slots and focus on one such time-slot\footnote{Of course, in practice primaries and secondaries may take a longer term view, requiring a dynamic game model, but we leave such considerations for future work.}. The duration of a time-slot could range from minutes to hours depending on the underlying technologies and other considerations (e.g., the overhead in running the market). Here the main consideration is that if a primary announces a price for its channel, it commits to allowing secondary usage of the channel for the next time-slot.  The availability of a primary's channel for sale will depend in part on the primary's own traffic. We define a channel to be in state $1$ if it is available and otherwise, it is in state $0$.
\footnote{Availability could also depend on the rate available to a secondary being large enough, which in turn might depend on estimates of fading and interference levels over the next time-slot.  These can also be viewed as part of the CCS, though in such cases a better model might be to allow for the state of the channel to vary with the rate as in \cite{isit,arnob_ton} (see also Sect. VII).}   Each primary's channel is available w.p. $q$, where $1>q>0$ and $q$ is common knowledge. If C-CSI is not acquired, a primary is unaware of the realized state of its competitor.\footnote{In other words, this is viewed a Bayesian game in which the type of a primary is its channel state, and $q$ is the belief that each primary has about the type of its competitor, where here all beliefs are consistent and reflect the true type distribution. }\end{color}

 If a primary's channel is available, the primary can sell it for secondary use during the next time slot. In this case, it  decides whether to acquire the  C-CSI before deciding the price for its available channel. For example, if the market opens at time $t$, then at time $t-\delta$, a primary decides whether to acquire the C-CSI or not. By acquiring the C-CSI, a primary obtains an estimate of the competitor's channel state (CCS) for the entire duration of the slot. The C-CSI estimation  is accurate and thus, a primary knows the exact CCS.
  The primary incurs a cost $s$ if it estimates the C-CSI.\footnote{Note that a primary (specially, if it is a wireless service provider) may already have 
some knowledge of its competitor (especially long-scale trends obtained via market research). However, in our scenario 
the primary needs to estimate the CCS  on a smaller time scale (i.e., the slot duration at which the market operates). This short time-scale CCS is not readily available to operators.  Hence, the primary needs to incur a separate cost to acquire the C-CSI.}

%% Note that if a primary estimates the CSI, it also needs some time to estimate the CSI of the competitor. Thus, the secondary or the primary users can not transmit at that time. Hence, it also induces a penalty which can also be considered as the cost incurred by the primary. 
Each primary then decides its price and posts it to the secondary at the beginning of the slot.\footnote{Note that if the channel of a primary is unavailable, then the secondary will never buy the channel irrespective of its price}  If the channels of both the primaries are available for sale, then, the secondary will buy the lower priced channel. If the two available channels have the same price, then a secondary will choose either of them w.p. $1/2$.

 %The secondary is willing to pay up to $v$ for an available channel.  If the channels of both the primaries are available for sale, then, the secondary will buy the lower priced channel. If the two available channels have the same price, then a secondary will choose either of them w.p. $1/2$.

\subsection{Generalization of the Model}\label{sec:generalization}
\subsubsection{Estimation Error}\label{sec:estimation_model} When a primary acquires the C-CSI, it estimates the CCS for the entire slot duration. Because of the channel fading, noise in the environment, the random variation of the usage pattern of the channel, that estimation may be erroneous.  In Section~\ref{sec:errorestimation}, we consider such a setting where the estimated CCS is accurate only with probability $q_s$. Specifically, if a primary acquires the C-CSI, then, it will estimate that the CCS is $1$ ($0$, resp.) w.p. $q_s$ if the original CCS is $1$ ($0$, resp.). Without loss of generality, we assume that\footnote{If $q_s=1/2$, then there is no point of estimating the CCS as the setting becomes equivalent to the setting where a primary does not know the channel state of its competitor.} $1/2<q_s\leq 1$.   Note that when $q_s=1$, there is no estimation error and a primary  accurately estimates the CCS, thus, the basic system model  is a special case of this model.

\subsubsection{Different Costs of Acquiring the C-CSI}\label{sec:differentcost_model}
  In Section~\ref{sec:unequalcost} we generalize the basic model to allow each primary $i$ to incur a different cost, $s_i$  for acquiring the C-CSI.

\subsubsection{Different Channel Availability Probabilities}\label{sec:differentavail_model}
 We generalize the basic model in Section~\ref{sec:unequalavail} by allowing each primary $i$ to have different availabilities $q_i$. 

\subsection{Payoff of a primary}
 If primary $i$ sets its price at $x$ and it decides to acquire the C-CSI, then, its payoff is
\begin{align}
\begin{cases}
x-c-s_i,\quad \text{if the primary is able to sell its channel,}\nonumber\\
-s_i,\quad \text{otherwise.}
\end{cases}
\end{align}
Note that when both the primaries incur the same cost to acquire the C-CSI, then we have $s_i=s$. 

When a primary does not acquire the C-CSI, then its payoff at price $x$ is
\begin{align}
\begin{cases}
x-c,\quad \text{if the primary is able to sell its channel,}\nonumber\\
0, \quad \text{otherwise.}
\end{cases}
\end{align}
\subsection{Strategy of a Primary}
If the channel of a primary is available\footnote{If the channel of the primary is unavailable, then its decision is immaterial.}, it will take a decision $D\in \{Y,N\}$ where $Y$ denotes incurring the cost $s$ to estimate the C-CSI and $N$ denotes not acquiring the C-CSI.  Primary $i$ also sets a price for its available channel. Note that the primaries' decisions are simultaneous so that no primary is aware of the decision of its competitor when making its own decision.  If a primary selects $Y$,  it selects a price using either a distribution  $F_1(\cdot)$ or $F_0(\cdot)$ depending on whether it estimates the CCS as $1$ or $0$, respectively.   If a primary selects $N$, then it does not acquire the C-CSI, so it only selects its price using a single distribution $F(\cdot)$. 
\begin{definition}
The strategy $S_{i}$ of primary $i=1,2$ is $\sigma(D,\mathbf{F})$ where $\mathbf{F}=(F_0,F_1)$ when $D=Y$, $\mathbf{F}=(F,F)$ when $D=N$, and $\sigma(D,\mathbf{F})$ is a probability mass function over the strategies $(D,\mathbf{F})$.

The strategy of the primary other than $i$ is denoted as $S_{-i}$.
\end{definition}
\begin{definition}
$E\{u_{i}(S_i,S_{-i})\} $ denotes the expected payoff of primary $i$ when its channel is available, it uses strategy $S_{i}$ and the other primary uses strategy\footnote{Note that we consider the expected payoff of a primary as the expected payoff conditioned on the channel of the primary being available. Naturally if the channel of the primary is unavailable, it will attain a payoff of $0$ and so its unconditional expected payoff is simply this quantity scaled by $q_i$.} $S_{-i}$. 
\end{definition}
\subsection{Solution Concept}
We consider a non-cooperative game where each primary only wants to maximize its own expected payoff. We use the (Bayesian) Nash Equilibrium as a solution concept.
\begin{definition}
 A \emph{Nash  equilibrium} (NE)  $(S_1,S_2)$ is a strategy profile such that no primary can improve its expected payoff by unilaterally deviating from its strategy\footnote{\begin{color}{red}In the language of Bayesian games,  the expectation here is with respect to  a player's belief about the availability of the competitor's channel (which as we have noted can be viewed as the competitor's type). The belief about the competitor's type (i.e. availability) also changes depending on the information the primary has. Given the consistent belief assumption, we simply refer to the resulting Bayesian Nash equilibrium as a NE in the following.\end{color} }  \cite{mwg}. Thus, 
\begin{align}
E\{u_{i}(S_{i},S_{-i})\}\geq E\{u_{i}(\tilde{S}_{i},S_{-i})\} \ \forall \ \tilde{S}_{i}.
\end{align}
A strategy profile is symmetric if $S_{i}=S_{j}$ for any pair of players $i$ and $j$. 
\end{definition}

\section{Results of the Basic Model}\label{sec:equalcost}
We, first, investigate the system model depicted in Section~\ref{subsec:basic_model}. Note that this setting is a special case of each of the more generalized settings depicted in Sections~\ref{sec:estimation_model},\ref{sec:differentcost_model}, and \ref{sec:differentavail_model}.

\subsection{Goals}
Acquiring the CSI of the competitor has potential advantages. For example, if a primary knows that the channel of its competitor is unavailable, then, the primary can select a  high price because of the lack of competition. However, a primary has to incur a cost to acquire the CSI. Thus, conventional wisdom suggests that as the cost of acquiring the CSI decreases, a primary should more frequently acquire the CSI and thereby gain a higher payoff in an NE. However, conventional wisdom is not definitive because of the following. The payoff of a primary ($1$, say) also inherently depends on the decision  of other primary ($2$,say). If the primary $2$ decides to acquire the CSI of primary $1$, then primary $2$ selects a lower price when the channel of primary $1$ is available, thus, in response\footnote{In an NE, each player selects a best response strategy in response of the strategy of the other player.}, the primary $1$ also selects a lower price in the NE which reduces its payoff. On the other hand, acquiring the CSI of the competitor is also not ruled out either.  This is because a primary may acquire the CSI of its competitor and take advantage of the extra information.  Thus, it is not apriori clear whether a primary will acquire the CSI of its competitor. It is also not clear even if a primary decides to acquire the CSI  at what values of $s$ it will do so.   We resolve all the above quandaries.

The inherent uncertainty in the competitor\rq{}s decision also complicates the pricing strategy of the primary.   If primary $1$ knows that the channel of primary $2$ is available, its pricing decision still depends on if primary $2$ also know that its  channel  is available; if not then the primary $2$ may randomize among multiple prices, enabling primary $1$ to charge a higher price. If primary $2$ knows that the channel of primary $1$ is available, primary $2$ selects a lower price, in response primary $1$ also selects a lower price. On the other hand,   if the primary does not know the channel state of its competitor, then it may have to randomize over prices from an interval which is not known apriori.  Thus, it is also not apriori clear how a primary will select its price. We also characterize NE pricing strategies.

\subsection{Results}
\subsubsection{A class of Strategy for selecting $Y$}
We first define a class of strategies for  selecting $Y$. 
\begin{definition}\label{defn:classtp}
A $[T,p]$ strategy is a strategy where a primary selects 
\begin{align}
\begin{cases}
Y, \quad \text{w.p. } p \quad \text{when } s<T\nonumber\\
N, \quad \text{w.p. } 1 \quad \text{when }s\geq T
\end{cases}
 \end{align} for $0<p\leq 1$. The probability $p$ may be a function of $s$.  
\end{definition}
We show that in the basic model as well as in different generalizations, the NE strategy  is a $[T,p]$ strategy where $p$ is a strictly decreasing function of $s$. We also characterize  $T$, and $p$ in different generalizations of the basic model.

It is intuitive that in an NE, a primary will choose $Y$ with a probability $p=f(s)$ where $f(\cdot)$ is a decreasing function. It is also intuitive that $f(s)=0$ when $s>(v-c)$ as the maximum expected payoff that a primary attains is $v-c$. We, however, show that $f(s)$ can be $0$ even for smaller values of $s$. We also show that $p$ never becomes $1$ for any positive value of $s$.  We fully characterize the function $f(\cdot)$ and the value of the threshold $T$ above which a primary does not select $Y$.
\subsubsection{Main Results}
Our main results are--
\begin{itemize}
\item Regardless of the cost $s$,  there is no NE where both the players have  full knowledge of each other\rq{}s channel states w.p. $1$ (Theorem~\ref{thm:yandy}). There is no NE where  one primary has the complete knowledge of  the channel state of its competitor, but the other does not  (Theorem~\ref{thm:yandn}). Thus,  a primary can only select $Y$, if the other primary randomizes between $Y$ and $N$. 
\item We show that the {\em unique} NE strategy is a $[T,p]$ strategy  where $T=q(v-c)(1-q)$ (Theorems~\ref{thm:nandn} and \ref{thm:mixedstrategy}). Note that $T$ increases when the uncertainty of the availability of the channel increases i.e. $q$ becomes closer to $1/2$. Intuitively, when either $q$ is large or small, the uncertainty of the competitor\rq{}s channel decreases, thus, a primary selects $N$ for higher values of $s$. We also characterize the value of $p$ as a function of $s$ and show that $p$ is a decreasing function of $s$. 
\item  The expected payoff that a primary attains in any NE strategy profile is $(v-c)(1-q)$ (Theorems~\ref{thm:nandn} and \ref{thm:mixedstrategy}). Thus, the expected payoff of a primary is independent of the value of $s$. \cite{Gaurav1} shows that when a primary can not acquire the CSI of the competitor, then its payoff is $(v-c)(1-q)$. Thus, the provision of acquiring the CSI of the competitor does not impact the expected payoff of a primary. 
\item Theorem~\ref{thm:nandn} shows that when each primary selects $N$, then each primary randomizes its price from the interval $[\tilde{p},v]$. Theorem~\ref{thm:mixedstrategy} shows that when a primary selects $Y$ ($N$, resp.) and the channel of its competitor is available, then the primary selects its price from the interval $[\tilde{p}_1,\tilde{p}_2]$ ($[\tilde{p}_2,v]$,resp.). Intuitively, as the uncertainty of the availability of the competitors increases, a primary selects a higher price.  We also show that $\tilde{p}<\tilde{p}_1$. Thus, a primary selects its price from a larger interval when it randomizes between $Y$ and $N$. 
\end{itemize}
We now describe the results in details. We first state some price distributions $\phi(\cdot)$ and $\psi(\cdot)$ which we use throughout.
\begin{align}
\phi(x)=& 0\quad \text{if } x<\tilde{p}\nonumber\\
& \dfrac{1}{q}\left(1-\dfrac{(v-c)(1-q)}{x-c}\right)\quad \text{if } \tilde{p}\leq x\leq v\nonumber\\
& 1\quad \text{if } x>v.\label{eq:phi}\\
\psi(x)=& 
0 \quad \text{if } x<\tilde{p}\nonumber\\
& (1-\dfrac{(v-c)(1-q)}{x-c})\quad \text{if } \tilde{p}\leq x<v\nonumber\\
& 1-q, \quad \text{if } x=v\nonumber\\
& 1 \quad \text{if } x>v. \label{eq:psi}\\
& \text{where } \tilde{p}=(v-c)(1-q)+c.\label{eq:tildep}
 \end{align}
\subsection{Does there exist an NE where both primaries select $Y$?}
\begin{theorem}\label{thm:yandy}
There is no Nash equilibrium where both the primaries choose $Y$ w.p. $1$.
\end{theorem}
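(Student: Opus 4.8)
The plan is to argue by contradiction: assume there is an NE in which both primaries choose $Y$ with probability $1$, and then exhibit a profitable unilateral deviation to $N$ for one of them.

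First I would unpack what ``both choose $Y$ w.p.\ $1$'' forces. Since the C-CSI is perfect in the basic model, in this putative NE each primary learns the exact channel state of its competitor before pricing. Condition, for primary $1$ (whose channel is available, by our convention), on the two possible states of primary $2$. If primary $2$'s channel is unavailable --- an event of probability $1-q>0$ --- primary $1$ faces no competition, so its best response is to post the price $v$, with gross payoff $v-c$ (this is the optimal choice of $F_0$ at that information set). If primary $2$'s channel is available --- probability $q>0$, so this information set is on the equilibrium path --- then both primaries know both channels are available and are playing a continuation game that is precisely symmetric Bertrand price competition over the effective price range $[c,v]$ (no primary ever prices below $c$, since selling below $c$ is worse than not selling, nor above $v$, since it would then never sell). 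Because this information set is reached with positive probability, the equilibrium must induce a Nash equilibrium of this Bertrand subgame there, and by the standard undercutting argument every equilibrium of that subgame gives each primary zero gross payoff (both post $c$). Combining the two cases, primary $1$'s equilibrium expected payoff under the hypothesized NE is $0$ in the competition event and $v-c$ in the monopoly event, minus the acquisition cost $s$ paid regardless, i.e.\ $(1-q)(v-c)-s$.

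Now the deviation. Suppose primary $1$ unilaterally switches to $N$ and posts the single price $v$ (or $v-\epsilon$ to sidestep the tie-at-$v$ convention). Since the two primaries act simultaneously, primary $2$ does not observe this and still plays its equilibrium strategy: it chooses $Y$ w.p.\ $1$ and, upon learning that primary $1$'s channel is available (which it is), prices at $c$. Hence primary $1$ sells at $v$ exactly when primary $2$'s channel is unavailable (probability $1-q$) and sells nothing otherwise, while paying no acquisition cost; its payoff is $(1-q)(v-c)$. Since $s>0$, this strictly exceeds $(1-q)(v-c)-s$, contradicting the NE property, so no such equilibrium exists.

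The main obstacle is the middle step: justifying that, in the hypothesized NE, the continuation play at the ``both channels available'' information set must be the zero-profit Bertrand outcome. I would make this rigorous with the usual support argument --- if either primary's conditional pricing distribution earned positive gross profit $\pi>0$ there, its support would lie in $[c+\pi,v]$, the other primary could undercut that infimum and do at least as well, forcing equal positive profits for both; then examining the top of the common support (an atom there is forced, yet pricing exactly at that atom wins only through the tie-break and yields half the claimed profit) produces a contradiction, so $\pi=0$. Everything else --- the monopoly price $v$ when alone, and the deviation computation --- is routine.
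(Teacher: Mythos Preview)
Your proof is correct and follows essentially the same route as the paper: assume both primaries choose $Y$, identify the continuation as Bertrand competition (zero profit when both channels are available, $v-c$ when alone) so the equilibrium payoff is $(1-q)(v-c)-s$, then show that deviating to $N$ and pricing at $v$ yields $(1-q)(v-c)$, which is strictly better. The only difference is that you spell out the support argument for the zero-profit Bertrand outcome, whereas the paper simply invokes it by name.
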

\textit{Outline of the proof}:
Assume  both players choose $Y$, so that they know each other\rq{}s channel state. Thus, the competition becomes similar to  {\em Bertrand Competition}\cite{mwg}, i.e. if the channel of its competitor is unavailable, then the primary will set its price at the $v$, otherwise it will set its price at the lowest value $c$. Now, the probability with which the channel of a primary is available is $q$. Thus, the expected payoff of a player is 
\begin{align}\label{eq:payoff_strategy}
& (v-c-s)(1-q)+(c-c-s)q
 \end{align} 
 Now consider the following unilateral deviation for a primary: Primary $1$ selects $N$ and sets its price at $v$ w.p. $1$. The channel of primary $1$ will be bought when the channel of primary $2$ is not available for sale. Since primary $1$  decides not to incur the cost $s$, thus, its expected payoff is 
 \begin{align}
 (v-c)(1-q)
 \end{align}
 This is strictly higher than (\ref{eq:payoff_strategy}). Hence, the strategy profile can not be an NE.\qed
 
  The above theorem means that there will be at least one primary which will be unaware of its competitor\rq{}s channel state with a non-zero probability.

%\begin{lem}\cite{Gaurav1}
%Suppose primary $1$ has the channel availability probability of $q_1$ and primary $2$ has the channel availability of $q$ with $q_1\geq q$, then in the unique NE pricing strategy profile, primary $1$ selects its price according to
% 
%and the pricing strategy of primary $2$ is 
%
% When $q=q_1$, then $\psi(\cdot)=\phi(\cdot)$.
% \end{lem}

\subsection{Does there exist an NE where one selects $Y$ and the other selects $N$?}
 \begin{theorem}\label{thm:yandn}
 For positive $s>0$, there is no NE where a primary selects $Y$ w.p. $1$ and the other selects $N$ w.p. $1$. 
 \end{theorem}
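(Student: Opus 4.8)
The plan is to suppose, for contradiction, that such an NE exists --- say primary $1$ plays $Y$ w.p.\ $1$ and primary $2$ plays $N$ w.p.\ $1$ --- first pin down the equilibrium pricing it induces, and then exhibit a strictly profitable deviation for primary $1$. Because primary $1$ acquires the C-CSI, it learns whether primary $2$'s channel is available. If it is not (probability $1-q$), primary $1$ faces no competitor and posts $v$ w.p.\ $1$. If it is available (probability $q$), primary $1$ randomizes with some distribution $F_1$, while primary $2$ --- not observing primary $1$'s state --- uses a single distribution $F$. I would then run the standard analysis of Bertrand-type price competition under incomplete information to show that $F_1$ and $F$ are non-degenerate, their supports are intervals with a common left endpoint $a$ at which neither has an atom, $\sup\operatorname{supp}(F)=v$ with $F$ placing an atom at $v$, and primary $1$ puts no mass at $v$ in the competition mode (there, posting slightly below $v$ strictly dominates posting $v$). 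This pricing characterization is the main obstacle: it is routine but needs careful bookkeeping of the tie-breaking rule and of the atoms.

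Given this structure, primary $2$'s indifference across $\operatorname{supp}(F)$, evaluated as the price tends to $v$ (where primary $1$ has no atom), gives primary $2$'s equilibrium payoff $(v-c)(1-q)$; evaluated at $x=a$ it gives $a-c=(v-c)(1-q)$, i.e.\ $a=\tilde p$ with $\tilde p$ as in \eqref{eq:tildep}. Primary $1$'s indifference across $\operatorname{supp}(F_1)$ then forces $(x-c)\bigl(1-F(x)\bigr)$ to equal the constant $a-c=(v-c)(1-q)$ for $x\in[\tilde p,v)$, so the atom of $F$ at $v$ has mass $1-q$, and the expected revenue net of $c$ that primary $1$ earns in the competition mode equals $(v-c)(1-q)$. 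Hence primary $1$'s overall equilibrium payoff is
\begin{align}
\pi_1 = -s + (1-q)(v-c) + q(v-c)(1-q) = (v-c)(1-q^2)-s .\nonumber
\end{align}

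Finally, consider the deviation in which primary $1$ instead plays $N$ and posts the deterministic price $v-\epsilon$ for a small $\epsilon>0$, holding primary $2$'s strategy fixed. Primary $1$ no longer pays $s$, and it sells whenever primary $2$ is unavailable or is available but priced above $v-\epsilon$, so its payoff equals $(v-\epsilon-c)\bigl[(1-q)+q\bigl(1-F(v-\epsilon)\bigr)\bigr]$, which tends to $(v-c)\bigl[(1-q)+q(1-q)\bigr]=(v-c)(1-q^2)$ as $\epsilon\downarrow 0$ (using that $F$'s atom at $v$ is $1-q$). For every $s>0$ this limit strictly exceeds $\pi_1=(v-c)(1-q^2)-s$, so for all small enough $\epsilon$ primary $1$ strictly gains by deviating, contradicting the NE assumption; hence no such equilibrium exists. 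Along the way one also checks the trivial degenerate possibilities (e.g.\ $F$ having no atom at $v$), which only make the deviation easier, and notes $\tilde p<v$ so that $v-\epsilon$ is an admissible price.
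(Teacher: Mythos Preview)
Your proposal is correct and follows essentially the same route as the paper. Both arguments (i) pin down the induced pricing equilibrium when primary~$1$ plays $Y$ and primary~$2$ plays $N$, (ii) compute primary~$1$'s resulting payoff as $(v-c)(1-q)+q(v-c)(1-q)-s=(v-c)(1-q^2)-s$, and (iii) exhibit the same profitable deviation: primary~$1$ switches to $N$ and posts a price just below $v$, exploiting the atom of mass $1-q$ that primary~$2$'s distribution places at $v$. The only substantive difference is in step~(i): the paper imports the pricing characterization as Lemma~\ref{lm:yn} (Theorem~2 of \cite{Gaurav1}), which directly gives $F_1=\phi$ and $F=\psi$, whereas you propose to rederive these distributions from first principles via the standard mixed-strategy Bertrand analysis. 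Your sketch of that derivation (common left endpoint $\tilde p$, no interior atoms, primary~$1$ atomless at $v$, primary~$2$'s atom of size $1-q$ at $v$) is accurate and lands on exactly $\phi$ and $\psi$. One small caution: your closing remark that ``$F$ having no atom at $v$ \ldots\ only make[s] the deviation easier'' is not literally right in isolation---if the pricing were different, $\pi_1$ would change too---but it is moot because the pricing equilibrium is unique and does feature the atom, so your main line is sound.
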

First, we provide the intuition behind the result. The primary (say, $1$) which selects $Y$ tends to select lower prices with higher probability when it knows that the channel of the other primary is available. Thus, in response the primary $2$ (which selects $N$)  selects higher prices with higher probabilities in order to gain a high payoff in the event that the channel of primary $1$ is unavailable since it knows that its probability of selling is very low in the event that the channel of primary $1$ is available. The primary $1$ can then gain a higher payoff  by selecting $N$ and higher prices as it does not have to incur the cost $s$. Hence, the primary $1$  has an incentive to deviate from its own strategy. The detailed proof is given below.
 \begin{proof}
 Without loss of generality, assume that primary $1$ selects $Y$ and primary $2$ selects $N$. First, we discuss the pricing strategies of primaries $1$ and $2$ and calculate the expected payoff of primary $1$, subsequently, we show that primary $1$ has an incentive to deviate. 
 
When primary $1$ knows that the channel of primary $2$ is not available, then primary $1$ will be able to sell its channel at the highest possible price, thus, it will select $v$ w.p. $1$ and its payoff if $(v-c)-s$. The above event occurs w.p. $1-q$.

 Now, we consider the case when the channel of primary $2$ is available. While deciding its price, primary $2$ only knows that the channel of primary $1$ is available w.p. $q$.  However, while selecting its price primary $2$ knows that the primary $1$ will know the channel state of primary $2$ if the channel of primary $1$ is available. Hence, when primary $1$ knows that the channel of primary $2$ is available, then the pricing decision becomes equivalent to the setting where primary $1$ knows that the channel of primary $2$ is available w.p. $1$ and primary $2$ knows that the channel of primary $1$ is available w.p. $q$.   The NE pricing strategy in the last setting has been studied in \cite{Gaurav1} and using Theorem 2 in \cite{Gaurav1} we have
\begin{lem}\label{lm:yn}
 Primary $1$ must select its price according to $\phi(\cdot)$ (given in (\ref{eq:phi})) and primary $2$ must select its price according to $\psi(\cdot)$ (given in (\ref{eq:psi})). 
\end{lem}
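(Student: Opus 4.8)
The plan is to reduce the lemma to the mixed–strategy pricing equilibrium already characterized in [Gaurav1, Theorem~2] and then, for completeness, to re‑derive the two distributions directly. First I would make the reduction explicit. Conditioned on the event under analysis — primary~$1$ has chosen $Y$ and observed that primary~$2$'s channel is available — primary~$1$ is \emph{certain} it competes against primary~$2$, whereas primary~$2$, having chosen $N$, still assigns probability $q$ to primary~$1$'s channel being available and, crucially, knows that whenever primary~$1$ is available it has observed $2$'s state. Hence the residual pricing game is exactly the two‑seller posted‑price game of [Gaurav1] in which one seller (here primary~$1$) knows the rival is present while the other (here primary~$2$) believes the rival is present w.p.\ $q$; its unique NE is given by the CDFs of [Gaurav1, Theorem~2], which are precisely $\phi$ and $\psi$ of (\ref{eq:phi})--(\ref{eq:psi}). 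The one thing needing care is to check that no extra information leaks in our formulation (e.g.\ that primary~$2$'s posterior on primary~$1$'s availability is still $q$), so that the two games truly coincide.

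For a self‑contained derivation I would run the standard argument for Bertrand‑type games with one‑sided demand uncertainty. (i) No price below $c$ is ever used and no price above $v$ ever sells, so both supports lie in $[c,v]$. (ii) Standard undercutting/gap arguments show neither CDF has an atom below $v$ and that the two supports form a common interval $[\underline{x},v]$; primary~$2$ may (and, it will turn out, must) place an atom at $v$ coming from the no‑competition event, while primary~$1$ places no atom at $v$. (iii) On the common support the indifference conditions hold: primary~$1$'s payoff $(x-c)(1-G_2(x))$ and primary~$2$'s payoff $(x-c)\bigl(1-q\,G_1(x)\bigr)$ are each constant in $x$. Solving these together with $G_1(\underline{x})=G_2(\underline{x})=0$ and $G_1(v)=1$ forces $\underline{x}=\tilde{p}=(v-c)(1-q)+c$, both constant payoffs equal to $(v-c)(1-q)$, $G_1=\phi$, and $G_2=\psi$ with an atom of mass $1-q$ at $v$. (iv) Finally verify no profitable deviation outside $[\tilde{p},v]$: a price below $\tilde{p}$ sells for sure but yields less than $\tilde{p}-c=(v-c)(1-q)$, while a price above $v$ yields $0$ (primary~$1$) or exactly $(1-q)(v-c)$ at $v$ (primary~$2$), matching the equilibrium payoff.

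The main obstacle is the uniqueness, i.e.\ the word ``must'': establishing the absence of atoms (except primary~$2$'s atom at $v$) and that the two supports are a single common interval requires the usual but slightly delicate undercutting arguments, after which the indifference algebra that pins down $\phi$ and $\psi$ is routine. If one is content to invoke [Gaurav1, Theorem~2], the only remaining point to nail down is the faithfulness of the reduction described in the first paragraph.
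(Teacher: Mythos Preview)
Your proposal is correct and aligns with the paper's approach: the paper's entire proof of this lemma is the single sentence invoking Theorem~2 of \cite{Gaurav1} after the one-line reduction you spell out in your first paragraph. Your additional care in verifying that primary~$2$'s posterior on primary~$1$'s availability remains $q$ (so the reduction to the asymmetric-availability game is faithful) and your self-contained derivation of $\phi$ and $\psi$ via the indifference conditions go beyond what the paper provides, but they are consistent with---and a useful fleshing-out of---the cited result.
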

%It is easy to see that the expected payoff that primary $1$ can attains in this case is $(v-c)(1-q)-c_1$. Note that the event where the channel of primary $2$ is available occurs w.p. $q$. 
By Lemma~\ref{lm:yn} when the channel of primary $2$ is available for sale, then expected payoff of primary $1$ at any $\tilde{p}\leq x<v$
 \begin{align}\label{eq:x1}
 (x-c)(1-\psi(x))-s=(v-c)(1-q)-s.
 \end{align}
 At $x<\tilde{p}$, the payoff of primary will be strictly less than the expression in (\ref{eq:x1}). On the other hand at $v$, primary $1$ will get strictly a lower payoff compared to the payoff at a price just below $v$ since $\psi(\cdot)$ has a jump at $v$. Hence, the maximum expected payoff to primary $1$ in this case is $(v-c)(1-q)-s$.
 
 Thus, the expected payoff of primary $1$ is 
 \begin{align}\label{eq:payoff}
 (v-c)(1-q)+q(v-c)(1-q)-s.
 \end{align}
 Now, we show that if primary $1$ selects $N$, then the primary can achieve strictly higher payoff.   For $x\in [\tilde{p}, v)$, the expected payoff of primary $1$ at $N$ is
 \begin{align}
(x-c)(1-q\psi(x))= (x-c)(1-q)+q(v-c)(1-q)
 \end{align}
 Thus, for every positive $s$ there exists a small enough $\epsilon>0$ such that at $x=(v-c-\epsilon)$, it will attain strictly higher payoff than (\ref{eq:payoff}). Hence, if primary $1$ selects $N$ and the price $v-\epsilon$ w.p. $1$ then primary $1$ attains a strictly higher payoff. The result follows.
 \end{proof}

\subsection{Does there exist an NE where both primaries select $N$?}
 \begin{theorem}\label{thm:nandn}
 Suppose that each primary selects the  strategy $(N,\phi)$ ($\phi(\cdot)$ is given in (\ref{eq:phi})).
 The above strategy profile is the unique NE when $s\geq q(v-c)(1-q)$.\\
 However, the above is not an NE when $s<q(v-c)(1-q)$.
 \end{theorem}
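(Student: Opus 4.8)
The plan is to split the statement into three parts: for $s\ge q(v-c)(1-q)$ the profile in which both primaries play $(N,\phi)$ is a NE; in that range it is the only NE; and for $s<q(v-c)(1-q)$ it is not a NE. The first and third parts follow from two short best‑response computations against a competitor fixed at $(N,\phi)$. If the deviator also plays $N$ and posts price $x$, it sells with probability $(1-q)+q\bigl(1-\phi(x)\bigr)=1-q\phi(x)$; by (\ref{eq:phi}), $1-q\phi(x)=(v-c)(1-q)/(x-c)$ on $[\tilde p,v]$, so its payoff is exactly $(v-c)(1-q)$ throughout $[\tilde p,v]$, equals $x-c<(v-c)(1-q)$ for $x<\tilde p$ (where $\phi(x)=0$), and is $0$ for $x>v$ since the secondary will not buy. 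Hence $N$ admits no profitable deviation and $\phi$ is a best reply within $N$. If instead the deviator plays $Y$: with probability $1-q$ it learns the competitor is absent, optimally posts $v$, and earns $v-c-s$; with probability $q$ the competitor is present and prices by $\phi$, so at price $x$ it earns $(x-c)(1-\phi(x))-s$, and since $(x-c)(1-\phi(x))=\frac{1-q}{q}(v-x)$ on $[\tilde p,v]$ (and $x-c$ for $x<\tilde p$) this is maximised at $x=\tilde p$ with value $(v-c)(1-q)$. Thus the best $Y$‑payoff is $(1-q)(v-c)+q(v-c)(1-q)-s$. Comparing with the equilibrium payoff $(v-c)(1-q)$: the $Y$‑deviation is strictly better exactly when $s<q(v-c)(1-q)$, which gives the third part; and for $s\ge q(v-c)(1-q)$ neither an $N$‑ nor a $Y$‑deviation — nor any mixture over the two, which cannot beat the better pure option — improves on $(v-c)(1-q)$, so the profile is a NE.

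For uniqueness when $s\ge q(v-c)(1-q)$, I would argue as follows. First, in any NE a primary can guarantee payoff $(v-c)(1-q)$ by playing $N$ and posting $v$ (it sells at least whenever the competitor is absent), so every NE payoff is $\ge(v-c)(1-q)$; consequently no primary puts positive mass on prices below $\tilde p$, and so in any NE a primary's price, conditioned on its channel being available, lies in $[\tilde p,v]$. Now suppose some primary plays $Y$ with positive probability; then $Y$ is a best reply. Let $\hat H$ be the competitor's price c.d.f.\ conditioned on both channels being available — the deviator faces this same $\hat H$ whether it itself plays $Y$ or $N$, and by the previous step $\hat H$ is supported in $[\tilde p,v]$. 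Writing $M=\sup_x (x-c)(1-\hat H(x))$, approached near some $x^\ast\in[\tilde p,v]$, the best $Y$‑payoff is $(1-q)(v-c)+qM-s$ whereas playing $N$ at $x^\ast$ already yields $(x^\ast-c)(1-q)+qM$; the difference is at least $s-(1-q)(v-x^\ast)\ge s-q(1-q)(v-c)\ge0$, contradicting that $Y$ is a best reply. Hence no primary plays $Y$, both play $N$, and uniqueness of the equilibrium price law of the resulting pure price‑competition game (\cite{Gaurav1}) forces both to use $\phi$.

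I expect the uniqueness step to be the main obstacle, and within it the delicate point is the threshold itself: the displayed inequality is strict only for $s>q(v-c)(1-q)$, while at $s=q(v-c)(1-q)$ it merely forces $x^\ast=\tilde p$, after which one must show separately that no equilibrium places positive weight on $Y$ — essentially, that doing so makes the competitor's effective conditional price law have an atom at $\tilde p$, which is incompatible both with $Y$ actually attaining its supremal payoff and with that law agreeing with $\phi$. This boundary analysis also has to be made consistent with Theorem~\ref{thm:mixedstrategy}, since the two results jointly assert that $q(v-c)(1-q)$ is precisely the threshold separating the pure‑$N$ regime from the mixed $[T,p]$ regime.
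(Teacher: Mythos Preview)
Your treatment of the ``is a NE for $s\ge q(v-c)(1-q)$'' and ``is not a NE for $s<q(v-c)(1-q)$'' parts is correct and coincides with the paper's: the paper defers these to the proof of Theorem~\ref{thm:nandnerror} (the estimation--error generalisation with $q_s=1$), which carries out exactly your two best--response computations against a competitor fixed at $(N,\phi)$.

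For uniqueness your route is genuinely different from the paper's and considerably shorter in spirit, but the step ``every NE payoff is $\ge(v-c)(1-q)$; consequently no primary puts positive mass on prices below $\tilde p$'' is a real gap. It is immediate for the $N$--component $F$ (posting $x<\tilde p$ under $N$ yields at most $x-c<(v-c)(1-q)$), but it does \emph{not} follow directly for the $F_1$--component: a primary playing $Y$ that has learned the competitor is present is maximising a different objective, and the overall payoff bound $(v-c)(1-q)$ does not by itself pin down those conditional prices. The fix is short but you need to write it: if the competitor plays $Y$ with positive probability, then $Y$ is a best reply for it, so its $Y$--payoff $(1-q)(v-c)+q\,M'-s$ is at least $(v-c)(1-q)$, whence $M'\ge s/q\ge (v-c)(1-q)$; any $x$ in the support of its $F_1$ attains this conditional maximum, so $x-c\ge (x-c)(1-\hat G(x))=M'\ge (v-c)(1-q)$, i.e.\ $x\ge\tilde p$. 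With this, $\hat H$ is supported in $[\tilde p,v]$, hence $x^\ast\ge\tilde p$, and your inequality $s-(1-q)(v-x^\ast)\ge s-q(1-q)(v-c)$ is justified; for $s>q(v-c)(1-q)$ it is strict and you do get a contradiction. At the threshold $s=q(v-c)(1-q)$ your argument only forces $x^\ast=\tilde p$ and equality throughout, so a separate argument is still required there --- you flag this correctly.

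By contrast the paper's uniqueness proof (Appendix) is a full structural case analysis: it first proves that in any NE neither $F$ nor $F_1$ has atoms below $v$ and $F_1$ has none at $v$; that the support of $F_1$ lies below the support of $F$ with no gaps; that if both primaries mix they must use the same mixing probability $p$, which is then pinned down; and finally it rules out one--sided profiles ($Y$ w.p.\ $1$ against a mixer, $N$ w.p.\ $1$ against a mixer). This is heavier but handles the threshold and the boundary atoms uniformly without a separate endgame. Your direct comparison argument, once the support step above is filled in, is a cleaner proof for the open range $s>q(v-c)(1-q)$; to match the paper's statement at $s=q(v-c)(1-q)$ you will still need some of those structural facts (in particular, that $F_1$ cannot place an atom at $\tilde p$, which is what would make $Y$ exactly indifferent).
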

 We  provide an intuition behind the result. When $s$ is high, if a primary selects $Y$, then it has to incur high cost compared to the potential gain it will achieve, thus, no primary has any incentive to deviate. When $s$ is low, if a primary  deviates and selects $Y$, then it can gain higher payoff by taking advantage of the CSI of the other primary. Thus, the strategy profile fails to be an NE when $s$ is low.  We prove that the strategy profile is an NE in Theorem~\ref{thm:nandnerror}. We show the uniqueness in Appendix.

{\em Remark}: The result shows that when the cost $s$ is high, in an equilibrium both the primaries select $N$. It is obvious that if $s>(v-c)$, then a primary will never opt for $Y$. The above theorem shows that even if $s\geq (v-c)q(1-q)$, primaries will select $N$.  
\subsection{Does there exist an NE when $s$ is low?}
Note from Theorems~\ref{thm:yandy}, \ref{thm:yandn} and \ref{thm:nandn} that if $s$ is low, then there is no NE strategy where each primary selects either $Y$ or $N$ w.p. $1$. Thus, at least one primary must randomize between $Y$ and $N$ when $s$ is low.

Now, consider the following price distributions
\begin{align}
\psi_1(x)=\begin{cases}
0, \quad \text{if }x<\tilde{p}_1,\nonumber\\
\dfrac{1}{p}(1-\dfrac{\tilde{p}_1-c}{x-c})\quad \text{if } \tilde{p}_1\leq x\leq \tilde{p}_2,\nonumber\\
1, \quad \text{if } x>\tilde{p}_2.
\end{cases}
\end{align}
and
\begin{align}
\psi_2(x)=\begin{cases}
0, \quad \text{if }x<\tilde{p}_2\nonumber\\
\dfrac{1}{q(1-p)}(1-\dfrac{(v-c)(1-q)}{x-c}-qp)\quad \text{if } \tilde{p}_2\leq x\leq v\nonumber\\
1, \quad \text{if } x>v
\end{cases}
\end{align}
where $\tilde{p}_1$ and $\tilde{p}_2$ are
\begin{align}\label{eq:tildeps}
\tilde{p}_1=\dfrac{(v-c)(1-q)(1-p)}{1-qp}+c, \quad 
\tilde{p}_2=\dfrac{(v-c)(1-q)}{1-qp}+c
\end{align}
Note that both $\psi_1(\cdot)$ and $\psi_2(\cdot)$ are continuous.  In the following, we show that a strategy profile based on these distribution is a NE when $s$ is small enough.

\begin{theorem}\label{thm:mixedstrategy}
Consider the following strategy profile: Each primary selects $Y$ w.p. $p$ and $N$ w.p. $1-p$ where $p=\dfrac{q(v-c)(1-q)-s}{q(v-c)(1-q)-sq}$. When choosing $Y$, the primary selects its price according to $\psi_1(\cdot)$ when it knows that the channel state of the other primary is available, otherwise it selects $v$ w.p. $1$. When choosing  $N$, the primary selects price according to $\psi_2(\cdot)$. \\
The above strategy profile is the unique  NE if $s<q(v-c)(1-q)$. 
The expected payoff that a primary attains in the NE strategy profile is $(v-c)(1-q)$.
\end{theorem}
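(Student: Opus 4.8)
The plan is a direct verification. I would fix primary $1$ as a potential deviator, assume primary $2$ plays the proposed strategy, and show that (i) every action in the support of primary $2$'s strategy is a best response for primary $1$ yielding expected payoff exactly $(v-c)(1-q)$, and (ii) no action outside those supports — and hence, by linearity of the expected payoff in the randomization, no mixture — does strictly better. Symmetry then makes the profile a (Bayesian) Nash equilibrium with the claimed payoff. For the uniqueness assertion I would defer to the Appendix, first noting that Theorems~\ref{thm:yandy}, \ref{thm:yandn} and \ref{thm:nandn} already exclude every $Y/N$-pure strategy profile once $s<q(v-c)(1-q)$, so that in any NE both primaries must randomize between $Y$ and $N$.

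The first step is to write down the probability with which primary $1$ sells, conditioning throughout on its own channel being available. If primary $1$ plays $Y$: with probability $1-q$ it learns primary $2$ is unavailable and posts $v$ (clearly optimal, since it then sells with certainty at any price $\le v$) for net revenue $v-c$; with probability $q$ it learns primary $2$ is available but still does not know whether primary $2$ chose $Y$ (in which case primary $2$, knowing primary $1$ is present, posts according to $\psi_1(\cdot)$) or $N$ (posts according to $\psi_2(\cdot)$), so at a price $x$ primary $1$ sells with probability $p(1-\psi_1(x))+(1-p)(1-\psi_2(x))$. If primary $1$ plays $N$ it faces the full mixture and sells at $x$ with probability $(1-q)+q\big[p(1-\psi_1(x))+(1-p)(1-\psi_2(x))\big]$. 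I expect this Bayesian bookkeeping — in particular the fact that ``$Y$ with competitor available'' still leaves the competitor's own $Y/N$ choice hidden, so the relevant price law is the mixture $p\psi_1+(1-p)\psi_2$ — to be the main obstacle; everything downstream is algebra that the stated forms of $\psi_1,\psi_2,\tilde{p}_1,\tilde{p}_2$ were reverse-engineered to make come out constant.

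Next I would substitute the explicit distributions. On $[\tilde{p}_1,\tilde{p}_2]$ one has $\psi_2\equiv0$ and $p(1-\psi_1(x))+(1-p)=\frac{\tilde{p}_1-c}{x-c}$, so primary $1$'s revenue in the ``$Y$, competitor available'' branch collapses to the constant $\tilde{p}_1-c=\frac{(v-c)(1-q)(1-p)}{1-qp}$; on $[\tilde{p}_2,v]$ one has $\psi_1\equiv1$ and $(1-p)(1-\psi_2(x))=\frac{(1-q)(v-x)}{q(x-c)}$, so the ``$N$'' payoff collapses to $(x-c)\cdot(1-q)\frac{v-c}{x-c}=(1-q)(v-c)$. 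I would then check that $\psi_1,\psi_2$ are genuine CDFs (nondecreasing, continuous, with $\psi_1(\tilde{p}_1)=0$, $\psi_1(\tilde{p}_2)=1$, $\psi_2(\tilde{p}_2)=0$, $\psi_2(v)=1$, and agreement of the two revenue formulas at the seam $x=\tilde{p}_2$) and that $c<\tilde{p}_1<\tilde{p}_2<v$ exactly when $0<p<1$, and rule out off-support prices by monotonicity: below $\tilde{p}_1$ primary $1$ sells with probability $1$ so its revenue $x-c$ is below $\tilde{p}_1-c$; under $N$ on $[\tilde{p}_1,\tilde{p}_2]$ the payoff equals $(x-c)(1-q)+q(\tilde{p}_1-c)$, strictly increasing and only reaching $(1-q)(v-c)$ at $\tilde{p}_2$; in the ``$Y$, competitor available'' branch above $\tilde{p}_2$ the revenue $\frac{(1-q)(v-x)}{q}$ strictly decreases; and any price above $v$ never sells.

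Finally I would impose indifference between $Y$ and $N$: the expected payoff of $Y$ is $(1-q)(v-c)+q(\tilde{p}_1-c)-s$ while that of $N$ is $(1-q)(v-c)$, so equality holds iff $q(\tilde{p}_1-c)=s$, i.e.\ $\frac{q(v-c)(1-q)(1-p)}{1-qp}=s$; solving for $p$ yields $p=\frac{q(v-c)(1-q)-s}{q(v-c)(1-q)-sq}$, which indeed lies strictly between $0$ and $1$ precisely when $0<s<q(v-c)(1-q)$. Since $Y$ and $N$ then give the same value $(1-q)(v-c)$, every mixture over them does too, no profitable unilateral deviation remains, and the common NE payoff is $(v-c)(1-q)$, as claimed.
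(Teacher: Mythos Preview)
Your proposal is correct and follows essentially the same direct-verification route as the paper: the paper defers the existence part to its proof of Theorem~\ref{thm:mixederror} (specialized to $q_s=1$), which likewise fixes one primary, computes the payoff branch-by-branch, shows it is constant on the supports of $\psi_1$ and $\psi_2$, rules out off-support prices by monotonicity, and pins down $p$ by the $Y$/$N$ indifference equation $q(\tilde{p}_1-c)=s$; uniqueness is handled in the Appendix in both cases. Your Bayesian bookkeeping (in particular, that ``$Y$, competitor available'' still faces the mixture $p\psi_1+(1-p)\psi_2$) and the algebra checking continuity at the seam $x=\tilde{p}_2$ are exactly the points the paper's Step i/Step iv decomposition is designed to handle.
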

The proof of the strategy profile is an NE is similar to the Theorem~\ref{thm:errormixedstrategy} where we consider the CCS estimation may not be accurate. We prove the uniqueness in the Appendix. 
\begin{figure*}
\begin{minipage}{0.28\linewidth}
\begin{center}
%\includegraphics[width=65mm, height=40mm]{bar_psi_1_2.png}
%\epstopdf{save_c1.eps}
\includegraphics[width=0.99\textwidth]{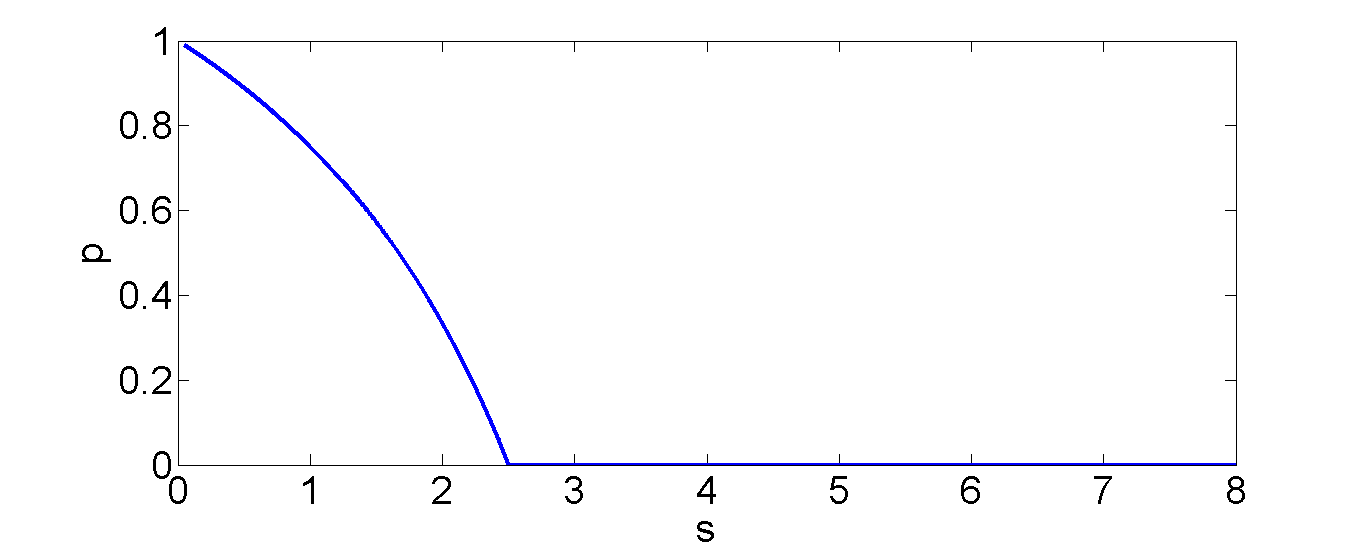}
\caption{\small Variation of $p$ as a function of $s$ in an example setting: $v=11, c=1, q=0.5$. When $s\geq 2.5$, $p=0$. $p\rightarrow 1$ as $s\rightarrow 0$.}
\label{fig:c1}
\vspace{-0.5cm}
%\vspace*{-.1cm}
%\includegraphics[width=90mm,height=30mm]{price_of_anarchy_diff.jpg}
%\caption{}
\end{center}
\end{minipage}\hfill
\begin{minipage}{0.28\linewidth}
\begin{center}
\includegraphics[width=0.99\textwidth]{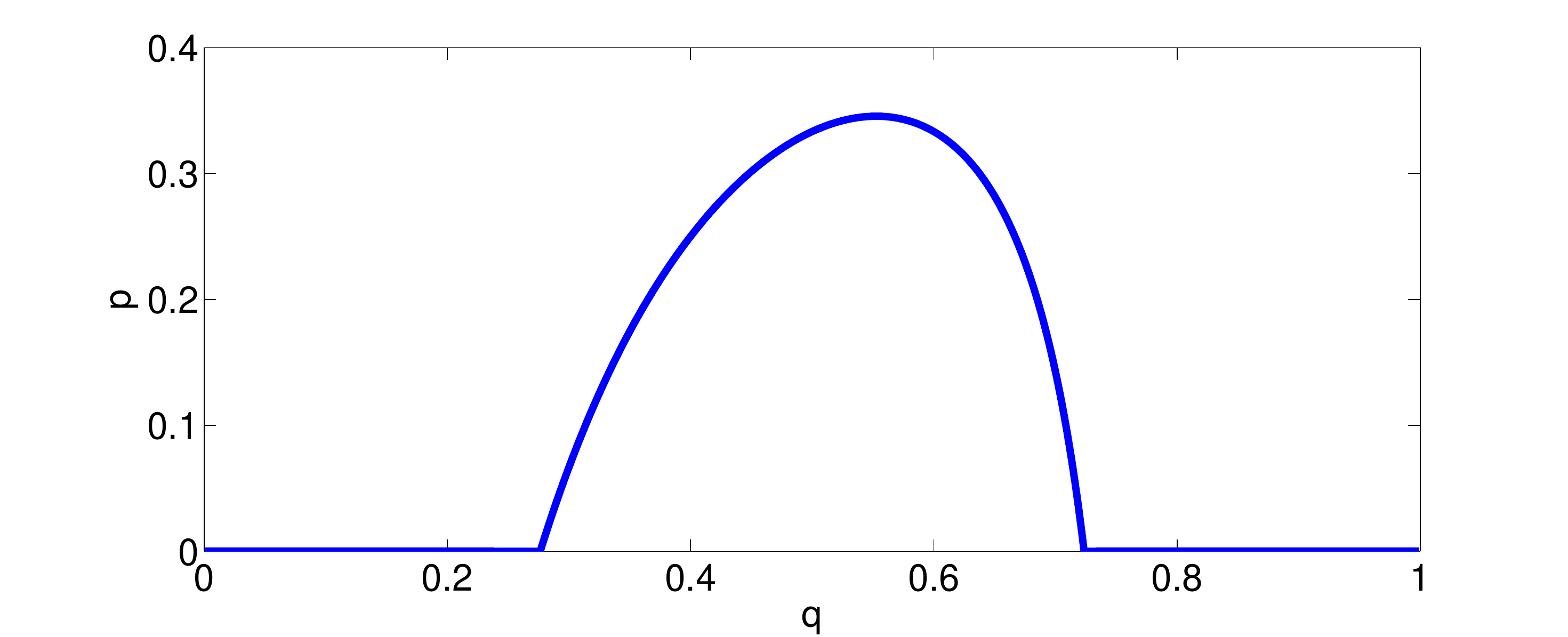}
\caption{\small Variation of $p$ as a function of $q$ in an example setting: $v=11, c=1, s=2$. 
When either $q\leq 0.28$ or $q\geq 0.72$, $p=0$. $p$ is maximized at $q^{*}=0.55$,  maximum value of $p$ is $0.35$. }
\label{fig:q}
\vspace{-0.5cm}
\end{center}
\end{minipage}\hfill
\begin{minipage}{0.4\linewidth}
\includegraphics[width=0.99\textwidth]{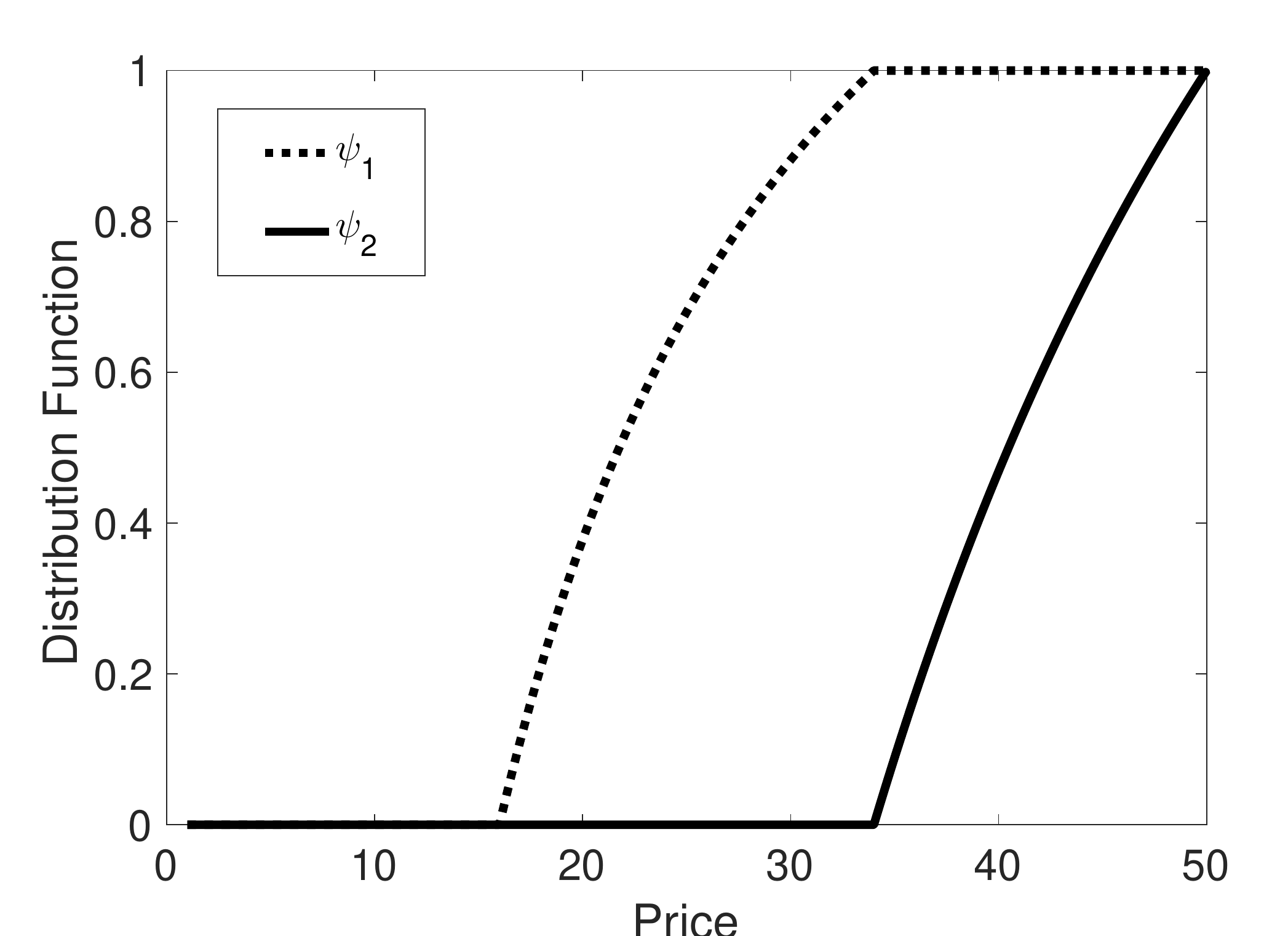}
\vspace{-0.1in}
\caption{\small The pricing strategies $\psi_1(\cdot)$ and $\psi_2(\cdot)$ in the scenario described in Example~\ref{ex:basic}. The support of $\psi_1(\cdot)$ and $\psi_2(\cdot)$ are respectively $[16,34]$ and $[34,50]$.}
\label{fig:ex_basic}
\vspace{-0.2in}
\end{minipage}
\end{figure*}

\begin{example}\label{ex:basic}
\begin{color}{red}
We illustrate the computation of the  NE streategy for an example where $v=50, c=0, s=8,$ and $q=0.5$. Since $s<q(v-c)(1-q)$,  the NE strategy is given by Theorem~\ref{thm:mixedstrategy} which gives $p=0.5294$. Using $p$,  the strategies $\psi_1(\cdot)$ and $\psi_2(\cdot)$ are readily  obtained and shown in  Fig.~\ref{fig:ex_basic}.
\end{color}
\end{example}

  {\em Discussion}: Note from the above theorem that when $s$ is low there exists an NE where {\em both the primaries randomize} between $Y$ and $N$. It is also easy to discern that as $s$ decreases, $p$ increases and as $s\rightarrow 0$, $p\rightarrow 1$ (Fig.~\ref{fig:c1}). Thus, when the cost of obtaining the competitor\rq{}s CSI decreases, then the primaries will be more likely to acquire that information. 

Note that $q(1-q)$ is the measure of uncertainty, if the uncertainty if higher (i.e. $q=1/2$), then the threshold is also higher. A primary never selects $Y$ if $s\geq (v-c)/4$. By differentiating, it is easy to discern that when $s<(v-c)/4$, then $p$ is maximized at $q^{*}=1-\sqrt{s/(v-c)}$ (Fig.~\ref{fig:q}). Since $s<(v-c)/4$,  $q^{*}>1/2$. Note also that $q^{*}$ decreases as $s$ increases. Intuitively, when $s$ increases,  primaries tend to select $Y$ only when the uncertainty of the availability of channel increases. 

The support set of $\psi_1(\cdot)$ is $[\tilde{p}_1,\tilde{p}_2]$ and $\psi_2(\cdot)$ is $[\tilde{p}_2,v]$.  Thus, under $Y$ a primary selects lower prices when the primary knows that the channel of its competitor is available compared to the setting where the primary is not aware of the channel state of its competitor.  This is because in the former case the uncertainty of the appearance of the competitor is reduced.

Since $p$ increases as $s$ decreases, thus, from (\ref{eq:tildeps}), $\tilde{p}_1$  increases as $s$ decreases. Thus, a primary selects its price from a larger interval when $s$ decreases. Also note that $\tilde{p}_2$ also increases as $s$ decreases. Thus, the support set of $\psi_1(\cdot)$ increases as $s$ decreases.

 Theorems~\ref{thm:nandn} and \ref{thm:mixedstrategy} imply that  the expected payoff of a primary is $(v-c)(1-q)$.  Note that when the primaries always know each other\rq{}s channel states,  the competition becomes equivalent to the Bertrand competition \cite{mwg} and the expected payoff is \footnote{It can also be obtained from (5).} $(v-c)(1-q)$ and   when the primaries are constrained to select only $N$, the expected payoff is again $(v-c)(1-q)$ \cite{Gaurav1,isit}. Hence, our result also builds the bridge between the two extremes. Specifically, it shows that the cost $s$ or  the availability of  the competitor\rq{}s CSI does not impact the expected payoff. 

\subsection{Welfare of the Secondaries}
\begin{figure}
\includegraphics[width=60mm,height=30mm]{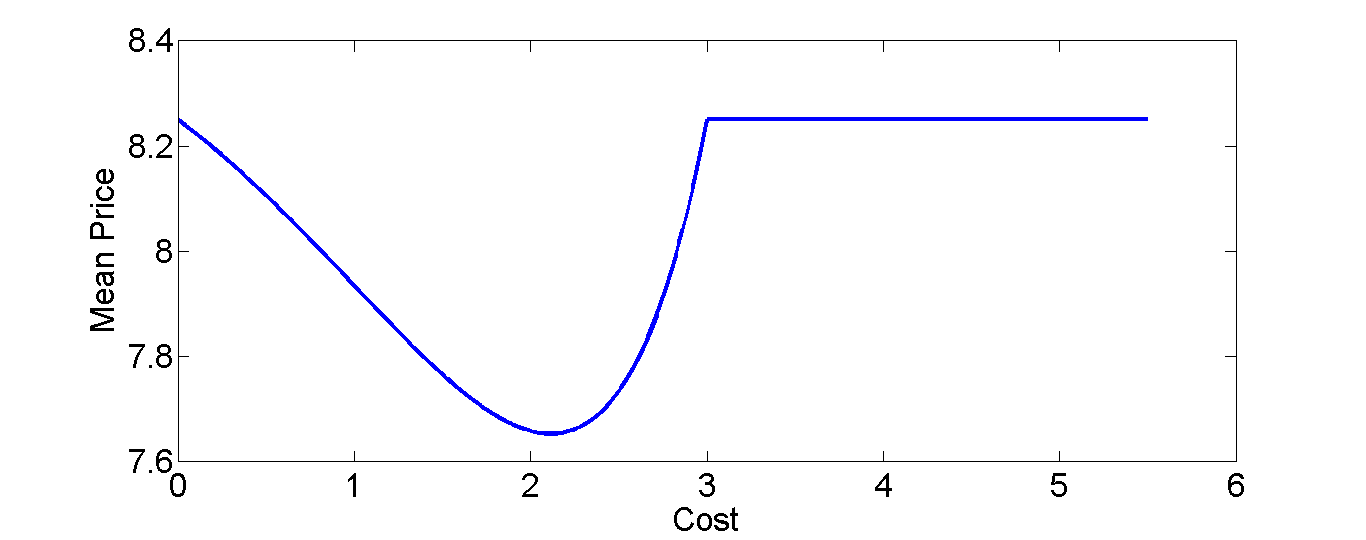}
\caption{Mean price paid by the secondary}
\label{fig:mean_price}
\vspace{-0.5cm}
\end{figure}
 Fig.~\ref{fig:mean_price} shows the variation of the expected price paid by the secondary. Initially, the expected price decreases as the C-CSI acquisition cost $s$ increases. The expected price reaches the minimum value, and then increases with the increase in $s$. When $s\geq q(v-c)(1-q)$ i.e. the primaries select $N$ w.p. $1$, the expected price is the same in the setting with $s=0$ i.e. when the primaries select $Y$ w.p. $1$. Fig.~\ref{fig:mean_price} shows that the expected price paid by the secondary is minimum at a positive cost; the minimum is not attained when $s=0$ which negates the conventional wisdom. Note that the expected payoffs of the primaries are independent of the cost $s$. Thus, the expected social welfare which is the sum of the expected payoffs of the primaries and the expected utility of the secondary (which is the negative of the price paid by the secondary) is in fact minimum at $s=0$. 

 \section{Impact of Error in the Estimation}\label{sec:errorestimation}
We, now, investigate the impact of the estimation error. Towards this end, we consider the system model specified in Section~\ref{sec:estimation_model}. Specifically, if a primary estimates the CSI of its competitor, the estimation is accurate only with probability $q_s$ ($1/2<q_s\leq 1$). 

\subsection{Goals}
The impact of error in the estimation on the decision and the payoff of each primary is not apriori clear. The conventional wisdom suggests that the error in the estimation should decrease the payoff. However, the conventional wisdom is not definitive because of the following. If there is an error  in estimating the channel state of the competitor, then, the primary $2$  selects a higher price even when it estimates that the channel state of the primary $1$  is $1$, thus, in response, the primary $1$ selects a higher price without reducing the winning probability, which may increase the payoff. It also remains to be seen  whether the expected payoff of a primary is independent of $s$ like in the basic model.  Even   if the selection of $Y$ belongs to the class $[T,p]$ (Recall Definition~\ref{defn:classtp}), the dependence of $T$ and $p$ on the estimation error is also not apriori clear.

The pricing strategy also depends on the estimation error.   For example, when the estimation error is $0$, then a primary selects a high price when the estimated channel state of the competitor is $0$ as the channel of the competitor is unavailable.  However, when there is an error in estimated channel state, the actual channel state may not be  $0$ even when the estimated channel state is $0$. Thus, a higher price may reduce the probability of winning and a lower price may reduce payoff in the event of a selling. Our goal is to characterize the pricing strategies of the primaries. 

\subsection{Main Results}\label{sec:mainresults_estimationerror}
We now summarize our main findings in this section here--
\begin{itemize}
\item We show that the NE strategy is a $[T,p]$ strategy (Definition~\ref{defn:classtp}) with $T=q(2q_s-1)(v-c)(1-q)$ (Theorems~\ref{thm:nandnerror}, \ref{thm:mixederror}). Note that in the basic model, we have also seen $[T,p]$ type strategy for selecting $Y$. However, due to the estimation error, the threshold is different compared to the basic model. The threshold decreases as $q_s$ decreases i.e. primaries select $N$ w.p. $1$ for larger values of $s$.  Intuitively,  the uncertainty regarding the channel state increases as the estimation error increases, thus,  the uncertainty of the channel state increases even when primary selects $Y$. A primary is more reluctant to select $Y$. Hence, primary selects $N$ w.p. $1$ for smaller values of $s$. We also characterize $p$ as a function of $s$ and show that $p$ decreases monotonically with $s$. 
\item The expected payoff of each primary is strictly higher than $(v-c)(1-q)$ when a primary randomizes between $Y$ and $N$ (i.e. $s<T$) and $q_s<1$. In the basic model, we have shown that the expected payoff of each primary is  $(v-c)(1-q)$ irrespective of the value of $s$.  Thus, the error in estimation increases the payoff of each primary which negates the conventional wisdom that expected payoff of a primary should increase as the estimation error decreases. The payoff of each primary also increases with the decrease in $s$ when $q_s<1$. Hence, in contrast to the basic model, the expected payoff of each primary depends on the value of $s$.
\item In NE pricing strategy:
\begin{itemize}
\item When a primary selects $Y$ and estimates that the channel state of its competitor is $1$, then it selects its price from the interval $[\tilde{p}_1,L_N]$.
\item When a primary selects $N$, then it selects its price from the interval $[L_N,L_0]$. 
\item When a primary selects $Y$ and estimates that the channel state of its competitor is $0$, then it selects its price from the interval $[L_0,v]$. 
\end{itemize}
If $q_s=1$, a primary always selects $v$ w.p. $1$ when the primary selects $Y$ and the channel state of the competitor is $0$ since the primary will always be able to sell its channel because of the unavailability of  its competitor. However, when $q_s<1$,  there is a potential error in the estimation, thus, a primary randomizes among prices from an interval $[L_0,v]$ even when the primary estimates that the channel state of its competitor is $0$. Also note that when a primary estimates that the channel state of its competitor is $1$ ($0$,resp.), then its competitor is more likely to be available (unavailable, resp.), hence, the primary selects lower (higher, resp.) prices compared to the setting where a primary selects $N$. 
\end{itemize}
\subsection{High $s$}
First, we state some results which we use throughout this section. Note that when a primary decides to estimate the CSI of its competitor, it estimates the channel state of its competitor is $1$ w.p. $qq_s+(1-q)(1-q_s)$ and the primary estimates the channel state of its competitor is $0$ w.p. $(1-q)q_s+q(1-q_s)$.  Note that when $q_s=1$, then the above probabilities becomes $q$ and $1-q$ respectively. If a primary estimates that its competitor\rq{}s channel state is $1$, then the actual channel state is $1$ w.p. 
\begin{align}\label{eq:1given1}
\dfrac{q_sq}{qq_s+(1-q)(1-q_s)}. 
\end{align}
Similarly, if a primary estimates that its competitor\rq{}s channel state is $0$, then the actual channel state of its competitor is $1$ w.p. 
\begin{align}\label{eq:0given0}
\dfrac{q(1-q_s)}{(1-q)q_s+q(1-q_s)}. 
\end{align}Note that when $q_s=1$, then both the above probabilities become $1$.

Our main result in this section shows that
\begin{theorem}\label{thm:nandnerror}
There exists a NE where each primary selects $N$ w.p. $1$ if $s\geq (v-c)(1-q)(2qq_s-q)$. In the NE pricing strategy, each primary selects its price according to $\phi(\cdot)$ (described in (\ref{eq:phi})). The expected payoff of each primary is $(v-c)(1-q)$.
\end{theorem}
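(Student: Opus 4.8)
\textit{Proof proposal.} The plan is to verify directly that the profile in which each primary plays $(N,\phi)$ is a best response to itself, and that the resulting payoff equals $(v-c)(1-q)$. Since a strategy is a pair $(D,\mathbf F)$ possibly randomized over $D$, and a mixture over $D$ can never beat the better of the two pure options, it suffices to rule out (i) a price deviation while still playing $N$, and (ii) a deviation to $Y$ with optimally chosen price distributions $F_1,F_0$.

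First I would handle the $N$ side and extract the equilibrium payoff. Condition on a primary's own channel being available. If it plays $N$, posts price $x$, and the competitor plays $(N,\phi)$, then it sells whenever the competitor is unavailable (prob.\ $1-q$) or available but priced above $x$ (prob.\ $q(1-\phi(x))$; ties occur with probability zero since $\phi$ is atomless), so its revenue is $(x-c)(1-q\phi(x))$. Substituting (\ref{eq:phi}) gives $(x-c)(1-q\phi(x))=(v-c)(1-q)$ for every $x\in[\tilde p,v]$, while the revenue is $x-c<(v-c)(1-q)$ for $x<\tilde p$ and $0$ for $x>v$; hence $(N,\phi)$ against $(N,\phi)$ yields exactly $(v-c)(1-q)$ and no price deviation within $N$ is profitable. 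This is simply the no-C-CSI equilibrium of \cite{Gaurav1}, reused here.

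Next I would rule out a deviation to $Y$. The deviating primary pays $s$ and observes an estimate, which is $1$ w.p.\ $P_1=qq_s+(1-q)(1-q_s)$ and $0$ w.p.\ $P_0=(1-q)q_s+q(1-q_s)$; conditioned on each estimate, the competitor is actually available with the posterior probability $r_1$ from (\ref{eq:1given1}) or $r_0$ from (\ref{eq:0given0}), and one checks from $q_s\ge 1/2$ that $r_1\ge q\ge r_0$. Given estimate $j$, posting price $x$ sells w.p.\ $1-r_j\phi(x)$, and on $[\tilde p,v]$ the conditional revenue equals $(x-c)\bigl(1-\tfrac{r_j}{q}\bigr)+\tfrac{r_j}{q}(v-c)(1-q)$, which is non-increasing in $x$ when $r_j\ge q$ and non-decreasing when $r_j\le q$; prices outside $[\tilde p,v]$ are dominated. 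Hence the best conditional revenue is $(v-c)(1-q)$ (attained at $x=\tilde p$) when the estimate is $1$, and $(v-c)(1-r_0)$ (attained at $x=v$, where $\phi$ is continuous and thus has no atom) when the estimate is $0$.

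Finally I would assemble the bound. The best payoff obtainable by deviating to $Y$ is $P_1(v-c)(1-q)+P_0(v-c)(1-r_0)-s$. Using $P_0(1-r_0)=P_0-q(1-q_s)=(1-q)q_s$ and then $P_1(1-q)+(1-q)q_s=(1-q)(2qq_s+1-q)$, this reduces to $(v-c)(1-q)(2qq_s+1-q)-s$, which fails to exceed the equilibrium payoff $(v-c)(1-q)$ precisely when $s\ge (v-c)(1-q)(2qq_s-q)$ --- the hypothesis of the theorem. Together with the first step this shows $(N,\phi)$ is a NE with payoff $(v-c)(1-q)$. The only genuinely delicate step is the third-to-last one: computing the posterior beliefs correctly and observing that the sign of $1-r_j/q$ forces the optimal deviation price to an endpoint of $[\tilde p,v]$, plus the bookkeeping at $x=\tilde p$ and $x=v$. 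As a consistency check, setting $q_s=1$ recovers the threshold $q(v-c)(1-q)$ of Theorem~\ref{thm:nandn}.
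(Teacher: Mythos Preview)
Your proof is correct and follows essentially the same approach as the paper: first verify that $(N,\phi)$ against $(N,\phi)$ yields the constant payoff $(v-c)(1-q)$ on $[\tilde p,v]$, then bound the best $Y$-deviation by optimizing the conditional payoff given each estimate and recombining. Your observation that the conditional revenue is affine in $x-c$ with slope $1-r_j/q$, whose sign (determined by $q_s\ge 1/2$) forces the optimum to an endpoint of $[\tilde p,v]$, is exactly the mechanism the paper uses in its Steps ii.a and ii.b, just stated more compactly; the final aggregation $P_1(v-c)(1-q)+P_0(v-c)(1-r_0)-s=(v-c)(1-q)(2qq_s-q+1)-s$ matches the paper's Step ii.c.
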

\begin{proof}
We show that a primary does not have any profitable unilateral deviation when the other primary follows the strategy prescribed in the theorem. Towards this end,  we, first, show that under $N$ the maximum expected payoff of a primary is $(v-c)(1-q)$ (Step i).  It is attained when the primary follows the strategy $\phi(\cdot)$ (Step ii). Subsequently, we show that if the primary selects $Y$, then its expected payoff is at most $(v-c)(1-q)$ which will show that the primary does not have any profitable unilateral deviation (Step iii).  

Step i: At any price $x\in [\tilde{p},v]$ the expected payoff of a primary is 
\begin{align}\label{eq:payoff_n}
(x-c)(1-q\phi(x))=(v-c)(1-q) \quad (\text{from } (\ref{eq:phi}))
\end{align}
A price strictly less than $\tilde{p}$ will fetch a payoff strictly less than $(v-c)(1-q)$ (by (\ref{eq:tildep})). Thus, the maximum expected payoff of a primary under $N$ is $(v-c)(1-q)$.

Step ii: Note from (\ref{eq:payoff_n}) that a primary attains the maximum expected payoff when it selects its price from the interval $[\tilde{p},v]$.

Step ii:  Now, we show that if primary $1$ selects $Y$, it can not get a strictly higher payoff when $s\geq (v-c)(1-q)(2qq_s-q)$. Towards this end, we show that when a primary selects $Y$ and estimates that the channel state of the other primary is $1$, then it will attain a maximum expected payoff of $(v-c)(1-q)-s$ (Step ii.a). Subsequently, we show that if the primary selects estimates that the channel state of the competitor is $0$, then it will attain a maximum expected payoff of $(v-c)(1-q)\dfrac{q_s}{(1-q_s)q+q_s(1-q)}-s$ (Step ii.b.). Finally, we show that the expected payoff of the primary is at most $(v-c)(1-q)$ when it selects $Y$ (Step ii.c.). 

Step ii.a: Suppose that the primary $1$ selects $Y$ and estimates that the channel state of primary $2$ is $1$. Using (\ref{eq:1given1}) the expected payoff of primary $1$ at any price $x\in [\tilde{p},v]$ is 
\begin{align}
& (x-c)(1-\dfrac{qq_s}{qq_s+(1-q)(1-q_s)}\phi(x))-s\nonumber\\
& =(x-c)(1-\dfrac{q_s}{qq_s+(1-q)(1-q_s)}(1-\dfrac{\tilde{p}-c}{x-c}))-s\nonumber\\
& =(x-c)(1-\dfrac{q_s}{qq_s+(1-q)(1-q_s)})+(\tilde{p}-c)\dfrac{q_s}{qq_s+(1-q)(1-q_s)}-s
\end{align}
Note that $\tilde{p}-c=(v-c)(1-q)$. Since $q_s>qq_s+(1-q)(1-q_s)$ when $q_s>1/2$, thus, the above is maximized at $\tilde{p}$, hence, the maximum expected payoff that primary $1$ can attain when it estimates the channel state of its competitor is $1$  is 
\begin{align}\label{eq:nmax1}
& (v-c)(1-q)(1-\dfrac{q_s}{qq_s+(1-q)(1-q_s)})+(v-c)(1-q)\dfrac{q_s}{qq_s+(1-q)(1-q_s)}-s\nonumber\\
& =(v-c)(1-q)-s
\end{align}

Step ii.b:  Now, suppose that the primary $1$ estimates that the channel state of primary $2$ is $0$. Using (\ref{eq:0given0})  the   expected payoff of primary $1$ at any price $x\in [\tilde{p},v]$ in this case is
\begin{align}
& (x-c)(1-\dfrac{(1-q_s)q}{(1-q_s)q+q_s(1-q)}\phi(x))-s\nonumber\\
& =(x-c)(1-\dfrac{1-q_s}{(1-q_s)q+q_s(1-q)}(1-\dfrac{\tilde{p}-c}{x-c}))-s\nonumber\\
& =(x-c)(1-\dfrac{1-q_s}{(1-q_s)q+q_s(1-q)})+(\tilde{p}-c)\dfrac{1-q_s}{(1-q_s)q+q_s(1-q)}-s
\end{align}
The above is maximized at $x=v$. Hence, the maximum expected payoff that a primary can attain is
\begin{align}\label{eq:nmax2}
& (v-c)\dfrac{(1-q)(2q_s-1)}{(1-q_s)q+q_s(1-q)}+(v-c)(1-q)\dfrac{1-q_s}{(1-q_s)q+q_s(1-q)}-s\nonumber\\
& = (v-c)(1-q)\dfrac{q_s}{(1-q_s)q+q_s(1-q)}-s
\end{align}

Step ii.c: Note that a primary estimates that the channel state of primary $2$ is $0$ w.p. $(1-q_s)q+q_s(1-q)$ and the channel state of primary $2$ is $1$ w.p. $qq_s+(1-q)(1-q_s)$. The primary also incurs the cost of $s$ when it selects $Y$. Hence, from (\ref{eq:nmax1}) and (\ref{eq:nmax2}) the maximum expected payoff that primary $1$ can attain by selecting $Y$ is
\begin{align}
& (v-c)(1-q)q_s+(v-c)(1-q)(qq_s+(1-q)(1-q_s))-s\nonumber\\
& =(v-c)(1-q)(2qq_s-q+1)-s
\end{align}
However, since $s\geq (v-c)(1-q)(2qq_s-q)$, thus the maximum expected payoff that a primary can attain by selecting $Y$ is $(v-c)(1-q)$. Hence, a primary does not have any profitable unilateral deviation. 
\end{proof}
Note  that the threshold $(v-c)(1-q)(2qq_s-q)$ increases as $q_s$ increases.   Intuitively, as $q_s$ increases, the uncertainty regarding the channel state of the competitors reduces, thus, a primary tends to select $N$ for a smaller range of the values of $s$. 

The expected payoff of each primary is identical and equal to $(v-c)(1-q)$. Since both the players select $N$, thus, the expected payoff does not depend on $s$ in this case.

\subsection{Low $s$}
Now, we show that there exists a NE where each primary randomizes between $Y$ and $N$ when $s<(v-c)(1-q)(2qq_s-q)$.  Towards this end, we introduce some distribution functions parameterized by $p$. The significance of $p$ is shown later.
%\begin{empheq}[left={$\psi_{Y,1}$=\empheqlbrace}]
%0,&  x<\tilde{p}_1\\
%\dfrac{qq_s+(1-q)(1-q_s)}{pqq_s^2}(1-\dfrac{\tilde{p}_1-c}{x-c}) &  \tilde{p}_1\leq x\leq %L_N\nonumber\\
%1,& x>L_N
%\end{empheq}
\begin{align}\psi_{Y,1}(x)=&
0, x<\tilde{p}_1\nonumber\\
& \alpha_{1,p}(1-\dfrac{\tilde{p}_1-c}{x-c}),  \tilde{p}_1\leq x\leq L_N\nonumber\\
& 1, x>L_N\label{eq:psi1yerror}
\end{align}
\begin{align}\label{eq:psinerror}
\psi_{N}(x)=& 
0,\quad x<L_N\nonumber\\
& \alpha_{N,p}(1-\dfrac{\tilde{p}_2-c}{x-c}-\beta_{N,p})\quad L_N\leq x\leq L_0\nonumber\\
& 1,\quad x>L_0
\end{align}
and, when $q_s<1$, then
\begin{align}\label{eq:psi0yerror}
\psi_{Y,0}(x)=&
0,\quad x<L_0\nonumber\\
& \alpha_{0,p}(1-\dfrac{\tilde{p}_3-c}{x-c}-\beta_{0,p})\quad L_0\leq x\leq v\nonumber\\
& 1, \quad x>v
\end{align}
if $q_s=1$, then 
\begin{align}\label{eq:psi_qs1}
\psi_{Y,0}(x)=H(x-v)
\end{align}
where $H(\cdot)$ is the heaviside step function or unit step function and
\begin{align}\label{eq:ln}
\tilde{p}_3-c& =(v-c)\dfrac{(1-q)q_s}{(1-q)q_s+q(1-q_s)},\quad
\tilde{p}_2-c=(v-c)(1-q)q_s\dfrac{1-(1-p)q-pqq_s}{pq(1-q_s)^2+q_s(1-q)}\nonumber\\
L_0-c& =(\tilde{p}_2-c)/(1-(1-p)q-pqq_s),\quad
L_N-c=(\tilde{p}_2-c)/(1-pqq_s)\nonumber\\
\tilde{p}_1-c& =(L_N-c)\dfrac{qq_s(1-pq_s)+(1-q)(1-q_s)}{qq_s+(1-q_s)(1-q)}.
\end{align}
and
\begin{align}
 \alpha_{1,p}& =\dfrac{qq_s+(1-q)(1-q_s)}{pqq_s^2},\quad
\alpha_{N,p}=\dfrac{1}{(1-p)q},\quad
\beta_{N,p}=pqq_s\nonumber\\
\alpha_{0,p}& =\dfrac{q(1-q_s)+q_s(1-q)}{pq(1-q_s)^2},\quad
\beta_{0,p}=\dfrac{pq(1-q_s)q_s+(1-p)q(1-q_s)}{q(1-q_s)+q_s(1-q)}.
\end{align}
It is easy to discern that all the above distribution functions are continuous when $q_s<1$. When $q_s=1$, then only $\psi_{Y,0}(\cdot)$ is discontinuous which has a jump of $1$ at $v$.  Also note that the structures of $\psi_{Y,1}(\cdot), \psi_{N}(\cdot)$ and $\psi_{Y,0}(\cdot)$ are similar (i.e. variation with $x$ is the same). However, their support sets, the scaling parameters (i.e. $\alpha_{1,p}, \alpha_{N,p}, \alpha_{0,p}$), and the constants (i.e $\beta_{0,p}$ , $\beta_{N,p}$ ) are different.

When $q_s=1$, the values of the parameters in (\ref{eq:ln}) are greatly simplified which are given by--
\begin{align}\label{eq:qs=1}
\tilde{p}_3-c& =v-c,\quad
\tilde{p}_2-c=(v-c)(1-q),\quad
L_0-c=v-c\nonumber\\
L_N-c& =\dfrac{\tilde{p}_2-c}{1-pq},\quad
\tilde{p}_1-c =(L_N-c)(1-p)
\end{align}
Thus, $L_0-c$ and $\tilde{p}_3-c$  are  the highest when $q_s=1$. 
Intuitively, when $q_s=1$, a primary knows that the channel state of its competitor is unavailable w.p. $1$ if the primary estimates that the channel state of the competitor is $0$. Thus, the primary selects $v$ w.p. $1$.  

Now, we are ready to state the main result of this section. 
\begin{theorem}\label{thm:mixederror}
Consider the following strategy profile: Each primary selects $Y$ w.p. $p$ and $N$ w.p. $1-p$ where $p$ satisfies the following equality
\begin{align}\label{eq:alternatep}
\tilde{p}_2-c=(\tilde{p}_1-c)(qq_s+(1-q)(1-q_s))+(\tilde{p}_3-c)(q(1-q_s)+q_s(1-q))-s
\end{align}
where $\tilde{p}_1, \tilde{p}_2$ and $\tilde{p}_3$ are given in (\ref{eq:ln}).
While selecting $Y$, each primary selects its price from $\psi_{Y,1}(\cdot)$ (given in (\ref{eq:psi1yerror})) if the estimated channel state of the other primary is $1$ and each primary selects its price from $\psi_{Y,0}(\cdot)$ (given in (\ref{eq:psi0yerror})) if the estimated channel state of the other primary is $0$. While selecting $N$, each primary selects its price using $\psi_{N}(\cdot)$ (given in (\ref{eq:psinerror})).

The above strategy profile is an NE when $s<(v-c)(1-q)(2qq_s-q)$. The above strategy profile is unique in the class of symmetric NE strategies. The expected payoff that each primary gets is $\tilde{p}_2-c$.
\end{theorem}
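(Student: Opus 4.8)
\textit{Proof proposal.} The plan is to verify the profile by the standard guess‑and‑verify route for mixed equilibria in posted‑price games: fix primary $2$ at the prescribed strategy, and show that primary $1$'s best response yields expected payoff exactly $\tilde p_2-c$, that this value is attained on (and only on) the prescribed supports, and that the decisions $Y$ and $N$ both realize it, so that randomizing between them with probability $p$ is optimal.

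First I would compute, for each conditional situation of primary $1$ (decision $N$; decision $Y$ with estimate $1$; decision $Y$ with estimate $0$), its probability of selling at a posted price $x$ when primary $2$ plays the candidate strategy. Using Bayes' rule — the posterior in (\ref{eq:1given1}) after an estimate of $1$ and in (\ref{eq:0given0}) after an estimate of $0$, and the fact that when primary $2$ is available and has acquired the C‑CSI it estimates primary $1$'s (available) channel correctly / incorrectly with probability $q_s$ / $1-q_s$ — this selling probability is an affine combination of $1-\psi_{Y,1}(x)$, $1-\psi_{Y,0}(x)$, $1-\psi_{N}(x)$ and the constant $1-q$. The key simplification is that the three supports $[\tilde p_1,L_N]$, $[L_N,L_0]$, $[L_0,v]$ are adjacent and essentially disjoint, so on each of them two of the three distributions sit at an extreme value ($0$ or $1$) and drop out. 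Substituting the definitions of $\alpha_{1,p},\alpha_{N,p},\beta_{N,p},\alpha_{0,p},\beta_{0,p}$ from (\ref{eq:ln}) — which are reverse‑engineered precisely for this purpose — one finds that $(x-c)\cdot\Pr[\text{sell}]$ collapses to the constant $\tilde p_1-c$ on the support of $\psi_{Y,1}$, to $\tilde p_2-c$ on the support of $\psi_{N}$, and to $\tilde p_3-c$ on the support of $\psi_{Y,0}$. Hence, net of the acquisition cost, choosing $Y$ gives expected payoff $(qq_s+(1-q)(1-q_s))(\tilde p_1-c)+(q(1-q_s)+q_s(1-q))(\tilde p_3-c)-s$, while choosing $N$ gives $\tilde p_2-c$; equation (\ref{eq:alternatep}) is exactly the statement that these two quantities coincide. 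Therefore primary $1$ is indifferent between $Y$ and $N$, every mixture between them (in particular weight $p$) is a best response, and the common value $\tilde p_2-c$ is the claimed equilibrium payoff.

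Next I would rule out deviations to prices outside the prescribed supports. On each complementary interval one still has the closed form for $\Pr[\text{sell}]$ in terms of the (now clamped) $\psi$'s, and I would check that $(x-c)\Pr[\text{sell}]$ is monotone there and matches the appropriate constant at the shared endpoint, so that no out‑of‑support price beats the support value; prices above $v$ never sell. The atom of $\psi_{Y,0}$ at $v$ in the degenerate case $q_s=1$ must be handled separately, but there (\ref{eq:qs=1}) shows the profile reduces exactly to the one in Theorem~\ref{thm:mixedstrategy}, so that case (and the payoff $\tilde p_2-c=(v-c)(1-q)$) is inherited. Finally I would verify feasibility on the stated range $s<(v-c)(1-q)(2qq_s-q)$: that (\ref{eq:alternatep}), viewed as an equation in $p$ (note $\tilde p_1,\tilde p_2$ depend on $p$ while $\tilde p_3$ does not), has a unique root $p\in(0,1)$ — with $p\downarrow 0$ as $s\uparrow(v-c)(1-q)(2qq_s-q)$, matching the boundary of Theorem~\ref{thm:nandnerror}, and $p<1$ for every $s>0$, consistent with the non‑existence of an all‑$Y$ equilibrium — and that the induced thresholds satisfy $\tilde p_1<L_N<L_0\le v$ while $\psi_{Y,1},\psi_{N},\psi_{Y,0}$ are genuine CDFs (nondecreasing, valued in $[0,1]$, continuous when $q_s<1$, with the lone jump at $v$ when $q_s=1$).

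The routine‑but‑delicate part is this last bookkeeping: the endpoint‑and‑monotonicity checks for out‑of‑support prices, and, above all, showing that $p\in(0,1)$ and the support ordering hold throughout the regime $s<(v-c)(1-q)(2qq_s-q)$ and degenerate cleanly at the two extremes ($s\to 0$ and $s\to(v-c)(1-q)(2qq_s-q)$, as well as $q_s\to 1$). Uniqueness within symmetric NE I would establish separately, in the spirit of the Appendix argument for Theorem~\ref{thm:mixedstrategy}: show that any symmetric NE decision rule must be of the $[T,p]$ form, that the three conditional price distributions must have connected, non‑overlapping supports ordered according to how likely the competitor is believed to be present, with no interior atoms, and that the resulting indifference conditions then pin down precisely $\psi_{Y,1},\psi_{N},\psi_{Y,0}$ and the value of $p$ determined by (\ref{eq:alternatep}).
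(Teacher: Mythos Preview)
Your proposal is correct and follows essentially the same route as the paper: fix one primary at the candidate strategy, compute the deviator's conditional selling probability on each of the three adjacent intervals $[\tilde p_1,L_N]$, $[L_N,L_0]$, $[L_0,v]$, verify the payoff is constant on the ``matching'' interval and (by linearity in $x-c$ with the sign of the coefficient determined by $q_s>1/2$) no larger on the other two, then invoke (\ref{eq:alternatep}) to equate the $Y$ and $N$ payoffs at $\tilde p_2-c$. The feasibility check $p\in(0,1)$ and its monotonicity in $s$ that you flag is exactly the content of Observation~\ref{obs:error}, and the $q_s=1$ degeneration and the separate uniqueness argument are handled just as you describe.
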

{\em Discussion}: Note that when $q_s=1$, we  know from Theorem~\ref{thm:mixedstrategy} that the strategy profile is unique one among {\em all} strategy profiles not only {\em symmetric} ones. There is no equilibrium where both the players select $Y$ w.p. $1$ even when $q_s<1$ (we have already shown the above for $q_s=1$ in Theorem~\ref{thm:yandy}). 

Now, we show that there exists a unique solution of (\ref{eq:alternatep}) in $p$ in the interval $0<p<1$ when $0<s<q(v-c)(1-q)(2q_s-1)$. 
\begin{obs}\label{obs:error}
There exists a unique solution in $p\in (0,1)$  of the equation (\ref{eq:alternatep}) when $0<s<(v-c)(1-q)(2qq_s-q)$. As $s$ decreases $p$ increases.
\end{obs}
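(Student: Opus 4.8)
The plan is to reduce the fixed-point condition (\ref{eq:alternatep}) to a single scalar equation $g(p)=0$ on $[0,1]$, with $v,c,q,q_s,s$ held fixed, and then to settle existence, uniqueness and the monotone comparative statics by elementary sign arguments. Let $g(p)$ denote the left-hand side of (\ref{eq:alternatep}) minus its right-hand side, and abbreviate $N(p)=(1-q)+pq(1-q_s)$, $D(p)=q_s(1-q)+pq(1-q_s)^2$, and $b=q_s(1-q)+q(1-q_s)$. Substituting the closed forms from (\ref{eq:ln}) and using the identities $\tilde{p}_2-c=(v-c)(1-q)q_s\,N(p)/D(p)$, $\ (\tilde{p}_1-c)\big(qq_s+(1-q)(1-q_s)\big)=(\tilde{p}_2-c)\big(qq_s+(1-q)(1-q_s)-pqq_s^2\big)/(1-pqq_s)$, and $(\tilde{p}_3-c)\big(q(1-q_s)+q_s(1-q)\big)=(v-c)(1-q)q_s$, the difference collapses to
\begin{align}
g(p)=(v-c)(1-q)q_s\left[\frac{N(p)}{D(p)}\cdot\frac{b-pqq_s(1-q_s)}{1-pqq_s}-1\right]+s. \nonumber
\end{align}
Since $0<q<1$ and $1/2<q_s\le 1$, the quantities $D(p)$, $1-pqq_s$ and $b-pqq_s(1-q_s)$ are strictly positive for $p\in[0,1]$, so $g$ is continuous there.

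First I would evaluate the endpoints. A direct computation gives $g(0)=s-(v-c)(1-q)(2qq_s-q)$, which is strictly negative exactly under the hypothesis $0<s<(v-c)(1-q)(2qq_s-q)$. At $p=1$, the identities $N(1)=1-qq_s$, $b-qq_s(1-q_s)=D(1)$ and $1-pqq_s\big|_{p=1}=1-qq_s$ make the bracket vanish, so $g(1)=s>0$. By the intermediate value theorem there exists $p^{\ast}\in(0,1)$ with $g(p^{\ast})=0$.

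For uniqueness I would show that $g$ is strictly increasing on $[0,1]$. Each of the two factors inside the bracket is strictly positive and strictly increasing there: $\tfrac{d}{dp}\big(N(p)/D(p)\big)$ has the sign of $q(1-q_s)(1-q)(2q_s-1)>0$, while $\tfrac{d}{dp}\big((b-pqq_s(1-q_s))/(1-pqq_s)\big)$ has the sign of $qq_s(1-q)(2q_s-1)>0$. A product of positive increasing functions is increasing, so the bracket---and hence $g$---is strictly increasing, and the root $p^{\ast}$ is unique. For the comparative statics, observe that $g(p;s)$ is strictly increasing in $p$ and enters $s$ only through the additive term $+s$, hence is strictly increasing in $s$ as well; thus for $s_1<s_2$ we get $g(p^{\ast}(s_2);s_1)<g(p^{\ast}(s_2);s_2)=0=g(p^{\ast}(s_1);s_1)$, and strict monotonicity of $g(\cdot;s_1)$ forces $p^{\ast}(s_2)<p^{\ast}(s_1)$, i.e.\ $p$ increases as $s$ decreases.

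I expect the main obstacle to be purely algebraic: carrying out the substitution of (\ref{eq:ln}) into (\ref{eq:alternatep}) and verifying the two cancellations that collapse the $\tilde{p}_3$-term to a constant and that make the bracket vanish at $p=1$ (equivalently, $g(1)=s$). Once $g$ has been put in the displayed product form, the continuity check, the two endpoint evaluations, and the two one-line derivative signs are all routine.
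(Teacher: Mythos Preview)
Your proof is correct and follows essentially the same route as the paper: rewrite (\ref{eq:alternatep}) as a single equation in $p$ that factors into the product of the two positive terms $N(p)/D(p)$ and $(b-pqq_s(1-q_s))/(1-pqq_s)$, check the endpoint values (your $g(0)$ and $g(1)$ are exactly the paper's Step~ii computations), and obtain strict monotonicity by differentiating each factor (the paper's Step~i). One small caveat: your claim that $N(p)/D(p)$ is \emph{strictly} increasing has derivative sign $q(1-q_s)(1-q)(2q_s-1)$, which vanishes at the boundary $q_s=1$; the paper accordingly only asserts this factor is ``non-decreasing,'' and strict monotonicity of the product comes from the second factor, whose derivative sign $qq_s(1-q)(2q_s-1)$ is strictly positive on the whole range $1/2<q_s\le 1$.
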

\begin{proof}
First noe that (\ref{eq:alternatep}) can be written as
\begin{align}
\tilde{p}_2-c-(\tilde{p}_1-c)(qq_s+(1-q)(1-q_s))=(\tilde{p}_3-c)(q(1-q_s)+q_s(1-q))-s.\nonumber
mber
\end{align}
%(v-c)(1-q)q_s\dfrac{1-(1-p)q-pqq_s}{pq(1-q_s)^2+q_s(1-q)}-(v-c)(1-q)q_s\dfrac{(1-(1-p)q-pqq_s)(qq_s(1-pq_s)+(1-q)(1-q_s))}{[pq(1-q_s)^2+q_s(1-q)](1-pqq_s)}
 Using (\ref{eq:ln}) we can rewrite the above as 
\begin{align}\label{eq:rep}
& (v-c)(1-q)q_s\dfrac{1-(1-p)q-pqq_s}{pq(1-q_s)^2+q_s(1-q)}\left(1-\dfrac{(qq_s(1-pq_s)+(1-q)(1-q_s))}{(1-pqq_s)}\right)\nonumber\\& =(v-c)(1-q)q_s-s
\end{align}
First, we show that the left hand of (\ref{eq:rep}) is strictly increasing in $p$ (Step i). Next, we show that when $p=0$, the left hand side of (\ref{eq:rep}) is less than the right hand side and when $p=1$, the left hand side of (\ref{eq:rep}) is greater than the right hand side (Step ii). Since the left hand side of (\ref{eq:rep}) is continuous in $p$ and strictly increasing in $p$,  there exists a unique solution $p\in (0,1)$  of (\ref{eq:rep}).  The last part easily follows as the right hand side of (\ref{eq:rep}) decreases with $s$,  the left hand side of (\ref{eq:rep}) is strictly increasing in $p$ and independent of $s$.  Now, we show steps i and ii. 

Step i: By differentiating the left hand side of (\ref{eq:rep}) we can show that
\begin{align}
1-\dfrac{qq_s(1-pq_s)+(1-q)(1-q_s)}{1-pqq_s}
\end{align}
is strictly increasing in $p$ when $q_s>1/2$. On the other hand, it is easy to discern that $(v-c)(1-q)q_s\dfrac{1-(1-p)q-pqq_s}{pq(1-q_s)^2+q_s(1-q)}$ is non-decreasing in $p$ when $q_s>1/2$. Thus, the left hand side of (\ref{eq:rep}) is strictly increasing in $p$.
 
Step ii: When $p=0$, then the value of left hand side of the equation (\ref{eq:rep}) is
\begin{align}
(v-c)(1-q)(1-qq_s-(1-q)(1-q_s))=(v-c)(1-q)(q_s+q-2qq_s)
\end{align}
Now, $(v-c)(1-q)q_s-(v-c)(1-q)(q_s+q-2qq_s)=(v-c)(1-q)(2qq_s-q)$. Since $s<(1-q)(v-c)(2qq_s-q)$, thus, the left hand side of (\ref{eq:rep}) is less than the right hand side. 

Now when $p=1$, then the left hand side of (\ref{eq:rep}) is 
\begin{align}
& (v-c)(1-q)q_s[\dfrac{1-qq_s}{q(1-q_s)^2+(1-q)q_s}-\dfrac{qq_s(1-q_s)+(1-q)(1-q_s)}{q(1-q_s)^2+(1-q)q_s}]\nonumber\\
& =(v-c)(1-q)q_s[\dfrac{qq_s^2+q+q_s-3qq_s}{q(1-q_s)^2+(1-q)q_s}] \nonumber\\
&=(v-c)(1-q)q_s[\dfrac{q(1-q_s)^2+q_s(1-q)}{q(1-q_s)^2+(1-q)q_s}]=(v-c)(1-q)q_s
\end{align}
Since $s>0$, thus, the left hand side of (\ref{eq:rep}) is greater than the right hand side.

Since the left hand side of (\ref{eq:rep}) is continuous function of $p$, thus, by intermediate value theorem there exists a solution in the interval $(0,1)$. 
\end{proof}
Next, we show that the expected payoff of a primary is a strictly greater than $(v-c)(1-q)$ when $q_s<1$ and the payoff increases with the decrease in $s$.
\begin{lem}\label{lm:result}
When $q_s<1$,  $\tilde{p}_2-c$ increases with the decrease in $s$ and $\tilde{p}_2-c$ is strictly greater than $(v-c)(1-q)$ when $s<(v-c)(1-q)(2qq_s-q)$.
\end{lem}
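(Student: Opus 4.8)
The plan is to read $\tilde{p}_2-c$ off (\ref{eq:ln}) as a function of $p$ alone and then combine the monotonicity of that function with the monotonicity of $p$ in $s$ already established in Observation~\ref{obs:error}. Write $\tilde{p}_2-c = (v-c)(1-q)q_s\, g(p)$, where $g(p) = \dfrac{1-(1-p)q-pqq_s}{pq(1-q_s)^2 + q_s(1-q)} = \dfrac{N(p)}{D(p)}$ with $N(p) = 1-q + pq(1-q_s)$ and $D(p) = pq(1-q_s)^2 + q_s(1-q)$. By Observation~\ref{obs:error}, $p$ strictly increases as $s$ decreases on the interval $0<s<(v-c)(1-q)(2qq_s-q)$, so the first assertion of the lemma reduces to showing that $g$ is strictly increasing in $p$ when $1/2<q_s<1$, and the second reduces to identifying the boundary value of $\tilde{p}_2-c$.

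For the monotonicity of $g$ I would differentiate directly. A short cancellation gives $N'(p)D(p) - N(p)D'(p) = q(1-q_s)(1-q)(2q_s-1)$, which is strictly positive precisely because $0<q<1$ and $1/2<q_s<1$; since $D(p)>0$ always, $g'(p)>0$ on $[0,1]$. Hence $\tilde{p}_2-c$ is strictly increasing in $p$, and composing with the strict monotonicity of $p$ in $s$ yields that $\tilde{p}_2-c$ strictly increases as $s$ decreases. For the lower bound, evaluate at $p=0$: $g(0) = \dfrac{1-q}{q_s(1-q)} = \dfrac{1}{q_s}$, so $\tilde{p}_2-c = (v-c)(1-q)$ there; moreover substituting $p=0$ into (\ref{eq:rep}) and solving for $s$ gives exactly $s = (v-c)(1-q)(2qq_s-q)$, which is already the computation carried out in Step ii of the proof of Observation~\ref{obs:error}. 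Since $s<(v-c)(1-q)(2qq_s-q)$ forces $p>0$ by Observation~\ref{obs:error}, and $g$ is strictly increasing, we conclude $\tilde{p}_2-c > (v-c)(1-q)$, as claimed.

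The only step that needs genuine care is verifying the sign of $N'D - ND'$; everything else is bookkeeping on top of Observation~\ref{obs:error}. It is worth emphasizing that the hypothesis $q_s<1$ is essential and is used twice: in the sign computation (the factor $1-q_s$ must be strictly positive, so $2q_s-1>0$ alone would not do), and in concluding strictness, since when $q_s=1$ the identity $N'D-ND'=0$ holds and, consistently, (\ref{eq:qs=1}) shows $\tilde{p}_2-c=(v-c)(1-q)$ independent of $p$ — so both the strict monotonicity and the strict inequality degenerate to equalities. Finally, the well-definedness of $p$ as a strictly monotone function of $s$ on the relevant interval is exactly what Observation~\ref{obs:error} provides, so no additional argument is required there.
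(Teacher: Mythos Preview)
Your proof is correct and follows the same route as the paper: both argue that $\tilde{p}_2-c$ is strictly increasing in $p$ when $q_s<1$, invoke Observation~\ref{obs:error} for the monotone dependence of $p$ on $s$, and evaluate at $p=0$ to obtain the boundary value $(v-c)(1-q)$. The paper merely asserts the monotonicity in $p$ as ``easy to discern,'' whereas you supply the explicit derivative computation $N'D-ND'=q(1-q_s)(1-q)(2q_s-1)$, which is a welcome clarification but not a different method.
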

\begin{proof}
 Now, it is easy to discern that $\tilde{p}_2-c$ is strictly increasing in $p$ when $q_s<1$. Now, $p$ increases with the decrease in $s$ (by Observation~\ref{obs:error})  when $s<(v-c)(1-q)(2qq_s-q)$. Hence, $\tilde{p}_2-c$ is a strictly decreasing function in $s$ when $q_s<1$.

When $p=0$, then $\tilde{p}_2-c=(v-c)(1-q)$ (by (\ref{eq:ln})). Since $\tilde{p}_2-c$ is strictly increasing in $p$ when $q_s<1$ and $s<q(v-c)(1-q)(2q_s-1)$, hence $\tilde{p}_2-c>(v-c)(1-q)$.
\end{proof}
Note from Theorem~\ref{thm:mixederror} that the expected payoff attained by a primary under the NE is 
$\tilde{p}_2-c$. From (\ref{eq:qs=1}) note that $\tilde{p}_2-c=(v-c)(1-q)$ when $q_s=1$. Thus, the above lemma entails that the expected payoff of each primary increases when there is an error in estimating the channel state of the competitor. This contradicts the conventional wisdom that the payoff should increase with the decrease in error in the estimation. In Section~\ref{sec:mainresults_estimationerror} we have already explained the apparent reason behind this result. % Apparently, when $q_s<1$, there is an uncertainty regarding the exact channel state, hence, a primary selects higher prices when it selects $Y$ and estimates that the channel state of its competitor is $1$. Thus, a primary attains higher expected payoff compared to the setting $q_s=1$. 

The above lemma entails that the expected payoff increases as $s$ decreases when $q_s<1$. Note that when $q_s=1$, the expected payoffs of primaries are independent of $s$ which we have already seen in the basic model (Section~\ref{sec:basic_model}). 

Note that when $q_s=1$, then $\psi_{Y,0}(\cdot)$ has a jump of size $1$ i.e. a primary will select $v$ w.p. $1$ as the primary will always be able to sell its channel. However, when $q_s<1$, then a primary selects its price using a continuous distribution from the interval $[L_0,v]$ where $L_0<v$. We have already explained the reason behind this in Section~\ref{sec:mainresults_estimationerror}. 

\subsubsection{Proof of Theorem~\ref{thm:mixederror}}
Before digging into the details of proof, we state few more results which we use throughout. Note from (\ref{eq:ln}) that 
\begin{align}\label{eq:l0}
(\tilde{p}_3-c)(q_s(1-q)+q(1-q_s))/[q_s(1-q)+pq(1-q_s)^2]
=L_0-c
\end{align}
Since $q_s>1/2$, thus, by cross multiplication, it is easy to see that
\begin{align}\label{eq:compareqs}
\dfrac{q_s((1-q_s)q+(1-q)q_s)}{(1-q_s)(qq_s+(1-q_s)(1-q))}>1
\end{align}

We show that primary $1$ can not gain higher profit by deviating from the strategy prescribed in Theorem~\ref{thm:mixederror} when primary $2$ follows the strategy prescribed in Theorem~\ref{thm:mixederror}. This will complete the proof. Toward this end, we first show that when primary $1$ selects $Y$ and it estimates that the channel state of its competitor is $1$, then it will attain a maximum expected payoff of $\tilde{p}_1-c-s$. The maximum expected payoff is attained  when it follows the strategy $\psi_{Y,1}(\cdot)$ (Step i). Subsequently, we show that under $Y$, when the primary estimates the channel state as $0$, then the maximum expected payoff that primary $1$ can attain is $\tilde{p}_3-c-s$ and it is attained when the primary follows the strategy $\psi_{Y,0}(\cdot)$ (Step ii). Subsequently, we show that the maximum expected payoff that primary $1$ can attain under $Y$ is $\tilde{p}_2-c$ and it is attained when primary $1$ follows the strategy (Step iii). Subsequently, we show that when primary $1$ selects $N$, then its maximum expected payoff is $\tilde{p}_2-c$ and it is attained when the primary follows the strategy $\psi_N(\cdot)$ (Step iv). Finally, we show that the maximum expected payoff of primary $1$ is $\tilde{p}_2-c$ and it is attained if primary $1$ follows the strategy profile (Step v).

Step i: Suppose that primary $1$ selects $Y$ and estimates that the channel state of primary $2$ is $1$. We show that the maximum expected payoff attained by the primary $1$ is $\tilde{p}_1-c-s$ and this is attained only when the primary selects its price from the interval $[\tilde{p}_1,L_N]$. Toward this end, we first show any price in the interval $[\tilde{p}_1,L_N]$ will fetch an expected payoff of $\tilde{p}_1-c-s$ (Step i.a.). Subsequently, we show that if primary $1$ selects a price from the interval $[L_N,L_0]$ and $[L_0,v]$  it will fetch an expected payoff of less than $\tilde{p}_1-c-s$ in Step i.b. and Step i.c. respectively. Note that at any price less than $\tilde{p}_1$ will fetch a strictly lower payoff compared to the price at $\tilde{p}_1$ as primary $2$ does not select any price lower than or equal to $\tilde{p}_1$. Thus, this will complete step i.

Step i.a: Here, we are considering the scenario where primary $1$ estimates that the channel state of primary $2$ is $1$. Under $Y$, when the primary $1$ estimates that the channel state of primary $2$ is $1$, then the probability that the channel state of primary $2$ is $1$ is
\begin{align}\label{eq:1suff}%to be transferred
\dfrac{q_sq}{qq_s+(1-q)(1-q_s)}
\end{align}
Suppose that primary $1$ selects a price $x$ in the interval $[\tilde{p}_1,L_N]$. When the channel state of primary $2$ is $1$ it will select a price less than or equal to $x$ only if it selects $Y$, it estimates  the channel state of primary $1$ as $1$ and selects a price less than or equal to $x$.The primary $2$ selects $Y$ w.p. $p$. Now, when the channel of primary $1$ is available, then primary $2$ estimates the channel state of primary $1$ as $1$ w.p. $q_s$ and selects a price less than or equal to $x$ w.p. $\psi_{Y,1}(x)$. The channel state of primary $2$ is $1$ with probability given in (\ref{eq:1suff}). Hence, the expected payoff of primary $1$ when its channel is available and selects a price $x$ in the interval $[\tilde{p}_1,L_N]$ is
\begin{align}
(x-c)(1-\dfrac{pq_sqq_s}{qq_s+(1-q)(1-q_s)}\psi_{Y,1}(x))-s=\tilde{p}_1-c-s\quad \text{from } (\ref{eq:psi1yerror}).
\end{align}

Step i.b.: Now, suppose that primary $1$ selects a price $x\in [L_N,L_0]$. Note that if the channel of primary $2$ is available, it will select a price less than or equal to $x$ if if one of the following occurs--i) primary $2$ selects $Y$ and estimates the channel state of primary $1$ is $1$, ii) primary $2$ selects $N$ and selects a price less than or equal to $x$. (i) occurs with probability $pq_s$ and (ii) occurs with probability $(1-p)\psi_N(x)$. Since primary $1$ estimates that the channel state of primary $2$ is $1$, thus, the probability that the true state of the channel of primary $2$ is indeed $1$ is given by (\ref{eq:1suff}).  Thus, the probability that the primary $2$ will select a price less than or equal to $x$ is
\begin{align}
\dfrac{pq_s^2q}{qq_s+(1-q)(1-q_s)}+\dfrac{(1-p)qq_s}{qq_s+(1-q)(1-q_s)}\psi_{N}(x)\nonumber
\end{align}
Thus, at $x$, the expected payoff of primary $1$ is--
\begin{align}\label{eq:higherln}
& (x-c)(1-\dfrac{pq_s^2q}{qq_s+(1-q)(1-q_s)}-\dfrac{(1-p)qq_s}{qq_s+(1-q)(1-q_s)}\psi_{N}(x))-s\nonumber\\
& =(x-c)(1-\dfrac{pq_s^2q}{qq_s+(1-q)(1-q_s)}-\dfrac{q_s}{qq_s+(1-q)(1-q_s)}(1-\dfrac{\tilde{p}_2-c}{x-c}-pqq_s))-s\quad \text{from } (\ref{eq:psinerror})\nonumber\\
& =(x-c)(1-\dfrac{q_s}{qq_s+(1-q)(1-q_s)})+(\tilde{p}_2-c)\dfrac{q_s}{qq_s+(1-q)(1-q_s)}-s
\end{align}
Since $q_s>1/2$, the co-efficient is negative. Thus, the above  is maximized at $x=L_N$. Using (\ref{eq:ln}) , the above expression is thus, upper bounded by
\begin{align}\label{eq:highln}
& (L_N-c)(1-\dfrac{q_s}{qq_s+(1-q)(1-q_s)})+(L_N-c)\dfrac{q_s(1-pqq_s)}{qq_s+(1-q)(1-q_s)}-s\nonumber\\
& =(L_N-c)\dfrac{qq_s(1-pq_s)+(1-q)(1-q_s)}{qq_s+(1-q)(1-q_s)}-s=\tilde{p}_1-c-s\quad \text{from }(\ref{eq:ln})
\end{align}

Step i.c:  From steps i.a. and i.b. we have already shown that when the maximum expected payoff of primary $1$ is $\tilde{p}_1-c-s$ at a price in the interval $[\tilde{p}_1,L_0]$. When $q_s=1$, $L_0=v$ (from (\ref{eq:qs=1})). Thus, it shows that when $q_s=1$, the maximum expected payoff of primary $1$ is indeed $\tilde{p}_1-c-s.$

Now, we consider the case where $q_s<1$ and primary $1$ selects price $x\in [L_0,v]$.  When the channel of primary $2$ is available, then  primary $2$ selects a price less than or equal to $x$ if one of the following occurs--i) it selects $Y$ and estimates that the channel state of primary $1$ is $1$, ii) primary $2$ selects $N$, iii) primary $2$ selects $Y$, estimates that the channel state of primary $1$ is $0$ and selects a price less than or equal to $x$. (i) occurs with probability $pq_s$ since the channel of primary $1$ is available. (ii) occurs with probability $1-p$.  (iii) occurs with probability $p(1-q_s)\psi_{Y,0}(x)$ (since the channel of primary $1$ is available). On the other hand the probability that the channel of primary $2$ is available is given by (\ref{eq:1suff}).  Hence, the probability that primary $2$ selects a price less than or equal to $x$ is 
\begin{align}
\dfrac{pqq_s^2+(1-p)qq_s+pqq_s(1-q_s)\psi_{Y,0}(x)}{qq_s+(1-q)(1-q_s)}\nonumber
 \end{align} Thus,  at any price $x\in [L_0,v]$, the expected payoff of primary $1$ is
\begin{align}
& (x-c)(1-\dfrac{pqq_s^2+(1-p)qq_s+pqq_s(1-q_s)\psi_{Y,0}(x)}{qq_s+(1-q)(1-q_s)})-s\nonumber\\
& =(x-c)(1-\dfrac{pqq_s^2+(1-p)qq_s}{qq_s+(1-q)(1-q_s)})-s-\nonumber\\
& (x-c)\dfrac{q_s[(1-q)q_s+(1-q_s)q]}{(1-q_s)[qq_s+(1-q)(1-q_s)]}(1-\dfrac{\tilde{p}_3-c}{x-c}-\dfrac{pq(1-q_s)q_s+(1-p)q(1-q_s)}{(1-q)q_s+(1-q_s)q})\quad \text{from } (\ref{eq:psi0yerror})\nonumber\\
&=(x-c)(1-\dfrac{[q(1-q_s)+q_s(1-q)]q_s}{(1-q_s)[qq_s+(1-q)(1-q_s)]})+(\tilde{p}_3-c)\dfrac{[q(1-q_s)+q_s(1-q)]q_s}{(1-q_s)[qq_s+(1-q)(1-q_s)]}-s
\end{align}
By (\ref{eq:compareqs}) the co-efficient of $(x-c)$ is negative, thus, the maximum of the above expression is attained at $x=L_0$. Thus, the expected payoff at $x$ is upper bounded by expected payoff at $L_0$. From (\ref{eq:highln}) (which also gives the expected payoff at $L_0$) we have already bounded the expected payoff at $L_0$ which is $\tilde{p}_1-c-s$.  
   Hence,  the maximum expected payoff that primary $1$ can attain in this case is $\tilde{p}_1-c-s$ and it is attained at any price in the interval $[\tilde{p}_1,L_N]$.
   
   Step ii: Suppose that primary $1$ estimates that the channel state of primary $2$ is $0$. When $q_s=1$, then the channel of primary $2$ is unavailable with probability $1$. Hence, primary $1$ will attain the highest possible payoff at $v$ and the payoff is $(v-c)-s=\tilde{p}_3-c-s$ (by (\ref{eq:qs=1})). Thus, we consider the case when $q_s<1$. We show that the maximum expected payoff attained by primary $1$ is $\tilde{p}_3-c-s$ and it is attained at any price in the interval $[L_0,v]$. Towards this end, we first show that any price from the interval $[L_0,v]$ will fetch an expected payoff of $\tilde{p}_3-c-s$ (Step ii.a.). Subsequently, we show that any price in the interval $[L_N,L_0]$ and $[\tilde{p}_1,L_N]$ will fetch an expected payoff of at most $\tilde{p}_3-c-s$ (Steps ii.b. and ii.c. resp.). Any price less than $\tilde{p}_1$ will fetch a payoff which is strictly less than the payoff at $\tilde{p}_1$, thus, this will complete Step ii. 
   
   Step ii.a: When primary $1$ estimates that the channel state of primary $2$ is $0$, then the probability that the channel state of primary $2$ is $1$ is
   \begin{align}\label{eq:prob_1giv0}
   \dfrac{q(1-q_s)}{q(1-q_s)+q_s(1-q)}
   \end{align}
   Suppose that primary $1$ selects a price in the interval $x\in [L_0,v]$. If the channel of primary $2$ is available, then, the primary $2$ will select a price less than or equal to $x$ if one of the following occurs--i) it selects $Y$ and estimates that the channel state of primary $1$ is $1$, ii) primary $2$ selects $N$, iii) primary $2$ selects $Y$, estimates that the channel state of primary $1$ is $0$ and selects a price less than or equal to $x$. (i) occurs with probability $pq_s$ since the channel of primary $1$ is available. (ii) occurs with probability $1-p$.  (iii) occurs with probability $p(1-q_s)\psi_{Y,0}(x)$ (since the channel of primary $1$ is available). On the other hand the probability that the channel of primary $2$ is available is given by (\ref{eq:prob_1giv0}) as primary $1$ estimates that the channel state of primary $2$ is $0$.  Hence, the probability that primary $2$ selects a price less than or equal to $x$ is 
   \begin{align}
   \dfrac{pq(1-q_s)q_s+(1-p)q(1-q_s)+pq(1-q_s)^2\psi_{Y,0}(x)}{q(1-q_s)+q_s(1-q)}.\nonumber
   \end{align}Hence, the expected payoff of primary $1$ at $x$  is
   \begin{align}
   & (x-c)(1-\dfrac{pq(1-q_s)q_s+(1-p)q(1-q_s)}{q(1-q_s)+q_s(1-q)}-\dfrac{pq(1-q_s)^2}{q(1-q_s)+q_s(1-q)}\psi_{Y,0}(x))-s\nonumber\\
   & =\tilde{p}_3-c-s\quad \text{from } (\ref{eq:psi0yerror})
   \end{align}
   Step ii.b.: Now, suppose primary $1$ selects a price $x$ in the interval $ [L_N,L_0]$. When the channel of primary $2$ is available, then primary $2$ selects a price less than or equal to $x$ if one of the following occurs--i) primary $2$ selects $Y$ and estimates the channel state of primary $1$ is $1$, ii) primary $2$ selects $N$ and selects a price less than or equal to $x$. (i) occurs with probability $pq_s$ and (ii) occurs with probability $(1-p)\psi_N(x)$. Given that the primary $1$ estimates that the channel state of primary $2$ is $0$, the probability that channel of primary $1$ is available is given by (\ref{eq:prob_1giv0}). Thus, the probability that primary $2$ selects a price less than or equal to $x$ is given by 
   \begin{align}
   \dfrac{(pq_s+(1-p)\psi_N(x))q(1-q_s)}{q(1-q_s)+q_s(1-q)}
   \end{align}
   Hence, the expected payoff of primary $1$ at $x$ is
   \begin{align}\label{eq:0estimate}
   & (x-c)(1-\dfrac{pq(1-q_s)q_s}{q(1-q_s)+q_s(1-q)}-\dfrac{(1-p)q(1-q_s)}{q(1-q_s)+q_s(1-q)}\psi_{N}(x))-s\nonumber\\
   & =(x-c)(1-\dfrac{pqq_s(1-q_s)}{q(1-q_s)+q_s(1-q)})-  (x-c)\dfrac{1-q_s}{q(1-q_s)+q_s(1-q)}(1-\dfrac{\tilde{p}_2-c}{x-c}-pqq_s)-s\quad \text{from } (\ref{eq:psinerror})\nonumber\\
   & =(x-c)(1-\dfrac{1-q_s}{q(1-q_s)+q_s(1-q)})+(\tilde{p}_2-c)\dfrac{1-q_s}{q(1-q_s)+q_s(1-q)}-s
   \end{align}
   By (\ref{eq:compareqs}) the above is maximized at $x=L_0$. Hence, the maximum possible expected payoff is
   \begin{align}\label{eq:bound0}
   & (L_0-c)(1-\dfrac{1-q_s}{q(1-q_s)+q_s(1-q)})+(L_0-c)\dfrac{(1-q_s)(1-(1-p)q-pqq_s)}{q(1-q_s)+q_s(1-q)}-s\nonumber\\
  & =(L_0-c)(1-\dfrac{[(1-p)q+pqq_s](1-q_s)}{q(1-q_s)+q_s(1-q)})-s\nonumber\\
  & =(L_0-c)\dfrac{pq(1-q_s)^2+q_s(1-q)}{q(1-q_s)+q_s(1-q)}-s =\tilde{p}_3-c-s\quad \text{from } (\ref{eq:ln})
   \end{align}
   Step ii.c.: Now, suppose that primary $1$ selects a price  $x$ in the interval $ [\tilde{p}_1,L_N]$. Now, if the channel of primary $2$ is available, then it selects a price less than or equal to $x$ if it selects $Y$, estimates that the channel state of primary $1$ is $1$ and selects a price less than or equal to $x$. The above occurs with probability $pq_s\psi_{Y,1}(x)$. The probability that the channel state of primary $2$ is $1$ given that the primary $1$ estimates that the channel state of primary $2$ is $0$ is given by (\ref{eq:prob_1giv0}). Hence at $x$ the expected payoff of primary $1$ is
   \begin{align}
   & (x-c)(1-\dfrac{pq(1-q_s)q_s\psi_{Y,1}(x)}{q(1-q_s)+q_s(1-q)})-s\nonumber\\
   & =(x-c)(1-\dfrac{(1-q_s)[qq_s+(1-q)(1-q_s)]}{q_s[q(1-q_s)+q_s(1-q)]}(1-\dfrac{\tilde{p}_1-c}{x-c}))-s\quad \text{from } (\ref{eq:psi1yerror})\nonumber\\
   & =(x-c)(1-\dfrac{(1-q_s)(qq_s+(1-q)(1-q_s))}{q_s((1-q_s)q+q_s(1-q))})+ (\tilde{p}_1-c)\dfrac{(1-q_s)(qq_s+(1-q)(1-q_s))}{q_s(q(1-q_s)+q_s(1-q))}-s\nonumber
   \end{align}
   By (\ref{eq:compareqs}) the above is maximized at $x=L_N$.  Thus, the expected payoff at any $x\in [\tilde{p}_1,L_N]$ is upper bounded by the expected payoff at $L_N$. Now, from (\ref{eq:bound0}) (which also gives the expected payoff at $L_N$) we have already shown that the expected payoff at $L_N$ is upper bounded by $\tilde{p}_3-c-s$. 
%% from (\ref{eq:ln}) the above expression becomes
%   \begin{align}\label{eq:0l0higher}
%   & (L_N-c)(1-\dfrac{(1-q_s)(qq_s+(1-q)(1-q_s))}{q_s(q(1-q_s)+q_s(1-q))})+\nonumber\\
%   & (L_N-c)\dfrac{(1-q_s)(qq_s(1-pq_s)+(1-q)(1-q_s))}{q_s(q(1-q_s)+q_s(1-q))}-s\nonumber\\
%   & =(L_N-c)(1-\dfrac{pq_sq_sq(1-q_s)}{q_s(q(1-q_s)+q_s(1-q))})-s\nonumber\\
%   & =(L_N-c)\dfrac{q(1-q_s)+q_s(1-q)-pqq_s(1-q_s)}{q(1-q_s)+q_s(1-q)}-s
%   \end{align}
%  Now replacing $x$ with $L_N$ in (\ref{eq:0estimate}) and using (\ref{eq:ln}) we obtain
%  \begin{align}
%  & (L_N-c)(1-\dfrac{1-q_s}{q(1-q_s)+q_s(1-q)})+(L_N-c)\dfrac{(1-pqq_s)(1-q_s)}{q(1-q_s)+q_s(1-q)}-s\nonumber\\
%  & (L_N-c)(1-\dfrac{pqq_s(1-q_s)}{q(1-q_s)+q_s(1-q)})-s\nonumber\\
%  & (L_N-c)\dfrac{q(1-q_s)+q_s(1-q)-pqq_s(1-q_s)}{q(1-q_s)+q_s(1-q)}-s
%  \end{align}
%  which is exactly equal to (\ref{eq:0l0higher}). However by (\ref{eq:bound0}) the above expression is upper bounded by $\tilde{p}_3-c-s$.
 Hence, the maximum expected payoff attained by primary $1$ in this case is $\tilde{p}_3-c-s$ and it is attained at price in the interval $[L_0,v]$. 
  
  Step iii: In Step (i), we have shown that under $Y$ if primary $1$ estimates that the channel state of primary $2$ is $1$, then the maximum expected payoff is $\tilde{p}_1-c-s$. In Step (ii), we have shown that if primary $1$ estimates that the channel state of primary $2$ is $0$, then the maximum expected payoff is $\tilde{p}_3-c-s$. Primary $1$ estimates that the channel state of primary $2$ is $1$ w.p. $qq_s+(1-q)(1-q_s)$ and the channel state of primary $2$ is $0$ w.p. $(1-q)q_s+q(1-q_s)$. Thus, under $Y$, the maximum expected payoff that primary $1$ can attain is
  \begin{align}
  (\tilde{p}_1-c-s)(qq_s+(1-q)(1-q_s))+(\tilde{p}_3-c-s)((1-q)q_s+q(1-q_s))=\tilde{p}_2-c\quad \text{from } (\ref{eq:alternatep}).
  \end{align}
  We have already shown that the payoff is achieved when primary $1$ follows the strategy prescribed in the theorem.
  
  Step iv: Now, we show that under $N$, the maximum expected payoff that primary $1$ can attain is $\tilde{p}_2-c$ and it is attained at every price in the interval $[L_N,L_0]$. Toward this end, we first show that if primary $1$ selects a price from the interval $[L_N,L_0]$ it will fetch an expected payoff of $\tilde{p}_2-c$ (Step iv.a). Subsequently, we show that if the primary selects any price from the interval $[L_0,v]$ or $[\tilde{p}_1,L_N]$ it can only get an expected payoff of at most $\tilde{p}_2-c$ (Step iv.b. and Step iv.c. resp.).  
  
  Step iv.a:  Suppose that primary $1$ selects a price $x\in [L_N,L_0]$. When the channel of primary $2$ is available, then primary $2$ selects a price less than or equal to $x$ if one of the following occurs--i) primary $2$ selects $Y$ and estimates the channel state of primary $1$ is $1$, ii) primary $2$ selects $N$ and selects a price less than or equal to $x$. (i) occurs with probability $pq_s$ and (ii) occurs with probability $(1-p)\psi_N(x)$. Now, when primary $1$ selects $N$ it only knows that the channel of primary $2$ is available w.p. $q$. Thus, the expected payoff  of primary $1$ at $x$ is
  \begin{align}
  & (x-c)(1-pqq_s-(1-p)q\psi_{N}(x))=\tilde{p}_2-c\quad \text{from } (\ref{eq:psinerror}).
  \end{align}
  
  Step iv.b: Note that when $q_s=1$, then $L_0=v$. Thus, we consider the case when $q_s<1$. Suppose primary $1$ selects a  price $x$ in the interval $[L_0,v]$. If the channel of primary $2$ is available, then, the primary $2$ will select a price less than or equal to $x$ if one of the following occurs--i) it selects $Y$ and estimates that the channel state of primary $1$ is $1$, ii) primary $2$ selects $N$, iii) primary $2$ selects $Y$, estimates that the channel state of primary $1$ is $0$ and selects a price less than or equal to $x$. (i) occurs with probability $pq_s$ since the channel of primary $1$ is available. (ii) occurs with probability $1-p$.  (iii) occurs with probability $p(1-q_s)\psi_{Y,0}x)$ (since the channel of primary $1$ is available). When primary $1$ selects $N$, it only knows that the channel of primary $2$ is available w.p $q$. Thus, the expected payoff of primary $1$ at $x$ is
  \begin{align}
 &  (x-c)(1-pqq_s-(1-p)q-pq(1-q_s)\psi_{Y,0}(x))\nonumber\\
&=(x-c)(1-pqq_s-(1-p)q)-\nonumber\\
 & (x-c)\dfrac{(1-q)q_s+q(1-q_s)}{1-q_s}(1-\dfrac{\tilde{p}_3-c}{x-c}-\dfrac{pqq_s(1-q_s)+(1-p)q(1-q_s)}{(1-q)q_s+q(1-q_s)})\quad \text{from } (\ref{eq:psi0yerror})\nonumber\\
  & =(x-c)(1-\dfrac{(1-q)q_s+q(1-q_s)}{1-q_s})+(\tilde{p}_3-c)\dfrac{(1-q)q_s+q(1-q_s)}{1-q_s}\nonumber
  \end{align}
  Since $q_s>1/2$, the above is maximized at $x=L_0$. Thus, using (\ref{eq:l0}), the above expression is upper bounded by
  \begin{align}
 & (L_0-c)(1-\dfrac{(1-q)q_s+q(1-q_s)}{1-q_s})+ (L_0-c)\dfrac{pq(1-q_s)^2+q_s(1-q)}{1-q_s}\nonumber\\
 &  =(L_0-c)(1-\dfrac{q(1-q_s)(1-p(1-q_s))}{1-q_s}) =(L_0-c)(1-(1-p)q-pqq_s)=\tilde{p}_2-c.
  \end{align}
  where the last equality follows from (\ref{eq:ln}).  
  
  Step iv.c: Now, suppose that primary $1$ selects a price  $x$ in the interval $ [\tilde{p}_1,L_N]$. Now, if the channel of primary $2$ is available, then it selects a price less than or equal to $x$ if it selects $Y$, estimates that the channel state of primary $1$ is $1$ and selects a price less than or equal to $x$. The above occurs with probability $pq_s\psi_{Y,1}(x)$. The channel of primary $2$ is available w.p. $q$. Thus, at any price $x$ in the interval $[\tilde{p}_1,L_N]$ the expected payoff of primary $1$ is 
  \begin{align}
  & (x-c)(1-pqq_s\psi_{Y,1}(x))\nonumber\\
  & =(x-c)(1-\dfrac{qq_s+(1-q)(1-q_s)}{q_s})+(\tilde{p}_1-c)\dfrac{qq_s+(1-q)(1-q_s)}{q_s}\quad \text{from } (\ref{eq:psi1yerror}). \nonumber
  \end{align}
  Since $q_s>1/2$, the above is maximized at $x=L_N$. Thus, using (\ref{eq:ln}), the maximum expected payoff is
  \begin{align}
  & (L_N-c)(1-\dfrac{qq_s+(1-q)(1-q_s)}{q_s})+(L_N-c)\dfrac{qq_s(1-pq_s)+(1-q)(1-q_s)}{q_s}\nonumber\\
  & =(L_N-c)(1-\dfrac{qq_spq_s}{q_s})=\tilde{p}_2-c.
  \end{align}
  Again the last equality follows from (\ref{eq:ln}).
  
  Hence, the maximum expected payoff attained by primary $1$ under $N$ is $\tilde{p}_2-c$. This is attained at any price in the interval $[L_N,L_0]$ which we have shown in Step iv.a.  
  
  Step v: Thus, either under $Y$ or under $N$, the maximum expected payoff that primary $1$ can attain is $\tilde{p}_2-c$. Hence, any randomization between $Y$ and $N$ will also yield an expected payoff of $\tilde{p}_2-c$. Primary $1$ can attain the payoff of $\tilde{p}_2-c$ following the strategy profile. Hence, primary $1$ does not have any unilateral profitable deviation. Hence, the result follows. \qed
\begin{color}{red}
\begin{example}\label{ex:estimation_error}
We illustrate the NE pricing strategy in Theorem~\ref{thm:mixederror} for an example where $v=50, c=0, s=4, q_s=0.8, q=0.5$. Since $s<(2q_s-1)q(v-c)(1-q)$,  the NE pricing strategy is given in Theorem~\ref{thm:mixederror}. Using the {\em fsolve} function of MATLAB we obtain the unique $p$ satisfying (\ref{eq:alternatep}). Given $p$,  $\psi_{Y,1}(\cdot)$, $\psi_N(\cdot)$ and $\psi_{Y,1}(\cdot)$ are readily obtained and shown in  Fig.~\ref{fig:ex_estimation}. 
\end{example}
\end{color}
\subsection{Numerical Results}
\begin{figure*}
\begin{minipage}{0.31\linewidth}
\includegraphics[width=60mm,height=40mm]{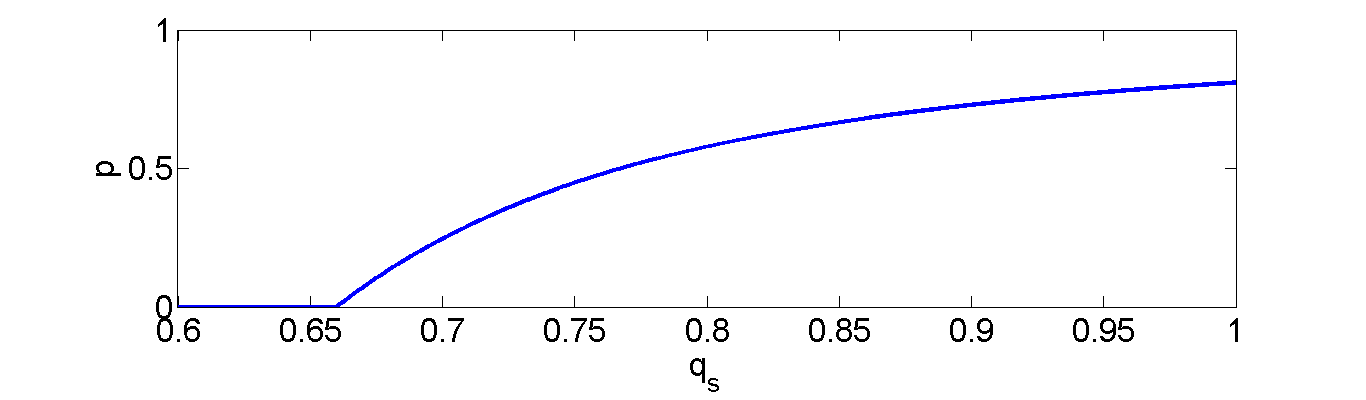}
\caption{\small Variation of $p$ with $q_s$ for an example setting: $v=50, c=0, s=4, q=0.5$. $p$ is $0$ for $q_s\leq 0.67$, as $s$ is above the threshold $q(v-c)(1-q)(2q_s-1)$ for this region.}
\label{fig:pvsqs}
\end{minipage}\hfill
\begin{minipage}{0.31\linewidth}
\includegraphics[width=60mm,height=40mm]{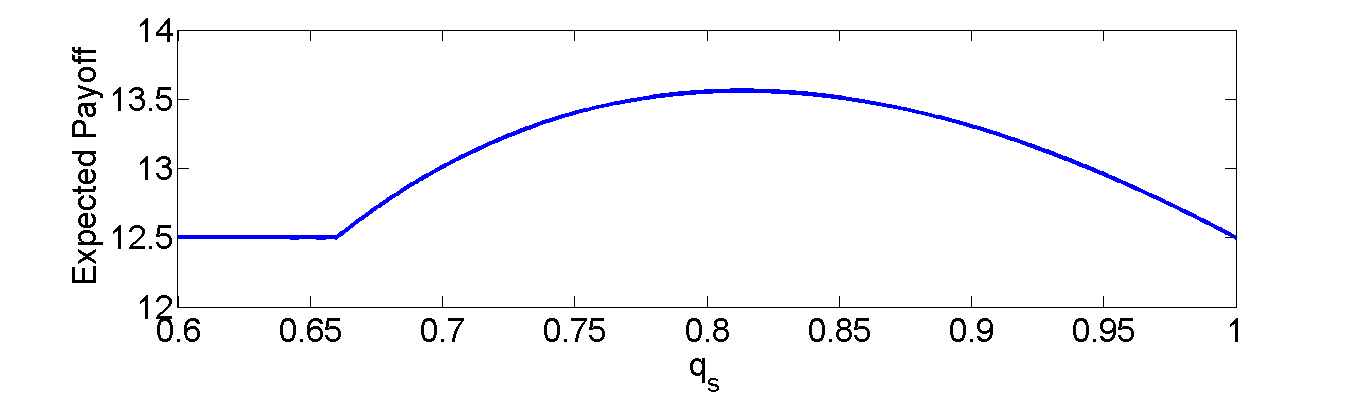}
\caption{\small Variation of the expected payoff of a primary with $q_s$ in the same example setting considered in Fig.~\ref{fig:pvsqs}.}
\label{fig:payoffvsqs}
\end{minipage}\hfill
\begin{minipage}{0.31\linewidth}
\includegraphics[width=60mm,height=40mm]{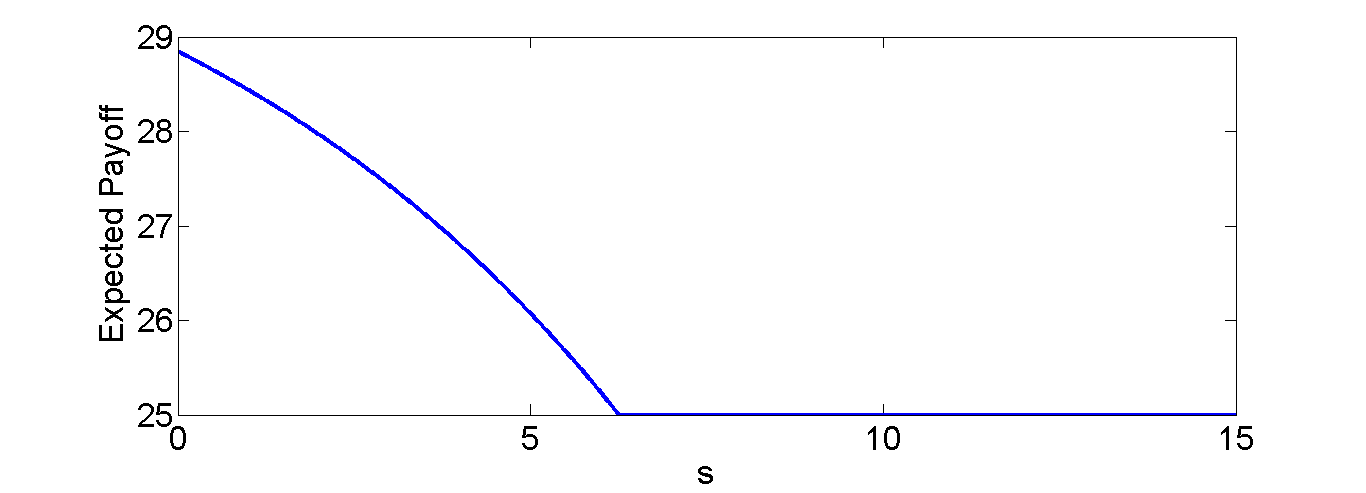}
\caption{\small Variation of the expected payoff of a primary with $s$ in an example setting: $v=50, q=0.5, c=0, q_s=3/4$.}
\label{fig:payoffvsc1}
\end{minipage}
\vspace{-0.3cm}
\end{figure*}
Fig.~\ref{fig:pvsqs} shows that the probability $p$ with which a primary selects $Y$ increases as $q_s$ increases. Intuitively, when $q_s$ increases, the uncertainty of the channel state of the competitor decreases  when a primary selects $Y$, thus, the primary selects $Y$ with a higher probability.  Additionally, Fig.~\ref{fig:pvsqs} shows that the increment of $p$ is sub-linear with $q_s$.

Fig.~\ref{fig:payoffvsqs} shows the variation of the expected payoff of a primary with $q_s$. When $0.5<q_s\leq 0.67$, a primary selects $N$ w.p. $1$, hence, the expected payoff is $(v-c)(1-q)$ for $q_s\leq 0.67$. The expected payoff increases as $q_s$ increases when $0.67<q_s\leq 0.83$. After that the payoff decreases and ultimately the expected payoff again becomes equal to $(v-c)(1-q)$ when $q_s=1$. Thus, the payoff of a primary is higher when there is an error in estimation of the channel state compared to the setting where there is no error in estimation which negates the conventional wisdom that the payoff should increases with the decreases in the error in the estimation. We have already provided the potential reasons behind this behavior in the section~\ref{sec:mainresults_estimationerror}.

Fig.~\ref{fig:payoffvsc1} shows the variation of the expected payoff of a primary with $s$.  Note from Lemma~\ref{lm:result} that the expected payoff of a primary increases as $s$ decreases when a primary selects $Y$ with a positive probability. Fig.~\ref{fig:payoffvsc1} verifies the above result. Specifically, as $s$ increases, the expected payoff decreases when $s<6.5$. Additionally, the expected payoff decreases sub-linearly. When $s\geq 6.5$, a primary only selects $N$ and attains an expected payoff of $(v-c)(1-q)$, thus, the payoff becomes independent of $s$ in this regime.
\begin{figure*}
\begin{minipage}{0.28\linewidth}
\includegraphics[width=0.99\textwidth]{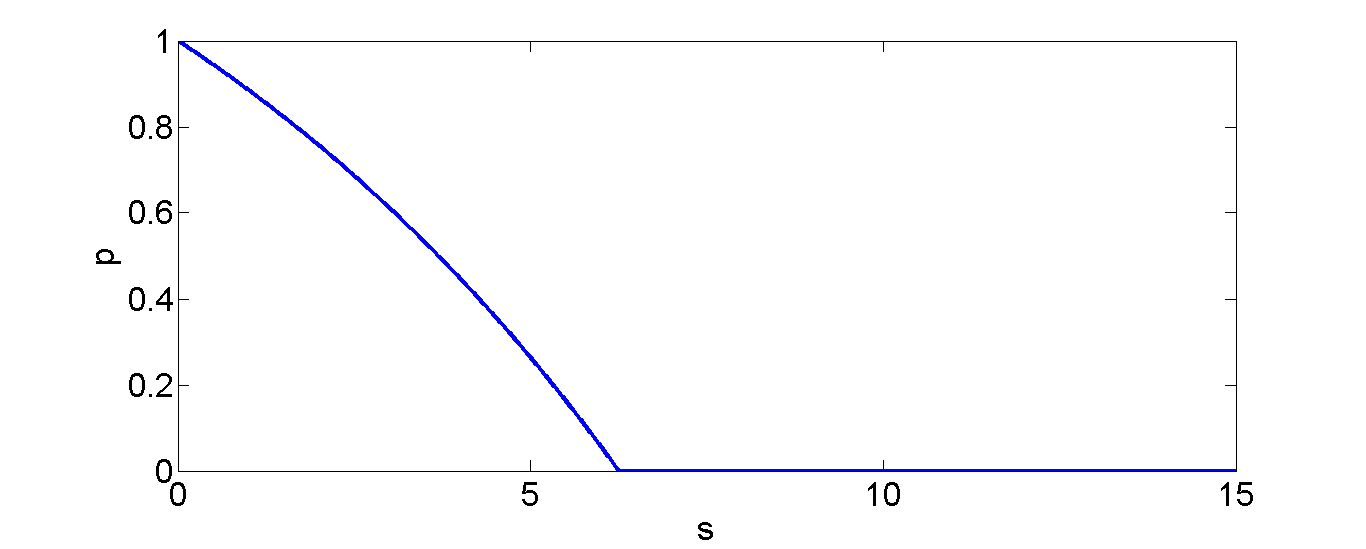}
\caption{\small Variation of $p$ with $s$ in the same example setting as considered in Fig.~\ref{fig:payoffvsc1}.}
\label{fig:pvsc1}
\end{minipage}\hfill
\begin{minipage}{0.28\linewidth}
\includegraphics[width=0.99\textwidth]{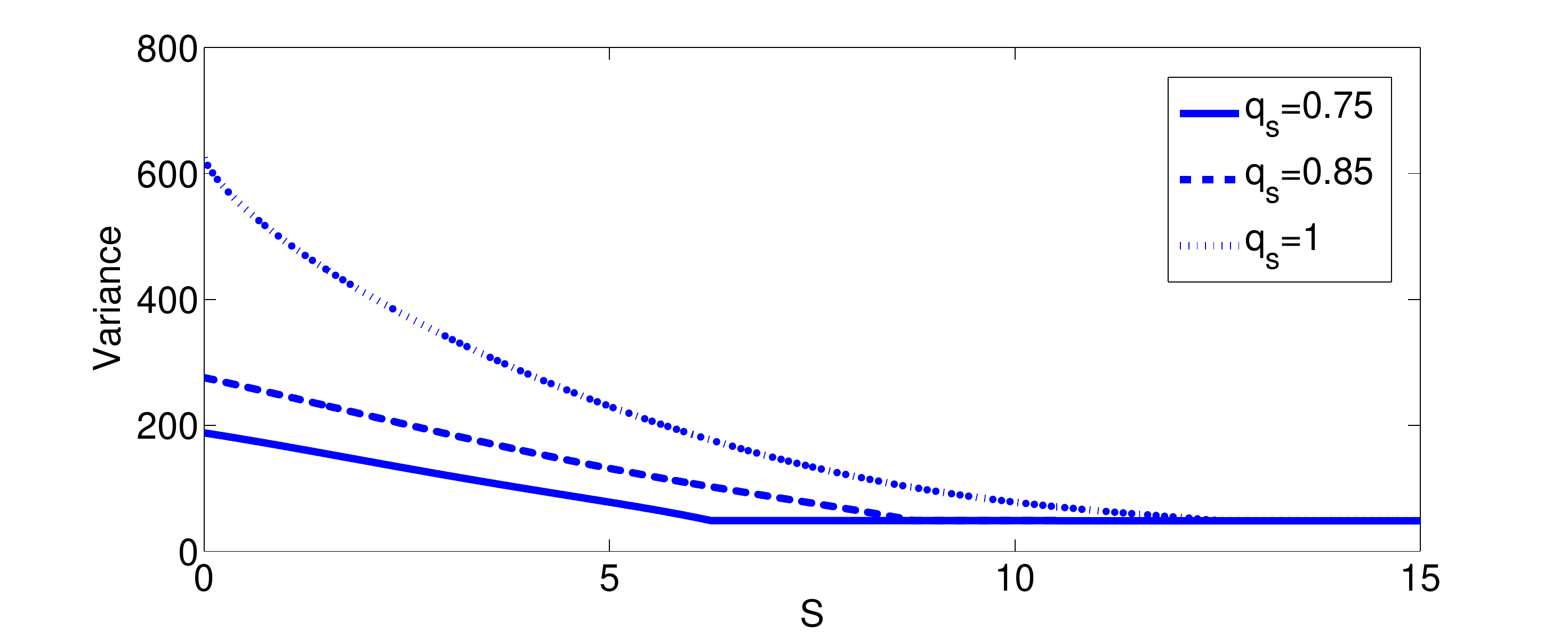}
\caption{\small Variation of the  price variance  for an example setting:$v=51, c=1, q=0.5$}
\label{fig:variancec1}
\end{minipage}\hfill
\begin{minipage}{0.4\linewidth}
\includegraphics[width=0.99\textwidth]{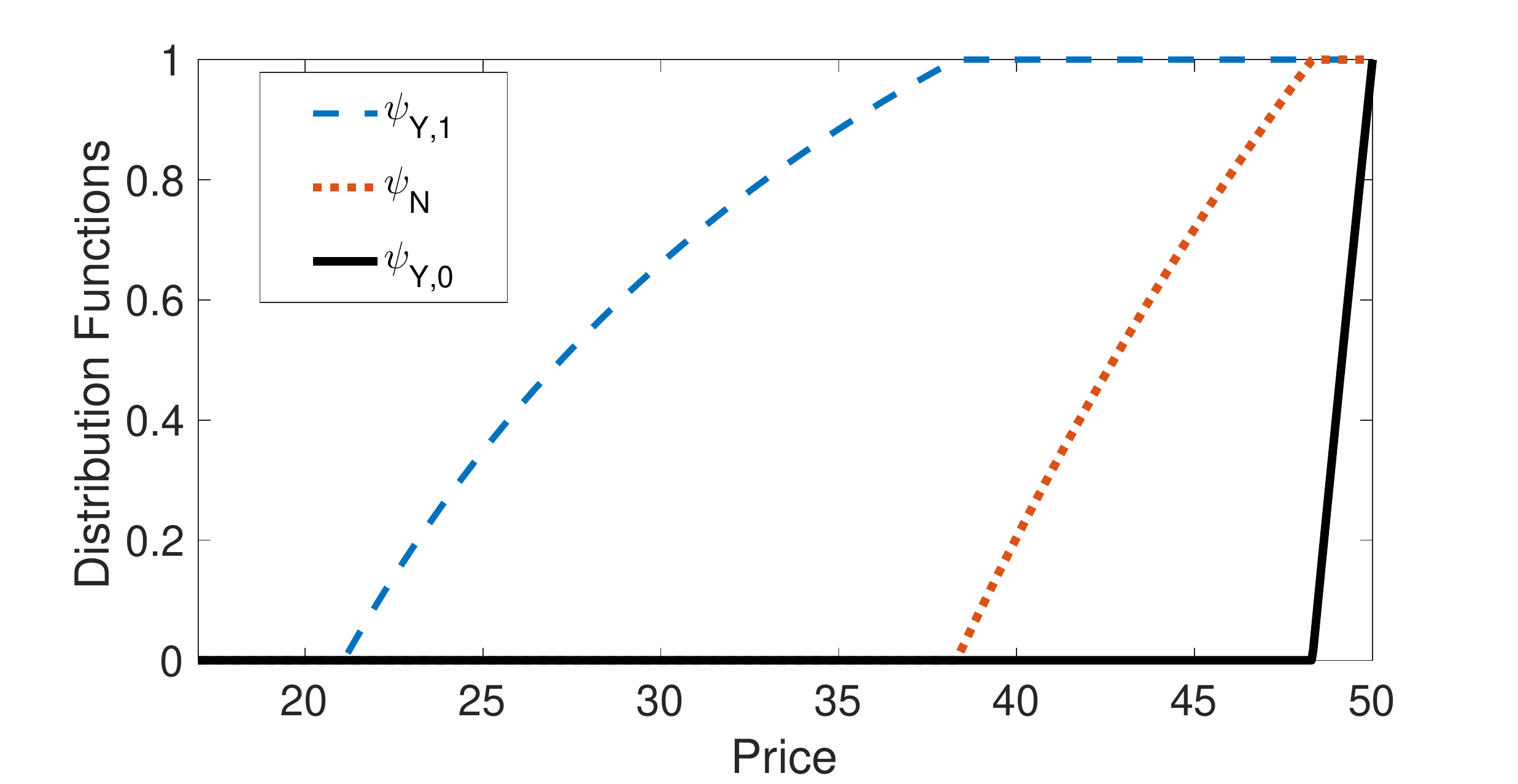}
\caption{\small The price distributions $\psi_{Y,0}(\cdot), \psi_{Y,1}(\cdot)$ and $\psi_N(\cdot)$ in the example sceanrio described in Example~\ref{ex:estimation_error}. Note that they are non-overlapping; the supports of $\psi_{Y,1,}, \psi_N$ and $\psi_{Y,0}$ are $[21.08, 38.3], [38.3, 48.3]$ and $[48.3,50]$ respv. }
\label{fig:ex_estimation}
\end{minipage}
\vspace{-0.5cm}
\end{figure*} 

Fig.~\ref{fig:pvsc1} shows that $p$, the probability with which a primary selects $Y$ increases as $s$ decreases. When $s\geq q(v-c)(1-q)(2q_s-1)=6.25$, then the primary selects $N$ w.p. $1$ i.e. $p=0$. Additionally, Fig.~\ref{fig:pvsc1} shows that $p$ decreases sub-linearly as $s$ increases.

Fig.~\ref{fig:variancec1} shows the variation of the variance of the price selected by a primary with $s$ and $q_s$. Note that the variance decreases as $s$ decreases. Thus, when a primary selects $N$ with a higher probability, the price volatility  is lower. When $s\geq q(v-c)(1-q)(2q_s-1)$, each primary selects $N$ w.p. $1$, thus, the variance becomes independent of $s$.  This is because $\phi(\cdot)$, the price selection strategy from which a primary selects its price when $s\geq q(v-c)(1-q)(2q_s-1)$, is independent of $s$. Note that Fig.~\ref{fig:variancec1} also shows that the variance also decreases as $q_s$ decreases. Intuitively, as $q_s$ decreases, a primary selects $N$ with a higher probability, thus, the variance decreases. Note that buyers in general do not like a market where the prices have higher variances. Thus, when $s$ is low or $q_s$ is high, a buyer may not like the setting.

\begin{figure*}
\begin{minipage}{0.49\linewidth}
\includegraphics[width=80mm,height=40mm]{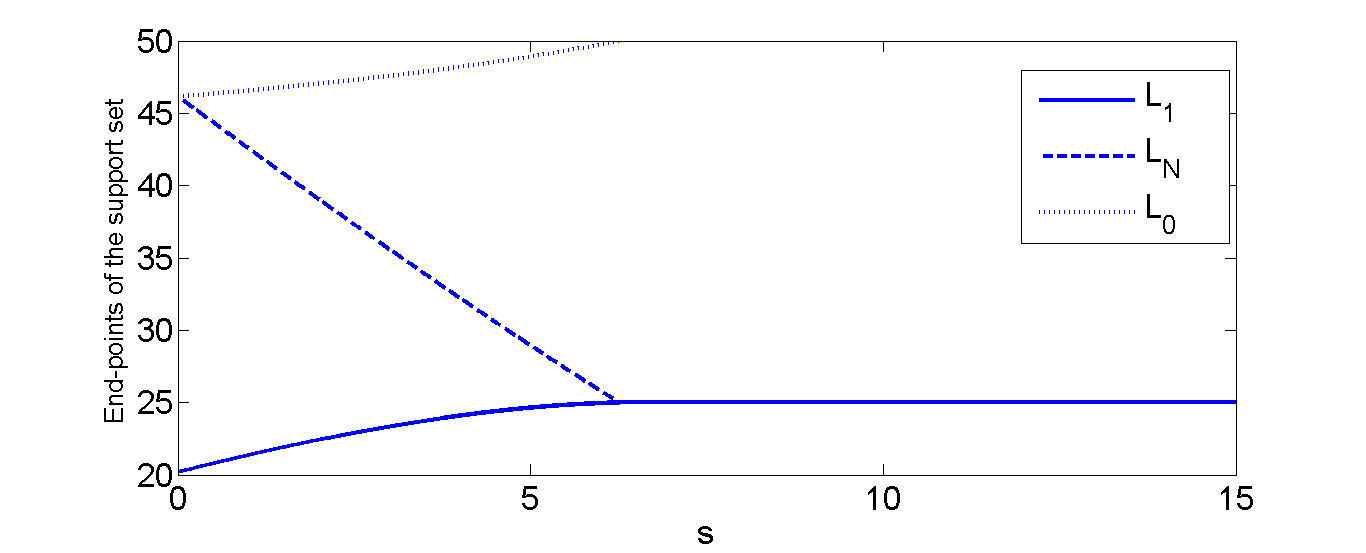}
\caption{\small Variation of upper and lower end-points of the support sets of $\psi_{Y,1}$ ($L_1$  ($=\tilde{p}_1$) and $L_N$, resp.), $\psi_{N}$ ($L_N$, and $L_0$ resp.) and $\psi_{Y,0}$ ($L_0$ and $v$) with $s$ in the same example setting considered in Fig.~\ref{fig:payoffvsc1}. }
\label{fig:endpointsvss}
\end{minipage}\hfil
\begin{minipage}{0.49\linewidth}
\includegraphics[width=80mm,height=40mm]{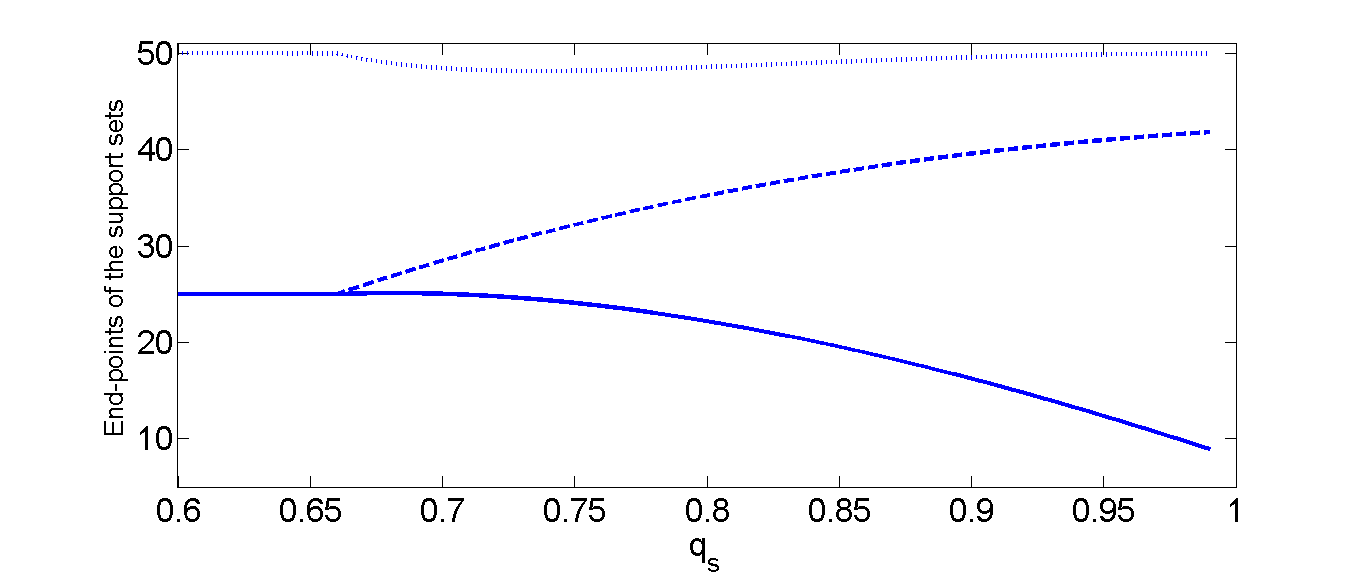}
\caption{\small Variation of upper and lower end-points of the support sets of $\psi_{Y,1}$ ($L_1$ and $L_N$, resp.), $\psi_{N}$ ($L_N$, and $L_0$ resp.) and $\psi_{Y,0}$ ($L_0$ and $v$) with $q_s$ in the same example setting considered in Fig.~\ref{fig:pvsqs}.}
\label{fig:endpointsvsqs}
\end{minipage}
\end{figure*}
Fig.~\ref{fig:endpointsvss} shows the variations of the end-points of the support sets of the price distributions. Note that when $s=0$, $L_N=L_0$ as primaries select $N$ with $0$ probability. As $s$ increases, $L_N$ and $L_0$ increase as primaries select $N$ with positive probability; primaries select prices from a larger interval when it selects $N$ as $s$ increases. Note that the lower end-point of $\psi_{Y,1}(\cdot)$, $\tilde{p}_1$ ($L_1$ in the figure) also increases as $s$ increases. Thus, the price interval from which a primary selects its price $Y$ decreases as $s$ increases. Intuitively, as $s$ increases, a primary selects $Y$ with a lower probability, thus the support also decreases. When $s\geq 6.25$, the primary only selects $N$, thus, $\tilde{p}_1=L_N$ and $L_0=v$. 

Fig.~\ref{fig:endpointsvsqs} shows the variation of the end-points of the support sets of the price distributions. When $q_s\leq 0.67$, primaries only select $N$. Thus, $\tilde{p}_1=L_N$ and $L_0=v$. When $q_s>0.67$, primaries select $Y$ with positive probabilities. $\tilde{p}_1$ decreases as $q_s$ increases. Thus, a primary selects a lower price when it selects $Y$ and estimates that the channel state of the competitor is $1$. Intuitively, as $q_s$ increases, the uncertainty reduces, thus, the competitor's channel is more likely to be available when a primary estimates that the channel state of the primary is $1$. Hence, the primary selects a lower price. Since primary selects $Y$ when $q_s>0.67$,  $L_0$ decreases initially.  However, $L_0$  increases when $q_s$ becomes very high. Note that when $q_s$ is very high, then a primary is aware that the channel of the competitor is more likely to be unavailable, hence it selects a high price. Thus, $L_0$ is close to $v$ when $q_s$ is very high. Note also that $L_N$ increases with $q_s$. Intuitively, when $q_s$ increases, a primary selects $N$ with a lower probability, thus, a primary selects its price from a shorter interval when it selects $N$.

%\begin{figure}
%\includegraphics[width=80mm,height=40mm]{p1tilvsc1.png}
%\caption{Variation of $\tilde{p}_1$ with $s$ for an example setting $v=25, q=0.5, c=0, q_s=3/4$.}
%\label{fig:pitilvsc1}
%\end{figure}
%Fig.~\ref{fig:p1tilvsc1} shows the variation of $\tilde{p}_1$ (see (\ref{eq:ln})) with $s$. As $s$ increases $\tilde{p}_1$ increases and ultimately becomes equal to $(v-c)(1-q)$. After that a primary selects $Y$ w.p. $0$. Note that $\tilde{p}_1$ is the lowest end-point of $\psi_{1,Y}(\cdot)$, thus, a primary selects a lower price as $s$ decreases. Intuitively, when $s$ decreases, a primary select $Y$ with a higher probability, thus, the primary selects a lower price in order to increase its probability of winning. 
%\begin{figure}
%\includegraphics[width=60mm,height=40mm]{p1tilvsqs.png}
%\caption{Variation of $\tilde{p}_1$ with $q_s$ for the same example setting considered in Fig.~\ref{fig:pvsqs}.}
%\label{fig:p1tilvsqs}
%\end{figure}
%Fig.~\ref{fig:p1tilvsqs} shows the variation of $\tilde{p}_1$ (see (\ref{eq:ln})) with $q_s$. As $q_s$ increases $\tilde{p}_1$ decreases.  Note that $\tilde{p}_1$ is the lowest end-point of $\psi_{1,Y}(\cdot)$, thus, a primary selects a lower price as $q_s$ increases. Intuitively, when $q_s$ increases, thus, when a primary selects $Y$, the uncertainty regarding the channel state of the competitor decreases, thus in response, the primary also selects a lower price.  
\section{Unequal Costs}\label{sec:unequalcost}
We, now, investigate the generalization of the basic model where each different primaries incur different costs to acquire the CSI of their respective competitors depicted in Section~\ref{sec:differentcost_model}. Primary $i$ incurs the cost $s_i$ to acquire the CSI of its competitor. Without loss of generality we assume that  $s_1< s_2$.  

\subsection{Goals}
The impact of different acquisition costs on the payoff of each primary and the frequency with which each primary selects $Y$  is not apriori clear.  For example, primary $1$ which has a lower acquisition cost of CSI, can gain more compared to primary $2$ by acquiring the CSI of primary $2$ by paying a lower cost. However, primary $2$ also acquires the CSI of primary $1$ and selects a lower price when the channel of primary $1$ is available, thus, primary $1$ also selects a lower price in response, which in turn reduces the payoff of primary $1$.  The pricing decision of each primary also depends on the frequencies with which each primary selects $Y$. We resolve all these quandaries.
\subsection{Results}\label{sec:results_unequalcosts}
We summarize our main findings here--
\begin{itemize}
\item The NE strategy is of the form $[T,p_i]$ for primary $i$ with $T=q(v-c)(1-q)$. Note that $T$ is the same as the basic model, however, since different primaries have different acquisition costs, $p_i$s are different. For example, when $s_1<T\leq s_2$, then primary $1$ selects $Y$ w.p. $p_1$, but primary $2$ does not select $Y$. Even when $s_2<T$, then primary $2$ selects $p_2$ where $p_1>p_2$ as $s_1<s_2$. 
\item The difference in the acquisition costs lead to different payoffs for the primaries. In contrast to the basic model, primary $1$ attains a higher payoff compared to the expected payoff of primary $2$ when primary $1$ selects $Y$ with a positive probability (i.e. $s<T$) (Theorems~\ref{thm:mixedunequalcost}, \ref{thm:bothmixedunequalcost}). The expected payoff of primary $1$ becomes close to the payoff of the primary $2$ as the difference between $s_1$ and $s_2$ decreases. The expected payoff of primary $2$ is in fact independent of $s_2$.  The expected payoff of the primaries are the same when $s_1\geq T$ , as both of them only select $N$. 
\item Primary $i$ selects its price from the interval $[L,\tilde{p}_i]$ ($[\tilde{p}_i,v]$, resp.) , when the primary selects $Y$ ($N$, resp.) and the channel of the competitor is available. However, there are also some differences in the pricing structure compared to the basic model because of different acquisition costs. Primary $2$ selects $v$ with a positive probability when it selects $N$ when  $s_1<q(v-c)(1-q)$ and the probability decreases as the difference between $s_1$ and $s_2$ decreases (Theorems~\ref{thm:bothmixedunequalcost}).   Thus, the primary $2$ has a discontinuity at $v$ in contrast to the basic model where primaries select prices from continuous distribution. Additionally, we show that $\tilde{p}_1>\tilde{p}_2$. Thus, primary $2$ selects lower prices when it selects $Y$ and the channel of primary $1$ is available. On the other hand, when primary $2$ selects $N$, its selects higher prices with higher probabilities. 
\end{itemize}

\subsection{High $s_1, s_2$}
Our first result in this section shows that
\begin{theorem}\label{thm:nandnunequalcost}
When $s_1\geq q(v-c)(1-q)$, then in the unique NE, both the primaries select $N$ w.p. $1$ and select their prices according to $\phi(\cdot)$ (given in (\ref{eq:phi})). Both the primaries attain an expected payoff of $q(v-c)(1-q)$. 
\end{theorem}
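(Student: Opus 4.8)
The plan is to prove the statement in two stages: first that the profile in which both primaries play $(N,\phi)$ is an NE whenever $s_1\geq q(v-c)(1-q)$, and then that no other NE exists; the payoff value is a by-product of the first stage.

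For the equilibrium part I would argue exactly as in the proof of Theorem~\ref{thm:nandnerror} specialized to $q_s=1$. Fix primary $2$'s strategy as $(N,\phi)$ and compute primary $1$'s best response. Under $N$, for any price $x\in[\tilde p,v]$ the selling probability is $1-q\phi(x)$, so by (\ref{eq:phi}) the payoff equals the constant $(v-c)(1-q)$, while any price below $\tilde p$ does strictly worse; hence the best $N$-payoff is $(v-c)(1-q)$, attained by $\phi$ (indeed by any distribution on $[\tilde p,v]$). Under $Y$, since $q_s=1$ the estimate is exact: if primary $1$ learns primary $2$ is absent (probability $1-q$) it posts $v$ for $(v-c)-s_1$; if it learns primary $2$ is present (probability $q$) the best response against $\phi$ is at most $(v-c)(1-q)-s_1$ by the same indifference identity as in (\ref{eq:x1}) and Lemma~\ref{lm:yn}. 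Thus the $Y$-payoff is at most $(1-q)(v-c-s_1)+q\bigl((v-c)(1-q)-s_1\bigr)=(1+q)(v-c)(1-q)-s_1$, which is $\leq (v-c)(1-q)$ precisely because $s_1\geq q(v-c)(1-q)$; and since $s_2>s_1$ the same bound holds for primary $2$. Hence neither primary can beat $(v-c)(1-q)$ by deviating, the profile is an NE, and computing the payoff directly (each channel sells with probability $1-q$ against $\phi$) gives the stated value.

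For uniqueness I would eliminate every other NE configuration. Theorem~\ref{thm:yandy} rules out both primaries playing $Y$ w.p.\ $1$. The argument of Theorem~\ref{thm:yandn} uses only positivity of the acquirer's cost, so it rules out any NE in which one primary plays $Y$ w.p.\ $1$ and the other $N$ w.p.\ $1$, for either assignment of roles. What remains is to exclude NE in which some primary $i$ plays $Y$ with probability $p_i\in(0,1)$. Here I would follow the structure of the appendix uniqueness arguments for Theorems~\ref{thm:mixedstrategy} and \ref{thm:mixederror}: if $p_i>0$, primary $i$ must be indifferent between $Y$ and $N$, and imposing that the opponent also best-responds (so, e.g., under $N$ the opponent cannot put all mass at $v$ without primary $i$ profitably undercutting it) pins down the supports of the pricing distributions and yields the indifference equation, which forces $s_i<q(v-c)(1-q)$. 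Since $s_2>s_1\geq q(v-c)(1-q)$ this is impossible, so $p_1=p_2=0$. Once both primaries play $N$ w.p.\ $1$, the residual pricing game is exactly the ``both-$N$'' game, whose unique NE has each primary price according to $\phi(\cdot)$ (Theorem~\ref{thm:nandn}, \cite{Gaurav1}), completing the uniqueness claim.

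The main obstacle is this last uniqueness step. Naive bounds on the $Y$-payoff are not enough: against an \emph{arbitrary} opponent that plays $N$ with mass concentrated near $v$, primary $i$'s best $Y$-payoff can approach $(v-c)-s_i$, which exceeds $(1+q)(v-c)(1-q)-s_i$, so one genuinely cannot conclude that $Y$ is dominated without using the equilibrium best-response constraint on \emph{both} players. Thus the real work is the fixed-point tightening that shows a primary can place positive mass on $Y$ only when its own cost is below $q(v-c)(1-q)$ — the same mechanism already used in the earlier uniqueness proofs, adapted to asymmetric costs — with the boundary value $s_1=q(v-c)(1-q)$ (where the $Y$-payoff bound merely ties $(v-c)(1-q)$) being the delicate point that needs separate verification.
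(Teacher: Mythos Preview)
Your proposal is correct and matches the paper's approach: the paper itself omits the proof of this theorem, saying only that it ``is similar to the proof of Theorem~\ref{thm:nandn},'' which in turn defers the NE verification to Theorem~\ref{thm:nandnerror} (your $q_s=1$ specialization) and the uniqueness to the appendix structural lemmas. Your sketch is in fact more explicit than what the paper provides, and you correctly flag the one substantive issue---that the appendix uniqueness arguments are written for symmetric costs (indeed, the appendix theorem ruling out ``one plays $N$ w.p.\ $1$, the other randomizes'' \emph{fails} for unequal costs once $s_1<q(v-c)(1-q)$, as Theorem~\ref{thm:mixedunequalcost} shows)---so the adaptation must use the indifference constraint to force $s_i<q(v-c)(1-q)$ for any randomizing primary, which then contradicts $s_2>s_1\geq q(v-c)(1-q)$; this is exactly the mechanism behind the explicit formulas $p_i=\frac{q(v-c)(1-q)-s_i}{q(v-c)(1-q)-qs_i}$ in (\ref{eq:p_1andp_2}), which collapse to $p_i=0$ at the threshold.
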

The proof is similar to the proof of Theorem~\ref{thm:nandn}, thus, we omit it here.

Note that since $s_2>s_1$, $s_2>q(v-c)(1-q)$. Thus, the above theorem shows that the expected payoff of primaries are identical when $s_i$s are sufficiently high as both the primaries select $N$.  
\subsection{Low $s_1$, high $s_2$}
Now, we consider the setting where $s_1<q(v-c)(1-q)$, but $s_2\geq q(v-c)(1-q)$. We show that there exists a NE where primary $1$ randomizes between $Y$ and $N$, and primary $2$ selects $N$.

We first introduce some pricing distributions which we use throughout this section--
\begin{align}\label{eq:psi1ylowc1highc2}
\psi_{1,Y}(x)=& 
0,\quad \text{if } x<\tilde{p}\nonumber\\
& \dfrac{1}{qp_1}(1-\dfrac{\tilde{p}-c}{x-c}),\quad \text{if } \tilde{p}\leq x\leq \tilde{p}_1\nonumber\\
& 1, \quad \text{if } x>\tilde{p}_1
\end{align}
\begin{align}\label{eq:psi1nlowc1highc2}
\psi_{1,N}(x)=&
0, \quad \text{if } x<\tilde{p}_1\nonumber\\
& \dfrac{1}{q(1-p_1)}(1-\dfrac{\tilde{p}-c}{x-c}-qp_1) \text{if } \tilde{p}_1\leq x\leq v\nonumber\\
& 1, \quad \text{if } x>v
\end{align}
\begin{align}\label{eq:psi2lowc1highc2}
\psi_2(x)=&
0, \quad \text{if } x<\tilde{p}\nonumber\\
& (1-\dfrac{\tilde{p}-c}{x-c}),\quad \text{if } \tilde{p}\leq x< \tilde{p}_1\nonumber\\
& \dfrac{1}{q}(1-\dfrac{\tilde{p}_N-c}{x-c}),\text{if } \tilde{p}_1\leq x<v\nonumber\\
%& 1-\dfrac{1}{q}(1-\dfrac{\tilde{p}_N-c}{v-c}),  \text{if } x=v\nonumber\\
& 1, \quad \text{if } x\geq v.
\end{align}
where 
\begin{align}\label{eq:tildepunequalcost}
\tilde{p}_N& =(v-c)(1-q)+q(v-c)(1-q)-s_1+c\nonumber\\
\tilde{p}& =(v-c)(1-q)+c. \quad \tilde{p}_1-c=\dfrac{(v-c)(1-q)}{1-qp_1}.
\end{align}
and 
\begin{align}\label{eq:p1}
p_1=\dfrac{1}{q}(1-\dfrac{(v-c)(1-q)^2}{(v-c)(1-q)-s_1})
\end{align}
Note from (\ref{eq:tildepunequalcost}) and (\ref{eq:p1}) that
\begin{align}\label{eq:tildep1}
\tilde{p}_1-c& =\dfrac{(v-c)(1-q)[(v-c)(1-q)-s_1]}{(v-c)(1-q)^2}=\dfrac{(v-c)(1-q)-s_1}{1-q}
\end{align}
$\psi_2(\cdot)$ clearly has a jump at $v$ as $s_1<q(v-c)(1-q)$. From the expression of $\psi_2(\cdot)$ one may think that $\psi_2(\cdot)$ has a jump at $\tilde{p}_1$. We first rule out the above possibility.
\begin{obs}
$\psi_2(\cdot)$ does not have any jump except at $v$.
\end{obs}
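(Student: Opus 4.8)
The plan is as follows. The function $\psi_2(\cdot)$ in (\ref{eq:psi2lowc1highc2}) is piecewise defined with breakpoints only at $\tilde p$, $\tilde p_1$ and $v$, and on each open sub-interval it is a smooth function of $x$. Hence the only candidates for a discontinuity are these three points, and it suffices to compare the one-sided limits of $\psi_2$ there.

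First I would dispose of the two easy breakpoints. At $x=\tilde p$ the lower branch is identically $0$ and the middle branch gives $1-\tfrac{\tilde p-c}{\tilde p-c}=0$, so $\psi_2$ is continuous at $\tilde p$. At $x=v$ the top branch equals $1$ whereas the middle branch tends to $\tfrac1q\!\left(1-\tfrac{\tilde p_N-c}{v-c}\right)$; substituting $\tilde p_N-c=(1+q)(v-c)(1-q)-s_1$ (from (\ref{eq:tildepunequalcost})) shows this limit equals $1$ precisely when $s_1=q(v-c)(1-q)$, so under the standing hypothesis $s_1<q(v-c)(1-q)$ there is a genuine upward jump at $v$ — exactly the jump the observation permits, so nothing needs to be ruled out there.

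The substantive step is continuity at $x=\tilde p_1$. I would evaluate the limit from below, coming from the branch valid on $[\tilde p,\tilde p_1)$, namely $1-\tfrac{\tilde p-c}{\tilde p_1-c}$, and the limit from above, coming from the branch valid on $[\tilde p_1,v)$, namely $\tfrac1q\!\left(1-\tfrac{\tilde p_N-c}{\tilde p_1-c}\right)$, and show that both equal $\dfrac{q(v-c)(1-q)-s_1}{(v-c)(1-q)-s_1}$. The computation uses $\tilde p-c=(v-c)(1-q)$ and $\tilde p_N-c=(1+q)(v-c)(1-q)-s_1$ from (\ref{eq:tildepunequalcost}), together with the already-reduced form $\tilde p_1-c=\tfrac{(v-c)(1-q)-s_1}{1-q}$ from (\ref{eq:tildep1}); after this substitution the numerator of the upper branch acquires a factor $q$ that cancels the prefactor $1/q$, which is precisely why the two one-sided limits agree.

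There is no conceptual obstacle; the only thing to be careful about is the bookkeeping in this last step. In particular one should use the simplified expression (\ref{eq:tildep1}) for $\tilde p_1-c$ rather than the implicit relation $(\tilde p_1-c)(1-qp_1)=(v-c)(1-q)$, and keep the $1/q$ prefactor of the third branch of $\psi_2(\cdot)$ clearly in view. Once this is done the equality of the left and right limits at $\tilde p_1$ is immediate, and since $\psi_2$ is continuous at $\tilde p$ and $\tilde p_1$ and has its only jump at $v$, the observation follows.
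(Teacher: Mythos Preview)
Your proposal is correct and follows essentially the same approach as the paper: both arguments verify continuity at the interior breakpoint $\tilde p_1$ by plugging in the closed forms $\tilde p-c=(v-c)(1-q)$, $\tilde p_N-c=(1+q)(v-c)(1-q)-s_1$, and $\tilde p_1-c=\dfrac{(v-c)(1-q)-s_1}{1-q}$ from (\ref{eq:tildepunequalcost}) and (\ref{eq:tildep1}) and checking that the two one-sided limits coincide (your value $\dfrac{q(v-c)(1-q)-s_1}{(v-c)(1-q)-s_1}$ is algebraically identical to the paper's $1-\dfrac{(v-c)(1-q)^2}{(v-c)(1-q)-s_1}$). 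You are slightly more thorough than the paper in also explicitly checking the trivial breakpoint $\tilde p$ and the jump size at $v$, but the substance is the same.
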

\begin{proof}
First, note that since $s_1<q(v-c)(1-q)$, thus, $\psi_2(\cdot)$ has a jump at $v$.

Next, we show that $\psi_2(\cdot)$ does not have any jump at $\tilde{p}_1$. The continuity of $\psi_2(\cdot)$ at any other point can easily be observed.

Note from (\ref{eq:tildep1}) the left hand limit is 
\begin{align}
1-\dfrac{(v-c)(1-q)}{\tilde{p}_1-c}=1-\dfrac{(v-c)(1-q)^2}{(v-c)(1-q)-s_1}
\end{align}
Again from (\ref{eq:tildep1}), the right hand limit and the value of $\psi_2(\cdot)$ at $\tilde{p}_1$ is
\begin{align}
& \dfrac{1}{q}(1-\dfrac{(v-c)(1-q)+q(v-c)(1-q)-s_1}{\tilde{p}_1-c}=\dfrac{1}{q}(1-\dfrac{(1-q)[(v-c)(1-q)+q(v-c)(1-q)-s_1]}{(v-c)(1-q)-s_1}\nonumber\\
& =1-\dfrac{(v-c)(1-q)^2}{(v-c)(1-q)-s_1}
\end{align}
Hence, $\psi_2(\cdot)$ does not have any jump at $\tilde{p}_1$.  Hence, the result follows.
\end{proof}
The continuity of $\psi_{1,Y}(\cdot)$ and $\psi_{1,N}(\cdot)$ can be easily concluded.  Note that the variations of $\psi_{1,Y}(\cdot)$, $\psi_{1,N}(\cdot)$ and $\psi_2(\cdot)$ are similar they differ only in the support and the scaling parameters.

Now, we are ready to state  the main result of this section.

\begin{theorem}\label{thm:mixedunequalcost}
Consider the following strategy profile: Primary $1$ selects $Y$ w.p. $p_1$ and $N$ w.p. $1-p_1$ ($p_1$ is given in (\ref{eq:p1})) and primary $2$ selects $N$ w.p. $1$. While selecting $Y$, if the channel of primary $2$ is available, then primary $1$ selects its price according to $\psi_{1,Y}(\cdot)$, otherwise it selects $v$ w.p. $1$. While selecting $N$, primary $1$ selects its price according to $\psi_{1,N}(\cdot)$. Primary $2$ selects its price according to $\psi_2(\cdot)$. 

The above strategy profile is the unique NE when $s_2\geq q(v-c)(1-q)$ and $s_1<q(v-c)(1-q)$. The expected payoff that primary $1$ attains is $(v-c)(1-q)+q(v-c)(1-q)-s_1$ and the expected payoff of primary $2$ is $(v-c)(1-q)$.
\end{theorem}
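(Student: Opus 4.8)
The plan is to verify that the prescribed profile is a mutual best response; uniqueness is handled in the Appendix, as stated. All payoffs are conditioned on the acting primary's own channel being available, and I would use throughout that a primary which plays $Y$ and finds itself available learns the true competitor state (the C-CSI is exact here). First I would record the consistency facts that make the distributions well defined: $p_1\in(0,1)$ because $0<s_1<q(v-c)(1-q)$; the identity $\tilde p_1-c=\tfrac{(v-c)(1-q)-s_1}{1-q}$ of (\ref{eq:tildep1}) is exactly the condition $\psi_{1,Y}(\tilde p_1)=1$, which is what glues the two branches of $\psi_2(\cdot)$; and $\psi_2(\cdot)$ is a bona fide distribution with a strictly positive atom at $v$ precisely because $s_1<q(v-c)(1-q)$ (its continuity at $\tilde p_1$ being the Observation preceding the theorem).

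\emph{Primary $1$'s best response}, fixing primary $2$ at $(N,\psi_2)$. If primary $1$ plays $Y$ and learns primary $2$ is unavailable (probability $1-q$) it sells surely, so $v$ is optimal with gross payoff $v-c$. If it plays $Y$ and learns primary $2$ is available (probability $q$), its gross payoff at price $x$ is $(x-c)(1-\psi_2(x))$; substituting (\ref{eq:psi2lowc1highc2}) one checks this equals the constant $\tilde p-c=(v-c)(1-q)$ on $[\tilde p,\tilde p_1]$, is strictly decreasing on $[\tilde p_1,v)$ (the coefficient of $(x-c)$ is $1-1/q<0$), equals $x-c<\tilde p-c$ below $\tilde p$, and at the atom $x=v$ equals $(v-c)(1-\psi_2(v^-))/2<(v-c)(1-q)$. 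Hence the best gross payoff in this branch is $(v-c)(1-q)$, attained on the support $[\tilde p,\tilde p_1]$ of $\psi_{1,Y}(\cdot)$. If primary $1$ plays $N$, its payoff at $x$ is $(x-c)(1-q\psi_2(x))$, which one checks equals $(x-c)(1-q)+q(v-c)(1-q)$ on $[\tilde p,\tilde p_1)$ (strictly increasing), equals the constant $\tilde p_N-c$ on $[\tilde p_1,v)$, equals $x-c$ below $\tilde p$, and at $x=v$ equals $(v-c)[(1-q)+q(1-\psi_2(v^-))/2]$, which is $<\tilde p_N-c$ exactly because $s_1<q(v-c)(1-q)$. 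Thus the best payoff under $N$ is $\tilde p_N-c$, attained on the support $[\tilde p_1,v)$ of $\psi_{1,N}(\cdot)$. Combining the two $Y$-branches and subtracting $s_1$ gives $Y$-payoff $(1-q)(v-c)+q(v-c)(1-q)-s_1=\tilde p_N-c$, equal to the $N$-payoff, so primary $1$ is indifferent between $Y$ and $N$ and across all prices in the prescribed supports: no unilateral deviation helps it.

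\emph{Primary $2$'s best response}, fixing primary $1$'s strategy. If primary $2$ plays $N$, its selling probability at $x$ is $1-q\bigl(p_1\psi_{1,Y}(x)+(1-p_1)\psi_{1,N}(x)\bigr)$; substituting (\ref{eq:psi1ylowc1highc2})--(\ref{eq:psi1nlowc1highc2}) this collapses to $\tfrac{\tilde p-c}{x-c}$ on all of $[\tilde p,v)$, so the payoff is the constant $\tilde p-c=(v-c)(1-q)$ there, equals $(v-c)(1-q)$ at the atom $x=v$ too, and is $x-c<\tilde p-c$ below $\tilde p$; hence $(v-c)(1-q)$ is primary $2$'s best $N$-payoff and $\psi_2(\cdot)$ attains it. If primary $2$ deviates to $Y$, then conditioned on primary $1$ available its gross payoff at $x$ is $(x-c)\bigl(1-p_1\psi_{1,Y}(x)-(1-p_1)\psi_{1,N}(x)\bigr)$, strictly decreasing on each of $[\tilde p,\tilde p_1]$ and $[\tilde p_1,v)$ (coefficient of $(x-c)$ equal to $1-1/q<0$) and hence maximized at $x=\tilde p$ with value $(v-c)(1-q)$; together with the $v-c$ earned when primary $1$ is unavailable, primary $2$'s total $Y$-payoff is at most $(1-q)(v-c)+q(v-c)(1-q)-s_2$, which is $\le(v-c)(1-q)$ precisely because $s_2\ge q(v-c)(1-q)$. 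So primary $2$ has no profitable deviation, the profile is an NE, and reading off the values gives primary $1$ the payoff $\tilde p_N-c=(v-c)(1-q)+q(v-c)(1-q)-s_1$ and primary $2$ the payoff $(v-c)(1-q)$.

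The only real work is the case bookkeeping --- substituting the three piecewise distributions into the various selling probabilities and confirming the per-interval monotonicities --- and the genuinely delicate point is handling the atom of $\psi_2(\cdot)$ at $v$: the tie-breaking contribution $(v-c)(1-\psi_2(v^-))/2$ must be shown to keep pricing at $v$ strictly suboptimal for primary $1$ under $N$ (using $s_1<q(v-c)(1-q)$) and, in the deviation to $Y$, strictly suboptimal for primary $2$ (using $s_2\ge q(v-c)(1-q)$). I would organize the argument exactly as the proof of Theorem~\ref{thm:mixederror}: first establish the maximum payoff of each branch ($Y$ with competitor available, $Y$ with competitor unavailable, and $N$) for each primary separately, and then combine.
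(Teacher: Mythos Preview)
Your proposal is correct and follows essentially the same route as the paper: verify mutual best response by branching each primary's decision into the $Y$/$N$ cases and the relevant price intervals, then confirm the maxima coincide with the prescribed supports and values. The paper dispatches the atom at $v$ by the undercutting observation (``$\psi_2$ has a jump at $v$, so the payoff just below $v$ strictly exceeds the payoff at $v$'') rather than computing the tie-break payoff explicitly as you do, and it treats primary~$2$'s $N$-payoff interval-by-interval rather than noting that the selling probability collapses to $(\tilde p-c)/(x-c)$ uniformly on $[\tilde p,v]$, but these are minor stylistic differences within the same argument.
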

{\em Discussion}: Note that when $s_1<q(v-c)(1-q)$ and $s_2\geq q(v-c)(1-q)$, the payoff of primary $1$ is higher compared to the primary $2$. Apparently, when $s_1$ is low, then primary $1$ takes advantage of the acquired CSI and gains more compared to primary $2$. Primary $2$ can not do the same as the cost $s_2$ is high. The expected payoff of primary $1$ also increases with the decrease in $s_1$. Note that the threshold above which primary $i$ only selects $N$ is $q(v-c)(1-q)$; the threshold is the same for both the players. 

The probability $p_1$ increases with decrease in $s_1$, hence, primary $1$ is more likely to select $Y$ with the decrease in $s_1$. 

Since $p_1$ increases as $s_1$ decreases. $\tilde{p}_1$ also  increases as $s_1$ decreases (from (\ref{eq:tildep1})) . Thus, $\psi_{1,Y}(\cdot)$ has larger support as $s_1$ decreases. 

Under $Y$, primary $1$ selects a price from the interval $[\tilde{p},\tilde{p}_1]$ when the channel of primary $2$ is available; under $N$, primary $1$ selects a price from the interval $[\tilde{p}_1,v]$. Hence, primary $1$ selects higher price under $N$ as the uncertainty of the CSI of other primary increases. Also note that $\psi_2(\cdot)$ overlaps both with $\psi_{1,Y}(\cdot)$ and $\psi_{1,N}(\cdot)$. 

Also note that $\psi_2(\cdot)$ has a jump at $v$. Thus, primary $2$ selects $v$ with a positive probability. Intuitively, primary $1$ selects $Y$ with a higher probability. Thus, primary $1$ knows the channel state of primary $2$ with a higher probability and thus and selects a lower price. In response, primary $2$ has two options-- i) selects a high price with high probability ( at least it can gain more when the channel of the primary $1$ is not available), ii) selects a low price ( it can increase the probability of winning). Our result shows that primary $2$ selects the first option. 

\subsubsection{Proof of Theorem~\ref{thm:mixedunequalcost}}
First, we show that there is no profitable deviation for primary $1$ when primary $2$ follows the prescribed strategy stated in Theorem~\ref{thm:mixedunequalcost} (Case I), subsequently, we show that there is also no profitable deviation for primary $2$ when primary $1$ follows the prescribed strategy stated in Theorem~\ref{thm:mixedunequalcost} (Case II).

Case I: In the first step (i), we show that primary $1$ can attain a maximum expected payoff of $(v-c)(1-q)+q(v-c)(1-q)-s$ under $Y$. Next in step (ii), we show that primary $1$ can attain a maximum expected payoff of $(v-c)(1-q)+q(v-c)(1-q)-s_1$ under $N$. Finally in step (iii), we show that primary $1$ attains the maximum expected payoff following the strategy which will show that primary $1$ does not have any profitable unilateral deviation. 

Step (i): Primary $1$ selects $Y$.  Suppose that the channel of primary $2$ is available, then primary $1$ will know that w.p. $1$. At any $x$ such that $\tilde{p}\leq x\leq \tilde{p}_1$ the primary $1$ gets under $Y$  is
\begin{align}
(x-c)(1-\psi_2(x))-s_1=(v-c)(1-q)-s_1\quad \text{from } (\ref{eq:psi2lowc1highc2}) \& (\ref{eq:tildepunequalcost}).
\end{align}
If primary $1$ selects a price strictly less than $\tilde{p}$, then, its payoff is strictly less than $\tilde{p}-c-s_1=(v-c)(1-q)-s_1$.

Now, at any $v>x\geq \tilde{p}_1$, the expected payoff of primary $1$ in this setting is 
\begin{align}
& (x-c)(1-\psi_2(x))-s_1\nonumber\\
& =(x-c)(1-\dfrac{1}{q}(1-\dfrac{(v-c)(1-q)+q(v-c)(1-q)-s_1}{x-c}))-s_1\text{from } (\ref{eq:psi2lowc1highc2})\nonumber\\
& =(x-c)(1-\dfrac{1}{q})+\dfrac{(v-c)(1-q)+q(v-c)(1-q)-s_1}{q}-s_1
\end{align}
Since $1/q>1$, thus, the supremum is attained at $x=\tilde{p}_1$. Now from (\ref{eq:tildep1}), the maximum value is
\begin{align}
(v-c)(1-q)-s_1
 \end{align}
 Since $\psi_2(\cdot)$ has a jump at $v$, thus, the expected payoff at $v$ is strictly less than the value at a price close to $v$. 
Hence, when the channel of primary $2$ is available, then the maximum expected payoff that primary $1$ can attain at $Y$ is $(v-c)(1-q)-s_1$ and it is attained at any price in the interval $[\tilde{p},\tilde{p}_1]$. 

Now, when the channel of primary $2$ is unavailable, the expected payoff of primary $1$ is $(v-c)-s_1$. Hence, the maximum expected payoff that primary $1$ attains in $Y$ is 
\begin{align}
& (v-c-s_1)(1-q)+q[(v-c)(1-q)-s_1]\nonumber\\
& =(v-c)(1-q)+q(v-c)(1-q)-s_1
\end{align}

Step (ii) Now suppose  primary $1$ selects $N$ and a price $x$ such that $\tilde{p}_1\leq x<v$. Primary $2$ selects a price less than $x$ if the channel of primary $2$ is available and selects a price less than or equal to $x$ (it occurs w.p. $q\psi_2(x)$). By the continuity of $\psi_2(\cdot)$ in the interval $[\tilde{p}_1,v)$, the expected payoff of primary $1$ at $x$ is
\begin{align}
(x-c)(1-q\psi_2(x))=(v-c)(1-q)+q(v-c)(1-q)-s_1\quad \text{from } (\ref{eq:psi2lowc1highc2}).
\end{align}
Since $\psi_2(\cdot)$ has a jump at $v$, thus, the expected payoff of primary $1$ is strictly less at a price close to $v$ compared to $v$. Thus, the expected payoff of primary $1$ at $v$ is strictly less than $(v-c)(1-q)+q(v-c)(1-q)-s_1$.

Now,  at any $x$ such that $\tilde{p}\leq x\leq \tilde{p}_1$, the expected payoff of primary $1$ under $N$ is
\begin{align}
(x-c)(1-q\psi_2(x))=(x-c)(1-q(1-\dfrac{(v-c)(1-q)}{x-c}))\nonumber\\
=(x-c)(1-q)+q(v-c)(1-q)\quad \text{from } (\ref{eq:psi2lowc1highc2}).
\end{align}
The supremum is attained at $x=\tilde{p}_1$. Putting the value of $\tilde{p}_1$ from (\ref{eq:tildep1}) we obtain
\begin{align}
(v-c)(1-q)-s_1+q(v-c)(1-q)
\end{align}
The expected payoff at a price strictly less than $\tilde{p}$ will fetch a payoff which is strictly less than the payoff attained at $\tilde{p}$. Thus, primary $1$ can attain at most an expected payoff of $(v-c)(1-q)+q(v-c)(1-q)-s_1$ under $N$. The maximum expected payoff is attained at any price in the interval $[\tilde{p}_1,v)$. 

Step (iii): Hence, we show that the primary $1$ can attain an expected payoff of $(v-c)(1-q)+q(v-c)(1-q)-s$ under either $Y$ or $N$. Thus, any randomization between $Y$ and $N$ will also give an expected payoff of $(v-c)(1-q)+q(v-c)(1-q)-s_1$. Now, under the strategy profile the expected payoff is also $(v-c)(1-q)+q(v-c)(1-q)-s_1$, hence, primary $1$ does not have any profitable deviation.

Case II: Now, we show that primary $2$ does not have any profitable deviation when primary $1$ selects the prescribed strategy stated in the theorem. Towards this end, we first show in Step (i) that any price in the interval $[\tilde{p},\tilde{p}_1]$ will give an expected payoff of $(v-c)(1-q)$ to primary $2$ when it selects $N$, subsequently, we show that any price in the interval $[\tilde{p}_1,v]$ will also provide an expected payoff of $(v-c)(1-q)$ to primary $2$ when it selects $N$ . In step (iii), we show that  any price $x<\tilde{p}$ will give a strictly lower payoff compared to $(v-c)(1-q)$ when it selects $N$. Finally in step (iv), we show that if primary $2$ selects $Y$, then it can only get a payoff of at most $(v-c)(1-q)$ when $s_2\geq q(v-c)(1-q)$. This will show that primary $2$ attains the maximum expected payoff of $(v-c)(1-q)$ and it is attained when it selects $N$ and selects a price in the interval $[\tilde{p},v]$. 

Step (i): Note that when $x\in [\tilde{p},\tilde{p}_1]$ primary $1$ can select a price less than $x$ only when the channel of primary $1$ is available and primary $1$ selects $Y$, thus, at any $x$ such that $x\in [\tilde{p},\tilde{p}_1]$, the expected payoff of primary $2$ is
\begin{align}
(x-c)(1-qp_1\psi_{1,Y}(x))=(v-c)(1-q)\quad \text{from } (\ref{eq:psi1ylowc1highc2}).
\end{align}
Step (ii) When $x\in [\tilde{p}_1,v]$, then primary $1$ selects a price lower than $x$ only if the channel of primary $1$ is available and either it selects $Y$ or while selecting $N$ it selects a price less than $x$. Hence, the expected payoff of primary $2$ is
\begin{align}
(x-c)(1-qp_1-q(1-p_1)\psi_{1,N}(x))=(v-c)(1-q) \quad \text{from } (\ref{eq:psi1nlowc1highc2}).
\end{align}
(iii) At any price less than $\tilde{p}$ will fetch a payoff which is strictly less than $\tilde{p}-c$. However, $\tilde{p}-c=(v-c)(1-q)$. Thus, the expected payoff of primary $2$ is strictly less than $(v-c)(1-q)$ at any price less than $\tilde{p}$. Hence, primary $2$ can only attain a maximum expected payoff of $(v-c)(1-q)$ and it is attained at the prices in the interval $[\tilde{p},v]$.

(iv) Now, suppose primary $2$ selects $Y$. If the channel of primary $1$ is available, then at any $x\in [\tilde{p},\tilde{p}_1]$, it will get an expected payoff of 
\begin{align}
(x-c)(1-p_1\psi_{1,Y}(x))-s_2=(x-c)(1-\dfrac{1}{q}(1-\dfrac{(v-c)(1-q)}{x-c})-s_2\nonumber\\
=(x-c)(1-1/q)+(v-c)(1-q)/q-s_2
\end{align}
Since $1/q>1$, thus the above is maximized at $x=\tilde{p}$, and the maximum expected payoff is $(v-c)(1-q)-s_2$ since $\tilde{p}-c=(v-c)(1-q)$.

Now, if primary $2$ selects a price in the interval $[\tilde{p}_1,v]$ when the channel of primary $1$ is available, then the expected payoff of primary $2$ is 
\begin{align}
& (x-c)(1-(1-p_1)\psi_{1,N}(x)-p_1)-s_2=(x-c)(1-p_1-\dfrac{1}{q}(1-\dfrac{(v-c)(1-q)}{x-c}-qp_1))-s_2\quad \text{from } (\ref{eq:psi1nlowc1highc2})\nonumber\\
& =(x-c)(1-1/q)+(v-c)(1-q)/q-s_2\nonumber\\
& <(\tilde{p}-c)(1-1/q)+(v-c)(1-q)/q-s_2=(v-c)(1-q)-s_2\quad \text{since } \tilde{p}-c=(v-c)(1-q).\nonumber
\end{align}
Thus,  primary $2$ attains an expected payoff of at most $(v-c)(1-q)-s_2$ when it selects  $Y$ and the channel of primary $1$ is available. 

Now, when the channel of primary $1$ is unavailable the payoff that primary $2$ earns is $v-c-s_2$. Hence, the maximum expected payoff that primary $2$ can earn by selecting $Y$ is 
\begin{align}
q[(v-c)(1-q)-s_2]+(1-q)(v-c-s_2)=q(v-c)(1-q)+(v-c)(1-q)-s_2
\end{align}
when $s_2\geq q(v-c)(1-q)$, thus, the primary attains at most a payoff of $(v-c)(1-q)$. Hence, primary $2$ also does not have any profitable deviation.\qed

\subsection{Low $s_1,s_2$}
Lastly, we show that if $s_2<q(v-c)(1-q)$ then, there exists an NE where primary $2$ also randomizes between $Y$ and $N$. 

Again, we introduce some price distribution functions
\begin{align}\label{eq:psi1ylowc1lowc2}
\psi_{1,Y}=& 
0, \quad x<L\nonumber\\
& \dfrac{1}{p_1}(1-\dfrac{L-c}{x-c}), \quad L\leq x\leq \tilde{p}_2\nonumber\\
& \dfrac{1}{p_1q}(1-\dfrac{(v-c)(1-q)}{x-c}), \quad \tilde{p}_2< x\leq \tilde{p}_1\nonumber\\
& 1, \quad x>\tilde{p}_1
\end{align}
\begin{align}\label{eq:psi2ylowc1lowc2}
\psi_{2,Y}=&
0,\quad x<L\nonumber\\
& \dfrac{1}{p_2}(1-\dfrac{L-c}{x-c}),\quad L\leq x\leq \tilde{p}_2\nonumber\\
& 1,\quad x>\tilde{p}_2
\end{align}
\begin{align}\label{eq:psi1nlowc1lowc2}
\psi_{1,N}=&
0,\quad x<\tilde{p}_1\nonumber\\
& \dfrac{1}{q(1-p_1)}(1-\dfrac{(v-c)(1-q)}{x-c}-p_1q),\quad \tilde{p}_1\leq x\leq v\nonumber\\
& 1, \quad x>v
\end{align}
and 
\begin{align}\label{eq:psi2nlowc1lowc2}
\psi_{2,N}=&
0,\quad x<\tilde{p}_2\nonumber\\
& \dfrac{1}{1-p_2}(1-\dfrac{L-c}{x-c}-p_2),\quad \tilde{p}_2\leq x<\tilde{p}_1\nonumber\\
& \dfrac{1}{q(1-p_2)}(1-\dfrac{\tilde{p}_{1,N}-c}{x-c}-p_2q)\quad \tilde{p}_1\leq x<v\nonumber\\
%& \dfrac{s_2-s_1}{(v-c)q(1-p_2)}, \quad x=v\nonumber\\
& 1,\quad x\geq v
\end{align}
where 
\begin{align}\label{eq:l_unequalcost}
\tilde{p}_{1,N}-c&=(v-c)(1-q)+s_2-s_1,\quad
L-c =\dfrac{s_2}{q},\quad \tilde{p}=(v-c)(1-q)+c.\nonumber\\
\tilde{p}_2-c& =(v-c)(1-q)/(1-p_2q),\quad
\tilde{p}_1-c=\dfrac{(v-c)(1-q)}{1-p_1q}
\end{align}
The values of $p_1$ and $p_2$ are
\begin{align}\label{eq:p_1andp_2}
p_1=\dfrac{q(v-c)(1-q)-s_1}{q(v-c)(1-q)-qs_1}, p_2=\dfrac{q(v-c)(1-q)-s_2}{q(v-c)(1-q)-qs_2}
\end{align}
Since $s_1<s_2$,  $p_1>p_2$. Note that $p_i, i=1,2$ only depends on $s_i$.
Using the values of $p_1$ and $p_2$, we obtain from (\ref{eq:l_unequalcost})
\begin{align}\label{eq:tildep2}
\tilde{p}_2-c=\dfrac{(v-c)(1-q)-s_2}{1-q}=\dfrac{s_2}{q(1-p_2)}
\end{align}
and
\begin{align}\label{eq:tildep1mixed}
\tilde{p}_1-c=\dfrac{(v-c)(1-q)-s_1}{1-q}=\dfrac{s_1}{q(1-p_1)}
\end{align}
We also use the above equalities throughout this section.

It is easy to discern that $\psi_{1,N}(\cdot)$ and $\psi_{2,Y}(\cdot)$ are continuous. Now, we show that $\psi_{1,Y}(\cdot)$ is also continuous.
\begin{obs}
$\psi_{1,Y}(\cdot)$ is a continuous function.
\end{obs}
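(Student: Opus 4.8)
The plan is to check that the four-branch definition of $\psi_{1,Y}(\cdot)$ in (\ref{eq:psi1ylowc1lowc2}) glues together continuously. Within each of the intervals $(-\infty,L)$, $[L,\tilde p_2]$, $(\tilde p_2,\tilde p_1]$, $(\tilde p_1,\infty)$ the relevant branch is either constant or a rational function of $x$ whose only pole is at $x=c$; since $L-c=s_2/q>0$, none of these intervals contains the pole, so each branch is continuous there. Hence it suffices to match one-sided limits at the three junction points $x=L$, $x=\tilde p_2$, and $x=\tilde p_1$.

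The points $x=L$ and $x=\tilde p_1$ are immediate. At $x=L$ the first branch gives $0$, and the second branch evaluated at $L$ gives $\frac{1}{p_1}\bigl(1-\frac{L-c}{L-c}\bigr)=0$. At $x=\tilde p_1$ the third branch gives $\frac{1}{p_1q}\bigl(1-\frac{(v-c)(1-q)}{\tilde p_1-c}\bigr)$; substituting $\tilde p_1-c=\frac{(v-c)(1-q)}{1-p_1q}$ from (\ref{eq:tildep1mixed}) (equivalently (\ref{eq:l_unequalcost})) makes the bracket equal to $p_1q$, so the limit is $1$, matching the fourth branch.

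The only step requiring a little work is the junction $x=\tilde p_2$, where both the scaling factor ($1/p_1$ versus $1/(p_1q)$) and the functional form of the two branches change; this is the one place where I would actually invoke the closed forms of $p_2$ and $\tilde p_2$. For the left limit (second branch at $\tilde p_2$) I would use $\tilde p_2-c=\frac{s_2}{q(1-p_2)}$ from (\ref{eq:tildep2}) to get $\frac{L-c}{\tilde p_2-c}=\frac{s_2/q}{s_2/(q(1-p_2))}=1-p_2$, so the left limit equals $\frac{1}{p_1}\bigl(1-(1-p_2)\bigr)=p_2/p_1$. For the right limit (third branch at $\tilde p_2$) I would use $\tilde p_2-c=\frac{(v-c)(1-q)-s_2}{1-q}$ from (\ref{eq:tildep2}) together with the definition of $p_2$ in (\ref{eq:p_1andp_2}); a short simplification gives $\frac{(v-c)(1-q)}{\tilde p_2-c}=\frac{(v-c)(1-q)^2}{(v-c)(1-q)-s_2}=1-qp_2$, so the right limit is $\frac{1}{p_1q}\bigl(1-(1-qp_2)\bigr)=p_2/p_1$. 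The two one-sided limits agree, so $\psi_{1,Y}(\cdot)$ has no jump at $\tilde p_2$, and combined with the checks at $L$ and $\tilde p_1$ this establishes continuity everywhere. The main (mild) obstacle is purely bookkeeping: keeping straight which of the equivalent forms of $\tilde p_2-c$ in (\ref{eq:tildep2}) to plug into which branch so that the common value $p_2/p_1$ falls out cleanly.
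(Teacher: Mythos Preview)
Your proof is correct and follows essentially the same approach as the paper: the paper also identifies $\tilde p_2$ as the only nontrivial junction and computes both one-sided limits there as $p_2/p_1$ (using the identities in (\ref{eq:l_unequalcost}) and (\ref{eq:tildep2})), dismissing the other junction points as easy. You are simply more explicit than the paper in verifying continuity at $L$ and $\tilde p_1$ and in noting that the pole at $x=c$ lies outside all relevant intervals.
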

\begin{proof}
We only show that $\psi_{1,Y}(\cdot)$ is continuous at $\tilde{p}_2$, it is easy to discern that $\psi_{1,Y}(\cdot)$ at other values. 
Note from (\ref{eq:tildep2}) and (\ref{eq:l_unequalcost}), the left hand limit is
\begin{align}
\dfrac{1}{p_1}(1-\dfrac{L-c}{\tilde{p}_2-c})=\dfrac{1}{p_1}p_2\nonumber
\end{align}
Now from (\ref{eq:l_unequalcost}), the right hand limit and the value of $\psi_{1,Y}(\cdot)$ at $\tilde{p}_2$ is
\begin{align}
\dfrac{1}{qp_1}(1-\dfrac{(v-c)(1-q)}{\tilde{p}_2-c})=\dfrac{1}{qp_1}p_2q=p_2/p_1\nonumber
\end{align}
Hence, $\psi_{1,Y}(\cdot)$ does not have a jump at $\tilde{p}_2$.
\end{proof}

Next, we show that $\psi_{2,N}(\cdot)$ is  continuous everywhere but at $v$.
\begin{obs}
$\psi_{2,N}(\cdot)$ is continuous except at $v$.
\end{obs}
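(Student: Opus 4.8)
The plan is to exploit the piecewise structure of $\psi_{2,N}(\cdot)$ as defined in (\ref{eq:psi2nlowc1lowc2}). On each of the four open intervals $(-\infty,\tilde{p}_2)$, $(\tilde{p}_2,\tilde{p}_1)$, $(\tilde{p}_1,v)$ and $(v,\infty)$ the function is either constant or of the form $a\big(1-\tfrac{b-c}{x-c}\big)$ for constants $a,b$, whose only singularity is at $x=c$; since $\tilde{p}_2-c=\tfrac{(v-c)(1-q)-s_2}{1-q}>0$ (because $s_2<q(v-c)(1-q)<(v-c)(1-q)$ in this regime), we have $c<\tilde{p}_2$, so $\psi_{2,N}(\cdot)$ is continuous on each of these intervals. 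One also checks $\tilde{p}_2<\tilde{p}_1<v$: from (\ref{eq:tildep2}) and (\ref{eq:tildep1mixed}), $\tilde{p}_1-\tilde{p}_2=\tfrac{s_2-s_1}{1-q}>0$ since $s_1<s_2$, and $\tilde{p}_1-c=(v-c)-\tfrac{s_1}{1-q}<v-c$. Hence continuity can fail only at the three breakpoints $\tilde{p}_2$, $\tilde{p}_1$, $v$, and it suffices to compare the relevant one-sided limits there.

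First I would check $x=\tilde{p}_2$. The left limit is $0$, and for the value from the second branch I use $L-c=s_2/q$ from (\ref{eq:l_unequalcost}) and $\tilde{p}_2-c=\tfrac{s_2}{q(1-p_2)}$ from (\ref{eq:tildep2}), which give $\tfrac{L-c}{\tilde{p}_2-c}=1-p_2$, so $\tfrac{1}{1-p_2}\big(1-\tfrac{L-c}{\tilde{p}_2-c}-p_2\big)=0$; thus $\psi_{2,N}(\cdot)$ is continuous at $\tilde{p}_2$. Next, at $x=\tilde{p}_1$ I equate the second and third branches. Clearing the common factor $\tfrac{1}{1-p_2}$ and multiplying through by $q(\tilde{p}_1-c)$, the equality reduces to the identity $-(1-q)(\tilde{p}_1-c)=q(L-c)-(\tilde{p}_{1,N}-c)$. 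Plugging in $(1-q)(\tilde{p}_1-c)=(v-c)(1-q)-s_1$ from (\ref{eq:tildep1mixed}), $q(L-c)=s_2$ from (\ref{eq:l_unequalcost}), and $\tilde{p}_{1,N}-c=(v-c)(1-q)+s_2-s_1$, both sides collapse to $-\big[(v-c)(1-q)-s_1\big]$; hence continuity also holds at $\tilde{p}_1$.

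Finally, at $x=v$ the value is $1$ while the left limit from the third branch, using $\tfrac{\tilde{p}_{1,N}-c}{v-c}=(1-q)+\tfrac{s_2-s_1}{v-c}$, equals $\tfrac{1}{q(1-p_2)}\big(q(1-p_2)-\tfrac{s_2-s_1}{v-c}\big)=1-\tfrac{s_2-s_1}{q(1-p_2)(v-c)}$, which is strictly less than $1$ since $s_2>s_1$. Therefore $\psi_{2,N}(\cdot)$ has a jump of size $\tfrac{s_2-s_1}{q(1-p_2)(v-c)}$ at $v$ and is continuous at every other point, which is exactly the claim. The only genuinely delicate step is the verification at $\tilde{p}_1$: it requires simultaneously invoking the closed forms for $\tilde{p}_1-c$, $\tilde{p}_{1,N}-c$, $L-c$ and $p_2$ from (\ref{eq:l_unequalcost}), (\ref{eq:tildep2}), (\ref{eq:tildep1mixed}) and (\ref{eq:p_1andp_2}), and the identity falls out only after this substitution; the checks at $\tilde{p}_2$ and $v$ are routine one-line computations.
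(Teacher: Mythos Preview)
Your proof is correct and follows essentially the same approach as the paper: check continuity at the breakpoints $\tilde{p}_2$, $\tilde{p}_1$, $v$ by direct substitution of the closed forms for $L-c$, $\tilde{p}_{1,N}-c$, $\tilde{p}_1-c$, $\tilde{p}_2-c$ from (\ref{eq:l_unequalcost}), (\ref{eq:tildep2}), (\ref{eq:tildep1mixed}). Your treatment is in fact a bit more thorough---you explicitly verify the ordering $c<\tilde{p}_2<\tilde{p}_1<v$ and the continuity at $\tilde{p}_2$, both of which the paper just asserts---and your reduction of the $\tilde{p}_1$ check to the single identity $-(1-q)(\tilde{p}_1-c)=q(L-c)-(\tilde{p}_{1,N}-c)$ is a little slicker than the paper's parallel computation of both one-sided limits, but the substance is identical.
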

\begin{proof}
Since $s_1<s_2$, thus, it is easy to discern that $\psi_{2,N}(\cdot)$ has a jump at $v$. 

Now, we show that $\psi_{2,N}(\cdot)$ does not have a jump at $\tilde{p}_1$. It is easy to discern that $\psi_{2,N}(\cdot)$ can not have a jump at any other point. 

From (\ref{eq:tildep1mixed}), we have
\begin{align}
& \dfrac{(v-c)(1-q)+s_2-s_1}{\tilde{p}_1-c}\nonumber\\
& =[(v-c)(1-q)+s_2-s_1]\dfrac{1-q}{(v-c)(1-q)-s_1}
=1-q+s_2\dfrac{1-q}{(v-c)(1-q)-s_1}
\end{align}
Thus, the left hand limit at $\tilde{p}_1$ is
\begin{align}
\dfrac{1}{q(1-p_2)}(1-\dfrac{(v-c)(1-q)+s_2-s_1}{\tilde{p}_1-c}-p_2q)
=1-\dfrac{s_2(1-q)}{q(1-p_2)[(v-c)(1-q)-s_1]}
\end{align}
Now, the right hand limit and the value of $\psi_{2,N}(\cdot)$ at $\tilde{p}_1$ is
\begin{align}
\dfrac{1}{1-p_2}(1-\dfrac{L-c}{\tilde{p}_1-c}-p_2)
=1-\dfrac{s_2(1-q)}{q(1-p_2)[(v-c)(1-q)-s_1]}\quad \text{from (\ref{eq:tildep1mixed})}.
\end{align}
Hence,  $\psi_{2,N}(\cdot)$ does not have a jump at $\tilde{p}_1$.
\end{proof}
Note that though $\psi_{2,N}(\cdot)$ has a jump at $v$, the variation of $\psi_{2,N}$ with $x$ is similar to the other distributions $\psi_{1,N}(\cdot), \psi_{1,Y}(\cdot)$ and $\psi_{2,Y}(\cdot)$. 

Now, we are ready to state the main result in this section.
\begin{theorem}\label{thm:bothmixedunequalcost}
Consider the following strategy profile: Primary $1$ selects $Y$ w.p. $p_1$ and $N$ w.p. $1-p_1$ and primary $2$ selects $Y$ w.p. $p_2$ and $N$ w.p. $1-p_2$ where $p_1$ and $p_2$ are given in (\ref{eq:p_1andp_2}).  While selecting $Y$, primary $i=1,2$  selects its price according to $\psi_{i,Y}(\cdot)$ when the channel of primary  $j, j\neq i$ is available and will select the price $v$ if the channel of primary $j$ is unavailable; while selecting $N$, primary $i$ selects its price according to $\psi_{i,N}(\cdot)$.

The above strategy profile is the unique NE when $s_2<q(v-c)(1-q)$. The expected payoff that primary $1$ attains is $(v-c)(1-q)+s_2-s_1$ and primary $2$ attains is $(v-c)(1-q)$.
\end{theorem}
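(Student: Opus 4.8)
\emph{Proof plan for Theorem~\ref{thm:bothmixedunequalcost}.} The plan is to run the same two‑sided verification used in Theorems~\ref{thm:mixedunequalcost} and~\ref{thm:mixederror}: fix one primary at its prescribed strategy and show the other has no profitable unilateral deviation, analysing the choices $Y$ and $N$ separately and then observing that both cap the deviator's payoff at the value it already obtains. Before that I would record the ordering $c<L<\tilde{p}_2<\tilde{p}_1<v$, which follows from $0<s_1<s_2<q(v-c)(1-q)$ together with (\ref{eq:l_unequalcost})--(\ref{eq:tildep1mixed}) (the strict inequality $L<\tilde{p}_2$ is exactly where the hypothesis $s_2<q(v-c)(1-q)$ is used), and the clean identities $q(L-c)=s_2$, $(\tilde{p}_2-c)(1-q)=(v-c)(1-q)-s_2$ and $(\tilde{p}_1-c)(1-q)=(v-c)(1-q)-s_1$ from (\ref{eq:tildep2})--(\ref{eq:tildep1mixed}). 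These identities, together with the three observations proved just above (all four distributions are continuous except for the jump of $\psi_{2,N}(\cdot)$ at $v$), are what make the bookkeeping telescope.

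First I would fix primary $2$'s strategy and bound primary $1$'s payoff. Under $Y$: if primary $2$'s channel is unavailable (probability $1-q$) primary $1$ learns this, posts $v$, and earns $(v-c)-s_1$; if it is available (probability $q$), I would compute on each of $[L,\tilde{p}_2]$, $[\tilde{p}_2,\tilde{p}_1]$, $[\tilde{p}_1,v)$ the probability that primary $2$ posts a price $\le x$, namely $p_2\psi_{2,Y}(x)+(1-p_2)\psi_{2,N}(x)$, using the piecewise forms (\ref{eq:psi2ylowc1lowc2})--(\ref{eq:psi2nlowc1lowc2}). The algebra collapses to a conditional payoff that is constant $=(L-c)-s_1$ on all of $[L,\tilde{p}_1]$ and strictly smaller outside (above $\tilde{p}_1$ the coefficient of $(x-c)$ becomes $1-1/q<0$, and the jump of $\psi_{2,N}(\cdot)$ at $v$ makes $x=v$ strictly worse than prices just below it), so the $Y$‑value is $(1-q)[(v-c)-s_1]+q[(L-c)-s_1]=(v-c)(1-q)+s_2-s_1$. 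Under $N$: primary $1$ no longer observes primary $2$'s state, and the per‑interval computation now gives payoff $(x-c)(1-q)+q(L-c)$, which is increasing on $[L,\tilde{p}_1]$, reaches $(v-c)(1-q)+s_2-s_1$ at $\tilde{p}_1$, stays constant at that value on $[\tilde{p}_1,v)$, and drops at $x=v$ (again because of the point mass of $\psi_{2,N}(\cdot)$ there, whose contribution to primary $1$'s winning probability is halved by the tie rule); for $x<L$ the payoff is $<L-c$, and $L-c=s_2/q<(v-c)(1-q)<(v-c)(1-q)+s_2-s_1$ by the hypothesis $s_2<q(v-c)(1-q)$ and $s_1<s_2$. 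Hence both $Y$ and $N$ cap primary $1$'s payoff at $(v-c)(1-q)+s_2-s_1$, which is exactly the value delivered by the prescribed profile (the distributions $\psi_{1,Y}(\cdot),\psi_{1,N}(\cdot)$ put mass only where that value is attained), so primary $1$ has no profitable deviation.

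The argument for primary $2$ is structurally symmetric. With primary $1$ fixed, under $N$ one finds primary $2$'s payoff equals $(x-c)(1-q)+q(L-c)=(x-c)(1-q)+s_2$, increasing on $[L,\tilde{p}_2]$, reaching $(v-c)(1-q)$ at $\tilde{p}_2$ and constant $=(v-c)(1-q)$ on $[\tilde{p}_2,v]$ — here primary $1$'s distributions have no atom at $v$, so $x=v$ remains a best response, which is what makes the point mass of $\psi_{2,N}(\cdot)$ at $v$ consistent with equilibrium — and $<L-c<(v-c)(1-q)$ for $x<L$. Under $Y$, the unavailable‑competitor branch gives $(v-c)-s_2$, and the available‑competitor branch yields a conditional payoff constant $=(L-c)-s_2=s_2(1-q)/q$ on $[L,\tilde{p}_2]$ and strictly smaller above $\tilde{p}_2$ (the coefficient of $(x-c)$ is $1-1/q<0$, and the value at $\tilde{p}_1$ works out to $s_1/q-s_2<s_2(1-q)/q$ since $s_1<s_2$), so the $Y$‑value is $(1-q)[(v-c)-s_2]+q\cdot s_2(1-q)/q=(v-c)(1-q)$. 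Both choices cap primary $2$ at $(v-c)(1-q)$, matching the prescribed profile. The stated payoffs $(v-c)(1-q)+s_2-s_1$ and $(v-c)(1-q)$ then fall out directly. Uniqueness I would relegate to the Appendix, following the same route as for Theorems~\ref{thm:mixedstrategy} and~\ref{thm:mixedunequalcost}: in any NE, Theorems~\ref{thm:yandy}--\ref{thm:mixedunequalcost} and the hypothesis $s_1<s_2<q(v-c)(1-q)$ force both primaries to randomize between $Y$ and $N$; each must be indifferent over the support of each of its price distributions and between $Y$ and $N$; and these indifference equations, together with the requirement that the supports tile $[L,v]$ in the claimed overlapping pattern, pin down $p_1,p_2$ and the four distributions uniquely. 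The main obstacle I anticipate is precisely this uniqueness argument together with the careful treatment of the atom of $\psi_{2,N}(\cdot)$ at $v$ — ensuring simultaneously that $x=v$ is a strict loss for primary $1$ (so it places no mass there) while still a best response for primary $2$; the per‑interval payoff computations themselves are routine once the identities above are in hand.
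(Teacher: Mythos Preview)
Your proposal is correct and follows essentially the same approach as the paper's proof: a two-sided best-response verification, splitting each primary's analysis into the $Y$ and $N$ branches and computing the payoff piecewise over the intervals $[L,\tilde{p}_2]$, $[\tilde{p}_2,\tilde{p}_1]$, $[\tilde{p}_1,v)$ using exactly the identities $q(L-c)=s_2$ and $(\tilde{p}_i-c)(1-q)=(v-c)(1-q)-s_i$ that you isolated. Your treatment of the atom of $\psi_{2,N}(\cdot)$ at $v$ and your relegation of uniqueness to the Appendix also match the paper.
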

{\em Discussion}: Since $s_2>s_1$, thus, the expected payoff of primary $1$ is higher compared to primary $2$. Since $s_2<q(v-c)(1-q)$, thus by Theorem~\ref{thm:mixedunequalcost} the expected payoff of primary $1$ is lower compared to the setting where $s_2\geq q(v-c)(1-q)$. Note also that the payoff of primary $1$ decreases with $s_2$, but increases with $s_1$.  Thus, if $s_2$ decreases it only impacts the payoff of primary $1$, it does not affect the payoff of primary $2$. The payoff of primary $1$ also becomes closer to the payoff of primary $2$ as $s_2$ becomes closer to $s_1$ and ultimately becomes equal when $s_2=s_1$ which we have already seen in Section~\ref{sec:equalcost} where we analyze the scenario when both the primaries have identical cost to acquire the CSI of their respective competitors i.e. $s_2=s_1=s$.

$p_1$ ($p_2$,resp.) increases with the decrease in $s_1$ ($s_2$, resp.). Thus, primaries are more likely to select $Y$ as the cost $s_1, s_2$ decrease.  When $s\rightarrow 0$, then $p_1\rightarrow 1$, and when $s_2\rightarrow 0$, then $p_2\rightarrow 1$. Since $s_1<s_2$, thus, $p_1>p_2$.  Primary $1$ is more likely to select $Y$ compared to primary $2$.

Note that under $Y$, primary $1$ (primary $2$, resp.) selects its price from the interval $[L,\tilde{p}_1]$ ($[L,\tilde{p}_2]$, resp.) when the channel of its competitor is available. Note that $L$ is less than $\tilde{p}$ (given in (\ref{eq:tildepunequalcost})) as $s_2<q(v-c)(1-q)$. Hence, each primary selects its price from a larger interval when both primaries randomize between $Y$ and $N$. Also note that $L$ decreases as $s_2$ decreases. However, $L$ is independent of $s_1$.  Since $\tilde{p}_1>\tilde{p}_2$, hence, under $Y$ primary $1$ selects its price from a wider interval compared to primary $2$. Also note from (\ref{eq:tildep2}) and (\ref{eq:tildep1mixed}) $\tilde{p}_1$ and $\tilde{p}_2$ increase as $s_1$  and $s_2$ decrease respectively. Hence, $\psi_{i,Y}(\cdot)$ has larger supports as $s_i$ decreases. 

$\psi_{2,N}(\cdot)$ has a jump at $v$ similar to the setting when $s_2\geq q(v-c)(1-q)$ but $s_1<q(v-c)(1-q)$.

\subsubsection{Proof of Theorem~\ref{thm:bothmixedunequalcost}}
First, we show that there is no profitable unilateral deviation for primary $1$ when primary $2$ follows the strategy prescribed in Theorem~\ref{thm:bothmixedunequalcost} (Case I). Subsequently, we also show that there is no unilateral profitable deviation for primary $2$ when primary $1$ follows the strategy prescribed in Theorem~\ref{thm:bothmixedunequalcost} (Case II).

Case I: First, we show that under $Y$, when the channel of primary $2$ is available, then the maximum expected payoff that primary $1$ can attain is $L-c-s_1$ and it is attained at prices in the interval $[L,\tilde{p}_1]$ (Step i). Next, we show that when the channel of primary $2$ is not available, then under $Y$ the primary $1$ attains a payoff of $(v-c)-s_1$ (by selecting price $v$). This shows that the maximum expected payoff that primary $1$ attains under $Y$ is $(v-c)(1-q)+s_2-s_1$ (Step ii). Subsequently, we show that under $N$, the maximum expected payoff that a primary can attain is $(v-c)(1-q)+s_2-s_1$ and it is attained only at the prices in the interval $[\tilde{p}_1,v)$ (Step iii).  Finally, we show that the maximum expected payoff attained by primary $1$ is $(v-c)(1-q)+s_2-s_1$ and it is attained when primary $1$ follows the strategy (Step iv). 

(i) Suppose that primary $1$ selects $Y$ and the channel of primary $2$ is available. First, we show that at any price $x\in [L,\tilde{p}_2]$, the expected payoff of primary $1$ in this case is $L-c-s_1$ (Step i.a.). Subsequently, we show that at any price $x\in [\tilde{p}_2,\tilde{p}_1]$ the expected payoff of primary $1$ is $L-c-s_1$ (Step i.b.). Next, we show that at any price price $x\in [\tilde{p}_1,v]$ the expected payoff of primary $1$ is at most $L-c-s_1$ (Step i.c.).  Note that at a price less than $L$ will fetch a payoff of strictly less than the payoff of $L-c-s_1$. Hence, this will show  that when the channel of primary $2$ is available, then under $Y$ the maximum expected payoff attained by primary $1$ is $L-c-s_1$ and it is attained only at prices $[L,\tilde{p}_1]$.

Step i.a.: Suppose that primary $1$ selects a price $x\in [L,\tilde{p}_2]$. Since the primary $2$ selects a price less than or equal to $x$ if it selects $Y$ and then selects a price less than or equal to $x$ (it occurs w.p. $p_2\psi_{2,Y}(x)$).  Thus, the expected payoff of primary $1$ is
\begin{align}
(x-c)(1-p_2\psi_{2,Y}(x))-s_1=L-c-s_1\quad \text{from } (\ref{eq:psi2ylowc1lowc2}).\nonumber
\end{align}
Step i.b.: Now, suppose that primary $1$ selects a price $x$  in the interval $[\tilde{p}_2,\tilde{p}_1]$. Primary $2$ selects a price less than or equal to $x$ if it either selects $Y$ or it selects $N$ and then selects a price less than or equal to $x$. Thus, at any price $x$, the expected payoff of primary $1$ is
\begin{align}
(x-c)(1-(1-p_2)\psi_{2,N}(x)-p_2)-s_1
=L-c-s_1\quad \text{from } (\ref{eq:psi2nlowc1lowc2}).\nonumber
\end{align}
Thus, at any price in the interval $[\tilde{p}_2,\tilde{p}_1]$ fetches the primary an expected payoff of $L-c-s_1$. 

Step i.c: Now, at any price in the interval $[\tilde{p}_1,v)$ the expected payoff of primary $1$ is
\begin{align}\label{eq:maxy}
& (x-c)(1-(1-p_2)\psi_{2,N}(x)-p_2)-s_1&\nonumber\\
& =(x-c)(1-\dfrac{1}{q}(1-\dfrac{(v-c)(1-q)+s_2-s_1}{x-c}-qp_2)-p_2)-s_1\text{from } (\ref{eq:psi2nlowc1lowc2})\nonumber\\
& =(x-c)(1-1/q)+((v-c)(1-q)+s_2-s_1)/q-s_1
\end{align}
Since the co-efficient of $(x-c)$ is negative, the above  is maximized at $x=\tilde{p}_1$. Now, from (\ref{eq:tildep1mixed})
\begin{align}
(\tilde{p}_1-c)(1-1/q)
=-((v-c)(1-q)-s_1)/q\nonumber
\end{align}
Thus, (\ref{eq:maxy}) is upper bounded by $L-c-s_1$. 

Since $\psi_{2,N}(\cdot)$ has a jump at $v$, thus, the expected payoff of primary $1$ at $v$ is strictly less than the expected payoff at a price close to $v$. Thus, the maximum expected payoff attained in the interval $[\tilde{p}_1,v]$ is $s_2/q-s_1=L-c-s_1$ (by(\ref{eq:l_unequalcost})). 

Thus, under $Y$ when the channel of primary $2$ is available, the maximum expected payoff of primary $1$ is $L-c-s_1$ and it is attained in every price in the interval $[L,\tilde{p}_1]$. 

(ii) When the channel of primary $2$ is unavailable, then the payoff of primary $1$ is $(v-c)-s_1$ as primary $1$ selects $v$ and is still capable of selling its channel. Hence, the maximum expected payoff of primary $1$ under $Y$ is
\begin{align}
qs_2/q+(v-c)(1-q)-s_1=(v-c)(1-q)+s_2-s_1
\end{align}

(iii) Now, we show that under $N$, the maximum expected payoff of primary $1$ is at most $(v-c)(1-q)+s_2-s_1$ and it is attained when it follows the strategy $\psi_{1,N}(\cdot)$. Toward this end, we first show that if primary $1$ selects a price in the interval $[\tilde{p}_1,v]$ the expected payoff is $(v-c)(1-q)+s_2-s_1$ and it is attained at any price in the interval $[\tilde{p}_1,v)$  (Step iii.a). Subsequently, we show that if primary $1$ selects a price in the interval $[\tilde{p}_2,\tilde{p}_1]$, then the expected payoff under $N$ is at most $(v-c)(1-q)+s_2-s_1$ (Step iii.b.). Finally, we show that even if primary $1$ selects a price in the interval $[L,\tilde{p}_2)$, then the expected payoff is also at most $(v-c)(1-q)+s_2-s_1$ under $N$ (Step iii.c.). Note that at any price less than $L$ will fetch a payoff which is strictly less than the payoff at $L$. Hence, this will complete the proof. 

Step iii.a: Now, suppose primary $1$ selects a price $x\in[\tilde{p}_1,v)$. Now, primary $2$ selects a price less than or equal to $x$ if the channel of primary $2$ is available, and one of the two things occur--(i) primary $2$ selects $Y$, and (ii) primary $2$ selects $N$ and selects a price less than or equal to $x$. (i) occurs w.p. $p_2$ and (ii) occurs w.p. $(1-p_2)\psi_{2,N}(x)$. The channel of primary $2$ is available w.p. $q$. Thus, at $x$, the expected payoff of primary $1$ is 
\begin{align}
(x-c)(1-(1-p_2)q\psi_{2,N}(x)-p_2q)=(v-c)(1-q)+s_2-s_1\quad \text{from } (\ref{eq:psi2nlowc1lowc2})
\end{align}
Since $\psi_{2,N}(\cdot)$ has a jump at $v$, hence, primary $1$ attains strictly higher payoff at a price just below $v$ compared to the payoff at $v$.  Hence, the expected payoff at $v$ is strictly less than $(v-c)(1-q)+s_2-s_1$.

Step iii.b: Now, if primary $1$ selects any price in the interval $[\tilde{p}_2,\tilde{p}_1]$, then its expected payoff is 
\begin{align}
& (x-c)(1-(1-p_2)q\psi_{2,N}(x)-p_2q)=(x-c)(1-q(1-\dfrac{L-c}{x-c}))\quad \text{from } (\ref{eq:psi2nlowc1lowc2})\nonumber\\
& =(x-c)(1-q)+[L-c]q
\end{align}
which is maximized at $x=\tilde{p}_1$. Now, from (\ref{eq:tildep1mixed}) $(\tilde{p}_1-c)(1-q)=(v-c)(1-q)-s_1$. Since $(L-c)q=s_2$ (by (\ref{eq:l_unequalcost}), hence the maximum expected payoff is 
\begin{align}
(v-c)(1-q)-s_1+s_2
\end{align}
Step iii.c: Suppose that the primary $1$ selects a price $x$ in the interval $[L,\tilde{p}_2)$. Since primary $2$ does not select any price in this interval when the channel of primary $2$ is unavailable or primary $2$ selects $N$. Thus, the expected payoff of primary $1$ at $x$ is
\begin{align}
& (x-c)(1-p_2q\psi_{2,Y}(x))\nonumber\\
& =(x-c)(1-q(1-\dfrac{L-c}{x-c}))\quad(\text{from } (\ref{eq:psi2ylowc1lowc2}))=(x-c)(1-q)+(L-c)q\nonumber\\
& <(\tilde{p}_1-c)(1-q)+(L-c)q
=(v-c)(1-q)-s_1+s_2
\end{align}
Hence, under $N$, the maximum expected payoff that a primary can attain is $(v-c)(1-q)+s_2-s_1$ and this is attained at any price in the interval $[\tilde{p}_1,v)$.
 
 (iv) Under $Y$ or $N$, the maximum expected payoff that primary $1$ can attain is $(v-c)(1-q)+s_2-s_1$. Thus, any randomization of $Y$ and $N$ also yields the same expected payoff. Under the strategy profile, the primary $1$ attains the payoff of $(v-c)(1-q)+s_2-s_1$, hence, primary $1$ does not have any profitable unilateral deviation.
 
 Case II: We now show that primary $2$ also does not have any profitable unilateral deviation. Toward this end we first show that when primary $2$ selects $Y$ and primary $1$ is available, then the maximum expected payoff of primary $2$ is $L-c-s_2$ and it is attained at any price in the interval $[L,\tilde{p}_2]$ (Step i). Subsequently, we show that under $Y$, the maximum expected attained by primary $2$ is $(v-c)(1-q)$ and it is attained when primary $2$ follows the strategy (Step ii). Subsequently, we show that under $N$, the maximum expected payoff that primary $2$ attains is $(v-c)(1-q)$ and it is attained at a price in the interval $[\tilde{p}_2,v]$ (Step iii). Finally, we show that the maximum expected payoff that primary $2$ attains is $(v-c)(1-q)$ and it is attained when primary $2$ follows the strategy (Step iv). 

Step i: Suppose that primary $2$ selects $Y$ and primary $1$ is available. We show that the maximum expected payoff that primary $2$ attains is $L-c-s_2$ and it is attained at any price in the interval $[L,\tilde{p}_2]$. Toward this end, we first show that at any price $[L,\tilde{p}_2]$, the expected payoff is $L-c-s_2$ (Step i.a.). At any price in the interval $[\tilde{p}_2,\tilde{p}_1]$ and $[\tilde{p}_1,v]$ the expected payoff is at most $L-c-s_2$ (Steps i.b. and i.c. respectively). This will complete the proof.

Step i.a.: Suppose $x\in [L,\tilde{p}_2]$. Since primary $2$ selects $Y$, primary $2$ knows that primary $1$ is available. Primary $1$ selects a price in the interval $[L,\tilde{p}_2]$ if primary $1$ selects $Y$ (which occurs w.p. $p_1$). Thus, the expected payoff of primary $2$ at $x$ is
\begin{align}
(x-c)(1-p_1\psi_{1,Y}(x))-s_2=L-c-s_2 \quad \text{from } (\ref{eq:psi1ylowc1lowc2}).
\end{align}
{\em Thus, the expected payoff at any price $x\in [L,\tilde{p}_2]$ is $L-c-s_2$.}

Step i.b.: Suppose $x\in [\tilde{p}_2,\tilde{p}_1]$. The expected payoff of primary $2$ at $x$ is 
\begin{align}
& (x-c)(1-p_1\psi_{1,Y}(x))-s_2=(x-c)(1-\dfrac{1}{q}(1-\dfrac{\tilde{p}-c}{x-c}))-s_2\quad \text{from } (\ref{eq:psi1ylowc1lowc2}).\nonumber\\
& =(x-c)(1-1/q)+(\tilde{p}-c)/q-s_2.
\end{align}
Since the coefficient of $x$ is negative, the above is maximized at $\tilde{p}_2$. Thus, the expected payoff is upper bounded by
\begin{align}\label{eq:upperbound_unequalcost}
& (\tilde{p}_2-c)(1-1/q)+(\tilde{p}-c)/q-s_2=(\tilde{p}_2-c)(1-1/q)+(v-c)(1-q)/q-s_2\quad \text{from } (\ref{eq:l_unequalcost})\nonumber\\
& =s_2/q-(v-c)(1-q)/q+(v-c)(1-q)/q-s_2\quad \text{from } (\ref{eq:tildep2})\nonumber\\
& =L-c-s_2\quad \text{since } L-c=s_2/q\quad  (cf. (\ref{eq:l_unequalcost})).
\end{align}
Thus, at any $x\in [\tilde{p}_2,\tilde{p}_1]$ the maximum expected payoff of primary $2$ is $L-c-s_2$.

Step i.c.: Now, suppose $x\in [\tilde{p}_1,v]$. The expected payoff of primary $2$ at $x$ is 
\begin{align}
& (x-c)(1-(1-p_1)\psi_{1,N}(x)-p_1)-s_2=(x-c)(1-\dfrac{1}{q}(1-\dfrac{\tilde{p}-c}{x-c}-p_1q)-p_1)-s_2\quad \text{from } (\ref{eq:psi1nlowc1lowc2})\nonumber\\
&= (x-c)(1-1/q)+\dfrac{\tilde{p}-c}{q}-s_2\nonumber\\
& <(\tilde{p}_2-c)(1-1/q)+\dfrac{\tilde{p}-c}{q}-s_2\quad \text{since } \tilde{p}_2<\tilde{p}_1,
\quad =L-c-s_2\quad \text{from } (\ref{eq:upperbound_unequalcost}).
\end{align}
Thus, from Steps i.a., i.b. and i.c. the maximum expected payoff attained by primary $2$ in this case is $L-c-s_2$ and it is attained at the prices in the interval $[L,\tilde{p}_2]$.

Step ii: When primary $1$ is unavailable, then primary $2$ attains a payoff of $v-c-s_2$. Hence, under $Y$, the maximum expected payoff of primary $2$ is
\begin{align}
(L-c-s_2)q+(v-c-s_2)(1-q)=q(L-c)+(v-c)(1-q)-s_2=(v-c)(1-q)\quad \text{from } (\ref{eq:l_unequalcost}).
\end{align} It is attained when primary $2$ follows the strategy. 

Step iii: Now, we show that when primary $2$ selects $N$, then, its maximum expected payoff is $(v-c)(1-q)$ and it is attained at any price in the interval $[\tilde{p}_2,v]$. Toward this end, we show that at any price in the intervals $[\tilde{p}_2,\tilde{p}_1]$ and $[\tilde{p}_1,v]$, the maximum expected payoff of primary $2$ is $(v-c)(1-q)$ (Steps iii.a. and iii.b.). Subsequently, we show that the maximum expected payoff attained by the primary at any $x\in [L,\tilde{p}_2]$, the maximum expected payoff attained by primary $2$ is $(v-c)(1-q)$. 

Step iii.a: Suppose primary $2$ selects a price $x\in [\tilde{p}_2,\tilde{p}_1]$. Since primary $2$ selects $N$, thus, it only knows that primary $1$ is available w.p. $q$. Thus, at $x$, the expected payoff of primary $2$ at $x$ is 
\begin{align}
(x-c)(1-p_1q\psi_{1,Y}(x))=(v-c)(1-q)\quad \text{from } (\ref{eq:psi1ylowc1lowc2})
\end{align}
%Since $\tilde{p}-c=(v-c)(1-q)$ (cf. (\ref{eq:l_unequalcost})), the expected payoff of primary $2$ at any price in the interval $[\tilde{p}_2,\tilde{p}_1]$ is $(v-c)(1-q)$.

Step iii.b.: Suppose primary $2$ selects a price $x\in [\tilde{p}_1,v]$. Then the expected payoff of primary $2$ at $x$ is
\begin{align}
(x-c)(1-p_1q-(1-p_1)q\psi_{1,N}(x))=(v-c)(1-q) \quad \text{from } (\ref{eq:psi1nlowc1lowc2}).
\end{align}
%Since $\tilde{p}-c=(v-c)(1-q)$, 
From Steps iii.a. and iii.b., the expected payoff of primary $2$ at $[\tilde{p}_2,v]$ is $(v-c)(1-q)$.

Step iii.c: Now, suppose primary $2$ selects a price $x\in [L,\tilde{p}_2]$ is
\begin{align}
(x-c)(1-p_1q\psi_{1,Y}(x))=(x-c)(1-q)+(L-c)q \quad \text{from } (\ref{eq:psi1ylowc1lowc2})
\end{align}
The above is maximized at $x=\tilde{p}_2$ as the coefficient of $x$ is positive. Hence, the maximum expected payoff of primary $2$ is upper bounded by
\begin{align}
& (\tilde{p}_2-c)(1-q)+(L-c)q=(v-c)(1-q)-s_2+(L-c)q\quad \text{from } (\ref{eq:tildep2}).\nonumber\\
& =(v-c)(1-q)\quad \text{since } (L-c)q=s_2\quad \text{from } (\ref{eq:l_unequalcost}).
\end{align}
Thus, the maximum expected payoff of primary $2$ is $(v-c)(1-q)$ and it is attained only at prices in the interval $x\in [\tilde{p}_2,v]$ (by Steps iii.a. and iii.b.).

Step iv: By Step (ii), the maximum expected payoff of primary $2$ under $Y$ is $(v-c)(1-q)$. From Step (iii), the maximum expected payoff of primary $2$ under $N$ is $(v-c)(1-q)$. Thus, any randomization between $Y$ and $N$ will yield a maximum expected payoff of $(v-c)(1-q)$. This maximum expected payoff is attained by primary $2$ when it follows the strategy.\qed
%However, at any other price the expected payoff of primary $2$ is at most $(v-c)(  \qed

  \section{Unequal Channel availability probabilities}\label{sec:unequalavail}
We, now, consider the setting, where different primaries may have different availability probabilities depicted in Section~\ref{sec:differentavail_model}.  Without loss of generality, we assume that the channel of primary $1$ is available w.p. $q_1$ and the channel of primary $2$ is available w.p. $q_2$ where $q_1>q_2$. 

\subsection{Goals}
The impact of different availability probabilities on the frequency with which a primary selects $Y$  can not be readily concluded.  If primary $1$ acquires the CSI of primary $2$, it will more often find that that the channel of primary $2$ is unavailable which may increase its payoff. However, primary $2$ itself may also acquire the CSI of primary $1$ and select a lower price, in response primary $1$  selects a lower price which may reduce the payoff of primary $1$.  Even if the NE strategy is of the form $[T,p]$, the values of the thresholds may be different for different primaries. Additionally, it is not clear whether the threshold will be higher for primary $1$. This is because if the availability probability of primary $2$ is low, primary $1$ may select $Y$ for very small values of $s$, but primary $2$ may still select $Y$ for larger values of $s$ as the channel availability probability of the primary $1$ is higher. 

The impact of different availability probabilities on the payoff of each primary is also not apriori clear.  Conventional wisdom suggests that as $s$ decreases the payoff of a primary should not decrease. However, the conventional wisdom is not definitive because of the following. Since the channel of primary $1$ is available with a higher probability, when primary $2$ acquires the CSI of primary $1$, then, primary $1$ selects a lower price more often which may reduce the payoff of primary $2$.   The pricing strategy also inherently depends on the frequency with which a primary selects $Y$.  We resolve all these quandaries. 
\subsection{Results}\label{sec:results_unequalavail}
We first discuss the main insights provided by our analysis. 
\begin{itemize}
\item The NE strategy for primary $i$ is of the form $[T_i,p_i]$ (Definition \ref{defn:classtp}).   However, $T_i$ is different for different primaries due to different availability probabilities.  Our result shows that $T_1=q_2(v-c)(1-q_2)$ (Theorem~\ref{thm:nq1q2}) and $T_2=q_2(v-c)(1-q_1)/(1-q_1+q_2)$ (Theorem~\ref{thm:ynq1q2}) where $T_1>T_2$.  Note that in the basic model, we have shown that the threshold depends on the variance of the availability of the competitor\rq{}s channel. Thus, $T_1=q_2(v-c)(1-q_2)$ is expected. However, the expression of $T_2$ is surprising.   Additionally, we show that $T_1>T_2$ which is again not completely intuitive. Also note that when $T_2\leq s<T_1$, primary $2$ selects $Y$ w.p. $0$, but primary $1$ selects $Y$ w.p. $p_1$. Even when $s<T_2$, $p_i$s are different with $p_1>p_2$ (Theorem~\ref{thm:yyq1q2}). Thus, primary $1$ selects $Y$ with a higher probability. 
\item Different availability probabilities also lead to different payoffs for the primaries. In contrast to the basic model,   the expected payoff of primary $1$ is higher than that of primary $2$ when primary $1$ selects $Y$ with positive probability (Theorems~\ref{thm:ynq1q2} and \ref{thm:yyq1q2}). Additionally, the expected payoff of primary $2$ decreases as $s$ decreases.  Thus, the expected payoff of a primary decreases with the ease of acquiring the CSI which negates the conventional wisdom. Intuitively, since primary $1$ selects $Y$ with a higher probability as $s$ decreases, it selects a lower price when the channel of primary $2$ is available. In response, primary $2$ either must select a high price (so that, it can get a high payoff in the event when the channel of primary $1$ is unavailable) or select a low price (so that, it can increase its probability of winning). Since the channel of primary $1$ is available with a higher probability, the first option fetches a lower payoff compared to the latter one.  Thus, primary $2$ also selects a lower price. Thus, the expected payoff of primary $2$ decreases as $s$ decreases.  The expected payoff of primary $2$ becomes close to that of primary $1$ as the difference between $q_1$ and $q_2$ decreases.
\item Price strategies also exhibit some similarities with the basic model. Specifically, primary $i$ selects its price from the interval $[L,\tilde{p}_i]$ ($[\tilde{p}_i,v]$, resp.) when it selects $Y$ ($N$, resp.) and the channel of the competitor is available. However, since primaries have different availability probabilities, the price selection strategies also have some differences compared to the basic model. For example, $\tilde{p}_1>\tilde{p}_2$. Thus, primary $2$ selects a lower price when it selects $Y$ and the channel of primary $1$ is available. Additionally,   primary $1$ selects $v$ with a positive probability when it selects $N$ and the probability decreases as $q_2$ becomes close to $q_1$. Thus, primary $1$ selects a price from a distribution function which has a discontinuity whereas in the basic model, each primary selects its price from a continuous distribution function. Intuitively, since primary $1$ has higher channel availability probability,  it selects a higher price when it selects $N$. 
\end{itemize}
\subsection{High $s$}
Our first result shows that 
\begin{theorem}\label{thm:nq1q2}
If $s\geq q_2(v-c)(1-q_2)$ then in the unique NE, both the primaries select $N$ w.p. $1$. The expected payoff of both the primaries is $(v-c)(1-q_2)$
\end{theorem}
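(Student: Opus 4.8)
The plan is to exhibit the candidate equilibrium explicitly, verify it is an NE under the stated hypothesis, and then argue uniqueness. Since the claim forces both primaries to play $N$ with probability $1$, once that is established the pricing half of the profile is a Bertrand-type price competition with one-sided uncertainty about the competitor's channel, and I would solve it by the usual indifference conditions. Write $G_1,G_2$ for the candidate equilibrium price distributions. Primary $1$ posting $x$ and knowing only that primary $2$ is available w.p.\ $q_2$ earns $(x-c)(1-q_2G_2(x))$, which must equal a constant $u_1$ on its support; symmetrically $(x-c)(1-q_1G_1(x))\equiv u_2$. Evaluating both at the common lower end-point of the supports gives $u_1=u_2=:u$. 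A distribution of the form $\frac{1}{q_i}\!\left(1-\frac{u}{x-c}\right)$ can reach $1$ at $v$ only if $u=(v-c)(1-q_i)$; since $u_1=u_2$ this is possible for at most one primary, and consistency (the other must satisfy $G(v^-)=q_j/q_i\le 1$) forces it to be the lower-availability primary, i.e.\ primary $2$. Hence $u=(v-c)(1-q_2)$, the common lower end-point is $\tilde{p}:=c+(v-c)(1-q_2)$, primary $2$ uses $G_2(x)=\frac{1}{q_2}\!\left(1-\frac{(v-c)(1-q_2)}{x-c}\right)$ on $[\tilde{p},v]$ (i.e.\ $\phi(\cdot)$ with $q$ replaced by $q_2$), and primary $1$ uses $G_1(x)=\frac{1}{q_1}\!\left(1-\frac{(v-c)(1-q_2)}{x-c}\right)$ on $[\tilde{p},v)$ with an atom of mass $(q_1-q_2)/q_1$ at $v$. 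In particular both equilibrium payoffs equal $(v-c)(1-q_2)$, as claimed.

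Next I would verify that this profile is an NE, following the template of the proof of Theorem~\ref{thm:nandnerror}. Under $N$: by construction every price in $[\tilde{p},v]$ earns $(v-c)(1-q_2)$, any price below $\tilde{p}$ earns strictly less (the competitor never undercuts there), and the value at $v$ against a competitor who does or does not carry an atom there is handled by the $1/2$ tie-break; so the best $N$-response earns exactly $(v-c)(1-q_2)$. Under $Y$: a primary learns the competitor's state; if the competitor is unavailable it posts $v$ for $(v-c)-s$, and if the competitor is available it faces the competitor's $N$-distribution, whose associated coefficient of $(x-c)$ is negative, so the optimal price is $\tilde{p}$, yielding $(v-c)(1-q_2)-s$. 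Taking expectations, primary $1$'s $Y$-payoff is $(1-q_2)\!\left((v-c)-s\right)+q_2\!\left((v-c)(1-q_2)-s\right)=(v-c)(1-q_2)+q_2(v-c)(1-q_2)-s$, and primary $2$'s is $(v-c)(1-q_1q_2)-s$. The former is $\le (v-c)(1-q_2)$ exactly when $s\ge q_2(v-c)(1-q_2)$, the hypothesis; the latter is $\le (v-c)(1-q_2)$ when $s\ge q_2(v-c)(1-q_1)$, which is weaker because $q_1>q_2$. Hence under the hypothesis no primary gains by switching to $Y$.

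For uniqueness I would eliminate the remaining equilibrium structures one by one, paralleling Theorems~\ref{thm:yandy} and \ref{thm:yandn}: there is no NE with both primaries choosing $Y$ w.p.\ $1$; none with one choosing $Y$ w.p.\ $1$ and the other $N$ w.p.\ $1$; and none in which a primary randomizes between $Y$ and $N$ while $s\ge q_2(v-c)(1-q_2)$, since the extra expected revenue that knowing the competitor's state can generate over the best blind price is at most $q_2(v-c)(1-q_2)\le s$, which does not cover the acquisition cost. This confines attention to the $(N,N)$ regime, where the pricing profile derived above is the unique equilibrium by the standard arguments for this family of games (connected supports sharing a common lower end-point, at most the higher-availability primary carrying an atom, and that atom located at $v$). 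As with Theorem~\ref{thm:nandn}, I would place the detailed uniqueness case analysis in the appendix.

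The hard part will be the uniqueness argument — specifically, ruling out positive probability on $Y$ in any NE when $s$ is large, and showing the $(N,N)$ pricing equilibrium admits no variant with gaps in the supports or with the atom positioned differently. The existence/verification direction is essentially bookkeeping once the two indifference equations are solved; the only genuine care needed there is tracking the atom of $G_1$ at $v$ and the corresponding tie-break when evaluating prices off the support.
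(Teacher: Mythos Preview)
Your proposal is correct and follows essentially the same route as the paper. The paper invokes the $(N,N)$ pricing equilibrium as a known result (its Lemma~\ref{lm:nandnunequalavail}, citing \cite{Gaurav1}) rather than deriving it, but your first-principles derivation reproduces exactly the same distributions $\psi_1,\psi_2$ with the atom of mass $(q_1-q_2)/q_1$ at $v$ for primary~$1$; your deviation-to-$Y$ computations for both primaries then match the paper's Step~i and Step~ii verbatim, including the observation that primary~$2$'s $Y$-constraint $s\ge q_2(v-c)(1-q_1)$ is slack relative to primary~$1$'s. On uniqueness, the paper's body (like yours) only proves existence and defers the rest; your sketch of how to rule out positive probability on $Y$ is in line with the structural lemmas the paper develops elsewhere for the symmetric case.
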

Note that when $s\geq q_2(v-c)(1-q_2)$ both the primaries attain identical expected payoff though the availability probabilities are different. 
\begin{proof}
When both players select $N$, then the setting becomes equivalent to the setting where primary $1$ (primary $2$, resp.) only knows that the channel of primary $2$ (primary $1$, resp.) is available w.p. $q_1$ ($q_2$ resp.). The above setting has already been considered in \cite{Gaurav1}. From \cite{Gaurav1},
\begin{lem}\label{lm:nandnunequalavail}
In the unique NE pricing strategy under $N$,  primary $i$ should select its pricing strategy using $\psi_i(\cdot)$, where
\begin{align}
\psi_{1}(x)=\begin{cases}
0\quad x<\bar{p}\nonumber\\
\dfrac{1}{q_1}(1-\dfrac{(v-c)(1-q_2)}{x-c})\quad \bar{p}\leq x< v\nonumber\\
%\dfrac{q_1-q_2}{q_1}\quad x=v\nonumber\\
1\quad x>v
\end{cases}
\end{align}
\begin{align}
\psi_{2}(x)=\begin{cases}
0\quad x<\bar{p}\nonumber\\
\dfrac{1}{q_2}(1-\dfrac{(v-c)(1-q_2)}{x-c})\quad \bar{p}\leq x\leq v\nonumber\\
1\quad x>v
\end{cases}
\end{align}
where $\bar{p}-c=(v-c)(1-q_2)$. $\psi_1(\cdot)$ has a jump of $\dfrac{q_1-q2}{q_1}$ at $v$.
\end{lem}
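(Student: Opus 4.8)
The plan is to reduce the claim to a result already in the literature. When both primaries commit to $N$, no C-CSI is purchased, so the continuation is a pure price-competition game: each primary $i$, conditioned on its own channel being available, posts a hidden price in $[c,v]$ while holding the belief that its rival's channel is available with probability $q_{-i}$. This is exactly the asymmetric-belief Bertrand pricing game solved in \cite{Gaurav1} with the two belief parameters taken to be $q_1$ and $q_2$; Lemma~\ref{lm:nandnunequalavail} is therefore an instance of that characterization, and the remaining work is to specialize it and confirm the displayed CDFs. The formula I would write down first is that a primary $i$ posting price $x$ while available sells with probability $1-q_{-i}\psi_{-i}(x)$ whenever $\psi_{-i}$ is continuous at $x$, so its conditional payoff is $(x-c)\bigl(1-q_{-i}\psi_{-i}(x)\bigr)$.

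For the self-contained argument I would proceed in the standard way. First rule out any pure-strategy equilibrium by the usual undercutting argument. Next show that, in any equilibrium, the closures of the supports of $\psi_1$ and $\psi_2$ coincide with a common interval $[\bar p,v]$: the top endpoint must be $v$ since the secondary never pays more; any price range left uncovered by the rival would let a primary pricing there sell only when the rival is absent, so it would strictly prefer to move up to $v$; and continuity of the payoff in $x$ forbids internal gaps. This pins a single equilibrium payoff $\bar u_i$ for primary $i$, with $(x-c)\bigl(1-q_{-i}\psi_{-i}(x)\bigr)=\bar u_i$ for all $x$ in the support. I would then show that neither distribution can carry an atom at any $x<v$ (an atom of $\psi_i$ at $x<v$ makes pricing just below $x$ strictly better than pricing just above it for primary $-i$, contradicting indifference), and that $\psi_2$ cannot have an atom at $v$ either, because for primary $2$ posting $v$ rather than $v-\epsilon$ forfeits a discrete probability of winning ties against primary $1$ — this is precisely where the ordering $q_1>q_2$ enters and forces any mass at $v$ onto primary $1$. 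Finally, evaluating $\bar u_1=(x-c)\bigl(1-q_2\psi_2(x)\bigr)$ at $x=v$, where $\psi_2(v^-)=1$, yields $\bar u_1=(v-c)(1-q_2)$ and hence $\bar p-c=(v-c)(1-q_2)$, and likewise $\bar u_2=(v-c)(1-q_2)$; solving $1-q_2\psi_2(x)=(v-c)(1-q_2)/(x-c)$ and $1-q_1\psi_1(x)=(v-c)(1-q_2)/(x-c)$ on $[\bar p,v)$ produces exactly the stated $\psi_2(\cdot)$ and $\psi_1(\cdot)$, with $\psi_1$ rising only to $q_2/q_1$ on $[\bar p,v)$ and then jumping by $(q_1-q_2)/q_1$ at $v$. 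A short check that these are genuine CDFs and that no price below $\bar p$ or above $v$ beats $\bar u_i$ completes both existence and uniqueness.

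I expect the main obstacle to be the atom bookkeeping in the uniqueness step: handling the tie-breaking rule when the two supports share the endpoint $v$, and showing that the asymmetry $q_1>q_2$ places the atom on primary $1$ (the more available primary) rather than primary $2$, together with ruling out every alternative support configuration. Since all of this is already established in \cite{Gaurav1}, the cleanest presentation is to cite that theorem and simply verify that substituting the belief pair $(q_1,q_2)$ reproduces the $\psi_1(\cdot)$ and $\psi_2(\cdot)$ in the statement.
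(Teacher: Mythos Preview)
Your proposal is correct and aligns with the paper's own treatment: the paper does not give a self-contained proof of this lemma but simply cites \cite{Gaurav1}, exactly as you recommend in your final paragraph. Your sketch of the self-contained argument is sound, though the phrase about primary~$2$ ``forfeiting ties'' is loose --- the clean reason the atom sits on $\psi_1$ is payoff consistency at the two endpoints (equating $\bar u_1=\bar u_2=\bar p-c$ with the limits at $v^-$ forces $\psi_1(v^-)=q_2/q_1<1$ and $\psi_2(v^-)=1$), not a direct undercutting argument.
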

It is easy to show that each primary will attain an expected payoff of $(v-c)(1-q_2)$. Now, we show that each primary can not attain higher payoff by selecting $Y$. First, we show that primary $1$ can not attain more by selecting $Y$ (Step i). Subsequently, we show that primary $2$ can not attain more by selecting $Y$ (Step ii).

Step i: Suppose that primary $1$ deviates and selects $Y$. When the channel of primary $2$ is available, then the expected payoff of primary $1$ at any $[\bar{p},v]$ is
\begin{align}
(x-c)(1-\psi_2(x))-s=(x-c)(1-1/q_2)+(v-c)(1-q_2)/q_2-s
\end{align}
The above is maximized at $\bar{p}$ as the co-efficient of $x$ is negative. Since $\bar{p}-c=(v-c)(1-q_2)$ (from Lemma~\ref{lm:nandnunequalavail}), hence, the above is upper bounded by 
\begin{align}
(v-c)(1-q_2)(1-1/q_2)+(v-c)(1-q_2)/q_2-s=\bar{p}-c-s.
\end{align}
The price at any $x<\bar{p}$ will fetch an expected payoff of strictly less than $\bar{p}-c-s$. Thus, the maximum expected payoff that primary $1$ attains in this setting is $\bar{p}-c-s$.

When the channel of primary $2$ is not available, then the payoff of primary $1$ is $(v-c)-s$. Hence, the maximum expected payoff that primary $1$ can attain by deviating unilaterally is 
\begin{align}
q_2(v-c)(1-q_2)+(v-c)(1-q_2)-s
\end{align}
When $s\geq q_2(v-c)(1-q_2)$, then, primary $1$ will attain an expected payoff of strictly less than $(v-c)(1-q_2)$. Hence, primary $1$ does not have any profitable unilateral deviation.

 Step ii: By applying the similar method we can show that the maximum expected payoff attained by primary $2$ under $Y$ is 
\begin{align}
q_1(v-c)(1-q_2)+(v-c)(1-q_1)-s
\end{align}
However, the above is strictly less than $q_2(v-c)(1-q_2)+(v-c)(1-q_2)-s$ since $q_1>q_2$. If $s\geq q_2(v-c)(1-q_2)$, then the maximum expected payoff that primary $2$ will attain under $Y$ is strictly less than $(v-c)(1-q_2)$. However, primary $2$ attains an expected payoff of $(v-c)(1-q_2)$ following the strategy profile at $N$, hence, primary $2$ does not have any profitable unilateral deviation. 
\end{proof}
\subsection{$s$ is neither too high nor too low}
Now, we show that when $\dfrac{q_2(v-c)(1-q_1)}{1-q_1+q_2}\leq s<q_2(v-c)(1-q_2)$ then there is an NE where primary $1$ randomizes between $Y$ and $N$, however, primary $2$ only selects $N$.
 First, we introduce some price distribution functions.
%\begin{align}
%\psi_{Y}(x)=\begin{cases}
%0,\quad x<L\nonumber\\
%\dfrac{1}{p_1q_1}(1-\dfrac{L-c}{x-c})\quad L\leq x\leq \tilde{p}\nonumber\\
%1, \quad x>\tilde{p}
%\end{cases}
%\end{align}
\begin{align}\label{eq:psiymidc1}
\psi_{Y}(x)=&
0,\quad x<L\nonumber\\
& \dfrac{1}{p_1q_1}(1-\dfrac{L-c}{x-c})\quad L\leq x\leq \tilde{p}\nonumber\\
&1, \quad x>\tilde{p}
%\end{cases}
\end{align}
and
%\begin{align}
%\psi_{1,N}(x)=\begin{cases}
%0,\quad x<\tilde{p}\nonumber\\
%\dfrac{1}{(1-p_1)q_1}(1-\dfrac{L-c}{x-c}-p_1q_1)\quad \tilde{p}\leq x<v\nonumber\\
%1-\dfrac{1}{(1-p_1)q_1}(1-\dfrac{L-c}{v-c}-p_1q_1)\quad x=v\nonumber\\
%1,\quad x>v
%\end{cases}
%\end{align}
\begin{align}\label{eq:psi1nmidc1}
\psi_{1,N}(x)=&
0,\quad x<\tilde{p}\nonumber\\
& \dfrac{1}{(1-p_1)q_1}(1-\dfrac{L-c}{x-c}-p_1q_1)\quad \tilde{p}\leq x<v\nonumber\\
%& 1-\dfrac{1}{(1-p_1)q_1}(1-\dfrac{L-c}{v-c}-p_1q_1)\quad x=v\nonumber\\
& 1,\quad x\geq v
\end{align}
%\begin{align}
%\psi_{N}(x)=\begin{cases}
%0, \quad x<L\nonumber\\
%(1-\dfrac{L-c}{x-c})\quad L\leq x< \tilde{p} \nonumber\\
%\dfrac{1}{q_2}(1-\dfrac{\tilde{p}-c}{x-c})\quad \tilde{p}\leq x\leq v\nonumber\\
%1, \quad x>v
%\end{cases}
%\end{align}
\begin{align}\label{eq:psinmidc1}
\psi_{N}(x)=&
0, \quad x<L\nonumber\\
& (1-\dfrac{L-c}{x-c})\quad L\leq x< \tilde{p} \nonumber\\
& \dfrac{1}{q_2}(1-\dfrac{(v-c)(1-q_2)}{x-c})\quad \tilde{p}\leq x\leq v\nonumber\\
& 1, \quad x>v
\end{align}
where 
\begin{eqnarray}
L-c=\dfrac{s}{q_2}\label{eq:l}\\
\tilde{p}-c=\dfrac{L-c}{1-p_1q_1}\label{eq:tildepunequal}\\
p_1=\dfrac{(v-c)(1-q_2)-s/q_2}{q_1(v-c)(1-q_2)-q_1s}\label{eq:p}
\end{eqnarray}
  Replacing the value of $p_1$ from (\ref{eq:p}) in $\tilde{p}$ we obtain
\begin{align}
\tilde{p}-c& =\dfrac{(s/q_2)(q_1(v-c)(1-q_2)-q_1s)}{q_1(v-c)(1-q_2)-q_1s-q_1(v-c)(1-q_2)+sq_1/q_2}=\dfrac{(v-c)(1-q_2)-s}{(1-q_2)}\label{eq:tildep_unequalavail}
\end{align}
It is easy to discern that $\psi_Y(\cdot)$ is  continuous. We, also, show that $\psi_{N}(\cdot)$ is a continuous function.
\begin{obs}
$\psi_{N}(\cdot)$ is a continuous function.
\end{obs}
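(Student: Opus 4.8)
The plan is to verify continuity branch by branch and then only at the junction points. Each of the three non-trivial branches of $\psi_{N}(\cdot)$ in (\ref{eq:psinmidc1}) is a rational function of $x$ whose only pole is at $x=c$, and since $c<L\le x$ on every such branch (recall $L-c=s/q_2>0$ from (\ref{eq:l})), each branch is continuous on its own sub-interval. Hence it suffices to check that the one-sided limits agree at the three junction points $x=L$, $x=\tilde p$ and $x=v$.

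The endpoints $L$ and $v$ are immediate. At $x=L$ the left limit is $0$, while the value of the middle branch is $1-\frac{L-c}{L-c}=0$. At $x=v$ the value of the branch on $[\tilde p,v]$ is $\frac{1}{q_2}\bigl(1-\frac{(v-c)(1-q_2)}{v-c}\bigr)=\frac{1}{q_2}\,q_2=1$, which matches the branch $\psi_{N}\equiv 1$ for $x>v$. So the only point needing real work is $x=\tilde p$, and here the key simplification is that $\tilde p-c$ has already been put in closed form in (\ref{eq:tildep_unequalavail}), namely $\tilde p-c=\frac{(v-c)(1-q_2)-s}{1-q_2}$, so that the explicit expression (\ref{eq:p}) for $p_1$ never needs to be expanded. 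I would substitute this, together with $L-c=s/q_2$, into both one-sided expressions at $\tilde p$: the left limit is $1-\frac{L-c}{\tilde p-c}=\frac{(v-c)(1-q_2)-s-\frac{s}{q_2}(1-q_2)}{(v-c)(1-q_2)-s}$, whose numerator collapses via the identity $1+\frac{1-q_2}{q_2}=\frac1{q_2}$ to $(v-c)(1-q_2)-\frac{s}{q_2}$; the right limit is $\frac{1}{q_2}\bigl(1-\frac{(v-c)(1-q_2)}{\tilde p-c}\bigr)=\frac{1}{q_2}\cdot\frac{(v-c)(1-q_2)q_2-s}{(v-c)(1-q_2)-s}$, which equals the same quantity $\frac{(v-c)(1-q_2)-s/q_2}{(v-c)(1-q_2)-s}$. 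Matching the two establishes continuity at $\tilde p$ and finishes the argument.

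I do not expect any genuine obstacle here: the whole proof is a short algebraic identity plus bookkeeping. The one point worth a sentence is that in the regime under consideration, $s<q_2(v-c)(1-q_2)\le (v-c)(1-q_2)$, so the common denominator $(v-c)(1-q_2)-s$ is positive and $\tilde p-c>0$; this guarantees all the rational branches are well-defined on their intervals and that the cancellations above are legitimate. (The same computation structurally mirrors the earlier continuity observations, e.g.\ for $\psi_2(\cdot)$ in Section~\ref{sec:unequalcost}, so no new idea is involved.)
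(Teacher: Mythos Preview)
Your proof is correct and follows essentially the same approach as the paper: both reduce the question to the single junction $x=\tilde p$ and use the closed forms $\tilde p-c=\frac{(v-c)(1-q_2)-s}{1-q_2}$ and $L-c=s/q_2$ to match the two branches. The paper packages the algebra slightly more cleanly via the identity $(v-c)(1-q_2)=(\tilde p-c)(1-q_2)+(L-c)q_2$, which lets the right limit collapse directly to $1-\frac{L-c}{\tilde p-c}$ without expanding both sides to a common closed form, but this is the same computation you do.
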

\begin{proof}
It is easy to discern that $\psi_N(\cdot)$ is continuous every except $x=\tilde{p}$.  Now, we show that $\psi_{N}(\cdot)$ is also continuous at $\tilde{p}$. The left hand limit of $\psi_N(\cdot)$ is $1-\dfrac{L-c}{\tilde{p}-c}$.  Now, the right hand limit is 
\begin{align}
& \dfrac{1}{q_2}(1-\dfrac{(v-c)(1-q_2)}{\tilde{p}-c})=\dfrac{1}{q_2}(1-\dfrac{(\tilde{p}-c)(1-q_2)+(L-c)q_2}{\tilde{p}-c})\quad \text{from (\ref{eq:tildep_unequalavail}) and (\ref{eq:l})}\nonumber\\
& =1-\dfrac{L-c}{\tilde{p}-c}
\end{align}
%which is the right hand limit as well as the value of $\psi_N(\cdot)$ at $\tilde{p}$. 
Hence, $\psi_N(\cdot)$ does not have any jump at $\tilde{p}$.
\end{proof}
Now, we show that $\psi_{1,N}(\cdot)$ has a jump at $v$.
\begin{obs}
$\psi_{1,N}(\cdot)$ has a jump at $v$. 
\end{obs}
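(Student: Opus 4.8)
The plan is to locate the only candidate discontinuity of $\psi_{1,N}(\cdot)$ (defined in (\ref{eq:psi1nmidc1})) and to compute the jump size there directly. On the interval $[\tilde{p},v)$, $\psi_{1,N}$ is given by the single expression $\frac{1}{(1-p_1)q_1}\bigl(1-\frac{L-c}{x-c}-p_1q_1\bigr)$, which is continuous (indeed smooth) in $x$, while $\psi_{1,N}\equiv 1$ for $x\ge v$. Hence the only point at which $\psi_{1,N}$ can jump is $x=v$, and it suffices to show that its left limit there is strictly less than $1$.

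First I would substitute $L-c=s/q_2$ from (\ref{eq:l}) to obtain
\[
\psi_{1,N}(v^-)=\frac{1}{(1-p_1)q_1}\Bigl(1-\frac{s}{q_2(v-c)}-p_1q_1\Bigr),
\]
so that, after cancelling the $p_1q_1$ terms, the jump at $v$ has size
\[
1-\psi_{1,N}(v^-)=\frac{\frac{s}{q_2(v-c)}-(1-q_1)}{(1-p_1)q_1}=\frac{s-q_2(v-c)(1-q_1)}{q_1q_2(v-c)(1-p_1)}.
\]
It then remains to check that this is strictly positive. The denominator is positive because $0<q_1,q_2<1$, $v>c$, and $p_1\in(0,1)$ in the regime $\frac{q_2(v-c)(1-q_1)}{1-q_1+q_2}\le s<q_2(v-c)(1-q_2)$ (the membership $p_1\in(0,1)$ being routine from (\ref{eq:p})). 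For the numerator, I would use that $q_1>q_2$ forces $0<1-q_1+q_2<1$, hence $\frac{q_2(v-c)(1-q_1)}{1-q_1+q_2}>q_2(v-c)(1-q_1)$; combined with the standing hypothesis $s\ge\frac{q_2(v-c)(1-q_1)}{1-q_1+q_2}$ of this subsection, this yields $s>q_2(v-c)(1-q_1)$, so the numerator is positive. Thus $\psi_{1,N}$ has a jump of size $\frac{s-q_2(v-c)(1-q_1)}{q_1q_2(v-c)(1-p_1)}>0$ at $v$.

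There is no genuine difficulty here; each step is a short computation. The one subtlety worth flagging is why the lower bound on $s$ in this subsection is $\frac{q_2(v-c)(1-q_1)}{1-q_1+q_2}$ rather than the seemingly more natural $q_2(v-c)(1-q_1)$: it is precisely the strict gap between these two thresholds — which opens up exactly because $q_1>q_2$ — that forces $s>q_2(v-c)(1-q_1)$ and hence ensures the point mass at $v$ is strictly positive.
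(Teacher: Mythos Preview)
Your proof is correct and follows essentially the same approach as the paper: both arguments reduce to showing $\frac{L-c}{v-c}>1-q_1$, equivalently $s>q_2(v-c)(1-q_1)$, via the inequality $\frac{1-q_1}{1-q_1+q_2}>1-q_1$ that comes from $q_1>q_2$. You go slightly further by writing the jump size in closed form, but the substance is identical.
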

\begin{proof}
Note that $s\geq q_2(v-c)(1-q_1)/(1-q_1+q_2)$, hence, $\dfrac{L-c}{v-c}\geq \dfrac{1-q_1}{1-q_1+q_2}>1-q_1$ as $q_1>q_2$ and $L-c=s/q_2$. Thus,
\begin{align}
1-\dfrac{1}{(1-p_1)q_1}(1-\dfrac{L-c}{v-c}-p_1q_1)
>1-\dfrac{1}{(1-p_1)q_1}(1-1+q_1-p_1q_1)
=0
\end{align}
Hence, $\psi_{1,N}(\cdot)$ has a jump at $v$. 
\end{proof}
Now, we are ready to state the main result.
\begin{theorem}\label{thm:ynq1q2}
Consider the following strategy profile: Primary $1$ selects $Y$ w.p. $p_1$ (given in (\ref{eq:p})) and $N$ w.p. $1-p_1$ and primary $2$ selects $N$ w.p. $1$. While selecting $Y$, primary $1$ selects its price according to $\psi_{Y}(\cdot)$ when the channel of primary $2$ is available and selects $v$ when the channel of primary $2$ is unavailable. While selecting $N$, primary $1$ selects its price according to $\psi_{1,N}(\cdot)$. Primary $2$ selects its price according to $\psi_{N}(\cdot)$. 

The above strategy profile is the unique NE when $q_2(v-c)(1-q_2)>s\geq q_2(v-c)(1-q_1)/(1-q_1+q_2)$\footnote{Note that $(1-q_2)(1-q_1)+q_2(1-q_2)>(1-q_1)$ as $q_2<q_1$, thus, $(1-q_2)>\dfrac{1-q_1}{1-q_1+q_2}$}. The expected payoff of primary $1$ is $(v-c)(1-q_2)$ and the expected payoff of primary $2$ is $s/q_2$.
\end{theorem}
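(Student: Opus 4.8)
The structure mirrors the proofs of Theorems~\ref{thm:mixedunequalcost} and \ref{thm:mixedstrategy}: I need to show (Case~I) primary~$1$ has no profitable unilateral deviation when primary~$2$ plays $\psi_N(\cdot)$, and (Case~II) primary~$2$ has no profitable unilateral deviation when primary~$1$ plays the randomized $[p_1]$ strategy. First I would record the useful identities following from \eqref{eq:l}, \eqref{eq:tildepunequal}, \eqref{eq:tildep_unequalavail}, namely $L-c=s/q_2$, $(\tilde p-c)(1-p_1q_1)=L-c$, and $(\tilde p-c)(1-q_2)=(v-c)(1-q_2)-s$; these are what make every ``expected payoff $=$ constant'' computation collapse. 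I would also note that because primary~$2$ always selects $N$, its belief about primary~$1$'s availability is simply $q_1$, whereas when primary~$1$ selects $Y$ it knows primary~$2$'s state exactly, and when it selects $N$ it uses belief $q_2$.

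\textbf{Case~I (primary~$1$).} When primary~$1$ selects $Y$ and primary~$2$'s channel is unavailable (probability $1-q_2$), the best reply is price $v$, yielding $v-c-s$. When primary~$2$'s channel is available (probability $q_2$), primary~$1$ competes only against $\psi_N(\cdot)$: for $x\in[L,\tilde p]$ the payoff is $(x-c)(1-\psi_N(x))-s = (L-c)-s$ by the first branch of \eqref{eq:psinmidc1}; for $x\in[\tilde p,v]$ the payoff is $(x-c)(1-\psi_N(x))-s$, which using the second branch of \eqref{eq:psinmidc1} becomes an affine function of $x$ with negative slope (coefficient $1-1/q_2<0$), hence maximized at $x=\tilde p$ where, using $(\tilde p-c)(1-q_2)=(v-c)(1-q_2)-s$ and $(L-c)q_2=s$, it again equals $L-c-s$; the jump of $\psi_{1,N}$ — actually of $\psi_N$ at $v$ is absent, but since $\psi_N(v)=1$ there is no mass issue here — and prices below $L$ are strictly worse. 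So under $Y$ the payoff is $q_2(L-c-s)+(1-q_2)(v-c-s)=q_2(L-c)+(v-c)(1-q_2)-s=(v-c)(1-q_2)$, using $q_2(L-c)=s$. Then I check $N$: against $q_2\psi_N(x)$ (belief $q_2$ now enters correctly) a price $x\in[\tilde p,v)$ gives $(x-c)(1-q_2\psi_N(x))=(v-c)(1-q_2)$ by \eqref{eq:psinmidc1}, and $x\in[L,\tilde p]$ gives $(x-c)(1-q_2\psi_N(x))=(x-c)(1-q_2)+(L-c)q_2$, affine with positive slope, maximized at $\tilde p$ where it equals $(\tilde p-c)(1-q_2)+s=(v-c)(1-q_2)$; prices below $L$ strictly worse. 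Hence both $Y$ and $N$ yield exactly $(v-c)(1-q_2)$, so any mixture does too, and the prescribed strategy attains it — no profitable deviation.

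\textbf{Case~II (primary~$2$).} Here primary~$2$'s belief that primary~$1$ is available is $q_1$. If primary~$2$ selects $N$ and sets $x\in[L,\tilde p]$, it faces primary~$1$'s $Y$-branch only, so the relevant event is ``primary~$1$ available and chose $Y$ and priced $\le x$'', probability $q_1 p_1\psi_Y(x)$; by \eqref{eq:psiymidc1} the payoff is $(x-c)(1-q_1p_1\psi_Y(x))=(x-c)(1-q_1p_1)+ \ldots$ — more precisely it telescopes to $L-c$ via $(L-c)=(\tilde p-c)(1-p_1q_1)$... wait, I should double-check: actually $(x-c)(1-q_1p_1\psi_Y(x)) = (x-c) - q_1p_1(x-c)\psi_Y(x) = (x-c) - (x-c) + (L-c) = L-c$ using the middle branch of \eqref{eq:psiymidc1}. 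Hmm, but the claimed payoff of primary~$2$ is $s/q_2=L-c$, so this is consistent. For $x\in[\tilde p,v)$ primary~$2$ faces both the $Y$ mass $q_1p_1$ and the $N$-distribution $q_1(1-p_1)\psi_{1,N}(x)$, giving $(x-c)(1-q_1p_1-q_1(1-p_1)\psi_{1,N}(x))=L-c$ by \eqref{eq:psi1nmidc1}; the jump of $\psi_{1,N}$ at $v$ means $x=v$ is strictly worse. So under $N$ primary~$2$ gets $L-c=s/q_2$ everywhere on $[L,v)$, and prices below $L$ are strictly worse. Then I check that $Y$ does not help: if primary~$2$ selects $Y$, it pays $s$ and knows primary~$1$'s state; when primary~$1$ is unavailable (prob $1-q_1$) it gets $v-c-s$, when available (prob $q_1$) the best it can do against the same price distributions is $L-c-s$ (same affine computations, now without the extra belief factor), giving total $q_1(L-c-s)+(1-q_1)(v-c-s) = q_1(L-c)+(v-c)(1-q_1)-s$. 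I must verify this is $\le s/q_2$, i.e. $q_1(L-c)+(v-c)(1-q_1)-s\le L-c$, i.e. $(v-c)(1-q_1)-s\le (1-q_1)(L-c)=(1-q_1)s/q_2$, i.e. $s\big(1+\tfrac{1-q_1}{q_2}\big)\ge (v-c)(1-q_1)$, i.e. $s\ge \tfrac{q_2(v-c)(1-q_1)}{1-q_1+q_2}$ — which is exactly the lower hypothesis on $s$. So in this regime $Y$ is (weakly) dominated by $N$ for primary~$2$, completing the argument.

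\textbf{Main obstacle.} The routine part is the telescoping algebra, which the identities handle cleanly; the genuinely delicate points are (i) keeping the belief factors straight — primary~$2$ always uses $q_1$, primary~$1$ uses $q_2$ under $N$ and exact knowledge under $Y$ — since a misplaced $q_i$ silently breaks the cancellations, and (ii) the boundary behavior at $v$ caused by the jump in $\psi_{1,N}(\cdot)$: I must argue that primary~$2$'s payoff at $v$ under $N$ is \emph{strictly} below the supremum, so that $v$ is not used and the supremum $s/q_2$ is attained precisely on $[L,v)$. Finally, the reduction of the exact threshold $s\ge q_2(v-c)(1-q_1)/(1-q_1+q_2)$ to primary~$2$'s indifference-versus-$Y$ inequality is the one place where the specific form of the hypothesis is forced, and I would present that inequality chain explicitly. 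Uniqueness I would defer to the Appendix, as the paper does for its analogues.
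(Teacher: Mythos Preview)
Your proposal is correct and follows essentially the same approach as the paper's proof: the same Case~I/Case~II split, the same sub-cases (payoff under $Y$ with competitor available/unavailable, then payoff under $N$, for each primary), the same affine-in-$x$ computations driven by the explicit forms of $\psi_Y,\psi_{1,N},\psi_N$, the same boundary handling of the jump of $\psi_{1,N}$ at $v$, and the same derivation of the lower threshold $s\ge q_2(v-c)(1-q_1)/(1-q_1+q_2)$ from primary~$2$'s $Y$-vs-$N$ comparison. The only cosmetic difference is that the paper splits each price interval into separate labeled sub-steps, whereas you compress these into a single paragraph per case; the algebra and the key identities $(L-c)q_2=s$ and $(\tilde p-c)(1-q_2)=(v-c)(1-q_2)-s$ are used identically.
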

{\em Discussion}: Since $s<q_2(v-c)(1-q_2)$, hence, $s/q_2<(v-c)(1-q_2)$. Thus, the expected payoff of primary $2$ is lower compared to the expected payoff of primary $1$. The expected payoff of primary $2$ decreases as $s$ decreases.  {\em This negates conventional wisdom which suggests that the expected payoff of a primary should increase when $s$ decreases. }

Note that the support of $\psi_{Y}$ is $[L,\tilde{p}]$ and the support of $\psi_{N}$ is $[L,v]$. Thus, the support of $\psi_{Y}$ and $\psi_{N}$ overlap with each other. Also note that $\psi_{1,N}(\cdot)$ has a jump at $v$, where as $\psi_N(\cdot)$ does not have any jump. Intuitively, since primary $1$ has higher availability probability, primary $1$ selects higher prices with higher probabilities.

$p_1$ increases with decrease in $s$. $L$ decreases when $s$ decreases (by (\ref{eq:l})).  Thus, primaries select their prices from a larger interval as $s$ decreases. Also note that $L$ only depends on $q_2$, it is independent of $q_1$. 

 Also note from (\ref{eq:tildep_unequalavail}) that $\tilde{p}$ increases as $s$ decreases. Thus, $\psi_Y(\cdot)$ has a larger support as $s$ decreases and primary $1$ selects its price from a larger interval   when it selects $Y$, and the channel of primary $2$ is available.
 
Now, we prove the above theorem.
\begin{proof}
First, we show that primary $1$ does not have any profitable deviation when primary $2$ follows the strategy prescribed in Theorem~\ref{thm:ynq1q2} (Case I). Subsequently, we show that primary $2$ also does not have any profitable unilateral deviation when primary $2$ follows the strategy prescribed in Theorem~\ref{thm:ynq1q2} (Case II). 

Case I: First, we show that under $Y$, if the channel of primary $2$ is available, then primary $1$ can attain a maximum expected payoff of $L-c-s$  (Step i). When the channel of primary $2$ is unavailable primary $1$ will attain the payoff $v-c-s$. Thus, it shows that under $Y$, the expected payoff that primary $1$ attains is $(v-c)(1-q_2)$ (Step ii). Subsequently, we show that under $N$, the maximum expected payoff that primary $1$ attains is $(v-c)(1-q_2)$ (Step iii). Finally, we show that primary $1$ achieves the maximum expected payoff under the prescribed strategy (Step iv).

(i): Suppose primary $1$ selects $Y$ and the channel of primary $2$ is available. The at any price $x\in [L,\tilde{p}]$, the expected payoff of primary $1$ under $Y$ is
\begin{align}
(x-c)(1-\psi_N(x))-s=L-c-s\quad \text{from } (\ref{eq:psinmidc1}).
\end{align}
At any price less than or equal to $L$ will fetch a payoff which is strictly less than $L-c-s$. 

At any price $x$ in the interval $[\tilde{p},v]$ the expected payoff of primary $1$ is
\begin{align}
& (x-c)(1-\psi_N(x))-s=(x-c)(1-\dfrac{1}{q_2}(1-\dfrac{(v-c)(1-q_2)}{x-c}))-s\quad \text{from } (\ref{eq:psinmidc1})\nonumber\\
& =(x-c)(1-1/q_2)+(v-c)(1-q_2)/q_2-s\nonumber
\end{align}
The above is maximized at $\tilde{p}$. Putting $x=\tilde{p}$, and from (\ref{eq:tildep_unequalavail}) we obtain
\begin{align}
& \dfrac{(v-c)(1-q_2)-s}{1-q_2}(1-1/q_2)+(v-c)(1-q_2)/q_2-s=s/q_2-s\nonumber\\
& =L-c-s\quad \text{by (\ref{eq:l})}
\end{align}
Hence, the maximum expected payoff attained by primary $1$ is $L-c-s$ and it is attained at any price in the interval $[L,\tilde{p}]$. 

(ii): Now, when the channel of primary $2$ is not available, then the payoff that primary $1$ achieves under $Y$ is $(v-c)-s$. Hence, the maximum expected payoff that primary $1$ can achieve under $Y$ is
\begin{align}
q_2(L-c-s)+(v-c-s)(1-q_2)=(v-c)(1-q_2)
\end{align}
By following the strategy profile, primary $1$ achieves the above payoff under $Y$.

(iii): When primary $1$ selects $N$, then it only knows that the channel of the primary $2$  is available w.p. $q_2$. Thus, under $N$, at any price $x$ in the interval $[\tilde{p},v]$ the expected payoff of primary $1$ is
\begin{align}
(x-c)(1-q_2\psi_{N}(x))=(v-c)(1-q_2)\quad 
\end{align}
Similarly, at any price $x$ in the interval $[L,\tilde{p}]$, the expected payoff of primary $1$ is
\begin{align}
& (x-c)(1-q_2\psi_N(x))=(x-c)(1-q_2(1-\dfrac{L-c}{x-c}))\quad \text{from } (\ref{eq:psinmidc1})\nonumber\\
& =(x-c)(1-q_2)+(L-c)q_2
\end{align}
The above is maximized at $x=\tilde{p}$. Putting the value of $\tilde{p}$ we obtain
\begin{align}
& (\tilde{p}-c)(1-q_2)+(L-c)q_2 =(\tilde{p}-c)(1-q_2)+s\quad \text{by (\ref{eq:l})}\nonumber\\
& =(v-c)(1-q_2)\quad \text{by (\ref{eq:tildep_unequalavail})}.
\end{align}
At any price less than $L$ fetches a payoff of strictly less than $L$ which is less than $(v-c)(1-q_2)$. Hence, the maximum expected payoff that primary $1$ attains under $N$ is $(v-c)(1-q_2)$. This is achieved at any price in the interval $[\tilde{p},v]$.

(iv): We have shown that under $Y$ or under $N$, the maximum expected payoff that primary $1$ can attain is $(v-c)(1-q_2)$. Thus, any randomization between $Y$ and $N$ also yields at most an expected payoff of $(v-c)(1-q_2)$. Primary $1$ attains the above payoff when it follows the prescribed strategy. Hence, primary $1$ does not have any profitable deviation.

Case II: Now, we show that primary $2$ does not have any profitable unilateral deviation. Toward this end we first show that under $N$, the maximum expected payoff that primary $2$ attains is $L-c$ (Step i). Subsequently, we show that under $N$, the primary $2$ attains the maximum expected payoff $L-c$ when it selects price in the interval $[L,v)$ (Step ii). Subsequently, we show that if primary $2$ deviates and selects $Y$, then it can only attain a payoff of at most $L-c$ when $s\geq \dfrac{q_2(v-c)(1-q_1)}{1-q_1+q_2}$ (Step iii).

Step (i): Suppose that primary $2$ selects $N$. Suppose that primary $2$ selects a  price $x$ in the interval $[L,\tilde{p}]$. If the channel of primary $1$ is available, then it selects a price less than or equal to $x$ where $x\in [L,\tilde{p}]$ if primary $1$ selects $Y$ and then selects a price less than or equal to $x$. The above occurs w.p. $p_1\psi_{Y}(x)$. The channel of primary $1$ is available w.p. $q_1$. Hence, by the continuity of $\psi_{1,Y}(\cdot)$ at $x$ the expected payoff of primary $2$ under $N$ is
\begin{align}
(x-c)(1-p_1q_1\psi_{Y}(x))=L-c\quad \text{from } (\ref{eq:psiymidc1}).
\end{align}
Now, suppose that primary $2$ selects a price $x$ from the interval $[\tilde{p},v)$. If the channel of primary $1$ is available, then primary $1$ selects a price less than or equal to $x$ when $x\in [\tilde{p},v)$ if--i) primary $1$ selects $Y$   or ii) primary $1$ selects $N$ and selects a price less than or equal to $x$. (i) occurs with probability $p_1$ and (ii) occurs with probability $(1-p_1)\psi_{1,N}(x)$. The channel of primary $1$ is available w.p. $q_1$. Since $\psi_{1,N}$ is continuous in $[\tilde{p},v)$, the expected payoff of primary $2$ at $x$ is
\begin{align}
(x-c)(1-(1-p_1)q_1\psi_{1,N}(x)-p_1q_1)=L-c\quad \text{from } (\ref{eq:psi1nmidc1}).
\end{align}
Since $\psi_{1,N}(\cdot)$ has a jump at $v$,  the expected payoff at $v$ is strictly less than the expected payoff just below $v$. On the other hand a price less than $L$ will fetch a payoff strictly less than $L-c$. Hence, the maximum expected payoff that primary $2$ can attain under $N$ is $L-c$.

Step ii: Primary $2$ attains a payoff of $L-c$ under $N$ and it is attained only at prices in the interval $[L,v)$. 

Step (iii): Now, we show that if primary $2$ selects $Y$, then it will not attain a payoff higher than $s/q_2$. Towards this end, we show when the channel of primary $1$ is available, then the maximum expected payoff attained by primary $2$ is $L-c-s$ (Step iii.a). When the channel of primary $1$ is unavailable, then the payoff attained by primary $2$ is $v-c-s$. Subsequently, we show that the maximum expected payoff attained under $Y$ is at most $L-c$ when $s\geq \dfrac{q_2(v-c)(1-q_1)}{1-q_1+q_2}$ (Step iii.b.). This will complete the proof.  

Step iii.a: When the channel of primary $1$ is available then the expected payoff of primary $2$ at any price $x$ in the interval $[L,\tilde{p}]$ is 
\begin{align}
(x-c)(1-p_1\psi_{Y}(x))-s=(x-c)(1-1/q_1)+(L-c)/q_1-s\quad \text{from } (\ref{eq:psiymidc1}).
\end{align}
The above is maximized at $x=L$ since the co-efficient of $x$ is negative, hence, the maximum value is $L-c-s$.

Similarly,   the expected payoff of primary $2$ at any price $x$ in the interval $[\tilde{p},v)$ is
\begin{align}
& (x-c)(1-(1-p_1)\psi_{1,N}(x)-p_1)-s=(x-c)(1-1/q_1)+(L-c)/q_1-s\quad \text{from } (\ref{eq:psi1nmidc1})\nonumber\\
& <(L-c)-s\quad \text{since } \tilde{p}>L.
\end{align}
The payoff at a price less than $L$ fetches a payoff which is strictly less than $L-c-s$. 

Hence, the maximum expected payoff attained by primary $2$ when the channel of primary $1$ is available is $L-c-s$.

Step iii.b: When the channel of primary $1$ is unavailable, then the payoff that primary $2$ attains is $(v-c)-s$. Hence, the expected payoff of primary $2$ under $Y$ is
\begin{align}
& q_1(L-c-s)+(v-c-s)(1-q_1) =\dfrac{q_1s}{q_2}+(v-c)(1-q_1)-s\nonumber\\
& =(v-c)(1-q_1)+\dfrac{s(q_1-q_2)}{q_2}\nonumber\\
& \leq s/q_2\quad \text{as }  q_2(v-c)(1-q_1)/(1-q_1+q_2)\leq s\nonumber\\
& = L-c
\end{align}
But primary $2$ attains $L-c$ under $N$ following the strategy $\psi_N(\cdot)$. Thus, primary $2$ does not have any profitable unilateral deviation. Hence, the result follows.
\end{proof}
\subsection{Low $s$}
Now, we show that when $s<\dfrac{q_2(v-c)(1-q_1)}{1-q_1+q_2}$ then there exists an NE where both the primaries randomize between $Y$ and $N$. Again we first introduce some pricing distributions. 
%\begin{align}
%\psi_{1,Y}(x)=\begin{cases}
%0, \quad x<L\nonumber\\
%\dfrac{1}{p_1}(1-\dfrac{L-c}{x-c})\quad L\leq x<\tilde{p}_2\nonumber\\
%\dfrac{1}{p_1q_1}(1-\dfrac{(v-c)(1-q_1)+s(q_1-q_2)/q_2}{x-c})\quad \tilde{p}_2\leq x\leq \tilde{p}_1\nonumber\\
%1, \quad x>\tilde{p}_1
%\end{cases}
%\end{align}
\begin{align}\label{eq:psi1ylowc1}
\psi_{1,Y}(x)=& 
 0, \quad x<L\nonumber\\
& \dfrac{1}{p_1}(1-\dfrac{L-c}{x-c})\quad L\leq x<\tilde{p}_2\nonumber\\
& \dfrac{1}{p_1q_1}(1-\dfrac{\bar{p}-c}{x-c})\quad \tilde{p}_2\leq x\leq \tilde{p}_1\nonumber\\
& 1, \quad x>\tilde{p}_1
\end{align}
%\begin{align}
%\psi_{2,Y}(x)=\begin{cases}
%0,\quad x<L\nonumber\\
%\dfrac{1}{p_2}(1-\dfrac{L-c}{x-c})\quad L\leq x\leq \tilde{p}_2\nonumber\\
%1,\quad x>\tilde{p}_2
%\end{cases}
%\end{align}
\begin{align}\label{eq:psi2ylowc1}
\psi_{2,Y}(x)=& 
0,\quad x<L\nonumber\\
& \dfrac{1}{p_2}(1-\dfrac{L-c}{x-c})\quad L\leq x\leq \tilde{p}_2\nonumber\\
& 1,\quad x>\tilde{p}_2
\end{align}
%\begin{align}
%\psi_{1,N}(x)=\begin{cases}
%0,\quad x<\tilde{p}_1\nonumber\\
%\dfrac{1}{(1-p_1)q_1}(1-\dfrac{(v-c)(1-q_1)+s(q_1-q_2)/q_2}{x-c}-p_1q_1)\quad \tilde{p}_1\leq x<v\nonumber\\
%\dfrac{s(q_1-q_2)}{(1-p_1)q_1q_2}\quad x=v\nonumber\\
%1,\quad x>v
%\end{cases}
%\end{align}
\begin{align}\label{eq:psi1nlowc1}
\psi_{1,N}(x)=& 0,\quad x<\tilde{p}_1\nonumber\\
& \dfrac{1}{(1-p_1)q_1}(1-\dfrac{\bar{p}-c}{x-c}-p_1q_1)\quad \tilde{p}_1\leq x<v\nonumber\\
& 1,\quad x\geq v
\end{align}
%\begin{align}
%\psi_{2,N}(x)=\begin{cases}
%0,\quad x<\tilde{p}_2\nonumber\\
%\dfrac{1}{1-p_2}(1-\dfrac{L-c}{x-c}-p_2),\quad \tilde{p}_2\leq x<\tilde{p}_1\nonumber\\
%\dfrac{1}{(1-p_2)q_2}(1-\dfrac{(v-c)(1-q_2)}{x-c}-p_2q_2)\quad \tilde{p}_1\leq x\leq v\nonumber\\
%1,\quad x>v
%\end{cases}
%\end{align}
\begin{align}\label{eq:psi2nlowc1}
\psi_{2,N}(x)=&
0,\quad x<\tilde{p}_2\nonumber\\
& \dfrac{1}{1-p_2}(1-\dfrac{L-c}{x-c}-p_2),\quad \tilde{p}_2\leq x<\tilde{p}_1\nonumber\\
& \dfrac{1}{(1-p_2)q_2}(1-\dfrac{(v-c)(1-q_2)}{x-c}-p_2q_2)\quad \tilde{p}_1\leq x\leq v\nonumber\\
& 1,\quad x>v
\end{align}
where 
\begin{eqnarray}
  \bar{p}-c=(v-c)(1-q_1)+s(q_1-q_2)/q_2\label{eq:barp_avail}\\
 p_1=\dfrac{q_1(v-c)(1-q_2)-s(q_1/q_2-q_1+q_2)}{q_1(v-c)(1-q_2)-q_1s}\label{eq:p1q1q2}\\
p_2=\dfrac{q_2(v-c)(1-q_1)-s(1-q_1+q_2)}{q_2(v-c)(1-q_1)-q_2s}\label{eq:pq1q2}\\
L-c=s/q_2\nonumber\\
 \tilde{p}_2-c=\dfrac{(v-c)(1-q_1)+s(q_1-q_2)/q_2}{1-p_2q_1},\quad 
 \tilde{p}_1-c=\dfrac{(v-c)(1-q_1)+s(q_1-q_2)/q_2}{1-p_1q_1}\label{eq:lunequalavail}
\end{eqnarray}

First, we show some results which we use throughout this section. Replacing the value of $p_2$  in $\tilde{p}_2$ we have
\begin{align}\label{eq:tildep2vss}
\tilde{p}_2-c& =\dfrac{[q_2(v-c)(1-q_1)-q_2s][(v-c)(1-q_1)+s(q_1-q_2)/q_2]}{q_2(v-c)(1-q_1)^2+s(q_1-q_1^2+q_1q_2-q_2)}\nonumber\\
& =\dfrac{[q_2(v-c)(1-q_1)-q_2s][(v-c)(1-q_1)+s(q_1-q_2)/q_2]}{q_2(v-c)(1-q_1)^2+sq_2(1-q_1)(q_1-q_2)/q_2}=\dfrac{(v-c)(1-q_1)-s}{(1-q_1)}
\end{align}
\begin{align}\label{eq:tildep2avail}
\dfrac{L-c}{1-p_2}& =\dfrac{s}{q_2(1-p_2)}=\dfrac{((v-c)(1-q_1)-s)s}{s(1-q_1)}\nonumber\\
& =\dfrac{(v-c)(1-q_1)-s}{1-q_1}
=\tilde{p}_2-c
\end{align}
Also note from (\ref{eq:lunequalavail}) and (\ref{eq:p1q1q2}) that
\begin{align}\label{eq:lmixed}
& \tilde{p}_1-c=\dfrac{[(v-c)(1-q_1)+s(q_1-q_2)/q_2][q_1(v-c)(1-q_2)-q_1s]}{q_1(v-c)(1-q_2)-sq_1-q_1^2(v-c)(1-q_2)+sq_1^2/q_2-sq_1^2+sq_1q_2}\nonumber\\
& =\dfrac{[(v-c)(1-q_1)+s(q_1-q_2)/q_2][q_1(v-c)(1-q_2)-q_1s]}{q_1(1-q_2)[(v-c)(1-q_1)+s(q_1-q_2)/q_2]}
 =\dfrac{(v-c)(1-q_2)-s}{1-q_2}
\end{align}
$\psi_{1,Y}(\cdot), \psi_{2,Y}(\cdot)$ and $\psi_{2,N}(\cdot)$ are continuous. However, $\psi_{1,N}(\cdot)$ is not continuous.
\begin{obs}
$\psi_{1,N}(\cdot)$ is continuous except at $v$. 
\end{obs}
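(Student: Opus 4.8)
The plan is to verify continuity of $\psi_{1,N}(\cdot)$ piece by piece and then exhibit a genuine jump at $v$. Away from the two transition points $\tilde{p}_1$ and $v$ continuity is immediate: on $(-\infty,\tilde{p}_1)$ and on $(v,\infty)$ the function is constant, while on the open interval $(\tilde{p}_1,v)$ it equals $\frac{1}{(1-p_1)q_1}\bigl(1-\frac{\bar{p}-c}{x-c}-p_1q_1\bigr)$, a rational function whose denominator $x-c$ does not vanish there. So only $x=\tilde{p}_1$ and $x=v$ need attention.

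First I would check continuity at $\tilde{p}_1$: the left limit is $0$ since $\psi_{1,N}\equiv 0$ for $x<\tilde{p}_1$, so it suffices to show the middle branch vanishes at $x=\tilde{p}_1$. Combining (\ref{eq:barp_avail}) and (\ref{eq:lunequalavail}) yields the identity $\tilde{p}_1-c=\frac{\bar{p}-c}{1-p_1q_1}$, hence $\frac{\bar{p}-c}{\tilde{p}_1-c}=1-p_1q_1$ and the bracket evaluates to $1-(1-p_1q_1)-p_1q_1=0$. Thus $\psi_{1,N}$ is continuous at $\tilde{p}_1$.

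Finally I would establish the discontinuity at $v$. The left limit is
\[
\lim_{x\uparrow v}\psi_{1,N}(x)=\frac{1}{(1-p_1)q_1}\Bigl(1-\frac{\bar{p}-c}{v-c}-p_1q_1\Bigr),
\]
which is strictly less than $\psi_{1,N}(v)=1$ exactly when $\bar{p}-c>(v-c)(1-q_1)$. By (\ref{eq:barp_avail}), $\bar{p}-c=(v-c)(1-q_1)+s(q_1-q_2)/q_2$, and since $q_1>q_2$ and $s>0$ the additional term is strictly positive, so the required strict inequality holds; hence $\psi_{1,N}$ jumps up by $1-\frac{1}{(1-p_1)q_1}\bigl(1-\frac{\bar{p}-c}{v-c}-p_1q_1\bigr)>0$ at $v$. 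There is no real obstacle here — everything reduces to the algebraic identity $\tilde{p}_1-c=(\bar{p}-c)/(1-p_1q_1)$; the only place an assumption enters is the jump at $v$, which uses only $q_1>q_2$ and $s>0$, and (unlike the earlier analogous observation for the other $s$-regime) the lower bound on $s$ is not needed.
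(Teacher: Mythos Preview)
Your proof is correct and follows essentially the same approach as the paper's: both verify continuity away from $v$ and exhibit the jump at $v$. The paper is terser, simply asserting continuity elsewhere and computing the jump size as $\dfrac{s(q_1-q_2)}{(v-c)(1-p_1)q_1q_2}$, whereas you explicitly check continuity at $\tilde{p}_1$ via the identity $\tilde{p}_1-c=(\bar{p}-c)/(1-p_1q_1)$ and argue positivity of the jump directly; your closing remark that only $q_1>q_2$ and $s>0$ are needed (not the lower bound on $s$ from the intermediate regime) is also correct.
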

\begin{proof}
It is easy to discern the continuity at every other point except $v$. Note from (\ref{eq:psi1nlowc1}) $\psi_{1,N}(\cdot)$ has a jump of $ \dfrac{s(q_1-q_2)}{(v-c)(1-p_1)q_1q_2}$ at $v$. 
\end{proof}
\begin{obs}
$\psi_{1,Y}(\cdot)$ is continuous.
\end{obs}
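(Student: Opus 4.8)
The plan is to verify continuity junction by junction. On the open regions $x<L$, $L<x<\tilde p_2$, $\tilde p_2<x<\tilde p_1$, and $x>\tilde p_1$ the function in (\ref{eq:psi1ylowc1}) is either constant or a rational function of $x$ whose only pole is at $x=c$, which lies strictly below $L$; hence $\psi_{1,Y}$ is continuous there, and it suffices to check the three breakpoints $x=L$, $x=\tilde p_2$, $x=\tilde p_1$. At $x=L$ the expression $\frac{1}{p_1}\bigl(1-\frac{L-c}{x-c}\bigr)$ evaluates to $0$, matching the value $0$ for $x<L$, so this breakpoint is immediate and the work is concentrated at $\tilde p_2$ and $\tilde p_1$.

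For $x=\tilde p_1$ I would invoke the definition of $\tilde p_1-c$ in (\ref{eq:lunequalavail}). Its numerator is exactly $\bar p-c$ as given in (\ref{eq:barp_avail}), so $\frac{\bar p-c}{\tilde p_1-c}=1-p_1q_1$; consequently the left limit of $\psi_{1,Y}$ at $\tilde p_1$ is $\frac{1}{p_1q_1}\bigl(1-(1-p_1q_1)\bigr)=1$, which coincides with the value $1$ taken for $x>\tilde p_1$. Hence $\psi_{1,Y}$ is continuous at $\tilde p_1$.

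For $x=\tilde p_2$ I would compare the two one-sided limits. The left limit is $\frac{1}{p_1}\bigl(1-\frac{L-c}{\tilde p_2-c}\bigr)$; using $L-c=s/q_2$ and the already-derived identity (\ref{eq:tildep2avail}), namely $\tilde p_2-c=\frac{L-c}{1-p_2}$, this becomes $\frac{1}{p_1}\bigl(1-(1-p_2)\bigr)=p_2/p_1$. The right limit is $\frac{1}{p_1q_1}\bigl(1-\frac{\bar p-c}{\tilde p_2-c}\bigr)$; since the numerator of $\tilde p_2-c$ in (\ref{eq:lunequalavail}) is again $\bar p-c$, we get $\frac{\bar p-c}{\tilde p_2-c}=1-p_2q_1$ and the right limit is $\frac{1}{p_1q_1}\bigl(1-(1-p_2q_1)\bigr)=p_2/p_1$. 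The two limits agree, and they coincide with the value of $\psi_{1,Y}$ at $\tilde p_2$ (the right-hand branch evaluated there), so $\psi_{1,Y}$ is continuous at $\tilde p_2$, completing the proof.

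There is no genuine obstacle here; the proof is a bookkeeping exercise. The only points requiring care are: confirming that $\bar p-c$ is literally the common numerator of the expressions for $\tilde p_1-c$ and $\tilde p_2-c$ in (\ref{eq:lunequalavail}), so that the ratios $\frac{\bar p-c}{\tilde p_i-c}$ collapse to $1-p_iq_1$; and citing (\ref{eq:tildep2avail}) rather than re-deriving it. One should also observe, from (\ref{eq:p1q1q2})--(\ref{eq:pq1q2}) under the standing hypothesis $s<\frac{q_2(v-c)(1-q_1)}{1-q_1+q_2}$ together with $q_1>q_2>0$, that $0<p_1,p_2<1$, so the denominators $p_1$ and $p_1q_1$ are nonzero and the intervals $[L,\tilde p_2]$ and $[\tilde p_2,\tilde p_1]$ are nondegenerate; this is what makes the piecewise description well posed in the first place.
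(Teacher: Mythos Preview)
Your proof is correct and follows essentially the same approach as the paper: the paper checks only the breakpoint $\tilde{p}_2$ (declaring the others ``easy to verify''), computing the left limit as $p_2/p_1$ via (\ref{eq:tildep2avail}) and the right limit as $p_2/p_1$ via (\ref{eq:lunequalavail}), exactly as you do. Your treatment is simply more thorough in explicitly handling $x=L$ and $x=\tilde{p}_1$ and in noting the well-posedness of the piecewise definition.
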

\begin{proof}
It is easy to verify that $\psi_{1,Y}(\cdot)$ (cf. (\ref{eq:psi1ylowc1})) is continuous everywhere except at $\tilde{p}_2$. We now show that it is also continuous at $\tilde{p}_2$. The left hand limit at $\tilde{p}_2$ is 
\begin{align}
\dfrac{1}{p_1}(1-\dfrac{L-c}{\tilde{p}_2-c})& =\dfrac{1}{p_1}(1-(1-p_2))\quad \text{from }(\ref{eq:tildep2avail})\nonumber\\
& =\dfrac{p_2}{p_1}
\end{align}
The right hand limit (cf.(\ref{eq:psi1ylowc1})) is 
\begin{align}
\dfrac{1}{p_1q_1}(1-\dfrac{\bar{p}-c}{\tilde{p}_2-c})& =\dfrac{1}{p_1q_1}(1-1+p_2q_1)\quad \text{from } (\ref{eq:lunequalavail})\nonumber\\
& =\dfrac{p_2}{p_1}
\end{align}
which is equal to the left hand limit. 
\end{proof}
\begin{obs}
$\psi_{2,N}(\cdot)$ is continuous.
\end{obs}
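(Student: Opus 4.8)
The plan is to verify continuity of $\psi_{2,N}(\cdot)$, defined in (\ref{eq:psi2nlowc1}), at the three candidate break-points of its piecewise definition, namely $x=\tilde{p}_2$, $x=\tilde{p}_1$ and $x=v$. At every other point continuity is immediate, since on each sub-interval $\psi_{2,N}$ is a rational function whose only possible pole is at $x=c$, and $x-c$ stays bounded away from $0$ there (indeed $\tilde{p}_2-c=\frac{(v-c)(1-q_1)-s}{1-q_1}>0$ because $s<\frac{q_2(v-c)(1-q_1)}{1-q_1+q_2}<(v-c)(1-q_1)$; see (\ref{eq:tildep2vss})).

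First I would check $x=\tilde{p}_2$: the left-hand limit is $0$, while the value of the middle branch at $\tilde{p}_2$ is $\frac{1}{1-p_2}\bigl(1-\frac{L-c}{\tilde{p}_2-c}-p_2\bigr)$. Substituting $\frac{L-c}{\tilde{p}_2-c}=1-p_2$, which is exactly identity (\ref{eq:tildep2avail}) together with $L-c=s/q_2$, this expression collapses to $0$, so $\psi_{2,N}$ is continuous at $\tilde{p}_2$.

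Next, at $x=\tilde{p}_1$ I would compare the left-hand limit $\frac{1}{1-p_2}\bigl(1-\frac{L-c}{\tilde{p}_1-c}-p_2\bigr)$ coming from the middle branch with the value $\frac{1}{(1-p_2)q_2}\bigl(1-\frac{(v-c)(1-q_2)}{\tilde{p}_1-c}-p_2q_2\bigr)$ coming from the top branch. Plugging in $\tilde{p}_1-c=\frac{(v-c)(1-q_2)-s}{1-q_2}$ from (\ref{eq:lmixed}) and $L-c=s/q_2$, then cross-multiplying by $(1-p_2)q_2$, the required equality reduces after the $-p_2q_2$ terms cancel to the identity $q_2-1=-(1-q_2)$, which holds; hence the two one-sided values coincide. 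Finally, at $x=v$ the left-hand limit of the top branch is $\frac{1}{(1-p_2)q_2}\bigl(1-(1-q_2)-p_2q_2\bigr)=\frac{q_2(1-p_2)}{(1-p_2)q_2}=1$, matching the constant value $1$ taken for $x>v$.

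All three checks are routine algebraic verifications, so there is no conceptual obstacle; the only thing demanding care is the bookkeeping of the derived identities (\ref{eq:tildep2avail}), (\ref{eq:lmixed}) and $L-c=s/q_2$, so I would record each substitution explicitly before simplifying, exactly as was done in the preceding observations for $\psi_{1,Y}(\cdot)$ and $\psi_{1,N}(\cdot)$.
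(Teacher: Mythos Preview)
Your proposal is correct and follows essentially the same approach as the paper: verify continuity at the piecewise break-points by substituting the derived identities $\tilde{p}_1-c=\frac{(v-c)(1-q_2)-s}{1-q_2}$, $\frac{L-c}{\tilde{p}_2-c}=1-p_2$, and $L-c=s/q_2$. The paper only spells out the verification at $x=\tilde{p}_1$ (declaring the other points ``easy to verify''), while you are more thorough in also checking $\tilde{p}_2$ and $v$ explicitly; the algebraic route at $\tilde{p}_1$ differs only cosmetically from the paper's, which rewrites $\frac{(v-c)(1-q_2)}{\tilde{p}_1-c}=(1-q_2)+\frac{s}{\tilde{p}_1-c}$ first and then shows both one-sided values equal $1-\frac{L-c}{(1-p_2)(\tilde{p}_1-c)}$.
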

\begin{proof}
It is easy to verify that $\psi_{2,N}(\cdot)$ (cf. (\ref{eq:psi2nlowc1})) is continuous everywhere except at $\tilde{p}_1$. Now, we also show that $\psi_{2,N}(\cdot)$ is continuous at $\tilde{p}_1$. 
First note from (\ref{eq:lmixed}) that 
\begin{align}\label{eq:ide}
\dfrac{(v-c)(1-q_2)}{(1-q_2)(\tilde{p}_1-c)}=1+\dfrac{s}{(1-q_2)(\tilde{p}_1-c)}
\end{align}
The right hand limit at $\tilde{p}_1$ is 
\begin{align}
& \dfrac{1}{(1-p_2)q_2}(1-\dfrac{(v-c)(1-q_2)}{\tilde{p}_1-c}-p_2q_2)=\dfrac{1}{(1-p_2)q_2}(1-(1-q_2)-\dfrac{s}{\tilde{p}_1-c}-p_2q_2)\quad \text{from } (\ref{eq:ide})\nonumber\\
& =1-\dfrac{s}{(1-p_2)q_2(\tilde{p}_1-c)}
=1-\dfrac{L-c}{(1-p_2)(\tilde{p}_1-c)}\quad \text{since } L-c=s/q_2\nonumber\\
& =\dfrac{1}{(1-p_2)}(1-\dfrac{L-c}{\tilde{p}_1-c}-p_2)
\end{align}
which is the left hand limit (cf.(\ref{eq:psi2nlowc1})). Hence, the result follows. 
\end{proof}
Note from (\ref{eq:p1q1q2}) and (\ref{eq:pq1q2}) that $p_i$ $i=1,2$ both depend on $q_1$ and $q_2$.
Next, we show that $p_1>p_2$.
\begin{lem}\label{lm:p1p2}
$p_1>p_2$ when $q_1>q_2$.
\end{lem}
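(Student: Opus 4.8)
The plan is to prove the equivalent inequality $1-p_1<1-p_2$ after first putting both complements in closed form. Subtracting the numerator of $p_i$ from its denominator in (\ref{eq:p1q1q2}) and (\ref{eq:pq1q2}) and simplifying, one gets
\[
1-p_1=\frac{s\,(q_1-2q_1q_2+q_2^2)}{q_1q_2\bigl[(v-c)(1-q_2)-s\bigr]},\qquad
1-p_2=\frac{s\,(1-q_1)}{q_2\bigl[(v-c)(1-q_1)-s\bigr]}.
\]
Before comparing these I would record the positivity facts that make the manipulation legitimate. We are in the ``Low $s$'' regime $0<s<q_2(v-c)(1-q_1)/(1-q_1+q_2)$, and since $q_1<1$ we have $q_2/(1-q_1+q_2)<1$, so $s<(v-c)(1-q_1)\le(v-c)(1-q_2)$; hence both bracketed denominators above are strictly positive. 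Also a one-line check shows $q_1-2q_1q_2+q_2^2=q_1(1-2q_2)+q_2^2>0$ (when $q_2\ge 1/2$, use $q_2^2-(2q_2-1)=(1-q_2)^2>0$ together with $q_1<1$), which confirms $1-p_1>0$ as it should.

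Cancelling the common positive factor $s/q_2$, and then cross-multiplying by the two positive denominators, the inequality $1-p_1<1-p_2$ becomes
\[
(q_1-2q_1q_2+q_2^2)\,A<q_1(1-q_1)\,B,\qquad A:=(v-c)(1-q_1)-s,\quad B:=(v-c)(1-q_2)-s.
\]
Now I would substitute $B=A+(v-c)(q_1-q_2)$ and use the identity $q_1(1-q_1)-(q_1-2q_1q_2+q_2^2)=-(q_1-q_2)^2$, which makes the difference $q_1(1-q_1)B-(q_1-2q_1q_2+q_2^2)A$ collapse to $(q_1-q_2)\bigl[q_1(1-q_1)(v-c)-(q_1-q_2)A\bigr]$. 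Since $q_1>q_2$, it remains only to verify $q_1(1-q_1)(v-c)>(q_1-q_2)A$, and this is immediate: $A<(v-c)(1-q_1)$ because $s>0$, so $(q_1-q_2)A<(q_1-q_2)(v-c)(1-q_1)<q_1(v-c)(1-q_1)=q_1(1-q_1)(v-c)$, where the last strict inequality uses $q_1-q_2<q_1$.

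The statement is a pure inequality among expressions already derived, so there is no deep obstacle; the only real care needed is bookkeeping — correctly simplifying $1-p_1$ and $1-p_2$, and tracking signs before cross-multiplying (in particular that $A>0$, which is exactly where the standing range assumption on $s$ from this subsection enters) and not cancelling the factor $q_1-q_2>0$ prematurely. Everything else is the two-line algebraic reduction above.
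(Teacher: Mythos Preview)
Your proof is correct. Both your argument and the paper's are algebraic verifications, but the organization differs. The paper cross-multiplies the inequality $p_1>p_2$ directly using the raw fractions in (\ref{eq:p1q1q2})--(\ref{eq:pq1q2}), expands the full product, and then collects terms until what remains is $(v-c)s(q_1-q_2)q_2(1-q_1)>0$ together with an $s^2(q_1-q_2)^2$ term. You instead first put $1-p_1$ and $1-p_2$ in closed form, which already removes most of the clutter, and then your substitution $B=A+(v-c)(q_1-q_2)$ combined with the identity $q_1(1-q_1)-(q_1-2q_1q_2+q_2^2)=-(q_1-q_2)^2$ lets you factor the difference as $(q_1-q_2)\bigl[q_1(1-q_1)(v-c)-(q_1-q_2)A\bigr]$ in one step. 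This is tidier and makes the role of the strict inequality $s>0$ (needed for $A<(v-c)(1-q_1)$) more visible; the paper's brute-force expansion buries that dependence inside a page of term-by-term cancellations. Your careful check that $A,B>0$ before cross-multiplying, using the standing bound $s<q_2(v-c)(1-q_1)/(1-q_1+q_2)$ of this subsection, is exactly the sign bookkeeping the paper glosses over.
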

\begin{proof}
 From (\ref{eq:p1q1q2}) and (\ref{eq:pq1q2}), we need to show that
\begin{align}
\dfrac{q_1(v-c)(1-q_2)-s(q_1/q_2-q_1+q_2)}{q_1(v-c)(1-q_2)-q_1s}>\dfrac{q_2(v-c)(1-q_1)-s(1-q_1+q_2)}{q_2(v-c)(1-q_1)-q_2s}
\end{align}
By cross multiplication it is sufficient to show that
\begin{align}
& q_1q_2(v-c)^2(1-q_2)(1-q_1)-q_1q_2s(v-c)(1-q_2)-s(q_1/q_2-q_1+q_2)q_2(v-c)(1-q_1)+\nonumber\\& q_2s^2(q_1/q_2-q_1+q_2)>q_1q_2(v-c)^2(1-q_1)(1-q_2)-q_1q_2(v-c)(1-q_1)s-\nonumber\\& s(1-q_1+q_2)q_1(v-c)(1-q_2)+ s^2(1-q_1+q_2)q_1.\nonumber\\
& \text{Or,} (v-c)sq_1q_2(q_2-q_1)-s(v-c)[(q_1-q_1q_2+q_2^2)(1-q_1)-(q_1-q_1^2+q_1q_2)(1-q_2)]\nonumber\\
& +s^2(q_2^2-q_1q_2-q_1q_2+q_1^2)>0\nonumber
\end{align}
The last expression is $s^2(q_1-q_2)^2$ which is always positive when $q_1>q_2$. Thus, it is sufficient to show that
\begin{align}
& (v-c)sq_1q_2(q_2-q_1)-\nonumber\\&  s(v-c)[q_1-q_1q_2+q_2^2-q_1^2+q_1^2q_2-q_1q_2^2-q_1+q_1^2-q_1q_2+q_1q_2-q_1^2q_2+q_1q_2^2]>0\nonumber\\
& \text{Or, }(v-c)sq_1q_2(q_2-q_1)-s(v-c)[q_2^2-q_1q_2]>0\nonumber\\
& (v-c)s(q_1-q_2)(q_2-q_1q_2)>0
\end{align}
as $q_1>q_2$, the above expression is indeed positive. Hence, the result follows.
\end{proof}
Now, we are ready to state the main result of this section. 
\begin{theorem}\label{thm:yyq1q2}
Consider the following strategy profile: Primary $i$ selects $Y$ w.p. $p_i$ (cf. (\ref{eq:p1q1q2})\& (\ref{eq:pq1q2}))and $N$ w.p. $1-p_i$. While selecting $Y$, primary $i=1,2$ selects its price according to $\psi_{i,Y}(\cdot)$ when the channel of primary $j\neq i$ is available and selects $v$ when the channel of primary $j$ is unavailable. While selecting $N$, primary $i$ selects its price according to $\psi_{i,N}(\cdot)$. 

The above strategy profile is an NE when $s< q_2(v-c)(1-q_1)/(1-q_1+q_2)$. The expected payoff of primary $1$ is $(v-c)(1-q_2)$ and the expected payoff of primary $2$ is $(v-c)(1-q_1)+s(q_1-q_2)/q_2$.
\end{theorem}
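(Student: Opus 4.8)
The plan is to verify that neither primary has a profitable unilateral deviation from the prescribed profile, following the template of the proofs of Theorems~\ref{thm:bothmixedunequalcost} and \ref{thm:ynq1q2}. Two structural facts drive everything: the breakpoints satisfy $L<\tilde p_2<\tilde p_1<v$ with $L-c=s/q_2$, and (as recorded in the observations above) $\psi_{1,Y}$, $\psi_{2,Y}$, $\psi_{2,N}$ are continuous while $\psi_{1,N}$ is continuous except for a jump at $v$. I will repeatedly collapse endpoint values to the three numbers $L-c-s$, $(v-c)(1-q_2)$, and $\bar p-c$ using the identities $\tilde p_2-c=\frac{(v-c)(1-q_1)-s}{1-q_1}$ (\ref{eq:tildep2vss}), $\tilde p_1-c=\frac{(v-c)(1-q_2)-s}{1-q_2}$ (\ref{eq:lmixed}), $\tilde p_2-c=\frac{L-c}{1-p_2}$ (\ref{eq:tildep2avail}), $\bar p-c=(v-c)(1-q_1)+s(q_1-q_2)/q_2$ (\ref{eq:barp_avail}), together with $q_2(L-c)=s$; Lemma~\ref{lm:p1p2} is needed only to confirm $\tilde p_2<\tilde p_1$, so the supports nest as claimed.

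\textit{Case I (primary $1$).} If primary $1$ plays $Y$ it learns primary $2$'s state. When primary $2$ is available, from primary $1$'s viewpoint primary $2$ prices by $\psi_{2,Y}$ w.p.\ $p_2$ and $\psi_{2,N}$ w.p.\ $1-p_2$; substituting (\ref{eq:psi2ylowc1})--(\ref{eq:psi2nlowc1}) into $(x-c)\bigl(1-p_2\psi_{2,Y}(x)-(1-p_2)\psi_{2,N}(x)\bigr)-s$ makes the undercutting mass telescope to $1-\frac{L-c}{x-c}$ on $[L,\tilde p_1]$, giving the constant payoff $L-c-s$ there, while on $[\tilde p_1,v]$ it becomes $\frac1{q_2}\bigl(1-\frac{(v-c)(1-q_2)}{x-c}\bigr)$, so the payoff is affine in $(x-c)$ with slope $1-1/q_2<0$ and is bounded by its value $L-c-s$ at $\tilde p_1$ (prices below $L$ are dominated by $L$). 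When primary $2$ is unavailable, pricing at $v$ gives $v-c-s$. Averaging with weights $q_2,1-q_2$ yields $q_2(L-c)+(1-q_2)(v-c)-s=(v-c)(1-q_2)$. If primary $1$ plays $N$, primary $2$ is available only w.p.\ $q_2$ and, when available, prices as above; the same substitutions give exactly $(v-c)(1-q_2)$ for every $x\in[\tilde p_1,v]$ and every $x\in[\tilde p_2,\tilde p_1]$, while over $[L,\tilde p_2]$ the payoff $(x-c)(1-q_2)+q_2(L-c)$ is increasing and maximized at $\tilde p_2$ with value $(\tilde p_2-c)(1-q_2)+q_2(L-c)$, which is strictly below $(v-c)(1-q_2)$ precisely because $q_1>q_2$. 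Hence under either decision primary $1$'s best payoff is $(v-c)(1-q_2)$, every mixture achieves it, and the prescribed strategy achieves it; no deviation is profitable.

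\textit{Case II (primary $2$).} The argument is symmetric but carries the asymmetries behind the statement. If primary $2$ plays $Y$ and primary $1$ is available, primary $1$ prices by $\psi_{1,Y}$ w.p.\ $p_1$ and $\psi_{1,N}$ w.p.\ $1-p_1$; (\ref{eq:psi1ylowc1})--(\ref{eq:psi1nlowc1}) give the constant payoff $L-c-s$ on $[L,\tilde p_2]$, while on $[\tilde p_2,v)$ the undercutting mass is $\frac1{q_1}\bigl(1-\frac{\bar p-c}{x-c}\bigr)$, making the payoff affine with slope $1-1/q_1<0$ and hence bounded by its value at $\tilde p_2$, which is $L-c-s$ by continuity of $\psi_{1,Y}$; the jump of $\psi_{1,N}$ at $v$ only lowers the payoff at $v$. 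With primary $1$ unavailable, pricing at $v$ gives $v-c-s$, so averaging with weights $q_1,1-q_1$ yields $q_1(L-c)+(1-q_1)(v-c)-s=(v-c)(1-q_1)+s(q_1-q_2)/q_2=\bar p-c$. If primary $2$ plays $N$, primary $1$ is available w.p.\ $q_1$; the payoff equals the constant $\bar p-c$ for every $x\in[\tilde p_2,v)$, is strictly smaller at $v$ (using the jump and $(v-c)(1-q_1)<\bar p-c$), and over $[L,\tilde p_2]$ equals $(x-c)(1-q_1)+q_1(L-c)$, maximized at $\tilde p_2$ with value $\bar p-c$. Thus primary $2$'s best payoff under $Y$ or $N$ is $\bar p-c=(v-c)(1-q_1)+s(q_1-q_2)/q_2$, attained by the prescribed strategy, so no deviation helps and the profile is an NE.

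The main obstacle is not a single deep point but the case bookkeeping: for each candidate price $x$ one must correctly list which components of the opponent's mixture (its $Y$/$N$ choice, and which branch of the resulting price law on $[L,\tilde p_2]$, $[\tilde p_2,\tilde p_1]$, $[\tilde p_1,v)$, $\{v\}$) can undercut $x$, then verify the resulting payoff is either the target constant or an affine function of $(x-c)$ whose monotonicity is pinned down by the signs of $1-1/q_i$ and $1-q_i$ — all of which rests on $0<q_2<q_1<1$. Care is also needed wherever primary $1$ is the potential undercutter, since $\psi_{1,N}$ has an atom at $v$ and the payoff there must be bounded separately; and the reduction of the several endpoint values to $L-c-s$, $(v-c)(1-q_2)$, $\bar p-c$ hinges entirely on the algebraic identities (\ref{eq:tildep2vss}), (\ref{eq:tildep2avail}), (\ref{eq:lmixed}) together with $q_2(L-c)=s$.
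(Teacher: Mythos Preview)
Your proof is correct in approach and essentially identical to the paper's: both verify no profitable unilateral deviation for each primary by separately bounding the best payoff under $Y$ and under $N$, interval by interval over $[L,\tilde p_2]$, $[\tilde p_2,\tilde p_1]$, $[\tilde p_1,v)$, $\{v\}$, and then collapse endpoint values via the identities (\ref{eq:tildep2vss}), (\ref{eq:tildep2avail}), (\ref{eq:lmixed}), (\ref{eq:lunequalavail}) and $q_2(L-c)=s$.

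One inaccuracy to fix in Case~I (primary~$1$, decision $N$): you assert the payoff is \emph{exactly} $(v-c)(1-q_2)$ for every $x\in[\tilde p_2,\tilde p_1]$. It is not. On that interval the undercutting mass of primary~$2$ (given available) is still $1-\tfrac{L-c}{x-c}$, just as on $[L,\tilde p_2]$, because $\psi_{2,Y}(x)=1$ and $(1-p_2)\psi_{2,N}(x)=1-\tfrac{L-c}{x-c}-p_2$ there; hence primary~$1$'s payoff under $N$ is the same increasing affine function $(x-c)(1-q_2)+q_2(L-c)$ throughout $[L,\tilde p_1]$, attaining $(v-c)(1-q_2)$ only at $\tilde p_1$. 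The paper records exactly this in its Step~ii.b. Your conclusion is unaffected---the payoff on $[\tilde p_2,\tilde p_1]$ is still \emph{at most} $(v-c)(1-q_2)$---but the ``exactly'' claim should be replaced by ``at most.''
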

\emph{Discussion}: Since $p_1>p_2$ (by Lemma~\ref{lm:p1p2}), primary $2$ selects $Y$ with a lower probability compared to primary $1$. Both $p_1$ and $p_2$ increase as $s$ decreases. Both $p_1$ and $p_2$ go to $1$ as $s\rightarrow 0$. Note that threshold $T_i$ above which primary $i$ selects only $N$ is higher for primary $1$ i.e. $T_1>T_2$. Hence, primary $1$ selects $Y$ for a wider value of $s$. 

Note that the expected payoff of primary $2$ decreases with the cost of acquiring the CSI $s$. This negates {\em conventional wisdom which suggests that the expected payoff of a  primary should increase as $s$ decreases}. The expected payoff of primary $1$ is independent of $s$. The expected payoff of primary $2$ is lower than that of primary $1$.  The expected payoff of primary $2$ becomes equal to that of the primary $1$ when $q_2$ becomes equal to $q_1$. 

Note that $\psi_{1,N}(\cdot)$ (see (\ref{eq:psi1nlowc1})) has a jump at $v$ since $q_1>q_2$. The jump decreases as the difference between $q_1$ and $q_2$ decreases. Since primary $1$ has a higher availability probability, thus, it selects higher prices when it selects $N$. $\psi_{1,N}(\cdot)$ is continuous elsewhere. It is easy to show that $\psi_{1,Y}, \psi_{2,Y}$ and $\psi_{2,N}$ are continuous everywhere. Note that $L$ decreases as $s$ decreases. Thus, a primary selects its price from a larger interval as $s$ decreases. $\tilde{p}_1$ and $\tilde{p}_2$ both decrease with $s$ (from (\ref{eq:tildep2vss}) and (\ref{eq:lmixed})). Hence, $\psi_{1,Y}(\cdot)$ and $\psi_{2,Y}(\cdot)$ have larger supports when $s$ decreases.

%Note that support of $\psi_{1,Y}(\cdot)$ is $[L,\tilde{p}_1]$ and that of $\psi_{2,Y}(\cdot)$ is $[L,\tilde{p}_2]$ where $\tilde{p}_2<\tilde{p}_1$. Thus, primary $1$ selects its price from a higher interval when it selects $Y$ and finds that the channel of primary $2$ is available. Since primary $1$ selects $Y$ with higher probability, thus, it knows the channel state of primary $2$ with higher probability. Thus, in response primary $2$ selects a lower price when it selects $Y$ in order to increase its probability of winning. Hence, primary $2$ selects its price from a lower range when it selects $Y$ and the channel of primary $1$ is available. Also note that the support of $\psi_{1,N}(\cdot)$ is $[\tilde{p}_1,v]$ and the support of $\psi_{2,N}(\cdot)$ is $[\tilde{p}_2,v]$. $\psi_{1,N}(\cdot)$ has a jump at $v$, however, $\psi_{2,N}(\cdot)$ does not have any jump. Intuitively, primary $1$ selects higher prices with higher probability since the availability probability of primary $2$ is lower under $N$. 
\subsubsection{Proof of Theorem~\ref{thm:yyq1q2}}
First, we show that primary $1$ does not have any profitable unilateral deviation  when primary $2$ follows the strategy prescribed in the theorem (Case I). Subsequently, we show that primary $2$ also does not have any profitable unilateral deviation when primary $1$ follows the strategy prescribed in the theorem (Case II).

Case I: First, we show that under $Y$, the maximum expected payoff that primary $1$ can attain is $(v-c)(1-q_2)$ (Step i). Toward this end, we first show that when primary $1$ selects $Y$ and the channel of the primary $2$ is available, then the expected payoff that primary $1$ will attain is at most $L-c-s$ (Step i.a.). When primary $1$ selects $Y$ and the channel of the primary $2$ is unavailable, then the payoff of the primary $1$ is $v-c-s$ which will  in turn show that the maximum payoff attained by primary $1$ under $Y$ is $(v-c)(1-q_2)$ (Step i.b). Subsequently, we show that under $N$, the maximum expected payoff that primary $1$ can attain is $(v-c)(1-q_2)$  (Step ii). Finally, we show that the maximum expected payoff is attained by primary $1$ when it follows the strategy profile (Step iii).

Step i.a: Suppose that primary $1$ selects $Y$ and the channel of primary $2$ is available.  Then, at any price $x$ in the interval $[L,\tilde{p}_2]$, the expected payoff of primary $1$ is
\begin{align}
(x-c)(1-p_2\psi_{2,Y}(x))-s=L-c-s\quad \text{from } (\ref{eq:psi2ylowc1})
\end{align}
Now suppose that primary $1$ selects a price $x$ in the interval $[\tilde{p}_2,\tilde{p}_1]$. Primary $2$ selects a price less than or equal to $x$ if (i) primary $2$ selects $Y$ which occurs w.p. $p_2$ and (ii) primary $2$ selects $N$ and then selects a price less than or equal to $x$ which occurs w.p. $(1-p_2)\psi_{2,N}(x)$. Thus, by the continuity of $\psi_{2,N}(\cdot)$ the expected payoff of primary $1$ at $x$ is
\begin{align}
(x-c)(1-p_2-(1-p_2)\psi_{2,N}(x))-s=L-c-s\quad \text{from } (\ref{eq:psi2nlowc1}).
\end{align}
Similarly, when primary $1$ selects a price $x$ from the interval $[\tilde{p}_1,v]$, then its expected payoff is 
\begin{align}
(x-c)(1-p_2-(1-p_2)\psi_{2,N}(x))-s=(x-c)(1-\dfrac{1}{q_2})+(v-c)(1-q_2)/q_2-s\quad \text{from } (\ref{eq:psi2nlowc1}). 
\end{align}
Thus, the above is maximized at $\tilde{p}_1$ since the coefficient of $x$ is negative. Hence, the maximum value is
\begin{align}
& (\tilde{p}_1-c)(1-1/q_2)+(v-c)(1-q_2)/q_2-s\nonumber\\
& =(\tilde{p}_1-c)(1-1/q_2)+(\tilde{p}_1-c)(1-q_2)/q_2+s/q_2-s\quad (\text{from (\ref{eq:lmixed})}) =L-c-s\quad \text{from  (\ref{eq:lunequalavail})}.\nonumber
\end{align}
Any price which is strictly less than $L$ will fetch a payoff of less than $L-c-s$. Hence, the maximum expected payoff that primary $1$ can attain is $L-c-s$ when the channel of primary $2$ is available and it is achieved at any price in the interval $[L,\tilde{p}_1]$.

Step i.b.: Note that the payoff that primary $1$ attains when the channel of primary $2$ is unavailable is $(v-c)-s$. Hence, the maximum expected payoff that primary $1$ attains under $Y$ is 
\begin{align}
& (L-c-s)q_2+(v-c-s)(1-q_2)=(v-c)(1-q_2)+(L-c)q_2-s\nonumber\\
& =(v-c)(1-q_2)\quad \text{from (\ref{eq:lunequalavail})}.
\end{align}

Step ii: Now, we show that if primary $1$ selects $N$, then, it will attain a maximum expected payoff of $(v-c)(1-q_2)$ and it is attained when it selects a price from the interval $[\tilde{p}_1,v]$. Towards this end, we first show that when primary $1$ selects a price in the interval $[\tilde{p}_1,v]$, then its expected payoff is $(v-c)(1-q_2)$ (Step ii.a.). Subsequently, we show that when primary $1$ selects a price from the interval $[\tilde{p}_2,\tilde{p}_1]$, then its expected payoff is at most $(v-c)(1-q_2)$ (Step ii.b.). Finally, we show that if primary $1$ selects a price in the interval $[L,\tilde{p}_2]$, then its expected payoff is less than $(v-c)(1-q_2)$ (Step ii.c.). Note that a price which is strictly less than $L$ will fetch a payoff which is strictly less than the payoff at $L$, hence, this will show that under $N$ the expected payoff of primary $1$ is $(v-c)(1-q_2)$ and it is attained when at prices in the interval $[\tilde{p}_1,v]$.

Step ii.a: Suppose that primary $1$  selects a price $x\in [\tilde{p}_1,v]$. Primary $2$ selects a price less than or equal to $x$ if the channel of primary $2$ is available and either primary $2$ selects $Y$ or it selects $N$ and then selects a price less than or equal to $x$. Thus, the probability that primary $2$ selects a price less than or equal to $x$ is $p_2q_2+(1-p_2)q_2\psi_{2,N}(x)$.  Thus, by the continuity of $\psi_{2,N}(\cdot)$, the expected payoff at $x$ is
\begin{align}
(x-c)(1-(1-p_2)q_2\psi_{2,N}(x)-p_2q_2)=(v-c)(1-q_2)\quad \text{from } (\ref{eq:psi2nlowc1}).
\end{align}
Step ii. b.: Similarly, at  price $x$ in the interval $[\tilde{p}_2,\tilde{p}_1]$, the expected payoff of primary $1$ is   
\begin{align}
(x-c)(1-p_2q_2-(1-p_2)q_2\psi_{2,N}(x))=(x-c)(1-q_2)+(L-c)q_2
\end{align}
The above is maximized at $\tilde{p}_1.$ From (\ref{eq:lunequalavail}) $L-c=s/q_2$, thus, the maximum value is
\begin{align}\label{eq:50}
(\tilde{p}_1-c)(1-q_2)+s
=(v-c)(1-q_2)\quad \text{from (\ref{eq:lmixed})}
\end{align}
Step ii.c.: Now, suppose that primary $1$ selects a price $x\in[L,\tilde{p}_2]$. Primary $2$ does not select a price in this interval if it selects $N$. Hence, at $x$, the expected payoff of primary $1$ is 
\begin{align}
& (x-c)(1-p_2q_2\psi_{2,Y}(x))=(x-c)(1-q_2)+(L-c)q_2\quad \text{from } (\ref{eq:psi2ylowc1})\nonumber\\
& <(\tilde{p}_1-c)(1-q_2)+(L-c)q_2\quad \text{as } \tilde{p}_1>\tilde{p}_2\nonumber\\
& =(v-c)(1-q_2)\quad \text{from (\ref{eq:lmixed}) and (\ref{eq:lunequalavail})}.
\end{align}
Hence, the maximum expected payoff that primary $1$ can attain under $N$ is $(v-c)(1-q_2)$ and this is attained at every price in the interval $[\tilde{p}_1,v]$. 

Step iii: The maximum expected payoff that primary $1$ can attain is $(v-c)(1-q_2)$ either under $Y$ or $N$. Hence, any randomization between $Y$ and $N$ will also yield the same expected payoff. The maximum expected payoff is attained by primary $1$ when it follows the strategy profile. Hence, primary $1$ does not have any profitable unilateral deviation. 

Case II: We,now, show that primary $2$ also does not have any profitable unilateral deviation. Toward this end, we first show that when primary $2$ selects $Y$ and the channel of primary $1$ is available, then the maximum expected payoff that it can get is $L-c-s$ (Step i). When primary $2$ selects $Y$ and the channel of the primary $1$ is unavailable, then the payoff of primary $2$ is $(v-c-s)$. Subsequently, we show that under $Y$, the maximum expected payoff attained by primary $2$ is $(v-c)(1-q_1)+s(q_1-q_2)/q_2$ (Step ii). Subsequently, we show that when primary $2$ selects $N$ then, the maximum expected payoff that primary $2$ can get is also $(v-c)(1-q_1)+s(q_1-q_2)/q_2$ (Step iii). Finally, we show that primary $2$ can attain the maximum expected payoff when it follows the strategy profile (Step iii).

Step i: Suppose primary $2$ selects $Y$ and the channel of primary $1$ is available.

Primary $1$ does not select a price from the interval $[L,\tilde{p}_2]$ when it selects $N$. Thus, at  price $x$   in the interval $[L,\tilde{p}_2]$, the expected payoff of primary $2$ is
\begin{align}
(x-c)(1-p_1\psi_{1,Y}(x))-s=L-c-s\quad \text{from } (\ref{eq:psi1ylowc1})
\end{align}
At price $x\in [\tilde{p}_2,\tilde{p}_1]$, the expected payoff of primary $2$ is 
\begin{align}
& (x-c)(1-p_1\psi_{1,Y}(x))-s=\nonumber\\
& (x-c)(1-1/q_1)+\dfrac{(v-c)(1-q_1)+s(q_1-q_2)/q_2}{q_1}-s \quad \text{from } (\ref{eq:psi1ylowc1}) \& (\ref{eq:barp_avail}).
\end{align}
Since the co-efficient of $x$ is negative,  the above is maximized at $x=\tilde{p}_2$. From (\ref{eq:lunequalavail}) note that $(v-c)(1-q_1)+s(q_1-q_2)/q_2=(\tilde{p}_2-c)(1-p_2q_1)$. Thus, the expected payoff of primary $2$ is upper bounded by 
\begin{align}\label{eq:51}
& (\tilde{p}_2-c)(1-1/q_1)+(\tilde{p}_2-c)(1-p_2q_1)/q_1-s =(\tilde{p}_2-c)(1-p_2)-s=L-c-s\quad \text{from (\ref{eq:tildep2avail})}
\end{align}
Now, suppose that primary $2$ selects a price $x\in [\tilde{p}_1,v)$. At $x$, the expected payoff of primary $2$ is
\begin{align}
& (x-c)(1-p_1-(1-p_1)\psi_{1,N}(x))-s\nonumber\\&=(x-c)(1-1/q_1)+\dfrac{(v-c)(1-q_1)+s(q_1-q_2)/q_2}{q_1}-s\quad \text{from } (\ref{eq:psi1nlowc1}) \& (\ref{eq:barp_avail})\nonumber\\
& <(\tilde{p}_2-c)(1-1/q_1)+\dfrac{(v-c)(1-q_1)+s(q_1-q_2)/q_2}{q_1}-s \quad \text{since } \tilde{p}_2<\tilde{p}_1.
\end{align}
Note from (\ref{eq:lunequalavail}) that $(v-c)(1-q_1)+s(q_1-q_2)/q_2=(\tilde{p}_2-c)(1-p_2q_1)$, thus, the above can be written as
\begin{align}
& (\tilde{p}_2-c)(1-1/q_1)+(\tilde{p}_2-c)(1-p_2q_1)/q_1-s =L-c-s\quad \text{from (\ref{eq:51})}.
\end{align}
Since $\psi_{1,N}(\cdot)$ has a jump at $v$, thus, the expected payoff at $v$ is strictly lower compared to a price close to $v$. Thus, the expected payoff of primary $2$ at $v$ is strictly less than $L-c-s$. Similarly, a price which is strictly less than $L$ fetches a payoff of at most $L-c-s$ under $Y$.

Hence, when the channel of primary $1$ is available, then, under $Y$ the maximum expected payoff that primary $2$ can attain is $L-c-s$. It is attained at any price in the interval $[L,\tilde{p}_2]$.

Step ii: When the channel of primary $1$ is unavailable, then the payoff that primary $2$ attains is $(v-c-s)$. Hence, the maximum expected payoff of primary $2$ under $Y$ is 
\begin{align}
q_1(L-c-s)+(1-q_1)(v-c-s)=(v-c)(1-q_1)+(L-c)q_1-s\nonumber\\
=(v-c)(1-q_1)+q_1s/q_2-s=(v-c)(1-q_1)+s(q_1-q_2)/q_2\quad \text{since } L-c=s/q_2.\nonumber
\end{align}

Step iii: Now, we show that if primary $2$ selects $N$, then the maximum expected payoff attained by primary $2$ is $(v-c)(1-q_1)+s(q_1-q_2)/q_2$. Toward this end, we first show that the maximum expected payoff attained by primary $2$ at any price in the interval $[\tilde{p}_2,\tilde{p}_1]$ and $[\tilde{p}_1,v]$ is $(v-c)(1-q_1)+s(q_1-q_2)/q_2$ and the maximum expected payoff is attained at any price in the interval $[\tilde{p}_2,v)$ (Step iii.a. and Step iii.b. resp.). Subsequently, we show that if primary $2$ selects any price less than $\tilde{p}_2$, then the maximum expected payoff is at most $(v-c)(1-q_1)+s(q_1-q_2)/q_2$ (Step iii.c.). 

Step iii.a: Suppose that primary $2$ selects a price $x$ in the interval $[\tilde{p}_2,\tilde{p}_1]$. Primary $1$ selects a price less than or equal to $x\in [\tilde{p}_2,\tilde{p}_1]$ if the channel of primary $1$ is available, it selects $Y$ and a price which is less than or equal to $x$. Thus, primary $1$ selects a price less than or equal to $x$ w.p. $p_1q_1\psi_{1,Y}(x)$. Since $\psi_{1,Y}(\cdot)$ is continuous in $[\tilde{p}_2,\tilde{p}_1]$,  the expected payoff of primary $2$ at $x$ is
\begin{align}
(x-c)(1-p_1q_1\psi_{1,Y}(x))=(v-c)(1-q_1)+s(q_1-q_2)/q_2\quad \text{from } (\ref{eq:psi1ylowc1})\& (\ref{eq:barp_avail}).
\end{align}
Step iii.b: Now suppose primary $1$ selects a price $x$ from the interval $[\tilde{p}_1,v)$. Primary $1$ selects a price less than or equal to $x$ w.p. $p_1q_1+(1-p_1)q_1\psi_{1,N}(x)$. Thus, the expected payoff of primary $2$ at $x$ is 
\begin{align}
(x-c)(1-p_1q_1-(1-p_1)q_1\psi_{1,N}(x))=(v-c)(1-q_1)+s(q_1-q_2)/q_2\quad \text{from } (\ref{eq:psi1nlowc1}) \& (\ref{eq:barp_avail}).
\end{align} to primary $2$. 
The expected payoff at $v$ is strictly less than the expected payoff at a price just below $v$ since $\psi_{1,N}(\cdot)$ has a jump at $v$. Thus, the expected payoff at $v$ is strictly less than $(v-c)(1-q_1)+s(q_1-q_2)/q_2$. 

Step iii.c: Now suppose that primary $2$ selects a price  $x$ from the interval $[L,\tilde{p}_2]$. Primary $1$ selects a price in the interval only when it selects $Y$. Thus, the expected payoff of primary $2$ at $x$ is 
\begin{align}
(x-c)(1-p_1q_1\psi_{1,Y}(x))=(x-c)(1-q_1)+(L-c)q_1\quad \text{from } (\ref{eq:psi1ylowc1})
\end{align}
Since the co-efficient of $x$ is positive, the above is maximized at $x=\tilde{p}_2$. By (\ref{eq:tildep2avail}) $L-c=(1-p_2)(\tilde{p}_2-c)$. Hence, the maximum value is
\begin{align}
& (\tilde{p}_2-c)(1-q_1)+(\tilde{p}_2-c)(1-p_2)q_1=(\tilde{p}_2-c)(1-p_2q_1)\nonumber\\
&=(v-c)(1-q_1)+s(q_1-q_2)/q_2\quad \text{from (\ref{eq:lunequalavail})}.
\end{align}
On the other hand a price which is strictly less than $L$ fetches a payoff which is strictly less than the payoff at $L$ as $L$ is the lowest end-point of the support of primary $1$. Thus, under $N$, the maximum expected payoff attained by primary $2$ is $(v-c)(1-q_1)+s(q_1-q_2)/q_2$ and it is attained at any price in the interval $[\tilde{p}_2,v)$. 

Step iv: Hence, the maximum expected payoff attained by primary $2$ is $(v-c)(1-q_1)+s(q_1-q_2)/q_2$ and it is attained if primary $2$ follows the strategy profile. Thus, primary $2$ also does not have any profitable unilateral deviation. Hence, the result follows.\qed

\subsection{Numerical Results}
\begin{figure*}
\begin{minipage}{0.31\linewidth}
\includegraphics[width=50 mm,height=30mm]{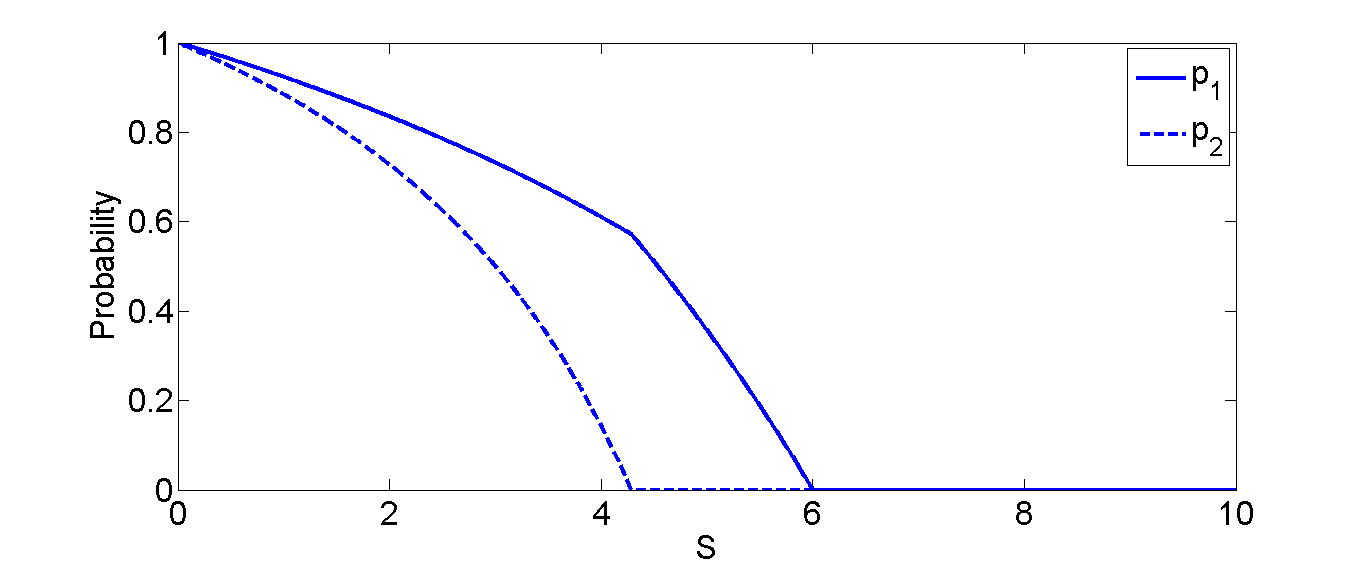}
\caption{\small Variation of $p_i$, $i=1,2$ with $S$ for an example setting: $v=25$, $c=0$, $q_1=0.7$, $q_2=0.4$.}
\label{fig:prob_unequal}
\end{minipage}\hfill
\begin{minipage}{0.31\linewidth}
\includegraphics[width=50mm,height=30mm]{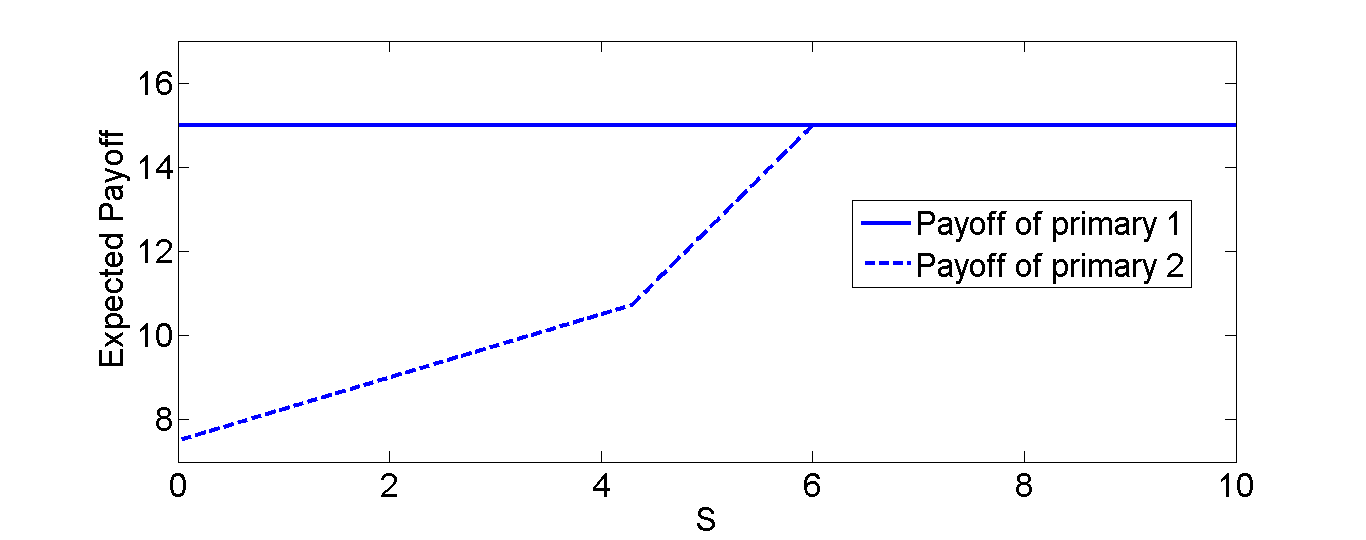}
\caption{\small Variation of the expected payoffs of primaries with $s$ in the same example setting considered in Fig.~\ref{fig:prob_unequal}. }
\label{fig:payoffunequalavail}
\end{minipage}\hfill
\begin{minipage}{0.31\linewidth}
\includegraphics[width=50mm,height=30mm]{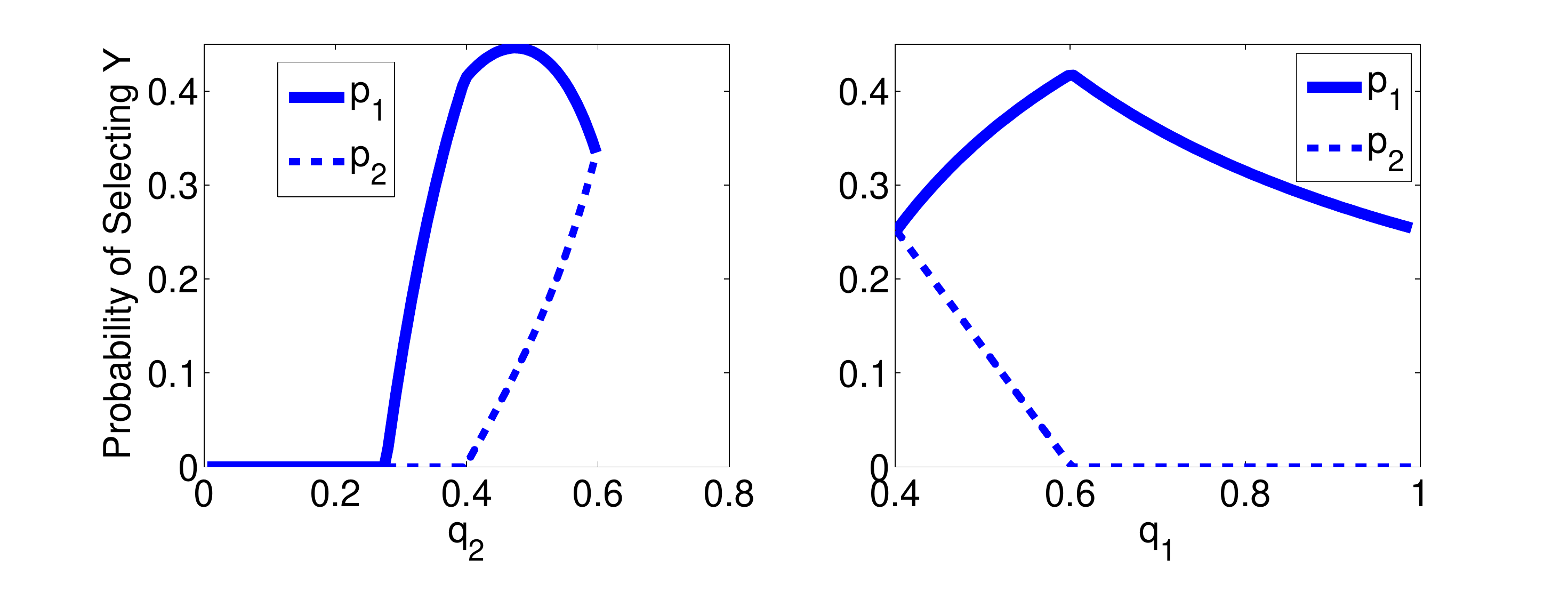}
\caption{Variation of $p_1$ and $p_2$ with $q_1$ and $q_2$. In the left hand figure we use $q_1=0.6, c=0, v=50, s=4$ and for the right hand figure we use $q_2=0.4,v=50, c=0, s=4$.}
\label{fig:p1p2vsq1q2}
\end{minipage}
\vspace{-0.3cm}
\end{figure*}
Fig.~\ref{fig:prob_unequal} shows the variation of $p_i$, the probability with which primary $i$ selects $Y$. As $S$ increases, $p_i$s decrease. Additionally, $p_1>p_2$ when $0<s<q_2(v-c)(1-q_2)$, when $s\geq q_2(v-c)(1-q_2)$ both the primaries select $N$ w.p. $1$ and thus, $p_i=0$. When $s=0$, then $p_i=1$. When $s\geq q_2(v-c)(1-q_1)/(1-q_1+q_2)$ $p_2$ is also $0$ but $p_1$ is positive. $p_1$ decreases at a slower rate compared to the $p_2$. The difference between $p_1$ and $p_2$ is maximum at $s=q_2(v-c)(1-q_1)/(1-q_1+q_2)$. When $s\geq q_2(v-c)(1-q_1)/(1-q_1+q_2)$, $p_1$ decreases at a faster rate. 

Fig.~\ref{fig:payoffunequalavail} shows the variation of the expected payoffs of the primaries with $s$, the cost of acquiring the CSI. The expected payoff of primary $1$ is independent of $s$. However, the expected payoff of primary $2$ decreases as $s$ when $s<q_2(v-c)(1-q_2)$. Thus, as $s$ decreases the payoff of primary $2$ decreases which contradicts the conventional wisdom which suggests that the payoff of a primary {\em should increase} as $s$ decreases. 

Fig.~\ref{fig:p1p2vsq1q2} shows the variations of $p_i, i=1,2$ with $q_i, i=1,2$. Note from the left hand figure of Fig.~\ref{fig:p1p2vsq1q2} that when $q_2$ is low, both the primaries select $N$ w.p. $1$, thus, $p_i=0$. When $q_2>0.25$, $p_1$ becomes positive, but $p_2$ is $0$. Due to high $q_1$, primary $1$ selects $Y$ with a higher probability and gains more compared to primary $2$.  When $q_2>0.4$, $p_2$ becomes positive and the difference between $p_1$ and $p_2$ decreases. As $q_2$ becomes close to $q_1$, primary $2$ also selects $Y$ with a higher probability.  Eventually, when $q_2\rightarrow q_1$, $p_2\rightarrow p_1$.

Note from the right hand figure of Fig.~\ref{fig:p1p2vsq1q2} that when $q_1=q_2$, $p_1=p_2$. As $q_1$ increases $p_2$ decreases and eventually it becomes $0$.  Note that $p_1$ initially increases with $q_1$. In this regime $p_2$ decreases, however, $q_1$ is not so high, thus, primary $1$ can gain more by selecting $Y$, hence, $p_1$ increases. However, eventually $p_2$ becomes $0$ and $q_1$ is high, thus, $p_1$ decreases. 

\section{Generalization}
\subsection{More than two primaries}
Consider that there are $n>2$ primaries. The availabilities are assumed to be the same. 
We briefly study two settings--
\subsubsection{Primary acquires the C-CSI of only one of the competitors}
In this scenario, we consider that the primary can acquire  the C-CSI of at most one of its competitors. If the primary decides to acquire the C-CSI, then it will acquire the C-CSI of a {\em randomly} selected competitor. First, we will describe a strategy and later {\em numerically} we show that such a strategy is asymptotically a NE as the number of primaries becomes large. The cost of acquiring the C-CSI(s) is assumed to be the same ($s$) for each primary. 

Strategy Profile ($SP_n$):
{\em When $s\geq T$, each primary selects $N$ w.p. $1$ and randomizes its price according to a distribution $\psi(\cdot)$ which is of the form $\phi_1(\cdot)$  (cf. (\ref{eq:class1})) from an interval $[L,v]$. When $s<T$, each primary selects $Y$ w.p. $p$ (and thus, $N$ w.p. $1-p$).  When a primary selects $Y$ and the channel state of the selected primary is $1$, then, it selects its price from the interval $[L_1,L_N]$ using $\psi_{1,Y}(\cdot)$; when the primary selects $N$, then it selects its price from the interval $[L_N,L_0]$ using $\psi_N(\cdot)$ and when the primary selects $Y$ and the channel state of the selected primary turns out to be $0$, then it selects its price from the interval $[L_0,v]$ using $\psi_{0,Y}(\cdot)$. $\psi_{1,Y}, \psi_{0,Y}$ and $\psi_N$ are of the form $\phi_1(\cdot)$ (cf.(\ref{eq:class1})).}

The full characterization of the strategy profiles, the threshold $T$ and the supports are given in \cite{jsac-tech16}. Note that the strategy $SP_n$ belongs to the class $[T,p]$ (Definition~\ref{defn:classtp}).  Note also that the primary selects lower (higher,resp.) prices when it finds that the channel state of the selected primary is $1$  ($0$, resp.) in the strategy profile $SP_n$.

% Theorem~\ref{thm:multiplen} shows that the strategy profile $SP_n$ is an NE when we assume that $\beta_1=\beta_N=\beta_0=\beta$. However, in practice they are not equal. Though we expect that they will be close to each other as $n$ increases. We validate the above claim empirically. Specifically, we show that the a primary can only gain a small amount if it deviates from the strategy profile as $n$ increases through numerical evaluations. 
\begin{figure*}
\begin{minipage}{0.49\linewidth}
\includegraphics[width=0.99\textwidth]{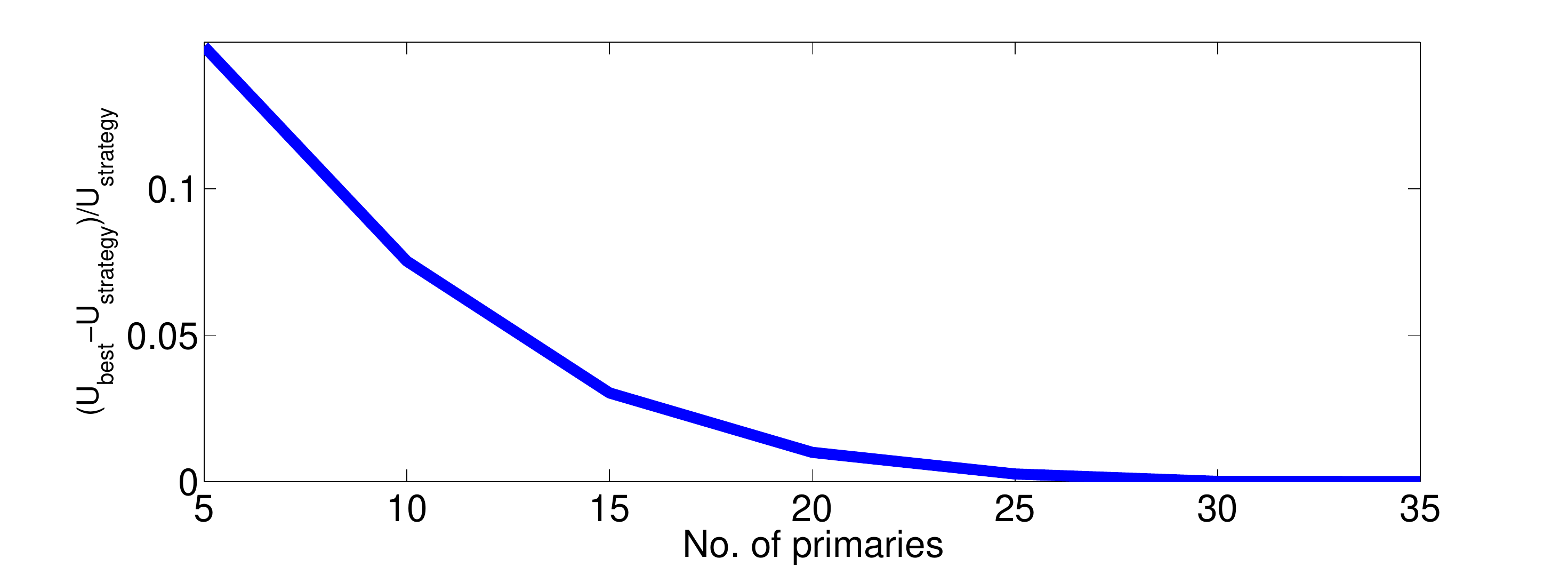}
\caption{\small The variation of relative gain with $n$ for $v=50, s=1.5, q=0.6$, no.of secondaries$=q(n-2)$.}% Relative gain is $\dfrac{U_{best}-U_{strategy}}{U_{strategy}}$, $U_{best}$ is the best possible expected payoff that a primary can attain. $U_{strategy}$ is the expected payoff that a primary attains when it follows the strategy $SP_n$.}
\label{fig:relativegain}
\end{minipage}\hfill
\begin{minipage}{0.49\linewidth}
\includegraphics[width=0.99\textwidth]{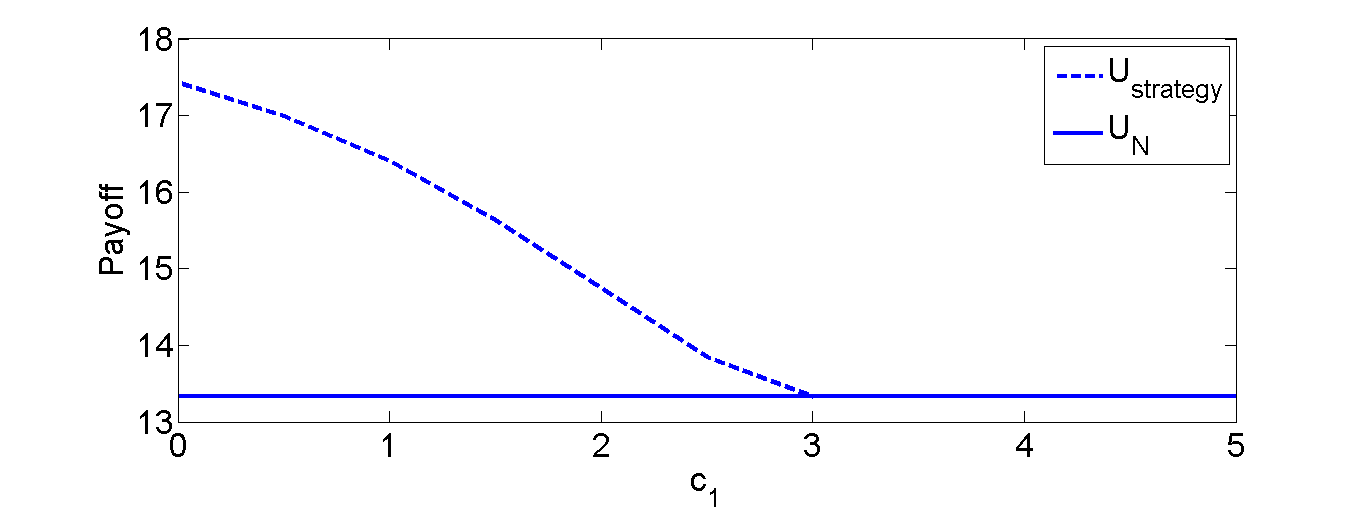}
\caption{\small Variation of the expected payoff under the strategy profile $SP_n$ as a function of $s$ when $v=50, c=0, n=10$, no. of secondaries$=6$. $U_N$ is the expected payoff in the setting where the primaries can not acquire the C-CSI. }
\label{fig:payoff}
\end{minipage}
\vspace{-0.2in}
\end{figure*}
We show the variation of relative gain i.e. $\dfrac{U_{best}-U_{strategy}}{U_{strategy}}$ as $n$ increases in Fig.~\ref{fig:relativegain}. $U_{best}$ is the best possible expected payoff that a primary can attain. $U_{strategy}$ is the expected payoff that a primary attains when it follows the strategy $SP_n$. Fig.~\ref{fig:relativegain} shows that as $n$ increases, the relative gain decreases exponentially and becomes close to $0$ when $n\geq 25$. The relative gain is small ($0.1$) even when $n=10$, thus, the strategy profile $SP_n$ closely approximates the NE strategy even for small values of $n$  

 Fig.~\ref{fig:payoff} shows the variation of expected payoff as $s$ increases under the strategy profile $SP_n$.  When $s$ is high, the primaries do not acquire the C-CSI, thus their expected payoffs are equal to the payoffs they would get in the setting where the primaries can not acquire the C-CSI ($U_N$). The expected payoff of the primary increases as $s$ decreases. {\em Note that we observe the similar behavior when there is an estimation error} (Theorems 5 and 6).  However, such a similarity is not surprising. When a primary only acquires the C-CSI of a randomly selected competitor, it is still not aware of the channel states of {\em all the competitors}. It only has an estimate of how many competitors will be there. Simlarly, when there is an error in estimating the C-CSI, the primary is still not certain whether the competitor is available when there are two primaries. Thus, the expected payoff shows similarity with the setting with two primaries and a non-zero estimation error. 

\subsubsection{Primary acquires the C-CSI of all the competitors}
Now, we consider other extreme where if a primary decides to acquire the C-CSI, it will aquire the C-CSI of all of its competitors. If a primary selects $N$, then it will not acquire the C-CSI of any of its competitors. The C-CSI acquisition costs and the availability probabilities are the same. We show in Appendix~\ref{sec:n-1all} that
 \begin{theorem}\label{thm:payoffconstant}
 In a symmetric NE, the expected payoff of primaries are independent of the C-CSI acquisition costs.
 \end{theorem}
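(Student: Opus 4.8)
The plan is to show that in every symmetric NE the common expected payoff $U$ (conditioned on one's own channel being available) equals $(v-c)(1-q)^{n-1}$, which is precisely the payoff in the setting where no primary can acquire any C-CSI (the $n$-primary analogue of the result of \cite{Gaurav1} invoked earlier for $n=2$); since this quantity contains no $s$, the theorem follows at once. I would establish the claimed value by two $s$-free bounds, $U\ge (v-c)(1-q)^{n-1}$ and $U\le (v-c)(1-q)^{n-1}$.

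For the lower bound I would use a single unilateral deviation, exactly as in the proof of Theorem~\ref{thm:yandy}: a primary that deviates to $N$ and posts the price $v$ incurs no acquisition cost, posts a price no secondary will exceed, and is guaranteed the sale whenever none of its $n-1$ competitors has an available channel, an event of probability $(1-q)^{n-1}$. Hence any NE payoff is at least $(v-c)(1-q)^{n-1}$. The same comparison also rules out the degenerate profile in which every primary plays $Y$ with probability one (the generalization of Theorem~\ref{thm:yandy}): there the sub-game among the primaries with available channels is a Bertrand game, yielding gross payoff $(v-c)(1-q)^{n-1}$ and net payoff $(v-c)(1-q)^{n-1}-s<(v-c)(1-q)^{n-1}$, contradicting the lower bound. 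Consequently, in any symmetric NE the action $N$ is played with positive probability, so by the indifference principle every price in the support of the $N$-price distribution yields exactly $U$.

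For the upper bound I would analyze the top of the equilibrium price distributions. Let $\bar x\le v$ be the supremum of the support of the $N$-price distribution. I would first argue $\bar x=v$: if $\bar x<v$, then no competitor ever posts in $(\bar x,v]$ while playing $N$, and --- after also establishing, in the spirit of Theorems~\ref{thm:nandn}--\ref{thm:mixedstrategy}, that a primary playing $Y$ against one or more known-available competitors never posts above a monopoly-constrained cutoff lying strictly below $v$ --- a deviation to $(N,v)$ would strictly improve on posting $\bar x$, a contradiction. Then, evaluating the $N$-payoff along prices $x\uparrow v$, the winning probability converges to the probability that every available competitor posts exactly $v$; combining this with the structural fact that an available competitor undercuts $v$ with probability one whenever at least one other competitor is available (again read off from the ordering of the supports in the symmetric equilibrium), the limiting winning probability is exactly $(1-q)^{n-1}$, which forces $U=(v-c)(1-q)^{n-1}$.

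The main obstacle is this last step: it requires first pinning down the qualitative structure of symmetric equilibria for general $n$ --- that the prices posted under $N$ form the top interval ending at $v$, that the prices posted under $Y$ when one or more competitors are known available occupy a strictly lower interval, and that the point mass at $v$ comes only from the ``$Y$, all competitors absent'' event --- and then controlling the secondary's tie-breaking at $v$. I expect the cleanest route is to mirror the fixed-point/indifference construction behind Theorems~\ref{thm:nandn} and~\ref{thm:mixedstrategy}, extended to $n$ players by conditioning on the number $k$ of available competitors, and to check that the resulting indifference equations force the gross expected revenue of a $Y$-primary, conditioned on at least one available competitor, to equal $s/(1-(1-q)^{n-1})$ --- which is exactly what makes the dependence on $s$ cancel in $U$.
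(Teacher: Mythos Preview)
Your core approach is the paper's: rule out the all-$Y$ profile so that $N$ is played with positive probability, then pin the equilibrium payoff to the $N$-payoff evaluated at the top of its price support. The paper's setting in this section has $m$ secondaries, so the value is $(v-c)W_{m,n}(q)$; your $(v-c)(1-q)^{n-1}$ is the case $m=1$, and the argument generalizes verbatim.

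The difference is in how you propose to handle your ``main obstacle''. Your last paragraph suggests reconstructing the equilibrium via indifference equations (conditioning on the number of available competitors) and verifying that $s$ cancels; the paper takes a much lighter route that never solves for the distributions. Writing $F_N$ for the $N$-price distribution and $F_k$ for the $Y$-price distribution given $k$ observed available competitors, the paper proves, for \emph{any} symmetric NE: (i) none of $F_N$ or the $F_k$ with $k\ge m$ has an atom (standard undercutting); (ii) the upper endpoint of each such $F_k$ lies strictly below that of $F_N$ (at its own upper endpoint an $F_k$-player would win with probability zero, yet strictly more at a lower price, contradicting best-response); (iii) the upper endpoint of $F_N$ is $v$ (else raising the price preserves the winning probability but raises revenue). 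From (i)--(iii) the winning probability at $(N,v)$ is exactly $W_{m,n}(q)$: if $\ge m$ competitors are available, each of them prices strictly below $v$ almost surely, so the $(N,v)$ player loses; if fewer than $m$ are available, it sells regardless of ties. Your tie-breaking concern at $v$ is thus absorbed by (i)--(ii), and no separate lower-bound step is needed.
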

Note that, we have seen the similar results when there are two primaries (Theorems~\ref{thm:nandn} and ~\ref{thm:mixedstrategy}). Hence, the result where the expected payoff of each primary is independent of the C-CSI acquisition cost extends to a more general sceanrio where there are more than two primaries. 
\subsection{Relaxation of $0-1$ assumption of the channel state}
Suppose that the available channel is one of the two states high qualiy (state $2$) and low quality state (state $1$). Thus, the states of the channel of each primary is $0$ (unavailable), $1$ (available, but low quality) and $2$ (available and high quality). We assume that there are two primaries and the C-CSI acquisition costs are the same. The channel of each primary is in state $i$, $i=0,1,2$ w.p. $q_i$ independent of the channel states of other primaries. All these are common information. 

Since the available channel state can be in more than one state,  we define a penalty function to capture the preference order of the secondaries among available channels. The secondary will prefer a channel with a lower penalty. Higher prices and lower qualities induce higher penalties. Hence, we consider at state $i$ the penalty function $g_i$ for price $x$ is $g_i(x)=x-h(i)$ where $h(\cdot)$ is a strictly increasing function. Note that negative of $g_i(\cdot)$ can be considered to be a utility function, and quasi-linear utility functions similar to our approach is a standard assumption in literature.  We have already characterized the NE strategy for such penalty functions when the primary can not acquire the C-CSI of its competitors \cite{isit, arnob_ton}.  Our result shows that--
\begin{lem}
In the complete information game, i.e.,
 when the primaries know the C-CSI of each others, then the expected payoff is exactly same as the expected payoff attained by the primary in the scneario where primaries can not acquire the C-CSI of their competitotrs. 
\end{lem}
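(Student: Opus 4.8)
The plan is to analyze the complete-information game by conditioning on the realized profile of channel states and then to match the resulting expected payoff to the closed-form expression that \cite{isit,arnob_ton} derive for the game in which neither primary can acquire the C-CSI. Fix primary $1$ and condition on its own channel being in state $i\in\{1,2\}$ (if the state is $0$ the payoff is $0$ in both games, so it contributes nothing to the payoff conditioned on being available). Because the realized state of primary $2$ is now common knowledge, the continuation is a degenerate, full-information Bertrand-type subgame, one for each value of the competitor's state $j\in\{0,1,2\}$, and these subgames decouple.

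First I would solve the three subgames. When $j=0$ primary $1$ is a monopolist and posts the largest price the secondary will accept for a state-$i$ channel; call the resulting profit $\pi_i^{\mathrm{mon}}$ (equal to $v-c$ in the normalization used elsewhere in the paper, up to the $h(i)$ shift). When $j\ge 1$ the primaries compete, and since the secondary ranks the two offers by the penalty $g_\cdot(x)=x-h(\cdot)$, the competition is a Bertrand game in the quality-adjusted price $x-h(\cdot)$. If $i=j$ the standard Bertrand argument drives the quality-adjusted price to $c-h(i)$ and the equilibrium profit of each primary in that event is $0$; if $i\ne j$ the higher-quality primary undercuts the lower-quality one in quality-adjusted price and, in equilibrium (using the usual supremum/$\epsilon$ argument to handle the absence of an exact pure equilibrium at an open endpoint), earns the quality gap $h(i)-h(j)$ when $i>j$ and $0$ when $i<j$. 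Summing over $j$ with weights $q_0,q_1,q_2$ gives the expected payoff of primary $1$ conditional on state $i$, and averaging over $i$ gives the (conditional-on-available) expected payoff in the complete-information game.

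Next I would import from \cite{isit,arnob_ton} the equilibrium expected payoff of a primary in the no-C-CSI game with the same penalty functions $g_i$, which those papers obtain from their mixed-strategy price equilibrium, and verify that it coincides with the quantity computed above. Rather than merely matching formulas, the structural reason for the equality is that in \emph{both} games a primary in state $i$ can guarantee $\pi_i^{\mathrm{mon}}$ scaled by the probability that its offer is accepted when it posts its highest admissible price --- the event that the competitor is absent or present with strictly lower quality --- while competition prevents it from doing strictly better in expectation; this ``top-price deviation'' argument is exactly the one used in the proofs of Theorem~\ref{thm:yandy} and Theorem~\ref{thm:nandn} in the $0$--$1$ case, and it carries over verbatim once prices are measured in the quality-adjusted coordinate $x-h(\cdot)$.

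The main obstacle I anticipate is the quality-differentiated subgame with $i\ne j$: unlike the symmetric case it has no exact pure-strategy equilibrium (the winning price is an open endpoint), so the payoff must be pinned down by a limiting/$\epsilon$-equilibrium argument, and one must check that the same endpoint value is what the mixed-strategy no-C-CSI equilibrium of \cite{isit,arnob_ton} delivers in expectation. A secondary point is bookkeeping the interplay between the transaction cost $c$, the state-dependent maximum admissible price, and $h(\cdot)$ so that the monopoly profits and the quality gaps are expressed consistently with the conventions of \cite{isit,arnob_ton}; once those conventions are aligned, the final comparison reduces to a routine identity.
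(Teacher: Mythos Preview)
Your proposal is correct and follows essentially the same approach as the paper: both condition on the realized state profile, solve the resulting full-information Bertrand (or quality-differentiated Bertrand) subgames to obtain the complete-information payoff at each state, and then match those expressions to the closed-form payoffs from \cite{isit,arnob_ton} for the no-C-CSI game. Your treatment is, if anything, a bit more careful than the paper's: you flag the open-endpoint issue in the asymmetric $i\neq j$ subgame and propose an $\epsilon$-equilibrium argument, whereas the paper simply labels that case ``Bertrand--Edgeworth'' and asserts the payoff $h(2)-h(1)$; and your ``top-price deviation'' structural explanation is a nice unification that the paper only hints at in its concluding remark.
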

\textit{Outline of the Proof}
First, we determine the expected payoff in the complete information game. 

If the channel states of both the primaries are the same, it becomes the standard{\em Bertrand Competition}. Hence, the payoffs of both the players will be $0$. 

Now, suppose that one primary's channel state is $2$ and the other hass $1$. Without loss of generality, assume that the channel state of primary $1$ is $2$ and that of primary $2$ is $1$. This is similar to the {\em Bertrand-Edgworth competition} and the payoff of the primary $1$  (which has channel state $2$ ) is $h(2)$ and that of primary $2$ (which has channel state $1$) is $0$.

When only one primary is available it will select the price $v+h(i)$ at channel state $i$( $i\geq1$) and attains the monopoly profit $v-c+h(i)$. The probability that a primary is in state $0$ is $1-q_1-q_2$.

Thus, if the primary's channel state is $2$, then its expected payoff is
\begin{align}
(v+h(2)-c)(1-q_1-q_2)+(h(2)-h(1))q_1=(v-c)(1-q_1-q_2)+h(2)(1-q_2)-h(1)q_1.
\end{align}
When the primary is in state $1$, it can only gain positive payoff only when the competitor's channel is in state $0$. Hence, its expected payoff is
\begin{align}
(v+h(1)-c)(1-q_1-q_2).
\end{align}

Now, in the incomplete information game, from \cite{arnob_ton,isit,archivedreport-part2} at channel state $1$, the expected payoff of a primary is $(v+h(1)-c)(1-q_1-q_2)$. The expected payoff of primary $2$ is $(L-c)(1-q_2)$, where $L$ is
\begin{align}
(L-c+h(1))(1-q_2)=(v-c+h(1))(1-q_1-q_2).
\end{align}
Thus,
\begin{align}
(L-c+h(2))(1-q_2)=(v-c)(1-q_1-q_2)+h(2)(1-q_2)-h_1q_1
\end{align}which is also the expected payoff of a primay whose channel state is $2$ in the complete information game. Hence, the result follows.\qed

The above proof also shows that a primary never selects $Y$ w.p. $1$ for any positive cost $s>0$,the result that we observe throughout the paper in each setting. Though the above result states that in the two extremes the expected payoff of a primary is the same, we conjecture that the expected payoff of a primary would be independent of the C-CSI acquisition similar to the basic model which we have studied.

%\bibliographystyle{IEEEtran}
%\bibliography{ref_isit}

%\bibliographystyle{plain}
%\bibliography{ref_isit}
\appendix
\subsection{Uniqueness Proofs of Theorems~\ref{thm:nandn} and \ref{thm:mixedstrategy}}
Here, we show that  there can not be any other NE strategy profile apart from those described in Theorems~\ref{thm:nandn} and \ref{thm:mixedstrategy} in the basic model. Note that when $s\geq q(v-c)(1-q)$, then the NE strategy profile is the one described in Theorem~\ref{thm:nandn} and when $s<q(v-c)(1-q)$, the NE strategy profile is the one described in Theorem~\ref{thm:mixedstrategy}.

\subsubsection{Structure of the Pricing strategies}
We first investigate the key  structure of the NE pricing strategies (if it exists). 

Note that under $Y$, if a primary knows that its competitor\rq{}s channel is not available then it will choose $v$ w.p. $1$. We thus, investigate the structure of $F_1(\cdot)$ and $F(\cdot)$ in an NE strategy. Recall that $F_1(\cdot)$ is the pricing distribution that a primary chooses when it selects $Y$ and knows that the channel of its competitor is available, while $F(\cdot)$ is the pricing distribution that a primary chooses when it selects $N$. 
\begin{theorem}\label{thm:jump}
In an NE strategy profile, neither $F(\cdot)$ nor $F_1(\cdot)$ can have a jump at any price which is less than $v$. Additionally, $F_1(\cdot)$ can not have a jump at $v$. 
\end{theorem}
\begin{proof}
First, we show that neither $F(\cdot)$ nor $F_1(\cdot)$ can have a jump at any price which is less than $v$. Subsequently, we show that $F_1(\cdot)$ can not have a jump at $v$.

Note that a primary can only have a jump at a price if it is a best response. First, note that $F(\cdot)$ can not have a jump at a price less than or equal to $c$. This is because at a price less than or equal to $c$ will fetch a negative profit, however, if the primary selects $v$, then it will get an expected payoff of $(v-c)(1-q)$. 

Similarly, if $F_1(\cdot)$ has a jump at a price less than or equal to $c$, then its payoff under $F_1(\cdot)$ is at most $(c-c)-s=-s$. Note that when the channel of the competitor is unavailable, then the primary will attain the payoff of $(v-c)-s$. Hence, the expected payoff under $Y$ is thus, $(v-c-s)(1-q)-sq=(v-c)(1-q)-s$. However, if the primary selects $N$ and the price $v$ which will fetch an expected profit of $(v-c)(1-q)$.

Now if either $F_1(\cdot)$ or $F(\cdot)$ has a jump at $c<x<v$, then the other primary can select a price $x-\epsilon$ and still can gain higher payoff compared to $x$. Thus, the other primary will not select any price in the interval $(x-\epsilon,x+\epsilon)$ as it will get a strictly higher payoff at $x-\epsilon$ compared to any price in the interval. Hence, the primary itself can gain strictly  higher payoff by selecting a price at $y\in (x,x+\epsilon)$ compared to $x$. It contradicts the fact that either $F_1(\cdot)$ or $F(\cdot)$ will have a jump at $x<v$. 

Next, we show that $F_1(\cdot)$ can not have a jump at $v$. Suppose $F_1(\cdot)$ has a jump at $v$, then the other primary will never select $v$ with positive probability when its channel is available as it can get strictly higher payoff by selecting a price slightly less than $v$. Thus, at $v$, the primary is never going to sell its channel when the channel of other primary is available. Thus, the expected payoff that the primary will get under $F_1(\cdot)$ is $-s$. Thus, under $Y$, the expected payoff that the primary will attain is $(v-c)(1-q)-s$. Again, the primary will have an incentive to deviate to select $N$ and select the price $v$ which will fetch a payoff of at least $(v-c)(1-q)$.
\end{proof}
The above theorem shows that if the channel of a primary is available then it can not have a jump at any price other than $v$. 

Now, we show an important property of $F_1(\cdot)$ and $F(\cdot)$ when a primary randomizes between $Y$ and $N$ in an NE strategy. 
\begin{theorem}\label{thm:disjoint}
Suppose that primary $1$ selects $Y$ w.p. $p$ and $N$ w.p. $1-p$ in an NE.  Then, the upper end point of the support set of $F_1(\cdot)$ must be lower than or equal to to the lower end-point of the support set of $F(\cdot)$. 
\end{theorem}
\begin{proof}
Note from Theorem~\ref{thm:jump} that $F_1(\cdot)$ can not have a jump at $v$. Thus, the lower end point of $F_1(\cdot)$ can never be $v$. If the lower end-point of the support set of $F(\cdot)$ is $v$, then the statement is trivially true. So, we consider the setting where the lower end-point of the support set of  $F(\cdot)$ is less than $v$. Suppose the statement is false. Thus, there must exist a $x<y<v$ such that $x$ is in the support set of $F(\cdot)$ and $y$ is in the support set of $F_1(\cdot)$. Now, suppose that the maximum expected payoff of  primary $1$ when it selects $F_1(\cdot)$ under $Y$ is $\bar{p}_1$.  Also let $\bar{p}_2$ be the maximum expected payoff primary $1$ gets  when it selects $F(\cdot)$ under $N$. 

Since $x<v$, thus, if the channel of competitor is available, it can not have any jump at $x$. Hence, while choosing $N$, the probability of winning at $x$ is $(1-q\phi_2(x))$ where $\phi_2(\cdot)$ is the probability that the primary $2$ will select a price less than or equal to $x$ when its channel is available. Since $x<v$ and primary $2$ does not have a jump at $x$, thus, $x$  is a best response to primary $1$ under $N$. Thus,
\begin{align}\label{eq:payoffn}
(x-c)(1-q\phi_2(x))=\bar{p}_2
\end{align}
Since $\bar{p}_1$ is the maximum expected payoff that primary $1$ gets under $F_1(\cdot)$, thus, if primary $1$ selects $x$ under $F_1(\cdot)$, then its payoff would be
\begin{align}\label{eq:ratio}
 (x-c)(1-\phi_2(x))\leq\bar{p}_1\nonumber\\
 \dfrac{1-\phi_2(x)}{1-q\phi_2(x)}\leq \dfrac{\bar{p}_1}{\bar{p}_2}\quad\text{from (\ref{eq:payoffn}) }
\end{align}
Similarly, since $y<v$, thus, primary $2$ will not have a jump at $y$ when its channel is available. Thus, primary $1$\rq{}s expected payoff under $F_1(\cdot)$ at the price $y$ is  
\begin{align}\label{eq:disjoint1}
(y-c)(1-\phi_2(y))=\bar{p}_1
\end{align}
If primary $1$ selects $N$ and the price $y$, then its expected payoff is
\begin{align}\label{eq:disjoint}
(y-c)(1-q\phi_2(y))& =\bar{p}_1\dfrac{1-q\phi_2(y)}{1-\phi_2(y)}\quad \text{from (\ref{eq:disjoint1}) }\nonumber\\
& \geq \dfrac{(1-q\phi_2(y))(1-\phi_2(x))}{(1-\phi_2(y))(1-q\phi_2(x))} \bar{p}_2\text{from (\ref{eq:ratio}) }
\end{align}
Now, note that $\phi_2(y)\geq \phi_2(x)$ as $y>x$. If $\phi_2(y)=\phi_2(x)$, then the expected payoff at $y$ must be greater than the expected payoff at $x$, hence, $x$ can not be a best response at $N$ for primary $1$. However,  if $\phi_2(y)>\phi_2(x)$, then the expected payoff at $y$ at $N$ is strictly higher than $\bar{p}_2$ by (\ref{eq:disjoint}).  Thus, this leads to a contradiction since $\bar{p}_2$ is the maximum expected payoff at $N$. Hence, the result follows.
\end{proof}
Now, we show that both $F(\cdot)$ and $F_1(\cdot)$ are contiguous. Additionally, if a primary randomizes between $Y$ and $N$, then there is no \lq\lq{}gap\rq\rq{} between $F(\cdot)$ and $F_1(\cdot)$. 
\begin{theorem}\label{thm:no_gap}
(i) In a NE strategy if a primary selects $Y$ w.p. $1$, and it selects $F_1(\cdot)$ when it knows that the channel of other primary is available, then $F_1(\cdot)$ must be contiguous and the upper end-point of $F_1(\cdot)$ must be $v$.\\
(ii) In a NE strategy if a primary selects $N$ w.p. $1$, and if it selects $F(\cdot)$ when it knows that channel of other primary is available, then $F(\cdot)$ must be contiguous and the upper end-point of $F(\cdot)$ must be $v$.\\
(iii) In a NE strategy if the primary randomizes between $Y$ and $N$, both $F_1(\cdot)$ and $F(\cdot)$ must be contiguous, there must not be any gap between the support sets of $F_1(\cdot)$ and $F(\cdot)$. Moreover, the upper-end point of $F(\cdot)$ must be $v$. 
\end{theorem}
\begin{proof}
We only show the proof of part (i). The proof of the other parts will be similar. 

{\em Part (i)}:
Suppose that primary $1$ selects $F_1(\cdot)$ such that $F_1(x)=F_1(y)$ for some $v\geq y>x$ such that both $y,x$ are under the support set of $F_1(\cdot)$.  Since  $x<v$  thus, primary $2$ does not have a jump at $x$ when its channel is available. Hence, $x$ is a best response for primary $1$ under $F_1(\cdot)$. By Theorem~\ref{thm:disjoint} if a primary randomizes between $Y$ and $N$, then the lower end-point of $F(\cdot)$ must be greater than or equal to the lower end-point of $F_1(\cdot)$. Thus, $F(x)=F(y)=0$. Thus, primary $2$ will attain a strictly higher payoff at any value $z\in (x,y)$ compared to at $x$. Thus, there is an $\epsilon>0$ where primary $2$ will never select any price in the interval $[x,x+\epsilon]$, hence, $x$ itself is not a best response for primary $1$. But the above contradicts the fact that $x$ is in the support set of $F_1(\cdot)$. Hence, the result follows.
\end{proof}
\subsubsection{Special Property where primaries randomize between $Y$ and $N$}
 Next theorem shows that in an NE if both the primaries randomize between $Y$ and $N$. Then both of them should put the same probability mass on $Y$ (and $N$, resp.). 
\begin{theorem}
Suppose primary $1$ selects $Y$ w.p. $1>p_1>0$ and $N$ w.p. $1-p_1$. Primary $2$ selects $Y$ w.p. $1>p_2>1$ and $N$ w.p. $1-p_2$. Then, $p_1=p_2$ in an NE strategy profile.
\end{theorem}
\begin{proof}
Suppose that at $Y$, primary $1$ ($2$, resp.) selects a price using the distribution $F_1(\cdot)$ ($\bar{F}_1(\cdot)$, resp.)  when it knows that the channel of primary $2$ ($1$, resp.)  is available for sale. At $N$, suppose that primary $1$ ($2$, resp.) selects a price using the distribution $F(\cdot)$ ($\bar{F}(\cdot)$, resp.). 

Let $L_1$ ($\bar{L}_1$, resp.) and $U_1$ ($\bar{U}_1$, resp.) be respectively the lower and upper end-points of the support of $F_1$ ($\bar{F}_1$, resp.). Let $L$ ($\bar{L}$, resp.) and $U$ ($\bar{U}$, resp.) be the lower and upper end-point of the support of $F(\cdot)$ ($\bar{F}$, resp.) respectively. By Theorem~\ref{thm:disjoint} $L_1<L$ and $\bar{L}_1<\bar{L}$. Note also from Theorem~\ref{thm:no_gap} that $U_1=L$ and $\bar{U}_1=\bar{L}$. 

 First, we show that $L_1=\bar{L}_1$. Suppose not. Without loss of generality assume that $L_1<\bar{L}_1$. Thus, primary $2$ does not select any price in the interval $(L_1,\bar{L}_1)$. Thus, the primary $1$ will get a strictly higher payoff at $\bar{L}_1-\epsilon$ for some $\epsilon>0$ compared to $L_1$. Hence,  primary $1$ must select prices close to $L_1$ with probability $0$ which contradicts that  $L_1$ is the lower end-point of $F_1$. Thus, $L_1=\bar{L}_1$.  

 By Theorem~\ref{thm:jump} $L_1$ can not be equal to $v$. Thus, $L_1=\bar{L}_1<v$. Thus,  both $L_1$ and $\bar{L}_1$ are best responses to primary $1$ and primary $2$ respectively at $Y$. Since $L_1=\bar{L}_1$, thus, the expected payoff at $Y$ must be the same for both players. Also note that since primaries randomize between $Y$ and $N$, thus, the payoffs at $Y$ and $N$ must be the same. Hence, the expected payoff of the primaries at $N$ also must be the same. Thus, no primary can have a jump at $v$ under $N$. This is because if a primary has a jump at $N$, then the other primary would get a strictly higher payoff at a price just below $v$ which contradicts that both the primaries must have the same payoff under $N$. Thus, $L, \bar{L}<v$. 

Now, we show that $L=\bar{L}$, towards this end, we introduce few more notations. Let $\bar{p}_1-c$ be the maximum expected payoff of primary $1$ ($2$, resp.) under $F_1(\cdot)$ ($\bar{F}_1(\cdot)$, resp.)   and $\bar{p}_2-c$ be the expected payoff of primary $1$ ($2$, resp.) under $F(\cdot)$ ($\bar{F}(\cdot)$, resp.). 

Suppose $L\neq \bar{L}$. Without loss of generality assume that $L>\bar{L}$.  Thus, $\bar{L}<v$. Since $\bar{L}$ is the upper end-point of $\bar{F}_1(\cdot)$ and $\bar{L}<v$, thus, the expected payoff of primary $2$ at $\bar{L}$ under $\bar{F}_1(\cdot)$ is $\bar{p}_1-c$.   Thus, 
\begin{align}
(\bar{L}-c)(1-p_1F_1(\bar{L}))=\bar{p}_1-c\label{eq:pay21}
\end{align}
$\bar{L}$ is also a best response of primary $2$ at $N$, thus,
\begin{align}\label{eq:pay22barl}
(\bar{L}-c)(1-qp_1F_1(\bar{L}))=\bar{p}_2-c
\end{align}
Since $v>L>\bar{L}$ and $L$ is the upper end-point of $F_1(\cdot)$, thus, $L$ is also a best response of primary $1$ under $Y$. 
\begin{align}\label{eq:pr1}
(L-c)(1-p_2-(1-p_2)\bar{F}(L))=\bar{p}_1-c
 \end{align}
 Since $L$ is the lower end point of $F(\cdot)$, thus, under $N$, the expected payoff of primary $1$ at $L$ is
 \begin{align}\label{eq:pay12}
 (L-c)(1-qp_2-q(1-p_2)\bar{F}(L))=\bar{p}_2-c
 \end{align} 
 Also note that  since $L>\bar{L}$, thus, $L$ is in the support of $\bar{F}(\cdot)$, thus, under $N$, the expected payoff to primary $2$ at $L$ is 
 \begin{align}\label{eq:pay22}
 (L-c)(1-qp_1)=\bar{p}_2-c
 \end{align}
 as $F_1(L)=1$ and $F(L)=0$.
 
Thus, from  (\ref{eq:pay22}) and (\ref{eq:pay12}) $p_1=p_2+(1-p_2)\bar{F}(L)$. Now, the expected payoff of primary $2$ at $L$ when it selects $Y$ and the channel of primary $1$ is available, is
\begin{align}\label{eq:pay21l}
(L-c)(1-p_1)& =(L-c)(1-p_2-(1-p_2)\bar{F}(L))\nonumber\\
& =\bar{p}_1-c \quad from (\ref{eq:pr1})
\end{align}
  Hence, from (\ref{eq:pay21}), (\ref{eq:pay21l}), (\ref{eq:pay22barl}) and (\ref{eq:pay22}) that
\begin{align}
\dfrac{1-p_1F_1(\bar{L})}{1-p_1}=\dfrac{1-qp_1F_1(\bar{L})}{1-qp_1}
\end{align}
which leads to a contradiction as neither $q$ is not equal to $1$ nor $F_1(\bar{L})= 1$. Hence, we must have $L=\bar{L}$. 

Now, at $L$, the expected payoff of primary $2$ at $Y$ is $(L-c)(1-p_1)=\bar{p}_1-c$. Similarly, at $\bar{L}$, the expected payoff of primary $1$ at $Y$  is $(\bar{L}-c)(1-p_2)=\bar{p}_1-c$. Since $L=\bar{L}$, thus, we must have $p_1=p_2$. Hence, the result follows. 
\end{proof}
Next, we determine the probability with which the primaries must randomize between $Y$ and $N$ in an NE strategy. 
\begin{obs}
If both the primaries randomize between $Y$ and $N$, they should do it w.p. $p$ where $p=\dfrac{q(v-c)(1-q)-s}{q(v-c)(1-q)-sq}$.
\end{obs}
\begin{proof}
Suppose that a primary selects its price from $F_1(\cdot)$ under $Y$ and when it knows that the channel of other primary is available. Suppose that under $F_1(\cdot)$ the expected payoff is $\tilde{p}_1-c$. Thus, the expected payoff of primary $1$ under $Y$ is
\begin{align}
(v-c)(1-q)+q(\tilde{p}_1-c)-s
\end{align}
Suppose that the primary selects its price from $F(\cdot)$ under $N$. Since no primary has any jump at $v$ when both the primaries randomize between $Y$ and $N$ and $v$ is the upper end-point of $F(\cdot)$ by Theorem~\ref{thm:no_gap}, thus, the expected payoff under $N$ is $(v-c)(1-q)$. Since the primary randomizes between $Y$ and $N$, thus, the expected payoff under $Y$ and under $N$ must be the same.  Hence, we must have $s=q(\tilde{p}_1-c)$. 

Suppose $L$ be the upper end point of the support of $F_1(\cdot)$ (and thus, also the lower endpoint of $F(\cdot)$). Hence, the expected payoff at $L$ is 
\begin{align}
(L-c)(1-qp)=(v-c)(1-q)
\end{align}
Thus, $L=(v-c)(1-q)/(1-qp)+c$. Also note that $L$ is also a best response at $F_1(\cdot)$. Thus, 
\begin{align}
(L-c)(1-p)& =\dfrac{s}{q}\nonumber\\
\dfrac{(v-c)(1-q)(1-p)}{1-qp}& =\dfrac{s}{q}
\end{align}
Obtaining $p$ from the above expression will give the desired result.
\end{proof}
\subsubsection{Does there exists an NE where one player selects $Y$ w.p. $1$?}
\begin{theorem}
There is no NE where a primary selects $Y$ w.p. $1$ and the other primary selects $Y$ w.p. $p$ and $N$ w.p. $1-p$.
\end{theorem}
\begin{proof}
Without loss of generality, assume that primary $1$ selects $Y$ w.p. $1$ and the  primary $2$ selects $Y$ w.p. $p$ and $N$ w.p. $1-p$. 

Now suppose that primary $1$ selects a price using the distribution function $F_1(\cdot)$ when it knows that the channel of its competitor is available for sale. Let at $Y$, primary $2$ selects a price using distribution function $F_2(\cdot)$ when it knows that the channel of its competitor is available for sale, and at $N$, it selects a price using distribution function $\bar{F}_2(\cdot)$. 

Let $L_1$ be the lower end-point of the support of $F_1(\cdot)$ and $L_2$ ($\bar{L}_2$, resp.) be the lower end-point of $F_2(\cdot)$ ($\bar{F}_2$, resp.).

Note from Theorem~\ref{thm:disjoint}  that $\bar{L}_2>L_2$. Now, we show that $L_1=L_2$. Suppose that $L_1>L_2$, then, primary $2$ can attain strictly higher payoff at any price close to $L_1$ compared to at $L_2$ which shows that $L_2$ can not be a lower end-point of $F_2$. By symmetry, it also follows that $L_1$ can not be less than $L_2$, hence $L_1=L_2$. Thus, the expected payoff at $Y$ must be equal for both the primaries.

Now, note that $F_1(\cdot)$ can not have a jump at $v$ by Theorem~\ref{thm:jump}. Note that the upper end-point of $\bar{F}_2(\cdot)$ is $v$ by Theorem~\ref{thm:no_gap}. Since $F_1(\cdot)$ does not have a jump at $v$, thus, $v$ is a best response of primary $2$ under $N$. Thus, the expected payoff of primary $2$ under $N$ is $(v-c)(1-q)$.  Since primary $2$ randomizes between $N$ and $Y$, thus, the expected payoff of primary $2$ is $(v-c)(1-q)$ under $Y$. Thus, the expected payoff of primary $1$ is also $(v-c)(1-q)$. 

At any $x\in [\bar{L}_2,v)$ primary $2$ does not have any jump, thus, $x$ is a best response for primary $1$. Thus, at any $x\in [\bar{L}_2,v)$ the expected payoff of primary $1$ is
\begin{align}
(x-c)(1-p-(1-p)\bar{F}_2(x))=\tilde{p}_1-c\nonumber\\
\bar{F}_2(x)=\dfrac{1}{1-p}(1-p-\dfrac{1}{1-p}\dfrac{\tilde{p}_1-c}{x-c})
\end{align}
Note that under $Y$, the expected payoff of primary $1$ is $(v-c)(1-q)+q(\tilde{p}_1-c)-s$.

Now, if primary $1$ selects $N$ and a price $x\in [\bar{L}_2,v)$, then its expected payoff is 
\begin{align}
& (x-c)(1-qp-q(1-p)\bar{F}_2(x)) \nonumber\\
& =(x-c)(1-qp-q(1-p-\dfrac{\tilde{p}_1-c}{x-c}))\nonumber\\
&=(x-c)(1-q)+(\tilde{p}_1-c)q
\end{align}
Thus, for any small enough $\epsilon>0$, we have $(v-c-\epsilon)(1-q)+(\tilde{p}_1-c)q>(v-c)(1-q)(1-q)+q(\tilde{p}_1-c)-s$> Hence, primary $1$ has profitable unilateral deviation. Hence, such a strategy profile can never be an NE. 
\end{proof}
Note that we have already ruled out the possibility of the NE strategy profile where a primary selects $Y$ w.p. $1$ and the other selects either $N$ or $Y$ w.p. $1$. Hence, there is no NE where a primary selects $Y$ w.p. $1$.
\subsubsection{Does there exist a NE where one player selects $N$ w.p. $1$?}
\begin{theorem}
There is no NE where a primary selects $N$ w.p. $1$ and the other primary randomizes between $Y$ and $N$. 
\end{theorem}
\begin{proof}
Without loss of generality assume that primary $1$ selects $N$ w.p. $1$ and primary $2$ randomizes between $Y$ and $N$. 

Suppose that primary 1 selects its price using $F(\cdot)$. Let $L$ be the lower end-point of the support of $F(\cdot)$. Let $\tilde{p}_1-c$ be the expected payoff of primary $1$. Let primary $2$ selects $F_2(\cdot)$ when it selects $Y$ and it knows that the channel of primary $1$ is available. Let  $L_2$ be the lower end-point of $F_2(\cdot)$. 
First, note that $L_1$ must be equal to the $L_2$.  Since $L_2<v$ by Theorem~\ref{thm:no_gap} and $L_1=L_2$, thus, $L_2$ is a best response for both primary $1$ and $2$. The expected payoff of primary $2$ under $Y$ when the channel of primary $1$ is  $L_2-c-s$. Similarly, the expected payoff of primary $1$ is $L_2-c$.  Thus, $\tilde{p}_1-c=L_2-c$. Expected payoff of primary $2$ under $Y$ is, $q(L_2-c)+(v-c)(1-q)-s$.

Also let $L$ be the lower end point of $\bar{F}_2$ where $\bar{F}_2$ be the pricing strategy that primary $2$ uses when it selects $N$. From Theorem~\ref{thm:no_gap} the upper end-point of the support of $F_2(\cdot)$ is also $L$. From Theorem\ref{thm:no_gap} also note that the upper end-point of $\bar{F}_2(\cdot)$ is $v$. 

First, note that under $N$ the expected payoff of primary $2$ must be at least $(v-c)(1-q)$ as this is the payoff that primary $2$ can at least get when it selects $v$. Now, we show that under $N$, the expected payoff of primary $2$ must be  equal to $(v-c)(1-q)$. Suppose not, i.e. primary $2$ attains an expected payoff of larger than $(v-c)(1-q)$. Since the upper end-point of $\bar{F}_2$ is $v$, thus, primary $1$ must have a jump at  $v$. Since primary $1$ has a jump at $v$, thus, $v$ is a best response for primary $1$. Thus, primary $1$ attains an expected payoff of $(v-c)(1-q)$ under $N$. Thus, $\tilde{p}_1-c=(v-c)(1-q)$. Since primary $2$ is randomizing between $Y$ and $N$, thus, the primary $2$\rq{}s expected payoff is also greater than $(v-c)(1-q)$ when it selects $Y$. Thus,  if the primary $1$ select $Y$ and price $L_2$ w.p. $1$ when the channel of primary $2$ is available and $v$ w.p. $1$ otherwise; then it will also get an expected payoff of $q(L_2-c)+(v-c)(1-q)-s$ which is higher compared to $(v-c)(1-q)$. Hence, this is not possible.

Thus, the expected payoff of primary $2$ must be equal to $(v-c)(1-q)$. Since primary $1$ gets an expected payoff of at least  of $(v-c)(1-q)$, thus, $L_2-c\geq (v-c)(1-q)$. Since $L$ is the upper end-point of the support of  $F_2(\cdot)$ and $L$ is also the lower end-point of the support of $\bar{F}_2$, thus,
\begin{align}
(L-c)(1-F_1(L))-s\geq  (v-c)(1-q)-s\nonumber\\
(L-c)(1-qF_1(L))= (v-c)(1-q)
\end{align}
both can not be true simultaneously since $q\neq 1$. Hence, the result follows.
\end{proof}
\subsection{Proof of Theorem~\ref{thm:payoffconstant}}\label{sec:n-1all}
First, we show that there is no NE where each primary selects $Y$ w.p. $1$ (Theorem~\ref{thm:noyyn>2}). Subsequently, we show Theorem~\ref{thm:payoffconstant}.

First, we introduce some notations:
  \begin{definition}\label{defn:w}
  Let $w_{m,n}(x)$ denote the probability that at least $m$ ($0<m<n$) success out of $(n-1)$ events where each event has probability of success of $x$. Thus,
  \begin{align}\label{eq:w}
  w_{m,n}(x)=\sum_{i=m}^{n-1}{n-1 \choose i}x^i(1-x)^{n-1-i}.
  \end{align}
  We also denote
  \begin{align}\label{defn:W}
  W_{m,n}(x)=1-w_{m,n}(x).
  \end{align}
  \end{definition}
  $w_{m,n}(\cdot)$ is strictly increasing and continuous function in the interval $[0,1]$. Thus, $W_{m,n}(\cdot)$ is strictly decreasing and continuous. Note that inverse of $w_{m,n}$ exists and is also increasing and continuous. Also, note that $w_{m,K-1}<w_{m,K}$, thus, $W_{m,K-1}>W_{m,K}$.

\begin{theorem}\label{thm:noyyn>2}
There is no NE where each primary selects $Y$ w.p. $1$. 
\end{theorem}
Assume  all the players select $Y$, so that they know each other\rq{}s channel states. Thus, the competition becomes similar to  {\em Bertrand Competition}\cite{mwg}, i.e. if the number of available channels is less than or equal to $m$, then each primary will set its price at $v$ since the channel of primary will always be sold. Otherwise, each primary will set its price at $c$. Now, the probability that the number of available channels is at most $m$ is $1-w_{m,n}(q)=W_{m,n}(q)$. Thus, the expected payoff of a player is 
\begin{align}\label{eq:payoff_strategyn}
& (v-c)W_{m,n}(q)+(c-c)(1-W_{m,n}(q))-s=(v-c)W_{m,n}(q)-s.
 \end{align} 
 Now consider the following unilateral deviation for a primary: Primary $1$ selects $N$ and sets its price at $v$ w.p. $1$. The channel of primary $1$ will be bought at least when the number of available channels is at most $m$.  Since primary $1$  decides not to incur the cost $s$, thus, its expected payoff is 
 \begin{align}
 (v-c)W_{m,n}(q)
 \end{align}
 This is strictly higher than (\ref{eq:payoff_strategyn}). Hence, the strategy profile can not be an NE.\qed
 
 Note that when $n=2$, we obtain  a similar result (Theorem~\ref{thm:yandy}). The above result shows that there is no NE where each primary selects $Y$ even when $n>2$ and each primary acquires the CSI of all other primaries.
 
 The above theorem states that in a symmetric NE, each primary must select $Y$ w.p. $p$ and $N$ w.p. $1-p$ where $0\leq p<1$. 

Note that when each primary selects $N$ w.p. $1$, then, the expected payoff is $(v-c)W_{m,n}(q)$ which has been shown in \cite{Gaurav1}. Now, we show that even when each primary randomizes between $Y$ and $N$ in an NE with a positive probability, then, the expected payoff is $(v-c)W_{m,n}(q)$. We assume that each primary selects $Y$ w.p. $p$ $0\leq p<1$ and $N$  w.p. $1-p$. Before proving the result, we introduce some notations and state and show some results. 

Note that when $k<m$, then in a NE a primary must select $v$ w.p. $1$ when it selects $Y$ as its channel will always be sold.  Now, we state some structural properties of $F_k(\cdot)$, $k\geq m$ and $F_N$. Let $U_k$ be the upper end-point of $F_k$, for  $k=m,\ldots,n$ and $U_N$ be the upper end-point of $F_N(\cdot)$.
\begin{obs}
$U_N>c$
\end{obs}
\begin{proof}
At a price less than or equal to $c$, a primary gets an expected payoff of at most $0$. However, at $v$, a primary can achieve an expected payoff of $(v-c)W_{m,n}(q)$ which is strictly positive. Hence, a primary will select any price less than or equal to $c$ with $0$ probability. Hence, $U_N>c$. 
\end{proof}
Next result shows that $F_j(\cdot)$, $j\geq m$ and $F_N$ can not have any jump except at $v$. 
\begin{obs}
$F_j(\cdot)$ where $j\geq m$, and $F_N(\cdot)$ can not have a jump at any point except $c$. 
\end{obs}
\begin{proof}
We show the above result for $F_j(\cdot)$ where $j\in \{m,\ldots,n\}$. The proof for $F_N(\cdot)$ will be the same. 

Suppose not i.e. $F_j(\cdot)$ has a jump at $x>c$. However, $j\geq m$, hence, the channel of all primaries will not be bought when everyone selects $x$. Thus, a primary can select a price slightly below $x$ (as $x>c$) which can greatly increase the probability of winning. Hence, a primary can  attain a strictly higher payoff by selecting a price $x-\epsilon$ for small enough $\epsilon>0$ which contradicts the fact that $x$ is a best response. Hence, the result follows. 
\end{proof}
Suppose that a primary selects $Y$ and $j$ number of channels are available among the rest where $j\geq m$. Then, the above result shows that if the primary selects $x>c$, then its expected payoff is 
\begin{align}\label{eq:payoffj}
(x-c)W_{m,j}(pF_j(x)+(1-p)F_N(x))
\end{align}
Let $U_{max}=\max U_k$ for $k\geq m$. 
\begin{lem}
$U_N> U_{max}$.
\end{lem}
\begin{proof}
Note that if $U_{max}=c$, then the result is trivially true.

Thus, we consider the case where $U_{max}>c$. Suppose that it is not true. Without loss of generality, we assume that $U_{max}=U_j$, for some $j\geq m$. Now, when $j$ number of channels are available, then the expected payoff at $U_j$ (from (\ref{eq:payoffj})  is thus,
\begin{align}
(U_j-c)W_{m,j}(pF_j(U_j)+(1-p)F_N(U_j))=0
\end{align}
Since $F_j(U_j)=1$ and $F_N(U_j)=1$ as the upper end-point of $F_N(\cdot)$ is less than that of $F_j(\cdot)$. Since $U_j>c$, thus,  no primary has a jump at $U_j$. Hence, $U_j$ is a best response for the primary when $j$ number of competitors is available. 

  Now, suppose that primary selects any other price $y\in (c,U_N]$, the expected payoff of the primary is 
  \begin{align}
  (y-c)W_{m,j}(pF_j(y)+(1-p)F_N(y))
  \end{align}
  Since $y>c$ and $y\leq U_N$, thus, at least $F_j(y)<1$. Hence, the above expression is strictly positive. Hence, $U_j$ can not be a best response for the primary which contradicts that $U_j$ is the upper end-point of $F_j(\cdot)$.  Thus, we reach to a contradiction. Hence, the result follows.
\end{proof}
\begin{lem}\label{lem:un=v}
$U_N=v$.
\end{lem}
\begin{proof}
Suppose not i.e. $U_N<v$. Since $U_{max}<U_N$, thus, the expected payoff at $U_N$ is
\begin{align}
(U_N-c)W_{m,n}(q)
\end{align}
as if a primary selects the price $U_N$, it will be only sold if less than $m$ number of competitors\rq{} channels are available. Note that $U_N>c$, thus, no primary does not have any jump at $U_N$. Hence, $U_N$ is a best response under $N$ for a primary. 

However, at any price in the interval $x\in (U_N,v)$, the expected payoff is
\begin{align}
(y-c)W_{m,n}(q)
\end{align}
Thus, the expected payoff is strictly higher at $y$. Hence, it contradicts the fact that $U_N$ is a best response when a primary selects $N$. Hence, the result follows.
\end{proof}
Now we are ready to show Theorem~\ref{thm:payoffconstant}.

\textit{Proof}:
Lemma~\ref{lem:un=v} shows that $U_N=v$. On the other hand no primary  has a jump at $v$ as $v>c$. Thus, $v$ is a best response when a primary selects $N$. However, the expected payoff of the primary when it selects $N$ at $v$ is $(v-c)W_{m,n}(q)$. Now, when a primary randomizes between $Y$ and $N$, the expected payoff under $N$ must be equal to the expected payoff under $Y$. Hence, the expected payoff of a primary is $(v-c)W_{m,n}(q)$ when a primary randomizes between $Y$ and $N$.  Note that if both the players select $N$, then the expected payoff of each primary is also $(v-c)W_{m,n}(q)$. 
  \end{document}